\begin{document}
\title{Algorithmic aspects of branched coverings}
\author{Laurent Bartholdi}
\email{laurent.bartholdi@gmail.com}
\author{Dzmitry Dudko}
\email{dzmitry.dudko@gmail.com}
\address{\'Ecole Normale Sup\'erieure, Paris \emph{and} Mathematisches Institut, Georg-August Universit\"at zu G\"ottingen}
\thanks{Partially supported by ANR grant ANR-14-ACHN-0018-01 and DFG grant BA4197/6-1}
\date{May 18, 2017}
\begin{abstract}
  This is a survey, and a condensed version, of a series of articles
  on the algorithmic study of Thurston maps. We describe branched
  coverings of the sphere in terms of group-theoretical objects called
  bisets, and develop a theory of decompositions of bisets.

  We introduce a canonical ``Levy'' decomposition of an arbitrary
  Thurston map into homeomorphisms, metrically-expanding maps and maps
  doubly covered by torus endomorphisms. The homeomorphisms decompose
  themselves into finite-order and pseudo-Anosov maps, and the
  expanding maps decompose themselves into rational maps.

  As an outcome, we prove that it is decidable when two Thurston maps
  are equivalent. We also show that the decompositions above are
  computable, both in theory and in practice.
\end{abstract}

\maketitle
\tableofcontents
\let\myboldmath=\boldmath

\setcounter{section}{-1}\section{Introduction}
Nielsen~\cite{nielsen:surfaces}, and later
Thurston~\cite{thurston:surfaces}, have achieved an impressive
classification of surface self-homeomorphisms. Given a compact surface
$X$ and a self-homeomorphism $f\colon X\selfmap$, there exists a
canonical set $\CC$ of curves, invariant up to isotopy by $f$, that
separate $X$ into simpler surfaces, and such that the induced first
return maps on these pieces are isotopic to either finite-order or
\emph{pseudo-Anosov} maps.

These induced maps all preserve a geometric structure: finite-order
transformations are hyperbolic isometries (so preserve a complex
structure), while pseudo-Anosov maps preserve a pair of transverse
foliations, expanding one and contracting the other.

This classification result may also be viewed as a bridge between
topology and group theory: $f$ naturally acts by automorphisms on the
fundamental group $G$ of $X$; the collection $\CC$ of curves
determines a splitting of $G$ as amalgamated free product over cyclic
subgroups, and the induced automorphisms of the pieces are either
finite-order or irreducible in a strong sense (``iwip'': irreducible
with irreducible powers). Thus the decomposition of $X$ as an amalgam
over circles naturally parallels a decomposition of $G$ as an amalgam
over cyclic subgroups.

Our aim is to do the same for branched self-coverings of compact surfaces,
namely maps $f\colon X\selfmap$ that are coverings away from a finite set
of \emph{branch points}, where they admit local models of the form
$z\mapsto z^d$ in complex charts. If $f$ has degree $>1$, the only surfaces
to consider are the sphere and the torus, by the Riemann-Hurwitz
formula. Examples of branched self-coverings of the sphere include rational
maps in $\C(z)$ and their compositions with homeomorphisms; branched
self-coverings of the torus admit no branch points, so are genuine
coverings and can all be represented on the torus $\R^2/\Z^2$ (after
punctures are filled in) as $z\mapsto M z+b$ for some $M\in\Z^{2\times2}$
and some $b\in\R^2$. These maps may descend to the sphere: if $2b\in\Z^2$
and $\wp\colon\R^2/\Z^2\to S^2$ is a branched covering satisfying
$\wp(-p)=\wp(p)$ then $\wp\circ f\circ\wp^{-1}$ is a branched self-covering
of $S^2$.

An extra ingredient, besides topology and group theory, becomes
available if $f$ has degree $>1$: it may happen that $X$ admits a
metric that is expanded by $f$. Even better, $X$ may admit a complex
structure that is preserved by $f$, in which case there exists a
conformal metric that is expanded by $f$.

\subsection{Overview of results}\label{ss:overview}
We consider self-branched coverings $f\colon(S^2,A)\selfmap$, with $A$ a
finite subset of $S^2$ containing $f(A)$ and the critical values of $f$;
such maps are called \emph{marked Thurston maps}. For example, $A$ could be
the \emph{post-critical set} $P_f=\bigcup_{n>0}f^n(\text{critical
  points})$.

Since Thurston's fundamental work, it is customary to consider such
maps $f$ up to \emph{combinatorial equivalence}: two maps $f_0,f_1$
are equivalent if they can be deformed smoothly into one another along
a path $f_t\colon(S^2,A_t)\selfmap$ of marked Thurston maps.

This notion is a combination of two stricter notions: \emph{conjugacy}
by a homeomorphism $(S^2,A_0)\to(S^2,A_1)$ and \emph{isotopy rel $A$},
namely along a path of Thurston maps with constant marked set $A$. The
\emph{centralizer} of a Thurston map $f\colon(S^2,A)\selfmap$ is the
group of pure mapping classes $g\in\Mod(S^2,A)$ such that $g^{-1}f g$
is isotopic to $f$.

Our work makes essential use of a fundamental invariant introduced by
Nekrashevych, the \emph{biset} of a branched covering. Choose a
basepoint $*\in S^2\setminus A$ and write $G=\pi_1(S^2\setminus A,*)$.
Then the biset of a branched self-covering $f\colon(S^2,A)\selfmap$ is
a set $B(f)$ equipped with two commuting actions of $G$, whose
(appropriately defined) isomorphism class is a complete invariant of
$f$ up to combinatorial equivalence.  Since $G$ is a free group,
calculations in $B(f)$ are easy to perform.

The theory is slightly complicated by a family of branched
self-coverings $f\colon(S^2,A)\selfmap$ that come from a self-covering
of the torus $\tilde f\colon T^2\selfmap$ via a degree-$2$ branched
covering $\wp\colon T^2\to S^2$; i.e.\ $\wp\circ\tilde f=f\circ\wp$.
Suppose we have $\tilde f(z)=M z+b$ on the model $\R^2/\Z^2$ of $T^2$,
and assume that the eigenvalues of $M$ are real but not rational. We
call the map $f$ \emph{irrational doubly covered by a torus
  endomorphism}. It may furthermore be \emph{expanding}, if all
eigenvalues of $M$ have norm $>1$.

A \emph{Levy obstruction} is a cycle of curves
$\gamma_0,\gamma_1,\dots,\gamma_n=\gamma_0$ on $S^2\setminus A$ with $f$
mapping $\gamma_i$ to $\gamma_{i+1}$ by degree $1$, up to isotopy. It is
clearly an obstruction to the existence of an $f$-expanding metric, and we
show that it is the only one:
\begin{mainthm}
  Suppose $f\colon (S^2,A)\selfmap$ is a Thurston map with degree at
  least $2$ such that $f$ admits no Levy obstruction. Then either $f$
  is isotopic to a map expanding a metric on $S^2\setminus A$, or $f$
  is isotopic to the quotient by the involution $z\mapsto -z$ of an
  affine map on $\R^2/\Z^2$ whose eigenvalues are different from
  $\pm1$.
\end{mainthm}

By a \emph{decomposition} of a map $f\colon(S^2,A)\selfmap$
we mean a decomposition of $S^2\setminus A$ into punctured spheres
along an $f$-invariant multicurve; the \emph{pieces} of the
decomposition are the return maps of $f$ on the sub-spheres.

We first show that every Thurston map $f\colon(S^2,A)\selfmap$ may
canonically be decomposed into pieces that, up to isotopy, (1) are
homeomorphisms, or (2) expand a metric on $S^2\setminus A$, or (3) are
non-expanding irrational doubly covered by torus endomorphisms; see
Figure~\ref{fig:geometric}.

The second case (expanding a metric) is equivalent to (2') a
topological property of $f$ (it does not admit Levy cycles), and to
(2'') a group-theoretical property (the biset of $f$ is contracting).

According to the classical Nielsen-Thurston theory, homeomorphisms in
case (1) may canonically be further decomposed, again up to isotopy,
into maps of finite order and pseudo-Anosov homeomorphisms, namely
homeomorphisms that preserve a pair of transverse foliations on
$S^2\setminus A$.

The decomposition theory of Pilgrim~\cite{pilgrim:combinations} lets
us decompose expanding maps that are not doubly covered by irrational
torus endomorphisms into pieces that preserve a complex structure,
namely are rational maps. Therefore (2) implies another
group-theoretical property (4): the biset of $f$ decomposes as an
amalgam over cyclic bisets (i.e.\ transitive bisets over cyclic
groups), with rational pieces, and an algebraic-geometry property (5):
there exists a complex stable curve (an algebraic variety $\mathbb X$
consisting of complex spheres with marked points, arranged as a
cactoid) and a rational map $\mathbb X\dashrightarrow\mathbb X$ that
becomes isotopic to $f$ when the nodes of $\mathbb X$ are resolved;
see Theorem~\ref{thm:G}.

We show how decision problems -- isotopy rel $A$ and computation of
centralizers -- can be promoted from pieces in a decomposition to the global
map. We also show that the points in $A\setminus P_f$ can be encoded in
group-theoretical language, as finite sequences of biset elements. Finally,
isotopy and centralizers can be computed for each of the pieces:
finite-order, pseudo-Anosov, rational maps and maps doubly covered by
torus endomorphisms. Our main result follows:
\begin{mainthm}\label{thm:decidable}
  It is decidable whether or not two Thurston maps are combinatorially
  equivalent.

  Furthermore, the centralizer of a Thurston map $f$ (i.e.\ the set of
  homeomorphisms that commute with $f$ up to isotopy) is effectively
  computable.
\end{mainthm}
This extends a series of partial results: Bonnot, Braverman and
Yampolsky show in~\cite{bonnot-braverman-yampolsky:thurs tondecidable}
that equivalence to a rational map is decidable, and Selinger and
Yampolsky show in~\cite{selinger-yampolsky:geometrization}*{Main
  Theorem~III} that it is decidable whether $g$ is equivalent to $f$
provided that all return maps in the canonical decomposition of $f$
are rational maps with hyperbolic orbifolds.

Even though the first examples of branched self-coverings of the sphere of
degree $>1$ are rational maps, they benefit greatly from sometimes being
considered as topological maps. This is because surgery may be performed on
topological maps: one may decompose them into smaller pieces, glue two maps
together (``mating'', see~\S\ref{ss:matings}), etc. A fundamental theorem
of Thurston asserts that, once a topological condition is satisfied (``no
annular obstruction''), the topological map can be isotoped back into a unique
rational map. What we are proposing, in this research, is to express the
topology in a group-theoretical language in which fundamental questions
become decidable and effectively computable, see~\S\ref{ss:examples}.

Note that, in contrast to non-invertible branched self-coverings, the
algorithmic theory of $\Mod$ is fairly advanced, and in particular the
Nielsen-Thurston decomposition is known to be efficiently computable,
starting with train tracks as shown by Bestvina and
Handel~\cite{bestvina-h:tt}, and actually in polynomial-time as
recently announced by Margalit, Strenner and Yurttas.

\subsection{Structure of the papers}
In the first article~\cite{bartholdi-dudko:bc1}, we develop the
general machinery of bisets and decompositions of bisets. The main
definitions are graphs of groups and graphs of bisets, and the main
result is a van Kampen-like theorem: given a correspondence
$Y\leftarrow Z\to X$ and appropriately compatible covers of $X,Y,Z$, a
graph of bisets is obtained by restricting the correspondence to the
sets in the cover; and the van Kampen theorem expresses the biset of
the correspondence as the ``fundamental biset'' of the graph of
bisets, just as the fundamental group of a space is the fundamental
group of its graph of groups.

In the second article~\cite{bartholdi-dudko:bc2}, we specialize to
punctured spheres, or more generally orbispace structures on spheres,
which we treat as groups of the form
\[G=\langle \gamma_1,\dots,\gamma_n\mid
\gamma_1^{e_1}=\cdots=\gamma_n^{e_n}=\gamma_1\cdots\gamma_n=1\rangle
\]
with all $e_i\in\{2,3,\dots,\infty\}$. Curves on $S^2\setminus A$ are
treated as conjugacy classes in $G$, and multicurves as collections of
conjugacy classes. We use detailed structure about pure mapping class
groups to explain how conjugacy and centralizer problems for pieces of
a decomposition can be promoted to the original Thurston map.

In the third article~\cite{bartholdi-dudko:bc3}, we study the effect
of erasing punctures (periodic cycles or marked preimages of
post-critical points) from a Thurston map. We show how these erased
points can be encoded by a finite subset of the biset called a
\emph{portrait}. We use this language to study more carefully the maps
that are doubly covered by torus endomorphisms and characterize them
via elementary group theory.

In the fourth article~\cite{bartholdi-dudko:bc4}, we prove the first
decomposition theorem, of an arbitrary Thurston map into
homeomorphisms, expanding maps and maps doubly covered by torus
endomorphisms. We also give the characterization of expanding maps as
Levy-free maps and as maps with contracting biset. In particular, the
above decomposition is along a minimal Levy multicurve such that all
pieces are Levy-free or homeomorphisms.

In the fifth article~\cite{bartholdi-dudko:bc5}, we describe
algorithms, and their implementations, that
\begin{itemize}
\item convert the Poirier description of a complex polynomial by its
  external angles into a biset;
\item convert the Hubbard tree description of a polynomial into a
  biset;
\item convert a polynomial biset into its Poirier description by
  external angles;
\item convert a floating-point approximation of a rational map into a
  biset;
\item convert a sphere biset into a complex rational map with
  algebraic coefficients, or produce an invariant multicurve that
  testifies to the inexistence of a rational map.
\end{itemize}
The first three algorithms are entirely symbolic, while the last two
require floating-point calculations as well as manipulations of
triangulations on the sphere. All these algorithms have been
implemented in the software package \textsc{Img}~\cite{img:manual}
within the computer algebra system \textsc{Gap}~\cite{gap4.5:manual}.

Finally, we prove Theorem~\ref{thm:decidable}
in~\S\ref{ss:decidability}, assuming all the results in the previously
mentioned articles.

We give below, in separate sections, condensates of the contents of
these articles, with relevant definitions and sketches of proofs, and
conclude this article with a series of examples seen from the
topological, group-theoretical and algebraic perspectives.

\subsection{Pilgrim's decomposition}\label{ss:canonical decomposition}
Pilgrim develops in~\cite{pilgrim:combinations} a decomposition theory
for branched coverings. In particular, he constructs a \emph{canonical
  obstruction}, which is a multicurve $\Gamma_f$ associated to a
branched self-covering $f\colon(S^2,A)\selfmap$ that is not doubly
covered by a torus endomorphism and which has the property that $f$ is
combinatorially equivalent to a rational map if and only if
$\Gamma_f=\emptyset$.

Let us review the construction, omitting on purpose the case of maps
doubly covered by torus endomorphisms. The \emph{Teichm\"uller space}
$\mathscr T_A$ of $(S^2,A)$ is the space of complex structures on the
marked sphere $(S^2,A)$, or equivalently Riemannian metrics on
$S^2\setminus A$ of curvature $-1$. Thurston associates with
$f\colon(S^2,A)\selfmap$ a self-map
$\sigma_f\colon\mathscr T_A\selfmap$ defined by pulling back complex
structures through $f$. He shows (see~\cite{douady-h:thurston}) that
$\sigma_f$ is weakly contracting for the Teichm\"uller metric on
$\mathscr T_A$, so that (starting from an arbitrary point
$\tau\in\mathscr T_A$) either $\sigma_f^n(\tau)$ converges to a fixed
point (which is then a complex structure preserved by $f$, so $f$ is
combinatorially equivalent to a rational map) or degenerates to the
boundary of $\mathscr T_A$, in which case some curves on
$S^2\setminus A$ become very short in the hyperbolic metric defined by
$\sigma_f^n(\tau)$. The canonical obstruction $\Gamma_f$ is simply
defined as the collection of simple closed curves on $S^2\setminus A$
whose length goes to $0$ as $n\to\infty$.

Selinger gives in~\cite{selinger:canonical}*{Theorem~5.6} a topological
characterization of the canonical obstruction. As a consequence, we
deduce that Pilgrim's canonical obstruction is the union of the
\emph{Levy obstruction} (the multicurve along which $S^2$ is pinched
to produce the Levy decomposition) and the \emph{rational obstruction}
(the multicurve along which the expanding maps of the Levy
decomposition should be further pinched to give rational maps).
Selinger and Yampolsky show
in~\cite{selinger-yampolsky:geometrization}*{Main Theorem I} that the
canonical obstruction is computable.

\subsection{Remarks}
The main objects of study are \emph{bisets} (called
\emph{combinatorial bimodules} in~\cite{nekrashevych:ssg}), which we
generalize by allowing different groups $H,G$ to act on the left
and right respectively. Bisets may be thought of as generalizations of
group homomorphisms, up to pre- and post-composition by inner
automorphisms.  Indeed, if $\phi\colon H\to G$ is a group
homomorphism, written $h\mapsto h^\phi$, one associates with it the
$H$-$G$-set $B_\phi$, which, qua right $G$-set, is plainly $G$; the
left $H$-action is by
\[h\cdot b=h^\phi b.
\]
Conversely, if $B$ is a transitive $H$-$G$ biset (a general biset
splits as a disjoint union of its transitive components), then there
is a group $K$ and there are homomorphisms $\phi\colon K\to G$ and
$\psi \colon K\to H$ such that $B\cong B_\psi^{\vee}\otimes B_\phi$,
where $ B_\psi^{\vee}$ is the contragredient of $ B_\psi$,
see~\S\ref{ss:bisets}, so bisets may also be thought of
as correspondences of groups. In fact, to every topological
correspondence $Y\leftarrow Z\to X$ there is a naturally associated
$\pi_1(Y)$-$\pi_1(X)$-biset, independent of basepoints up to
isomorphism.

The main decision problems we study are, in the context of bisets, the
\emph{conjugacy problem} that can be asked in any category with
multiplication (given $B,C$, does there exist $X$ with $X B=C X$?),
\emph{witnessed conjugacy problem} (given $B,C$, find an $X$ with
$X B=C X$ or prove that there are none), and the \emph{centralizer
  problem} (given $B$, describe the set of $X$ with $B X=X B$). We
proceed from the best-behaved bisets (that of rational Thurston maps)
to the general case by following diverse reductions, in particular
introducing \emph{portraits of bisets} to add and erase marked points,
and \emph{trees of bisets} to glue and cut along multicurves.

In fact, by a general trick, only the centralizer problem needs to be
considered: assuming that the centralizer problem is solvable and given
$B,C$, compute the centralizer of $B\sqcup C$, and check whether it
contains an element that switches $B$ and $C$; if so, its restriction to
$C$ is a witness for conjugacy of $B$ and $C$. This is easy to check e.g.\
if centralizers are finitely generated subgroups of well-understood
groups. We show, however, that centralizers are only computable in the
weaker sense of being expressible as kernels of maps from well-understood
groups to Abelian groups. In particular, they can be infinitely generated,
see Example~\ref{ex:infinitely generated}. For extra clarity, we treat all
three decision problems in parallel.

We restrict ourselves to studying actions of the pure mapping class group:
punctures on our marked spheres may be permuted by Thurston maps, but are
fixed by the mapping classes. In this manner, the \emph{portrait} of
Thurston maps, namely the dynamics on their marked points, is preserved by
pre- and post-composition by mapping classes. One could extend the
statements and algorithms to non-pure mapping class groups, at the cost of
introducing finite groups in a few places. The action of non-pure mapping
classes would better capture the notions of conjugation, centralizer and
combinatorial equivalence of maps. In particular, the action of non-pure
(i.e.\ fractional) Dehn twists could lead to valuable systematic
constructions of Thurston maps, in the spirit of near-Euclidean Thurston
maps~\cite{cannon-floyd-parry-pilgrim:net}.

\subsection{Acknowledgments}
We are grateful to Kevin Pilgrim and Thomas Schick for enlightening
discussions.

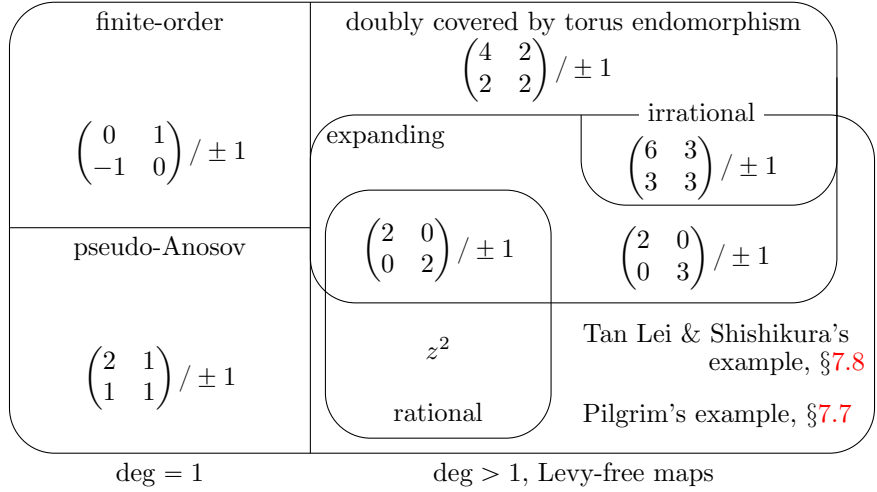
\begin{figure}
  \centering{\begin{tikzpicture}
      \draw[rounded corners=6mm] (11.5,2) -- (11.5,0) -- (0,0) -- (0,6) -- (11,6) -- (11,4) -- (11,2) -- (4,2) -- (4,4.5) -- (11.5,4.5) -- (11.5,2);
      \draw (4,0) -- +(0,6);
      \draw (0,3) -- +(4,0);
      \node[anchor=north] at (2,0) {$\deg=1$};
      \node[anchor=north] at (7.5,0) {$\deg>1$, Levy-free maps};
      \node[anchor=north] at (2,6) {finite-order};
      \node at (2,4) {$\begin{pmatrix}0&1\\-1&0\end{pmatrix}/\pm1$};
      \node[anchor=north] at (2,3) {pseudo-Anosov};
      \node at (2,1) {$\begin{pmatrix}2&1\\1&1\end{pmatrix}/\pm1$};
      \node[anchor=north] at (7.5,6) {doubly covered by torus endomorphism};
      \node at (7,5.1) {$\begin{pmatrix}4&2\\2&2\end{pmatrix}/\pm1$};
      \draw[rounded corners=6mm] (11,5) -- (11,3.3) -- (7.6,3.3) -- (7.6,4.5);
      \node at (9.2,3.8) {$\begin{pmatrix}6&3\\3&3\end{pmatrix}/\pm1$};
      \node[anchor=south,fill=white] at (9.2,4.3) {irrational};
      \node[anchor=north] at (5,4.5) {expanding};
      \draw[rounded corners=6mm] (4.2,0.2) rectangle (7.2,3.5);
      \node at (9.1,2.6) {$\begin{pmatrix}2&0\\0&3\end{pmatrix}/\pm1$};
      \node[anchor=south] at (5.7,0.3) {rational};
      \node at (5.7,1.4) {$z^2$};
      \node at (5.7,2.7) {$\begin{pmatrix}2&0\\0&2\end{pmatrix}/\pm1$};
      \node[anchor=west] at (7.5,1.6) {Tan Lei \& Shishikura's};
      \node[anchor=east] at (11.5,1.2) {example, \S\ref{ex:tls}};
      \node[anchor=west] at (7.5,0.5) {Pilgrim's example, \S\ref{ex:pilgrim}};
    \end{tikzpicture}}
  \caption{Geometric maps}\label{fig:geometric}
\end{figure}

\subsection{Notations}
Here are some notations that shall be used throughout the texts:
\begin{itemize}
\item The symmetric group on a set $S$ is written $S\perm$;
\item Concatenation of paths is written $\gamma\#\delta$ for ``first
  $\gamma$, then $\delta$''; inverses of paths are written $\gamma^{-1}$;
\item The identity map is written $\one$. Composition of maps,
  permutations etc.\ is in the algebraic, left-to-right order, unless
  explicitly written as $(f\circ g)(x)=f(g(x))$. The restriction of a
  map $f\colon A\to B$ to a subset $C\subseteq A$ is written
  $f\restrict C$. Self-maps are written $f\colon X\selfmap$ in
  preference to $f\colon X\to X$;
\item We write $\approx$ for isotopy of paths, maps etc, $\sim$ for
  conjugacy or combinatorial equivalence, and $\cong$ for isomorphism
  of algebraic objects;
\item We try to use similar fonts for similar objects: script $\CC$
  for multicurves, $\gf$ for graphs of groups, $G_z$ for its vertex
  and edge groups (even if the graph of groups is not called $G$),
  $\gfB$ for graph of bisets, $B_z$ for its vertex and edge
  bisets, usually Greek letters for functions.
\end{itemize}

We also establish a kind of ``dictionary'' between topological and
algebraic notions:

\begin{tabular}{l|l}
  Topology & Algebra\\[2pt]\hline\rule{0pt}{3ex}\relax

  Continuous map & Right-principal biset $B(f)$\\
  Covering map & Left-free right-principal biset\\
  Composition of maps $g\circ f$ & Tensor product of bisets $B(f)\otimes B(g)$\\
  Topological correspondence $(f,i)$ & Biset $B(f,i)=B(i)^\vee\otimes B(f)$\\
  Covering pair ($f$ is a covering) & Left-free biset\\
  Decomposition of correspondence & Graph of bisets\\[1ex]

  Punctured sphere $(S^2,A)$ & Sphere group $G=\langle \gamma_1,\dots,\gamma_n\mid\gamma_1\cdots\gamma_n\rangle$\\
  Puncture $a_i\in A$ & Peripheral conjugacy class $\gamma_i^G$ in $G$\\
  Mapping class group $\Mod(S^2,A)$ & Outer automorphism group $\Mod(G)$\\[1ex]

  Multicurve $\CC$ & Family of essential conjugacy classes in $G$\\
  Decomposition of $S^2$ along $\CC$ & Decomposition of $G$ as sphere tree of groups $\gf$\\
  $\CC\sqcup A$ & Distinguished conjugacy classes $X$ of $\gf$\\
  Homeomorphism $(S^2,B,\DD)\to(S^2,A,\CC)$ & Conjugator $\subscript\gfY\gfI_\gf$ between trees of groups\\
  Group of Dehn twists along $\CC$ &$\Z^{\CC}$\\
  Stabilizer $\Mod(S^2,A,\CC)$ of $\CC$ & Stabilizer $\Mod(\gf)$ of $\gf$\\[1ex]

  Branched covering $f\colon(S^2,B)\to(S^2,A)$ & Sphere $H$-$G$-biset $B(f)$\\
  Isotopy rel $A$ & Isomorphism of sphere bisets\\
  Mapping class biset $M(f)$ & Mapping class biset $M(B)$\\[1ex]

  Thurston map $f\colon(S^2,A)\selfmap$ & Sphere $G$-$G$-biset $B(f)$\\
  Expanding map $f$ & Contracting biset $B(f)$\\
  Restriction $f\colon A\selfmap$ & Portrait $B_*\colon A\selfmap$\\
  Combinatorial equivalence & Conjugacy of sphere bisets\\
  Centralizer $Z(f)$ & Centralizer $Z(B)$\\
  Extra marked points of expanding map $f$ & Portrait of bisets for $B(f)$\\
  Torus endomorphism $\R^2/\Z^2\selfmap$ & Biset $\subscript{\Z^2}B_{\Z^2}$ of a linear map\\
  Map doubly covered by $\R^2/\Z^2\selfmap$ & Crossed product $\subscript{\Z^2}B_{\Z^2}\rtimes\{\pm1\}$\\[1ex]

  Decomposition of $f$ along multicurve & Decomposition of $B(f)$ as tree of bisets $\gfB$\\
  Sub-mapping class biset $M(f,B,A,\DD,\CC)$ & $M(\gfB)$\\
  Renormalization of $f$ w.r.t. $\CC$ & Return bisets of $\gfB$
\end{tabular}

\renewcommand\thesection{\Roman{section}}
\section{Bisets and van Kampen's theorem~\cite{bartholdi-dudko:bc1}}\label{ss:bisets}
Let $f\colon Y\to X$ be a continuous map between topological
spaces. Fix basepoints $\dagger\in Y$ and $*\in X$, and consider the
fundamental groups $H=\pi_1(Y,\dagger)$ and $G=\pi_1(X,*)$. Then the
map $f$ may be encoded into an \emph{$H$-$G$-biset}, namely a set
$B(f)$ with commuting left $H$-action and right $G$-action. As a set,
$B(f)$ is the set of homotopy classes of paths, in $X$, from
$f(\dagger)$ to $*$. The actions of $H$ and $G$ are respectively given
by pre-catenation of the $f$-image and by post-catenation. To recall
the acting groups, we sometimes write $\subscript H B(f)_G$. Bisets
can be multiplied; the \emph{product} $B\otimes_G C$ of an
$H$-$G$-biset $B$ with a $G$-$F$-biset $C$ is the $H$-$F$-biset
$(B\times C)/\{(b g,c)=(b,g c)\}$. The \emph{contragredient} of the
$H$-$G$-biset $B$ is the $G$-$H$-biset $B^\vee$, which is $B$ as a set
with actions $g\cdot(b^\vee)\cdot h=(h^{-1}b g^{-1})^\vee$.

An \emph{intertwiner} from an $H$-$G$-biset $B$ to an $H'$-$G'$-biset
$B'$ is a map $\beta\colon B\to B'$ and a pair of homomorphisms
$\gamma\colon H\to H'$ and $\alpha\colon G\to G'$ with
\[\beta(h b g)=\gamma(h)\beta(b)\alpha(g)\text{ for all }h\in H,b\in B,g\in G.\]
If $H=G$, $H'=G'$ and $\gamma=\alpha$, then the intertwiner is a
\emph{semiconjugacy}. If $H=H'$ and $G=G'$ and $\gamma=\one$ and
$\alpha=\one$, then the intertwiner is a \emph{morphism}. \emph{Congruences}, \emph{conjugacies}, and
\emph{isomorphisms} are invertible intertwiners, semiconjugacies, and
morphisms respectively.
\[\begin{tikzcd}
    \begin{array}{c}\text{semiconjugacies}\\ (\alpha,\beta,\alpha)\\ G=H, G'=H'\end{array}\rar[draw=none,"\displaystyle\subset" description]\dar[draw=none,"\displaystyle\cup" description] 
   & \begin{array}{c}\text{intertwiners}\\ (\gamma,\beta,\alpha)\\ \end{array}\rar[draw=none,"\displaystyle\supset" description]\dar[draw=none,"\displaystyle\cup" description] &
    \begin{array}{c}\text{morphisms}\\ (\one,\beta,\one)\\ G=G', H=H'\end{array}\dar[draw=none,"\displaystyle\cup" description]\\
    \begin{array}{c}\text{conjugacies}\end{array}\rar[draw=none,"\displaystyle\subset" description] &
    \begin{array}{c}\text{congruences}\\ \exists (\gamma^{-1},\beta^{-1}, \alpha^{-1})\end{array}\rar[draw=none,"\displaystyle\supset" description] &
    \begin{array}{c}\text{isomorphisms}\end{array}
  \end{tikzcd}
\]
\noindent Note that if $G=H$ then every morphism is also a semiconjugacy.

Let $X,Y$ be path connected topological spaces. Bisets are well
adapted to encode more general objects than continuous maps $Y\to X$,
namely \emph{topological correspondences}.  These are triples
$(Z,f,i)$ consisting of a topological space $Z$ and continuous maps
$f\colon Z\to X$ and $i\colon Z\to Y$, and are simply written
$Y\leftarrow Z\to X$. If $Z$ is path connected, then the biset of the
correspondence is $B(f,i)=B(i)^\vee\otimes B(f)$; in general, it is
the disjoint union of the bisets on all path connected components of $Z$.

\subsection{Coverings and left-free bisets}\label{ss:coverings}
The biset $\subscript H{B(f)}_G$ of a continuous map $f\colon Y\to X$
is, by construction, isomorphic to $G_G$ qua right $G$-set. If
furthermore $f$ is a covering, say of degree $d$, then
$\subscript H {B(f)}$ is free of degree $d$ qua left $H$-set. In
particular, if in the topological correspondence $Y\leftarrow Z\to X$
the map $f$ is a degree-$d$ covering, then $B(f,i)$ is left-free of
degree $d$. The correspondence is called a \emph{covering pair}.

Choose then a subset $S\subseteq B(f,i)$ of cardinality $d$ that
intersects once every left $H$-orbit; such a subset is called a
\emph{basis}. Using the isomorphism $\subscript H{B(f,i)}=H\times S$, we may
write the right $G$-action on $B(f,i)$ in the form
\[s\cdot g=h\cdot s'\text{ for some }h\in H,s'\in S.\]
This is a map $S\times G\to H\times S$, which (writing $S\perm$ for
the permutation group on $S$) yields a group homomorphism
\[\psi\colon G\mapsto H^S\rtimes S\perm\]
called the \emph{wreath recursion} of $B(f,i)$. Writing
$S=\{\ell_1,\dots,\ell_d\}$, we may write $\psi$ as
\[g\mapsto\pair{h_1,\dots,h_d}\pi\]
for elements $h_1,\dots,h_d$ and a permutation $\pi$. It is sufficient
to specify $\psi$ on generators of $g$, and these data are called a
\emph{presentation} of the biset.

We shall see in~\S\ref{ss:examples} many examples of biset
presentations. We stress here that they are eminently computable, and
in particular with pencil and paper. Here is the concrete recipe, for
a covering correspondence $Y\leftarrow Z\to X$. Fix basepoints
$*\in X$ and $\dagger\in Y$, and write $G=\pi_1(X,*)$ and
$H=\pi_1(Y,\dagger)$. Choose for each $z\in f^{-1}(*)$ a path $\ell_z$
in $Y$ from $i(z)$ to $\dagger$, and set
$S=\{\ell_z\mid z\in f^{-1}(*)\}$.  For every (generator) $g\in G$,
represented as a curve $g\colon[0,1]\to X$, and for every
$\ell_z\in S$, there exists a unique lift $\tilde g_z\colon[0,1]\to Z$
of $g$ that starts at $z$. Let $z'$ be the endpoint of $\tilde g_z$.
Then the concatenation
$h_z\coloneqq\ell_z^{-1}\#(i\circ \tilde g_z)\#\ell_{z'}$ is a loop at
$\dagger$, and its class in $H$ is independent of the choice of
representative for $g$. The wreath recursion of $B(f,i)$ is the map
$g\mapsto\pair{h_{z_1},\dots,h_{z_d}}\pi$ with $\pi\in S\perm$ the
permutation $z\mapsto z'$.

\subsection{Graphs of bisets}
The van Kampen theorem expresses the fundamental group of a
topological space in terms of the fundamental groups of subspaces. A
convenient algebraic object that captures the data is a graph of
groups -- a graph decorated with groups such that each edge group has
two morphisms into the neighboring vertex groups. It is therefore
convenient~\cite{serre:trees} to double each edge -- to replace it with
a pair of directed edges of the opposite orientation.

We view \emph{graphs} as sets $\gf$ endowed with two maps
$x\mapsto x^-$ and $x\mapsto\overline x$, with axioms
$\overline{\overline x}=x$ and $(x^-)^-=x^-$ and
$x^-=x\Leftrightarrow\overline x=x$. The vertex set $V(\gf)$ is then
$\{x\mid x^-=x\}$, and the edge set $E(\gf)$ is $\gf\setminus
V(\gf)$. The object $\overline x$ is called the \emph{reverse} of
$x$. Setting $x^+\coloneqq (\overline x)^-$, the vertices $x^-$ and
$x^+$ are respectively the \emph{origin} and \emph{terminus} of $x$. A
\emph{graph morphism} is a map $h\colon \gf \to \gf'$ such that
$h(x^-)=h(x)^-$ and $h(\overline x)=\overline x$ for all $x\in
\gf$. Note that the image of a vertex is a vertex while the image of
an edge is either an edge or a vertex.

Recall (e.g.\ from~\cite{serre:trees}*{\S4}) that a graph of groups is
a graph $\gf$ with a group $G_x$ associated with each $x\in\gf$, and
for each edge $x\in\gf$ homomorphisms $G_x\to G_{x^-}$ and
$G_x\to G_{\overline x}$, written respectively $g\mapsto g^-$ and
$g\mapsto\overline g$.  Its \emph{fundamental group} $\pi_1(\gf,v)$ is
the set of group-decorated loops $g_0e_1g_1\cdots e_n g_n$ with
$e_1\cdots e_n$ a loop at $v\in V(\gf)$ in $\gf$ and
$g_i\in G_{e_i^+}$ for all $i$, up to the relations $e g^+=g^-e$ for
all edges $e\in E$ and all $g\in G_e$. More generally, given
$v, w \in \gf$ we define $\pi_1(\gf, v,w )$ as the set of
group-decorated paths $g_0e_1g_1\cdots e_n g_n$ with $e_1\cdots e_n$ a
path from $v$ to $w$ and $g_i\in G_{e_i^+}$ for all $i$, up to the
relations $e g^+=g^-e$ as above. The set $\pi_1(\gf,v,w)$ is naturally
a $\pi_1(\gf,v)$-$\pi_1(\gf,w)$ biset. Given $p\in \pi_1(\gf,u,v)$ and
$q\in \pi_1(\gf,v,w)$ their product $p q$ is in $\pi_1(\gf,u,w)$.

\begin{defn}[Graph of bisets]\label{defn:graphofbisets}
  Let $\gf,\gfY$ be two graphs of groups. A \emph{graph of
    bisets} $\subscript \gfY\gfB_\gf$ between them is the
  following data:
  \begin{itemize}
  \item a graph $\gfB$;
  \item graph morphisms $\lambda\colon\gfB\to\gfY$ and
    $\rho\colon\gfB\to\gf$;
  \item for every $z\in\gfB$, a
    $G_{\lambda(z)}$-$G_{\rho(z)}$-biset $B_z$, an intertwiner
    $()^-\colon B_z\to B_{z^-}$ with respect to the homomorphisms
    $G_{\lambda(z)}\to G_{\lambda(z)^-}$ and
    $G_{\rho(z)}\to G_{\rho(z)^-}$, and an intertwiner
    $\overline{()}\colon B_z\to B_{\overline z}$ with respect to the
    homomorphisms $G_{\lambda(z)}\to G_{\overline{\lambda(z)}}$ and
    $G_{\rho(z)}\to G_{\overline{\rho(z)}}$. These intertwiners satisfy
    natural axioms: the composition
    $B_z\to B_{\overline z}\to B_{\overline{\overline z}}=B_z$ is the
    identity for every $z\in\gf$, and if $z\in V(\gf)$, then the
    homomorphisms $B_z\to B_{z^-}$ and $B_z\to B_{\overline z}$ are
    the identity. For $b\in B_z$ we write $b^+={\overline b}^-$.
  \end{itemize}
  We call $\gfB$ a \emph{$\gfY$-$\gf$-biset}.
\end{defn}

\begin{defn}[Fundamental biset of graph of bisets]\label{defn:p1biset}
  Let $\gfB$ be a $\gfY$-$\gf$-biset; choose
  $*\in V(\gf)$ and $\dagger\in V(\gfY)$. Write
  $G=\pi_1(\gf,*)$ and $H=\pi_1(\gfY,\dagger)$. The
  \emph{fundamental biset} of $\gfB$ is an $H$-$G$-biset
  $B=\pi_1(\gfB,\dagger,*)$, constructed as follows.
  \begin{equation}\label{eq:freebiset}
    B=\frac{\bigsqcup_{z\in V(\gfB)}\pi_1(\gfY,\dagger,\lambda(z))\otimes_{G_{\lambda(z)}}B_z\otimes_{G_{\rho(z)}}\pi_1(\gf,\rho(z),*)}{\left\{q b^-p = q\lambda(z)b^+\overline{\rho(z)}p\quad\forall\begin{array}{c}q\in\pi_1(\gfY,\dagger,\lambda(z)^-),b\in B_z,\\p\in\pi_1(\gf,\rho(z)^-,*),z\in E(\gfB)\end{array}\right\}}.
  \end{equation}
  In other words, elements of $B$ are sequences
  $h_0y_1h_1\cdots y_n\,b\,x_1\cdots g_{m-1} x_n g_n$ subject to the
  equivalence relations used previously to define $\pi_1(\gf)$, as
  well as
  $y_n h b^-g x_1\leftrightarrow y_n h\lambda(z)b^+\overline{\rho(z)}g x_1$
  for all $z\in\gfB$, $b\in B_z$, $h\in G_{\lambda(z)^-}$,
  $g\in G_{\rho(z)^-}$.
\end{defn}
\noindent Up to congruence, the fundamental biset is independent on
the choice of basepoints:
$\pi_1(\gfB, \dagger,*)= \pi_1(\gfY,
\dagger,\dagger')\otimes \pi_1(\gfB, \dagger',*') \otimes
\pi_1(\gf, *',*)$.

The definition is a bit unwieldy, but it has a simpler version in case
the graph of bisets is \emph{left-fibrant},
see~\cite{bartholdi-dudko:bc1}*{Definition~\ref{bc1:dfn:FibrationGrBis}}. Such
graphs of bisets arise from correspondences where one of the maps is a
fibration (such as a covering). A left-fibrant graph of bisets
possesses a lifting property: any
$p b q \in \pi_1(\gfB, \dagger,*)$ can be rewritten in an
essentially unique way as $p q' b''$ for some $b'$ in a vertex
biset. Thus~\eqref{eq:freebiset} takes the form
\cite{bartholdi-dudko:bc1}*{\eqref{bc1:eq:cor:NormalForm}}
\begin{equation}
  \label{eq:cor:NormalForm}
  \pi_1(\gfB,\dagger,*)=\bigsqcup_{z\in \rho^{-1}(*)}\pi_1(\gfY,\dagger,\lambda(z))\otimes_{G_{\lambda(z)}}B_z,
\end{equation}
with right action given by lifting of paths in $\pi_1(\gf,*)$.

A biset $\subscript H B_G$ is \emph{biprincipal} if both actions are free and
transitive. A graph of bisets $\subscript \gfY \gfI_{\gf}$ is
\emph{biprincipal} if
\begin{enumerate}
\item $\lambda\colon \gfI\to\gfY$ and $\rho\colon
  \gfI\to\gf$ are graph isomorphisms; and
\item $B_z$ are biprincipal for all objects $z\in \gfI$.\qedhere
\end{enumerate}

\noindent We use this notion to define congruence and conjugacy of
graphs of bisets:
\begin{defn}\label{defn:conjugategob}
  Two graphs of groups $\gfY$, $\gf$ are called \emph{congruent} if
  there is a biprincipal graph of bisets $\subscript\gfY\gfI_\gf$.

  Isomorphism of graphs of bisets is meant in the strongest possible
  sense: isomorphism of the underlying graphs, and isomorphisms of the
  respective bisets. There is a general notion of tensor product of
  graphs of bisets, which in the cases below simply amounts to
  tensoring the vertex and edge bisets together.

  Two graphs of bisets $\subscript\gfY\gfB_\gf$ and
  $\subscript {\gfY'} \gfC_{\gf'}$ are \emph{congruent} if there are
  biprincipal graph of bisets $\subscript\gfY\gfI_{\gfY'}$ and
  $\subscript\gf{\mathfrak L}_{\gf'}$ such that
  $\subscript \gfY\gfB_\gf\otimes \mathfrak L$ and
  $\gfI \otimes \subscript{\gfY'}\gfC_{\gf'} $ are isomorphic.

  Two graphs of bisets $\subscript \gf\gfB_\gf$ and
  $\subscript \gfY\gfC_\gfY$ are \emph{conjugate} if
  there is a biprincipal graph of bisets
  $\subscript \gf\gfI_\gfY$ such that
  $\gfB\otimes_\gf \gfI$ and
  $\gfI\otimes_\gfY\gfC$ are isomorphic.
\end{defn}

\subsection{\myboldmath Graphs of bisets from $1$-dimensional covers}
\begin{defn}[Finite $1$-dimensional covers]\label{defn:1dimcovers}
  Consider a path connected space $X$, covered by a finite collection
  of path connected (not necessarily open) subspaces $(X_v)_{v\in
    V}$. It is a \emph{finite $1$-dimensional cover} of $X$ if
  \begin{itemize}
  \item for every $u,v\in V$ and for every path connected component $X'$ of
    $X_{u}\cap X_{v}$ there are an open neighbourhood
    $\widetilde X'\supset X'$ and an $X_{w}\subset X'$ such that
    $X_{w}\hookrightarrow \widetilde X'$ is a homotopy equivalence;
  \item if $X_{u}\subseteq X_{v}\subseteq X_{w}$ then $u=v$ or $v=w$.
  \end{itemize}
  We order $V$ by writing $u<v$ if $X_{u}\varsubsetneqq X_{v}$.
\end{defn}

\begin{defn}[Graphs of groups from covers]\label{defn:gog_1dimcovers}
  Consider a path connected space $X$ with a $1$-dimensional cover
  $(X_v)_{v\in V}$. It has an associated graph of groups $\gf$, defined as
  follows. The vertex set of $\gf$ is $V$. For every pair $u<v$ there are
  edges $e$ and $\overline e$ connecting $u=e^-=\overline e^+$ and $v=e^+=\overline e^-$,
  and we let $E$ be the set of these edges. Set $\gf=V\sqcup E$.

  Choose basepoints $*_v\in X_v$ for all $v\in V$. Choose for each
  edge $e$ a path $\ell_e$ from $*_{e^-}$ to $*_{e^+}$ such that
  $\ell_{\overline e}=\ell^{-1}_{e}$.  Set $G_v\coloneqq\pi_1(X_v,*_v)$ for
  every $v\in V$. For every edge $e$ with $e^-<e^+$ set
  $G_e\coloneqq G_{e^-}$; define $G_e\to G_{e^-}\coloneqq\one$ and
  $G_e\to G_{e^+}$ by $\gamma\mapsto\ell_e^{-1}\#\gamma\#\ell_e$. For
  every edge $e$ with $e^->e^+$ define $G_{e}\coloneqq G_{\overline e}$ and
  define morphisms $G_e\to G_{e^-}$ and $G_e\to G_{e^+}$ as
  $G_{\overline e}\to G_{\overline e^+}$ and $G_{\overline e}\to G_{\overline e^-}$
  respectively.
\end{defn}

Consider now a correspondence $(Z,f,i)$, with $f\colon Z\to X$ and
$i\colon Z\to Y$, between path connected spaces $X$ and $Y$.  Suppose
that $(U_{\alpha})$, $(V_\beta)$, and $(W_\gamma)$ are finite
$1$-dimensional covers of $X$, $Y$, and $Z$ respectively,
\emph{compatible with} $f$ and $i$: for every $\gamma$ there are
$\lambda(\gamma)$ and $\rho(\gamma)$ such that
$f(W_\gamma)\subset U_{\rho(\gamma)}$ and
$i(W_\gamma)\subset V_{\lambda(\gamma)}$.  Then the graph of
bisets $\subscript \gfY\gfB_\gf$ of $(f,i)$ with respect to
the above data is as follows:
\begin{itemize}
\item the graphs of groups $\gf$ and $\gfY$ are constructed as
  in Definition~\ref{defn:gog_1dimcovers} using the covers
  $(U_\alpha)$ and $(V_\beta)$ of $X,Y$ respectively. Choices of paths
  $\ell_e$, $m_e$ were made for edges $e$ in $\gf$, $\gfY$
  respectively;
\item the underlying graph of $\gfB$ is similarly constructed
  using the cover $(W_\gamma)$ of $Z$. For every vertex
  $z\in \gfB$ the biset
  $\subscript{G_{\lambda(z)}}{(B_z)}_{G_{\rho(z)}}$ is
  $B(f\restrict{W_z}, i\restrict{W_z})$;
\item for every edge $e\in \gfB$ representing the embedding
  $W_{z'}\varsubsetneqq W_z$ the biset $B_e$ is $B_{z'}$, and if $e$
  is oriented so that $e^-=z'$ then the intertwiners $()^\pm$ are the
  maps $()^-=\one\colon B_e\to B_{z'}$ and $()^+\colon B_e\to B_z$
  given by
  $(\gamma^{-1},\delta)\mapsto(m_{\lambda(e)}^{-1}\#\gamma^{-1},\delta\#\ell_{\rho(e)})$
  in the description of $B_e$ as
  $B(i\restrict{W_{e^-}})^\vee\otimes B(f\restrict{W_{e^-}})$.
\end{itemize}
The graphs of groups $\gf,\gfY$ and the graph of bisets
$\gfB$ are independent of the choices of basepoints and
connecting paths $\ell_e,m_e$, up to congruence.

\begin{thm}[Van Kampen's theorem for correspondences]\label{thm:vankampenbis}
  Let $(f,i)$ be a topological correspondence from a path connected
  space $Y$ to a path connected space $X$, and let
  $\subscript\gfY\gfB_\gf$ be the graph of bisets subject to
  compatible finite $1$-dimensional covers of spaces in question.
  
  Then for every $v\in\gfY$ and $u\in\gf$ we have an
  isomorphism
  \[B(f,i,\dagger_v,*_u)\cong \pi_1(\gfB, v,u),\]
  where $\dagger_v$ and $*_u$ are basepoints. 
\end{thm}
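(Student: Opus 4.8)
The plan is to reduce the statement to the classical van Kampen theorem for fundamental groups applied to a suitably chosen auxiliary space, and then to unwind the biset definitions. First I would recall that the biset $B(f,i,\dagger_v,*_u)$ is, by definition, $B(i)^\vee\otimes B(f)$, i.e.\ the set of homotopy classes of pairs (path in $Y$ from $i(z)$ to $\dagger_v$, path in $X$ from $f(z)$ to $*_u$) for $z$ ranging over $Z$, and that both $\pi_1(\gfY,v)=H$ and $\pi_1(\gf,u)=G$ compute the ordinary fundamental groups of $Y$ and $X$ by the van Kampen theorem for graphs of groups applied to the $1$-dimensional covers $(V_\beta)$ and $(U_\alpha)$ — this last fact is the classical statement (Serre, \cite{serre:trees}) once one checks that a finite $1$-dimensional cover is a cover by path-connected pieces whose pairwise intersections are, up to homotopy, again pieces of the cover, so that the nerve-like graph of groups $\gf$ genuinely has $\pi_1(\gf,u)\cong\pi_1(X,*_u)$; similarly for $\gfY$ and $Z$ (the latter giving the graph $\gfB$ as a graph, decorated now with bisets rather than groups).

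Next I would treat the correspondence symmetrically. The cleanest route is to observe that a correspondence $Y\xleftarrow{i}Z\xrightarrow{f}X$ carries the same information as the single space $Z$ together with its two reference maps, and that the biset $B(f,i)$ is functorially built from $Z$, $Y$, $X$ and the two maps $\pi_1(Z,\dagger_v)\to\pi_1(X,*_u)$, $\pi_1(Z,\dagger_v)\to\pi_1(Y,\dagger_v)$ induced by $f$ and $i$ (choosing paths from the images of the $Z$-basepoint to $*_u$ and $\dagger_v$). On the algebraic side, $\pi_1(\gfB,v,u)$ is, by Definition~\ref{defn:p1biset}, exactly the same construction performed compartment-by-compartment: it glues together the local bisets $B(f\restrict{W_\gamma},i\restrict{W_\gamma})$ over all pieces $W_\gamma$ of the cover of $Z$, along the edge intertwiners which encode the inclusions $W_{\gamma'}\hookrightarrow W_\gamma$. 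So the theorem asserts that ``form the biset of a correspondence'' and ``refine along a $1$-dimensional cover, then take the fundamental biset'' commute.

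The key steps, in order, would be: (1) establish the $1$-dimensional van Kampen theorem for $\pi_1$ itself — that for a finite $1$-dimensional cover, $\pi_1(X,*_u)\cong\pi_1(\gf,u)$ and the induced $\pi_1(\gf,u)$-$\pi_1(\gf,u')$-bisets $\pi_1(\gf,u,u')$ are the path-torsors $B(\one_X)$; this is where the two axioms in Definition~\ref{defn:1dimcovers} (intersections are homotopy-equivalent to smaller pieces; no triple nesting) are used, to guarantee the graph $\gf$ is acyclic enough and the edge maps are the correct inclusions. (2) Observe that $B(f,i,\dagger_v,*_u)=B(i,\dagger_v)^\vee\otimes_{\pi_1(Z,\dagger_v)}B(f,\dagger_v,*_u)$ by definition, decompose $\pi_1(Z,\dagger_v)$ via step (1) applied to the cover $(W_\gamma)$, and push this decomposition through the tensor products using that $B(f),B(i)$ are right-principal (so tensoring distributes over the disjoint union giving $\pi_1(\gfB)$). (3) Match the gluing relations: on the topological side a path in $Z$ crossing from $W_{\gamma'}$ into $W_\gamma$ is rewritten using the connecting paths $\ell_e,m_e$; on the algebraic side this is precisely the relation $qb^-p = q\lambda(z)b^+\overline{\rho(z)}p$ of \eqref{eq:freebiset}, with the edge intertwiners $()^\pm$ spelled out as $(\gamma^{-1},\delta)\mapsto(m_{\lambda(e)}^{-1}\#\gamma^{-1},\delta\#\ell_{\rho(e)})$ exactly so that this works. (4) Check independence of basepoints and connecting paths, which follows from the congruence statement $\pi_1(\gfB,\dagger,*)=\pi_1(\gfY,\dagger,\dagger')\otimes\pi_1(\gfB,\dagger',*')\otimes\pi_1(\gf,*',*)$ quoted after Definition~\ref{defn:p1biset} together with the analogous statement for $B(f,i)$.

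The main obstacle is step (1) combined with the bookkeeping in step (3): proving the $1$-dimensional van Kampen theorem in enough generality to handle \emph{non-open} subspaces $X_v$ (the definition explicitly allows them, using the open thickenings $\widetilde X'$ and the homotopy equivalences $X_w\hookrightarrow\widetilde X'$), and then verifying that the normal forms produced by path-lifting on the topological side agree \emph{on the nose} with the equivalence classes in \eqref{eq:freebiset}. Concretely, one must show that every element of $B(f,i)$ admits a representative that is a concatenation of local pieces (each living in some $W_\gamma$, connected by the $\ell_e,m_e$), that two such representatives are $\pi_1$-equivalent iff they differ by the listed moves, and that this is a well-defined biset isomorphism equivariant for the $H$- and $G$-actions — the $G$-action on the right being implemented by path-lifting, which is where left-fibrancy, when available, gives the cleaner normal form \eqref{eq:cor:NormalForm} but is not needed for the general statement. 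I expect most of this to be a careful-but-routine diagram chase once the $1$-dimensional van Kampen theorem for $\pi_1$ is in hand; the genuinely delicate point is that the cover of $Z$ need not be compatible with the covers of $X$ and $Y$ in any strong sense beyond $f(W_\gamma)\subset U_{\rho(\gamma)}$, $i(W_\gamma)\subset V_{\lambda(\gamma)}$, so one cannot simply pull back and must instead work with the graph morphisms $\lambda,\rho$ and let the tensor products $\pi_1(\gfY,\dagger,\lambda(z))\otimes_{}B_z\otimes_{}\pi_1(\gf,\rho(z),*)$ absorb the discrepancy.
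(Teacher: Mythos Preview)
The paper you are working from is a survey and does not actually prove Theorem~\ref{thm:vankampenbis}: it states the result and defers the argument to the companion article~\cite{bartholdi-dudko:bc1}. So there is no proof in this text to compare against, and your outline cannot be checked for fidelity to ``the paper's own proof'' in any meaningful sense.

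That said, your strategy is the natural one and almost certainly matches the structure of the proof in~\cite{bartholdi-dudko:bc1}: establish the classical van Kampen isomorphisms $\pi_1(X,*_u)\cong\pi_1(\gf,u)$ and $\pi_1(Y,\dagger_v)\cong\pi_1(\gfY,v)$ for $1$-dimensional covers, then show that the biset $B(f,i)$ decomposes along the cover $(W_\gamma)$ of $Z$ with the edge relations exactly matching~\eqref{eq:freebiset}. Two small points are worth flagging. First, $Z$ is not assumed path connected (the biset $B(f,i)$ is defined as a disjoint union over components), so your step~(2), which routes through $\pi_1(Z)$ and the factorisation $B(i)^\vee\otimes_{\pi_1(Z)}B(f)$, must be done component by component; the graph $\gfB$ will itself be disconnected in that case. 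Second, your phrase ``acyclic enough'' is misleading: the graph $\gf$ need not be a tree, and the axiom forbidding triple nesting in Definition~\ref{defn:1dimcovers} ensures only that $\gf$ is bipartite between maximal pieces and their pairwise-intersection pieces, not that it is simply connected. The graph-of-groups van Kampen theorem handles cycles in $\gf$ without difficulty, so this does not break your argument, but the justification you give for step~(1) should not invoke acyclicity.
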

If in Theorem~\ref{thm:vankampenbis} the map $f\colon Z\to Y$ is a
covering and all restrictions $f\colon W_\gamma\to U_{\rho(\gamma)}$
are covering maps, then $\gfB$ is left-fibrant. In this case
its fundamental biset is computed by~\eqref{eq:cor:NormalForm}.

\subsection{Hubbard trees}\label{ss:hubbard tree}
Consider a polynomial $p(z)\in\C[z]$. We will see how to construct a
graph of bisets out of $p$'s ``Hubbard tree'', see
Figure~\ref{fig:julia of z^2+i}. We first recall some basic definitions and
properties; see~\cites{douady-h:edpc1,douady-h:edpc2} for details.

The \emph{post-critical set} $P(p)$ is the forward orbit of $p$'s
critical values:
\[P(p)\coloneqq\{p^n(z)\mid p'(z)=0,\,n\ge1\}.\]
The polynomial $p$ is \emph{post-critically finite} if $P(p)$ is
finite. The \emph{Julia set} $J(p)$ of $p$ is the boundary of the
filled-in Julia set $K(p)$, and the \emph{Fatou set} is its
complement:
\[K(p)\coloneqq\{z\in \C\mid \{p^n(z)\mid n\in\N\}\text{ is bounded}\},\quad J_c=\partial K_c,\quad F(p)=\C\setminus J(p).
\]
The \emph{Hubbard tree} of $p$ is the smallest tree in $K(p)$ that
contains $P(p)$ and all of $p$'s critical points; it intersects $F(p)$
along radial arcs.  It is a simplicial graph, with some distinguished
vertices corresponding to $P(p)$. All its vertices have an
\emph{order} in $\N\cup\{\infty\}$: by definition, $p$ behaves locally
as $z\mapsto z^{\deg_z(p)}$ at a point $z\in\C$, and
\[\ord(v)=\lcm\{\deg_z(p^n)\mid n\ge0,z\in p^{-n}(v)\}.\]
Thus in particular $\ord(v)=\infty$ if $v$ is critical and periodic.

This order function defines an \emph{orbispace} structure on $\C$,
see~\ref{ss:orbispheres}: a topological space with the extra data of a
non-trivial group $G_v$ attached at a discrete set of points $v$, in
canonical neighbourhoods of which the fundamental group is isomorphic
to $G_v$. In our situation, the group attached to $v\in P(p)$ is
cyclic of order $\ord(v)$.

For each $z\in P(p)$, let $\gamma_z$ denote a small loop around $z$, and
identify $\gamma_z$ with a representative of a conjugacy class in
$\pi_1(\C\setminus P(p),*)$, see~\S\ref{defn:sphere groups}. It follows
that the fundamental group of the orbispace defined by $\ord$ is given as
follows:
\[G_p = \pi_1(\C\setminus P(p),*)/\langle \gamma_z^{\ord(z)}:z\in P(p)\rangle.\]
The biset $B(p)$ of $p$ is the biset of the orbispace-correspondence
\[\left(p\colon (\C, p^{-1}(P(p)))\to (\C, P(p)) , (\C,
    p^{-1}(P(p)))\hookrightarrow (\C, P(p)) \right).
\]
Since $p$ is an orbispace-covering, the biset $B(p)$ is left-free,
see~\S\ref{ss:coverings}.

Out of the Hubbard tree $T$ of $p$, we may construct a graph of groups
$\gf$. Qua graph, $\gf$ is $T$. A cyclic group of order $\ord(v)$ is
attached to every vertex $v\in T$, and the edges of $\gf$ all carry
trivial groups.

We may also construct a graph of bisets $\subscript\gf{\mathfrak T}_\gf$ as
follows, using the Hubbard tree $T$.  The underlying graph of
$\mathfrak T$ is $p^{-1}(T)$ and $\rho\colon\mathfrak T\to\gf$ is
given by the covering map $p\colon p^{-1}(T)\to T$. The map
$\lambda\colon\mathfrak T\to\gf$ is  the canonical retraction of
$p^{-1}(T)$ to its subtree $T$. There is a degree-$\deg_z(p)$ cyclic biset
attached to each vertex $z\in\mathfrak T$, and trivial bisets attached
to edges of $\mathfrak T$. The biset inclusions are determined by an
additional piece of information: \emph{angles} between incoming edges
at vertices of $T$. We shall give in
Algorithm~\ref{algo:hubbardtree2biset} a procedure that constructs the
graph of cyclic bisets directly out of the combinatorial data of $T$,
see also~\cite{richter:hubbard}.

\begin{prop}
  Let $p$ be a complex polynomial. Then the groups $G_p$ and
  $\pi_1(\gf,*)$ are isomorphic, and the bisets $B(p)$ and
  $\pi_1(\mathfrak T)$ are conjugate via the group isomorphism
  between $G_p$ and $\pi_1(\gf,*)$.
\end{prop}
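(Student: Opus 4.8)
The plan is to derive both assertions at once from the van Kampen theorem for correspondences (Theorem~\ref{thm:vankampenbis}), applied to the orbispace correspondence $(p,\one)$ that defines $B(p)$, using $1$-dimensional covers built from the Hubbard tree $T$ and its preimage $p^{-1}(T)$. First I would choose a finite $1$-dimensional cover of the orbispace $(\C,P(p))$ adapted to $T$ in the sense of Definition~\ref{defn:1dimcovers}. Since $\C$ deformation retracts onto a regular neighbourhood $N$ of $T$ by a retraction fixing $P(p)\subseteq T$, and bisets are invariant under orbispace homotopy equivalence, I may work with $N$ in place of $(\C,P(p))$. Cover $N$ by pieces $X_v$, one for each vertex $v\in V(T)$, taken to be orbispace regular neighbourhoods of the open star of $v$ in $T$ (so that $\pi_1(X_v)$ is cyclic of order $\ord(v)$), together with small trivial disks $X_e$ around the midpoint of each edge $e\in E(T)$. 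The only strict inclusions are $X_e\subsetneq X_{e^-},X_{e^+}$; the intersection of two vertex pieces is empty unless they span an edge $e$, in which case it deformation retracts onto a copy of $X_e$; all other intersections are empty. Thus the axioms hold, and the graph of groups of Definition~\ref{defn:gog_1dimcovers} is the barycentric subdivision of $T$ carrying $\Z/\ord(v)$ at each vertex $v$ of $T$ and trivial groups elsewhere. Collapsing the trivial subdivision data identifies it with $\gf$, and since $T$ is a tree, $\pi_1(\gf,*)$ is the free product of the cyclic groups $\Z/\ord(v)$, $v\in V(T)$, which is visibly $\pi_1(\C\setminus P(p),*)/\langle\gamma_z^{\ord(z)}:z\in P(p)\rangle=G_p$ (the vertices outside $P(p)$ have trivial group). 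This proves the first assertion.

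For the biset, I would lift this cover through $p$ and refine it against the cover of $Y=X$, exactly as in the construction preceding Theorem~\ref{thm:vankampenbis}: the pieces $W_\gamma$ of $Z=(\C,p^{-1}(P(p)))$ are the path components of the sets $X_w\cap p^{-1}(X_v)$. As $p$ is an orbispace covering, each $p\restrict{W_\gamma}$ is a covering of $X_{\rho(\gamma)}$, so the vertex bisets $B(p\restrict{W_\gamma},\one\restrict{W_\gamma})$ are cyclic of degree $\deg_z(p)$ and the edge bisets are trivial; the underlying graph of the resulting graph of bisets is a subdivision of $p^{-1}(T)$, with $\rho$ the covering $p\colon p^{-1}(T)\to T$ and $\lambda$ the retraction $p^{-1}(T)\to T$. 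The biset inclusions $()^\pm$ are prescribed by Definition~\ref{defn:gog_1dimcovers} in terms of the connecting paths $m_e,\ell_e$; routing these paths along $T$ and $p^{-1}(T)$ converts this prescription into precisely the recording of angles between incoming edges that defines $\mathfrak T$. Hence, up to the inessential subdivision, the graph of bisets produced is isomorphic to $\mathfrak T$.

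Theorem~\ref{thm:vankampenbis} then gives $B(p)=B(p,\one)\cong\pi_1(\mathfrak T)$ as an $H$-$G$-biset with $H=\pi_1(\gfY,\dagger)$ and $G=\pi_1(\gf,*)$, both identified with $G_p$ as above; keeping track of the basepoints turns this isomorphism into a conjugacy of $G_p$-$G_p$-bisets via that group isomorphism, as claimed. Consistently, since $p$ and all $p\restrict{W_\gamma}$ are coverings, the graph of bisets is left-fibrant, so $B(p)$ and $\pi_1(\mathfrak T)$ are left-free and computed by~\eqref{eq:cor:NormalForm}.

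The main obstacle is the identification of the biset inclusion maps $()^\pm$: I must check that the maps produced by the generic $1$-dimensional cover machinery — which a priori depend on auxiliary choices of connecting paths — reproduce, after suitable choices, the combinatorial angle data built into $\mathfrak T$ (equivalently, into Algorithm~\ref{algo:hubbardtree2biset}), and that replacing $\C$ by $N$ and $p^{-1}(T)$ by a subdivision changes neither the fundamental biset nor its conjugacy class. A secondary, purely bookkeeping point is to be sure the cover honestly covers $\C$, which is what the initial deformation retraction onto $N$ takes care of.
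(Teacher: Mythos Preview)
Your proposal is correct and is precisely the argument the paper intends: the proposition is stated without proof immediately after Theorem~\ref{thm:vankampenbis}, with the Hubbard tree data set up exactly so that this van Kampen theorem applies to the orbispace correspondence $(p,\iota)$ and the cover of a neighbourhood of $T$ by stars of vertices. You have correctly filled in the details the survey omits --- the deformation retraction onto a neighbourhood of $T$, the barycentric subdivision bookkeeping, the verification that non-post-critical vertices carry trivial groups, and the identification of the edge-biset inclusions with the angle data --- and you have accurately flagged the one genuinely delicate point, namely matching the generic $()^\pm$ maps of Definition~\ref{defn:gog_1dimcovers} to the angles recorded in~$\mathfrak T$.
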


\begin{figure}[h]
\begin{center}
  \begin{tikzpicture}
    \begin{scope}
      \node[anchor=south west,inner sep=0] (z2pi) at (0,0) {\includegraphics[width=0.35\textwidth]{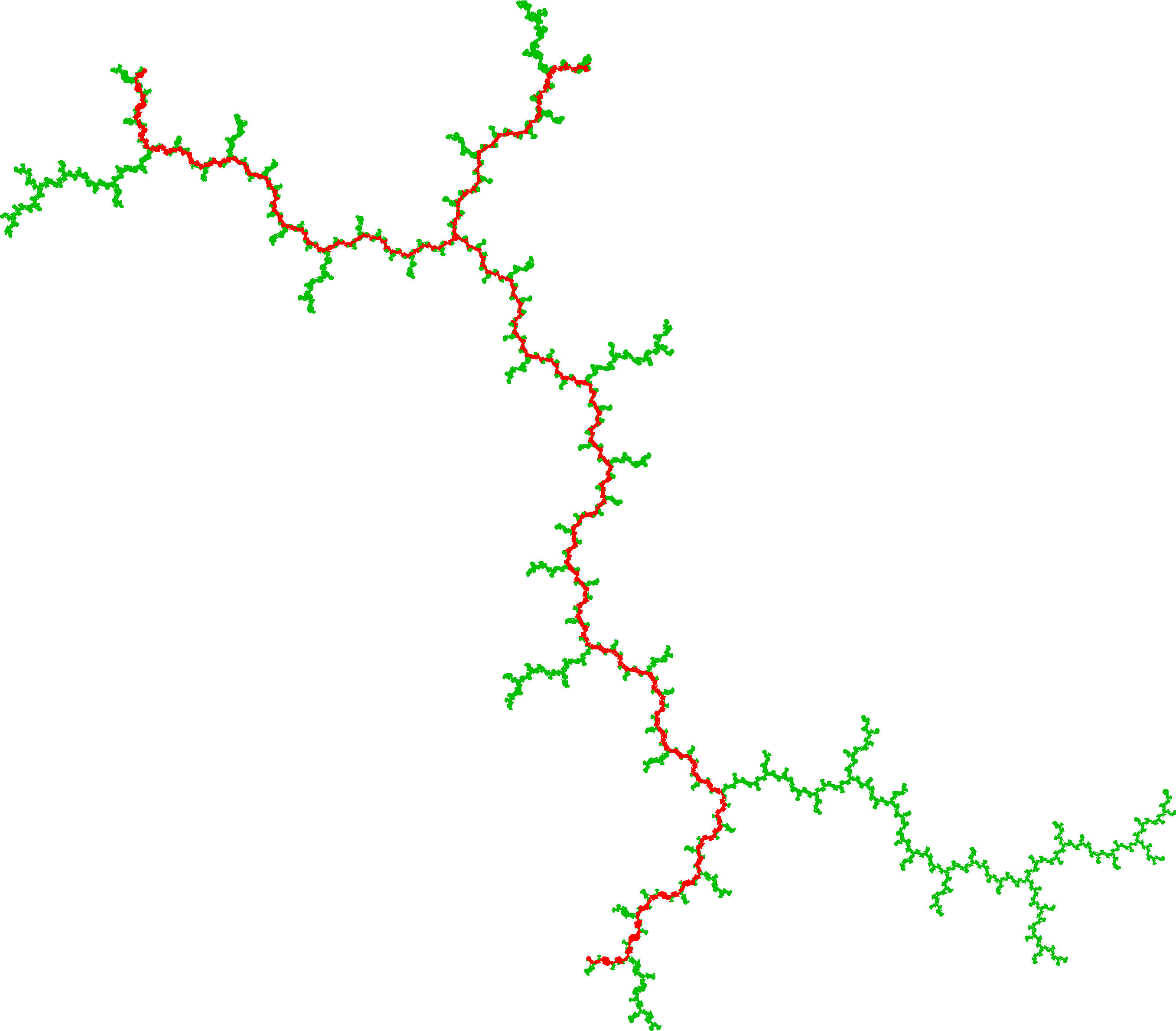}};
      \begin{scope}[x={(z2pi.south east)},y={(z2pi.north west)}]
        \coordinate (alpha_z2pi) at (0.387,0.232);
      \end{scope}
    \end{scope}

    \begin{scope}[xshift=5.5cm,yshift=3cm,thick,inner sep=0pt]
      \coordinate (alpha) at (-0.3,0.625);
      \coordinate (malpha) at (0.3,-0.625);
      \draw (-1,1) node[label={left:$\Z/2$}] {$\bullet$} -- (alpha)
      -- (0,1) node[label={right:$\Z/2$}] {$\bullet$}
      (alpha) -- (0,0) node[label={right:$1$}] {$\bullet$}
      -- (malpha) -- (0,-1) node[label={below:$\Z/2$}] {$\bullet$}
      (malpha) -- (1,-1) node[label={below:$\Z/2$}] {$\bullet$};
    \end{scope}

    \begin{scope}[xshift=9cm,yshift=1.2cm,thick,inner sep=0pt]
      \coordinate (alpha) at (-0.3,0.625);
      \draw (-1,1) node[label={above:$\Z/2$}] {$\bullet$} -- (alpha)
      -- (0,1) node[label={above:$\Z/2$}] {$\bullet$};
      \draw (0,-1) node[label={left:$\Z/2$}] {$\bullet$} -- (alpha);
    \end{scope}

    \path[ultra thick,gray,->] (6.5,4) edge [bend left=20] node[auto] {$\rho=$ cover} (9,3);
    \path[ultra thick,gray,->] (6,1.5) edge [bend right=20] node[below] {$\lambda=$ retract} (8,1);
  \end{tikzpicture}
\end{center}
\caption{The Julia set of $p(z)=z^2+i$, its Hubbard tree (in red), and its
  associated graph of bisets above graph of groups. The vertex groups and
  bisets are indicated on the picture, all edge bisets and groups are
  trivial, and the embeddings of edge bisets into vertex bisets are
  irrelevant.}\label{fig:julia of z^2+i}
\end{figure}

\section{Spheres and their decompositions~\cite{bartholdi-dudko:bc2}}
We specialize the maps we consider to branched coverings
$(S^2,B)\to(S^2,A)$ between spheres with finitely many marked
points. Decompositions of $S^2$ are given by \emph{multicurves},
namely collections of disjoint simple closed curves on $S^2$. Once
these small curves are pinched to points, one obtains a new collection
of topological spheres attached at these points.

The main results of this part are decidability statements: conjugacy
and isotopy questions for sphere maps can be translated to group
theory. In the presence of a multicurve, these questions can be
reduced to simpler questions on the restrictions of the maps to the
topological spheres in the complement of the multicurve.

\subsection{Sphere groups and maps}\label{ss:spheregroups}
Spheres with marked points are described, within group theory, by
their fundamental group. Let $(S^2,A)$ be a topological sphere, marked
by a finite subset $A\subset S^2$ with $\#A\ge 2$. Choose a basepoint
$*\in S^2\setminus A$. Then the fundamental group of $S^2\setminus A$
may be computed as follows. Order $A=\{a_1,\dots,a_n\}$. Choose for
each $a_i\in A$ a path $\overline\gamma_i$ from $*$ to $a_i$, such
that $\overline\gamma_i$ and $\overline\gamma_j$ intersect only at $*$
for $i\neq j$, and such that the $\overline\gamma_i$ are cyclically
ordered as $\overline\gamma_1,\dots,\overline\gamma_n$
counterclockwise around $*$. Let $\gamma_i$ be the loop at $*$ that
travels on the right of $\overline\gamma_i$, circles once
counterclockwise around $a_i$, and returns to $*$ on the right of
$\overline\gamma_i^{-1}$. We then have
\begin{equation}\label{eq:spheregroup}
  G=\pi_1(S^2\setminus A,*)=\langle \gamma_1,\dots,\gamma_n\mid \gamma_1\cdots\gamma_n\rangle.
\end{equation}

A cycle (loop without basepoint) in $S^2\setminus A$ is represented by
a conjugacy class in $G$. The group $G$ comes with extra data: the
collection $\{\gamma_1^G,\dots,\gamma_n^G\}$ of conjugacy classes,
called \emph{peripheral conjugacy classes}, defined by the property
that $\gamma_i^G$ represents can be homotoped to a loop
circling $a_i$ once counterclockwise.

A \emph{sphere map} $f\colon(S^2,B)\to(S^2,A)$ is a branched covering
$f\colon S^2\to S^2$ such that
$f(B\cup\{\text{critical points}\})\subseteq A$. The \emph{biset}
$B(f)$ is the biset $B(f,i)$ of the correspondence
\[(f\colon S^2\setminus f^{-1}(A)\to S^2\setminus A, i\colon S^2\setminus
  f^{-1}(A)\hookrightarrow S^2\setminus B).
\]
Fixing basepoints $\dagger\in S^2\setminus B$ and
$*\in S^2\setminus A$ for the fundamental groups
$H\coloneqq\pi_1(S^2\setminus B,\dagger)$ and
$G\coloneqq\pi_1(S^2\setminus A,*)$, the $H$-$G$-biset $B(f)$ may be
concretely seen as
\[B(f)=\{\gamma\colon[0,1]\to S^2\setminus B\mid \gamma(0)=\dagger,
  f(\gamma(1))=*\}.
\]
The left- and right-actions are respectively by pre-catenation and
post-catenation of the appropriate $f$-lift.

Two sphere maps $f_0,f_1\colon(S^2,B)\to(S^2,A)$ are \emph{isotopic},
written $f_0\approx f_1$, if there exists a path
$(f_t\colon(S^2,B)\to(S^2,A))_{t\in[0,1]}$ of sphere maps connecting
$f_0$ to $f_1$. Clearly, all $B(f_t)$ are isomorphic.

A \emph{Thurston map} is a self-sphere map
$f\colon(S^2,A)\selfmap$. In that dynamical setting, we naturally
assume that the basepoints $*$ and $\dagger$ coincide. Two Thurston
maps $f\colon(S^2,A)\selfmap$ and $g\colon(S^2,B)\selfmap$ are
\emph{combinatorially equivalent} if there exists a homeomorphism
$\phi\colon(S^2,A)\to(S^2,B)$ with $\phi\circ f\approx g\circ\phi$.

We consider algebraic counterparts to these
notions. In~\cite{hurwitz:ramifiedsurfaces}, Hurwitz describes an
elegant classification of degree-$d$ branched coverings $S^2\selfmap$
with critical values contained in $\{a_1,\dots,a_n\}$ in terms of
\emph{admissible} $n$-tuples of permutations
$(\sigma_i\in d\perm)_{i=1,\dots,n}$. A $n$-tuple is \emph{admissible}
if $\sigma_1\cdots\sigma_n=1$ and
$\langle\sigma_1,\dots,\sigma_n\rangle$ is a transitive subgroup of
$d\perm$ and the cycle lengths of the $\sigma_i$ satisfy the condition
\begin{equation}\label{eq:riemannhurwitz}
  \sum_{i=1}^n\sum_{\substack{c\text{ cycle}\\\text{of }\sigma_i}}\bigl(\text{length}(c)-1\bigr)=2d-2.
\end{equation}

\begin{defn}[Sphere groups]\label{defn:sphere groups}
  A \emph{sphere group} is a tuple $(G,\Gamma_1,\dots,\Gamma_n)$
  consisting of a group and $n$ conjugacy classes $\Gamma_i$ in $G$,
  such that $G$ admits a presentation as in~\eqref{eq:spheregroup} for
  some choice of $\gamma_i\in\Gamma_i$. The $\Gamma_i$ are called
  \emph{peripheral conjugacy classes}.
\end{defn}
If $(S^2,A)$ is a marked sphere, we note that
$\pi_1(S^2\setminus A,*)$ is a sphere group for each
$*\in S^2\setminus A$. For every $d\in\N$ we denote by $\Gamma_i^d$
the subset $\{g^d \mid g\in\Gamma_i\}$.

Let $\subscript H B_G$ be a left-free biset of finite degree, and choose a
basis $S$ of $B$, namely a set of representatives for the left
action. Consider $g\in G$. Then $S\cong\{\cdot\}\otimes_H B$
decomposes into orbits $S_1\sqcup\cdots\sqcup S_\ell$ under the action
of $g$, of respective cardinalities $d_1,\dots,d_\ell$; and for all
$i=1,\dots,\ell$, choosing $s_i\in S_i$ there are elements $h_i\in H$
with $h_is_i=s_i g^{d_i}$. The multiset
$\{(d_i,h_i^H)\mid i=1,\dots,\ell\}$ consisting of degrees and
conjugacy classes in $H$ is independent of the choice of $S$, and
depends only on the conjugacy class of $g$; it is called the
\emph{lift} of $g^G$.

\begin{defn}[Sphere bisets]\label{defn:sphere bisets}
  Let $(G,\{\Gamma_i\})$ and $(H,\{\Delta_j\})$ be sphere groups. A
  \emph{sphere biset} is an $H$-$G$-biset $B$ such that the following
  hold:
  \begin{enumerate}
  \item $B$ is left-free and right-transitive;
  \item the permutations of $\{\cdot\}\otimes_H B$ induced by the
    right action of representatives of $\Gamma_1,\dots,\Gamma_n$ form
    an admissible tuple as in~\eqref{eq:riemannhurwitz};
  \item the multiset of all lifts of $\Gamma_1,\dots,\Gamma_n$
    contains exactly once every $\Delta_j$, the other conjugacy
    classes being all trivial.
  \end{enumerate}
  By the last condition, to every peripheral conjugacy class
  $\Delta_j$ in $H$ is associated a well-defined \emph{degree}
  $\deg_{\Delta_j}(B)\in\N$ and conjugacy class
  $\Gamma_i\eqqcolon B_*(\Delta_j)$, such that
  $(\deg_{\Delta_j}(B),\Delta_j)$ belongs to the lift of
  $\Gamma_i$. We define in this manner a map $B_*$ from the peripheral
  conjugacy classes in $H$ to those of $G$, called the \emph{portrait}
  of $B$.
\end{defn}
In case the peripheral classes of $G,H$ are indexed as
$(\Gamma_a)_{a\in A}$ and $(\Delta_c)_{c\in C}$ respectively, we write
$B_*(c)=a$ rather than $B_*(\Delta_c)=\Gamma_a$, defining in this
manner a map $B_*\colon C\to A$.

If $G=\pi_1(S^2\setminus A)$ and $H=\pi_1(S^2\setminus B)$ and
$f\colon(S^2,B)\to(S^2,A)$ is a sphere map, then $B(f)$ is a sphere
$H$-$G$-biset. The following result extends the Dehn-Nielsen-Baer
Theorem~\ref{thm:dehn-nielsen-baer} to non-invertible maps; its first
 part (in the dynamical setting $A=B$) is due to
 Kameyama~\cite{kameyama:thurston}. Recall that an isomorphism between sphere
bisets is required to preserve the peripheral conjugacy classes.
\begin{thm}
  The isomorphism class of $B(f)$ depends only on the isotopy class of
  $f$, and conversely every sphere $H$-$G$-biset is of the form $B(f)$
  for a sphere map $f\colon(S^2,B)\to(S^2,A)$.
\end{thm}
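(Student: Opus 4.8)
The plan is to prove the two halves of the statement separately, both by direct geometric constructions, and in both cases the real content lies in verifying independence of the various auxiliary choices (basepoints, connecting paths $\overline\gamma_i$, basis $S$ of the biset).

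\emph{First half: isotopy invariance.} Given an isotopy $(f_t)_{t\in[0,1]}$ of sphere maps $(S^2,B)\to(S^2,A)$, I would note that each $f_t$ yields a correspondence $S^2\setminus B\hookleftarrow S^2\setminus f_t^{-1}(A)\to S^2\setminus A$, and the isotopy gives a homotopy of such correspondences. One then invokes the standard fact that the biset $B(f,i)=B(i)^\vee\otimes B(f)$ of a correspondence depends only on the homotopy class of the maps involved — this is immediate from the definition of $B(f)$ as a set of homotopy classes of paths, since a homotopy $f_t$ transports paths from $f_0(\dagger)$ to $*$ into paths from $f_1(\dagger)$ to $*$ compatibly with the pre- and post-catenation actions, and this transport descends to a bijection on homotopy classes that intertwines the two $H$-$G$-actions. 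Concretely: pick a path $\delta$ traced by $f_t(\dagger)$ as $t$ runs from $0$ to $1$; then $\gamma\mapsto\delta^{-1}\#\gamma$ is the required isomorphism $B(f_0)\to B(f_1)$, and one checks it preserves peripheral conjugacy classes because the isotopy moves small loops around points of $B$ to small loops around the same points. Since $B(f)$ is a sphere biset (left-free of the right degree, right-transitive, with the lift condition following from the Hurwitz admissibility of the local monodromy — this was already asserted in the paragraph before the theorem), this direction is complete.

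\emph{Second half: surjectivity.} Here I would realize an abstract sphere $H$-$G$-biset $B$ as $B(f)$. Fix presentations $G=\langle\gamma_1,\dots,\gamma_n\mid\gamma_1\cdots\gamma_n\rangle$ and $H=\langle\delta_1,\dots,\delta_m\mid\delta_1\cdots\delta_m\rangle$ with $\gamma_i\in\Gamma_i$, $\delta_j\in\Delta_j$. Choosing a basis $S=\{\ell_1,\dots,\ell_d\}$ of the left-free biset $B$, the right action produces a wreath recursion $\psi\colon G\to H^S\rtimes S\,\mathrm{perm}$, $\gamma_i\mapsto\langle h_{i,1},\dots,h_{i,d}\rangle\pi_i$. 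The permutations $\pi_i$ form an admissible tuple by condition (2) of Definition~\ref{defn:sphere bisets}, so Hurwitz's classification produces a branched covering $f\colon S^2\to S^2$, unique up to isotopy, with critical values in $A=\{a_1,\dots,a_n\}$ and local monodromy $\pi_i$ around $a_i$. Put on the source a marked set: by condition (3), for each peripheral class $\Delta_j$ of $H$ there is a unique preimage point one should mark, namely the branch/orbit point corresponding to the pair $(\deg_{\Delta_j}(B),\Delta_j)$ in the lift of $\Gamma_{B_*(j)}$; call this marked set $B'\subseteq f^{-1}(A)$, and this gives a sphere map $f\colon(S^2,B')\to(S^2,A)$. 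Then $B(f)$ is a sphere $H'$-$G$-biset where $H'=\pi_1(S^2\setminus B',*)$, and the wreath recursion of $B(f)$ (computed by the explicit path-lifting recipe of \S\ref{ss:coverings}) agrees with $\psi$ up to the change of basis and conjugation that one is free to make; matching these data identifies $H'$ with $H$ carrying $\Delta_j$ to $\Delta_j$ and yields an isomorphism $B\cong B(f)$ of sphere bisets.

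\emph{Main obstacle.} The hard part is the bookkeeping in the surjectivity direction: Hurwitz's theorem only fixes $f$ up to isotopy and up to relabelling preimages, so one must show that every sphere biset $B$ with the prescribed wreath recursion is \emph{congruent}, not merely abstractly isomorphic as a biset, to $B(f)$ — i.e.\ that the peripheral structure on $H$ can be matched. This is exactly where condition (3) of Definition~\ref{defn:sphere bisets} is used: it guarantees that the multiset of cycle-lengths-with-conjugacy-classes appearing in all lifts accounts for each peripheral class of $H$ exactly once, so the marked set $B'$ and the identification $H'\cong H$ are forced and consistent. One also has to check that different choices of basis $S$, or of the connecting paths $\overline\gamma_i$ defining the presentation of $G$, change the wreath recursion only by an allowed equivalence (left-composition by an inner automorphism and a permutation of $S$), so that the isomorphism class of the resulting biset is well-defined; this is a standard but somewhat tedious verification. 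For the dynamical refinement ($A=B$, Kameyama's case) no extra work is needed beyond setting $\dagger=*$ and $H=G$ throughout.
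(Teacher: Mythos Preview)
The paper is a survey and does not actually prove this theorem; it attributes the dynamical case $A=B$ to Kameyama and defers the general statement to a companion article. Your overall strategy --- homotopy invariance for the first half, Hurwitz realization plus marking of preimages for the second --- is the natural one and is almost certainly what those references do.

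There is, however, a real gap in your surjectivity argument, precisely at the point you flag as ``the main obstacle''. Having built $f$ via Hurwitz and marked $B'\subseteq f^{-1}(A)$ using condition~(3), you set $H'=\pi_1(S^2\setminus B')$ and then assert that ``matching these data identifies $H'$ with $H$''. But condition~(3) only hands you a bijection between the peripheral conjugacy classes of $H'$ and those of $H$; it does not by itself produce a group isomorphism, and your claim that the two wreath recursions ``agree'' presupposes exactly the identification you are trying to construct. The missing ingredient is condition~(1), right-transitivity, which you never invoke. The wreath recursion of $B$ gives, for each basis index $k$, a homomorphism $\mathrm{Stab}_G(k)\to H$, and right-transitivity of $B$ is equivalent to this map being \emph{surjective}. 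On the other side, $\mathrm{Stab}_G(k)\cong\pi_1(S^2\setminus f^{-1}(A),z_k)$, and the inclusion into $S^2\setminus B'$ gives a surjection onto $H'$ whose kernel is the normal closure of the small loops around unmarked preimages. By your choice of $B'$ via condition~(3), those loops also die in $H$, so the map to $H$ factors as $\mathrm{Stab}_G(k)\twoheadrightarrow H'\twoheadrightarrow H$. Now condition~(2) (Riemann--Hurwitz) forces the cover to be a sphere and condition~(3) gives $|B'|=m$, so $H'$ and $H$ are both free of rank $m-1$; since free groups are Hopfian, the surjection $H'\to H$ is an isomorphism, and pushing the wreath recursion of $B$ through it yields that of $B(f)$. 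With this step inserted your argument is complete.

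One minor remark on the first half: your explicit formula $\gamma\mapsto\delta^{-1}\#\gamma$, with $\delta$ the trace of $f_t(\dagger)$, belongs to the ``paths in the target from $f(\dagger)$ to $*$'' model of the biset, whereas the paper's concrete description of $B(f)$ for a sphere map uses paths in the \emph{source} $S^2\setminus B$ from $\dagger$ to $f^{-1}(*)$. In that model the isomorphism is given by post-concatenation with the paths traced out by the moving preimages $f_t^{-1}(*)$. Either description works, but be aware that the bookkeeping differs.
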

In summary, there is a bijective correspondence between isotopy
classes of sphere maps and isomorphism classes of sphere bisets.

\subsection{Multicurves}\label{ss:multicurves}
A \emph{multicurve} on $(S^2,A)$ is a disjoint collection $\CC$ of
non-trivial, non-peripheral simple closed curves on $S^2\setminus A$.
Algebraically, each curve in $\CC$ is expressed as a conjugacy class
in $\pi_1(S^2\setminus A)$; and one may choose in each $\Gamma\in\CC$
a representative $c_\Gamma\in\Gamma$ such that $\pi_1(S^2\setminus A)$
decomposes as a graph of groups, with one vertex per connected
component $S$ of $S^2\setminus\CC$, with group $\pi_1(S)$, and one
edge per curve $\Gamma\in\CC$, with group $\langle c_\Gamma\rangle$.
The underlying graph of the graph of groups is a tree. Following
Definition~\ref{defn:gog_1dimcovers}, we consider the barycentric
subdivision $\gf$ of this graph of groups, with one vertex per curve
in $\CC$ and one per connected component of $S^2\setminus\CC$.

Let $f\colon(S^2,A)\selfmap$ be a Thurston map. A multicurve $\CC$ is
\emph{$f$-invariant}\footnote{It is sometimes called ``completely
  invariant''} if every component of $f^{-1}(\CC)$ is either trivial,
peripheral, or homotopic to a curve in $\CC$, and every curve in $\CC$
appears in this manner; i.e.\ $f^{-1}(\CC)=\CC$ up to isotopy.

As soon as $f^{-1}(\CC)\subseteq\CC$ up to isotopy, one may construct
the \emph{transition matrix} of $f$ with respect to $\CC$, also called
\emph{Thurston matrix}. It is the endomorphism $T_f$ of $\Q\CC$
defined by
\begin{equation}\label{eq:thurston matrix}
  T_f(\gamma)=\sum_{\substack{\delta\in
      f^{-1}(\gamma)\\\delta\approx\varepsilon\in\CC}}\frac1{\deg(f\restrict\delta\colon\delta\to\gamma)}\varepsilon.
\end{equation}
Here by $\deg$ one means the usual positive degree of $f$; i.e.\ the
degree of $z^d\colon\{|z|=1\}\selfmap$ is $|d|$. A multicurve $\CC$ is
called an \emph{annular obstruction} if the spectral radius of its
Thurston matrix is $\ge1$; see Theorem~\ref{thm:thurston}.

Let $f\colon (S^2,A)\selfmap$ be a Thurston map and let $\CC$ be an
$f$-invariant multicurve. The van Kampen theorem lets us decompose
$B(f)$ as a sphere tree of bisets. Denote by $S'_1,\dots,S'_n$ the
connected components of $S^2\setminus\CC$, set
$S_j\coloneqq \overline {S'_j}\setminus A$ and call $S_j$ a
\emph{small sphere}. View each $S_j$ as a punctured sphere, so
$\pi_1(S_j)$ is a sphere group. Observe that $\{S_j\}\sqcup \CC$ is a
finite $1$-dimensional cover of $S^2\setminus A$ and denote by $\gf$
the associated \emph{sphere tree of groups}, see
Definition~\ref{defn:gog_1dimcovers}, with the sphere structure given
by the set of peripheral conjugacy classes in every $\pi_1(S_j)$. By
construction, each vertex of $\gf$ represents either a sphere $S_j$ or
a curve in $\CC$. The former is called a \emph{sphere vertex} and the
latter is called a \emph{curve vertex}.

Let $T'_1,\dots,T'_m$ be the connected components of
$S^2\setminus f^{-1}(\CC)$ and set
$T_j\coloneqq \overline{T'_j}\setminus f^{-1}(A)$. Then
$\{T_j\}\sqcup f^{-1}(\CC)$ is a $1$-dimensional cover of
$S^{2}\setminus f^{-1}(A)$. Using an isotopy rel $A$ modify the
inclusion $S^2\setminus f^{-1}(A)\hookrightarrow S^2\setminus A$ so
that the new map $i\colon S^2\setminus f^{-1}(A)\to S^2\setminus A$
squeezes all annuli between the essential curves in $f^{-1}(\CC)$ that
are isotopic rel $A$ and maps them to the corresponding curve in
$\CC$. If $i(T_j)\subset \gamma\in \CC$, then define
$\lambda(T_j)\coloneqq \gamma$; otherwise there is a unique $S_k$ such
that $\lambda(T_j)\subset S_k$, and define
$\lambda(T_j)\coloneqq S_k$. The map $\lambda$ is defined similarly
for curves in $f^{-1}(\CC)$. Since
$f\colon S^2\setminus f^{-1}(A) \to S^2\setminus A$ is a covering,
there is a unique
$\rho \colon \{T_j\}\sqcup f^{-1}(\CC)\to \{S_k\}\sqcup \CC $ such
that $f\colon T_j\to \rho(T_j)$ and $f\colon \gamma\to \rho(\gamma)$
are coverings. In this way we obtain a covering correspondence
$f,i \colon S^2\setminus f^{-1}(A) \rightrightarrows S^2\setminus A$
compatible with the $1$-dimensional covers. The \emph{sphere tree of
  bisets} $\subscript \gf\gfB_{\gf}$ is the associated graph of
bisets. A \emph{conjugacy} of a sphere tree of bisets
$\subscript \gf\gfB_{\gf}$ is required to respect the sphere
structure; namely the $\gfI$ in Definition~\ref{defn:conjugategob} is
a sphere tree of bisets. As with sphere trees of groups, vertices of
$\mathfrak B$ representing spheres $T_j$ are called \emph{sphere
  vertices} and vertices representing curves in $f^{-1}(\CC)$ are
\emph{curve vertices}.

\noindent We prove that the decomposition of $B(f)$ as a sphere tree
of bisets is computable:
\begin{algo}\label{algo:decompose}
  \textsc{Given} $f\colon(S^2,A)\selfmap$ a Thurston map by its sphere biset, and given $\CC$ an $f$-invariant multicurve as a collection of conjugacy classes,\\
  \textsc{Compute} the decomposition of $B(f)$ as a sphere tree of
  bisets.
\end{algo}

We return to small spheres $T_i\subset S^2\setminus A$ defining sphere
vertices of $\mathfrak B$.  Each $\overline T_i$ can be homotoped rel
$A$ either to a point (i.e.~the map $\overline T_i\hookrightarrow S^2$
is homotopic rel $A$ to a constant map), or to a curve in $\CC$, or to
a component $S_j$ (after filling in trivial rel $A$ discs), and is
respectively called \emph{trivial}, \emph{annular} or
\emph{essential}. Every $S_j$ contains up to homotopy a single
essential $T_i$, and $T_i$ covers via $f$ a single piece $S_k$, so
that we have a map $S_j\to S_k$ induced by $f$, well-defined up to
isotopy. We also write $k=f(j)$ so that $f\colon\{1,\dots,n\}\selfmap$
describes also how the components of $S^2\setminus\CC$ are mapped by
$f$. We define finally the set of \emph{return maps}
\[R(f,\CC)\coloneqq\{f^e\colon S_j\selfmap\;\mid f^e(j)=j\text{ and }f^{e'}(j)\neq j\text{ for all }e'<e\}.\]

All these notions have algebraic counterparts: consider a sphere biset
$B$. A multicurve $\CC$ is \emph{$B$-invariant} if, for every
conjugacy class $\Gamma\in\CC$ and every $b\in B$ there exist $d\in\N$
and $\Delta\in\CC\cup\{\Gamma_i\}\cup\{1\}$ such that
$\Delta^{\pm} b\subseteq b\Gamma^{\pm d}$, and if all curves in $\CC$
occur as such a $\Delta$. Note the similarity to
Definition~\ref{defn:sphere bisets}.  Consider the sphere tree of
bisets decomposition $\subscript {\gf}\gfB_\gf$ of $B$ along $\CC$,
and denote again by $B_*\colon\{1,\dots,n\}\selfmap$ the dynamics on
the essential sphere vertices of $\gfB$. A sphere vertex biset $B_z$
of $\gfB$ is called \emph{trivial} or \emph{annular} if it is of the
form $G_{\lambda(z)}\otimes_P B'$ for a $P$-$G_{\rho(z)}$-set $B'$ and
a subgroup $P\le G_{\lambda(z)}$ generated by a representative of a
peripheral or trivial conjugacy class, respectively a class in $\CC$;
and is called \emph{essential} otherwise. Let us denote by
$B_1,\dots,B_n$ the bisets associated with essential vertices in
$\gfB$, and let
\[R(\gfB)=R(B,\CC)\coloneqq\{B_j\otimes B_{B_*(j)}\otimes\cdots\otimes
B_{B_*^{e-1}(j)}\selfmap\;\mid B_*^e(j)=j\text{ with $e$ minimal}\}
\]
denote the return bisets of $B$, namely the bisets obtained by
following a cycle in the sphere tree of bisets into which $B$
decomposes.

\noindent We generalize a result by Kameyama~\cite{kameyama:thurston}
to marked spheres with multicurves:
\begin{thm}\label{thm:kameyama}
  Let $f\colon(S^2,A,\CC)\selfmap$ and $g\colon(S^2,B,\DD)\selfmap$ be
  two maps with respective invariant multicurves $\CC$ and $\DD$. Then
  $f$ and $g$ are combinatorially equivalent along an isotopy carrying
  $\CC$ to $\DD$ if and only if the sphere tree of bisets
  decompositions of $B(f)$ and $B(g)$ are conjugate (by a sphere tree
  of bisets, see Definition~\ref{defn:conjugategob}).

  If $A=B$ and $\CC=\DD$, then $f$ and $g$ are isotopic rel $A\cup\CC$
  if and only if their sphere trees of bisets decompositions are isomorphic.
\end{thm}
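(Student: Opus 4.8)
The plan is to reduce Theorem~\ref{thm:kameyama} to the biset-level van Kampen theorem (Theorem~\ref{thm:vankampenbis}) together with the Dehn–Nielsen–Baer-type statement for sphere bisets that was just recalled (the isomorphism class of $B(f)$ determines $f$ up to isotopy, and conversely). The two directions are:

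\emph{From maps to bisets.} Suppose $f$ and $g$ are combinatorially equivalent along an isotopy carrying $\CC$ to $\DD$; that is, there is a homeomorphism $\phi\colon(S^2,A)\to(S^2,B)$ with $\phi(\CC)=\DD$ and $\phi\circ f\approx g\circ\phi$. First I would observe that $\phi$ induces a biprincipal sphere tree of bisets $\subscript\gf\gfI_{\gf_g}$ between the sphere trees of groups of $(S^2,A,\CC)$ and $(S^2,B,\DD)$: on each small sphere and each curve, $\phi$ restricts to a homeomorphism, hence induces a biprincipal biset, and these are compatible with the edge inclusions because $\phi$ respects the $1$-dimensional covers. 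Next I would use the isotopy $\phi\circ f\approx g\circ\phi$ to transport the decomposition: applying the van Kampen theorem to both correspondences and using that the decomposition of $B(f)$ is computed by restricting $f$ to the pieces of the cover, one gets that $\gfI\otimes\gfB_g$ and $\gfB_f\otimes\gfI$ are isomorphic as sphere trees of bisets. In the case $A=B$, $\CC=\DD$ one takes $\phi$ isotopic to the identity, so $\gfI$ is the trivial (identity) tree of bisets, and the statement collapses to ``$\gfB_f\cong\gfB_g$''.

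\emph{From bisets to maps.} Conversely, suppose $\gfB_f$ and $\gfB_g$ are conjugate via a biprincipal sphere tree of bisets $\gfI$. Taking fundamental bisets and using that $\pi_1$ is functorial for tensor products of (graphs of) bisets, $B(f)$ and $B(g)$ become conjugate sphere bisets via the group isomorphism $\pi_1(\gfI)$; by the cited extension of Dehn–Nielsen–Baer to non-invertible maps this conjugacy is realized by a homeomorphism $\phi\colon(S^2,A)\to(S^2,B)$ with $\phi\circ f\approx g\circ\phi$. It remains to arrange that $\phi$ carries $\CC$ to $\DD$. Here I would use that the conjugacy $\gfI$ is a \emph{sphere} tree of bisets, so it identifies the curve vertices of $\gf$ with those of $\gf_g$; the corresponding essential conjugacy classes are matched by $\pi_1(\gfI)$, and one can isotope $\phi$ (using the change-of-basepoint/realization statement on each piece) so that it sends the curves of $\CC$ to the curves of $\DD$ as actual simple closed curves, not merely up to the combinatorial equivalence. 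For the second half, $A=B$, $\CC=\DD$: an isomorphism $\gfB_f\cong\gfB_g$ gives an isomorphism $B(f)\cong B(g)$ of sphere bisets, hence an isotopy rel $A$; one then upgrades it to an isotopy rel $A\cup\CC$ by checking that the isomorphism fixes the curve vertices of the common tree and the associated conjugacy classes pointwise, so the isotopy can be taken to fix $\CC$ setwise.

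The main obstacle, I expect, is the bookkeeping in the ``bisets to maps'' direction: passing from a conjugacy of \emph{graphs} of bisets back to a single homeomorphism of the sphere requires more than just $\pi_1$-level information, because one must simultaneously realize the identification of all the small-sphere pieces, all the annuli along $\CC$, and the gluing data, and check these realizations are mutually compatible (and compatible with the semiconjugacy $\phi\circ f\approx g\circ\phi$). This is essentially an inductive assembly over the tree: realize $\phi$ on a leaf sphere, extend across each edge (curve) using that the edge bisets are cyclic and biprincipal, and propagate; the compatibility of the intertwiners $()^{\pm}$ in the definition of the graph of bisets is exactly what makes the induction go through. The case $A=B$, $\CC=\DD$ is a direct specialization once the general realization is in hand, so the real work is the functoriality of $\pi_1$ for graphs of bisets under tensor product (which we may quote from~\cite{bartholdi-dudko:bc1}) and the piece-by-piece realization argument, both of which parallel Kameyama's original proof with multicurves added.
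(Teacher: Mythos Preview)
The paper is a survey and does not give a proof of Theorem~\ref{thm:kameyama}; it is stated as a result imported from the companion article~\cite{bartholdi-dudko:bc2}. So there is no ``paper's own proof'' here to compare against, and I can only assess your outline on its merits.

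Your strategy is the natural one and is essentially correct. The ``maps to bisets'' direction is straightforward once one observes that a homeomorphism $\phi\colon(S^2,A,\CC)\to(S^2,B,\DD)$ is itself a sphere map compatible with the $1$-dimensional covers, so the van Kampen construction applied to $\phi$ yields the biprincipal $\gfI$, and functoriality of the construction under composition and isotopy gives $\gfB_f\otimes\gfI\cong\gfI\otimes\gfB_g$.

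In the converse direction your plan has one soft spot worth tightening. You pass to fundamental bisets, invoke the sphere-biset Dehn--Nielsen--Baer theorem to get $\phi$, and then try to adjust $\phi$ a posteriori to carry $\CC$ to $\DD$. This works for the \emph{existence} of $\phi$ with $\phi(\CC)=\DD$, because $\pi_1(\gfI)$ sends the edge conjugacy classes of $\gf$ to those of $\gf_g$ and a mapping class is determined by its action on conjugacy classes. But you also need the \emph{isotopy} $\phi\circ f\approx g\circ\phi$ to be realizable rel the multicurves, not just rel the marked points; equivalently, you need $\phi f\phi^{-1}$ and $g$ to lie in the same $\Mod(S^2,B,\DD)$-orbit of $M(g,B,B,\DD,\DD)$, not just the same $\Mod(S^2,B)$-orbit of $M(g,B,B)$. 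Your ``inductive assembly over the tree'' is exactly the right fix: rather than collapsing to $\pi_1$ and then repairing, realize $\gfI$ directly as a homeomorphism piece by piece (Baer's Theorem~\ref{thm:Baer} on each small sphere, then glue across annuli using that the edge bisets are cyclic biprincipal and the intertwiners are legal), and similarly realize the isomorphism $\gfB_f\otimes\gfI\cong\gfI\otimes\gfB_g$ as an isotopy rel $B\cup\DD$. That avoids the detour through the global biset entirely and makes the second statement ($A=B$, $\CC=\DD$) an immediate specialization.
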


\subsection{Mapping class bisets}\label{ss:mcb}
Let $(S^2,A)$ be a marked sphere, and denote by $\Mod(S^2,A)$ the pure
mapping class group of $(S^2,A)$, namely the set of isotopy classes of
homeomorphisms of $S^2$ fixing $A$. For $\CC$ a multicurve on
$(S^2,A)$, denote by $\Mod(S^2,A,\CC)$ the subgroup of $\Mod(S^2,A)$
fixing each curve (together with its orientation) of $\CC$ up to
isotopy. Similarly, if $G$ is a sphere group $\pi_1(S^2\setminus A)$
then we write $\Mod(G)$ for $\Mod(S^2,A)$; by the Dehn-Nielsen-Baer
theorem (see~\cite{farb-margalit:mcg}*{Theorem~8.8}
and~\S\ref{ss:orbispheres}), the group $\Mod(G)$ is actually a group
of outer automorphisms of $G$. If $\gf$ is a sphere tree of groups
decomposition of $G$ along a multicurve $\CC$, then we write
$\Mod(\gf)$ for $\Mod(S^2,A,\CC)$; it is a group of self-conjugators
of $\gf$ in the sense of Definition~\ref{defn:conjugategob}.

Let $f\colon(S^2,B)\to(S^2,A)$ be a sphere map, and let $\DD,\CC$ be
multicurves on $S^2\setminus B,S^2\setminus A$ respectively with
$\DD\subseteq f^{-1}(\CC)$.
\begin{defn}[Mapping class bisets]
  The $\Mod(S^2,B)$-$\Mod(S^2,A)$-biset $M(f,B,A)$ is defined as
  \[M(f,B,A)=\{m'fm''\mid m'\in\Mod(S^2,B),m''\in\Mod(S^2,A)\}\,/\,{\approx}.\]
  It admits as a subbiset the  $\Mod(S^2,B,\DD)$-$\Mod(S^2,A,\CC)$-biset
  \[M(f,B,A,\DD,\CC)=\{m'fm''\mid
  m'\in\Mod(S^2,B,\DD),m''\in\Mod(S^2,A,\CC)\}\,/\,{\approx}.\]
  The left- and right-actions are given by
  $m'fm''=m''\circ f\circ m'$, in keeping with using the algebraic
  order of operations in bisets.
\end{defn}

By Theorem~\ref{thm:kameyama}, the biset $M(f,B,A)$ is also the set of
isomorphism classes of bisets of the form $B(m')\otimes B(f)\otimes
B(m'')$ with $m'\in\Mod(S^2,B),m''\in\Mod(S^2,A)$, and similarly for
$M(f,B,A,\DD,\CC)$. Therefore, for a sphere biset $\subscript H B_G$ we
introduce the notation
\begin{equation}
\label{eq:M(B)}
 M(B)=\{B_\psi\otimes B\otimes B_\phi\mid
\psi\in\Mod(H),\phi\in\Mod(G)\}\,/\,{\cong},
\end{equation}
and for a sphere tree of bisets $\subscript \gfY\gfB_\gf$ we
define
\begin{equation}\label{eq:conjgob}
  M(\gfB)=\{\mathfrak N' \otimes \gfB \otimes
  \mathfrak N''\mid \mathfrak N'\in\Mod(\gfY),\mathfrak
  N''\in\Mod(\gf)\}\,/\,{\cong}.
\end{equation}

We prove that $M(f,B,A)$ and $M(f,B,A,\DD,\CC)$ are left-free of
finite degree. The groups $\Mod(S^2,A)$ and $\Mod(S^2,A,\CC)$ are
computable: by the Dehn-Nielsen-Baer theorem
(see~\cite{farb-margalit:mcg}*{Theorem~8.8}
and~\S\ref{ss:orbispheres}), the group $\Mod(S^2,A)$ is the group of
outer automorphisms of $\pi_1(S^2\setminus A,*)$ that preserve
peripheral conjugacy classes, and $\Mod(S^2,A,\CC)$ is the subgroup that
also preserves the classes in $\CC$.
\begin{algo}\label{algo:mcb}
  \textsc{Given} $f$ a sphere map $(S^2,B,\DD)\to(S^2,A,\CC)$,\\
  \textsc{Compute} the biset $M(f,B,A,\DD,\CC)$.
\end{algo}

The mapping class group $\Mod(S^2,A,\CC)$ naturally fits into a split
exact sequence
\begin{equation}\label{eq:Mp_Gr_seq}
  1\longrightarrow\Mod[e](S^2,A,\CC)\overset \iota\longrightarrow
  \Mod(S^2,A,\CC)\overset\pi\longrightarrow\Mod[v](S^2,A,\CC)\longrightarrow1
\end{equation}
whose kernel $\Mod[e](S^2,A,\CC)$ is generated by Dehn twists about
curves in $\CC$, and thus is isomorphic to $\Z^{\#\CC}$, and whose
quotient is isomorphic to the direct product of the mapping class
groups of the path connected components of
$S^2\setminus(\CC\sqcup A)$, where all removed discs are shrunk to
punctures.

The structure of $M(f,B,A,\DD,\CC)$ qua
$\Mod[e](S^2,B,\DD)$-$\Mod[e](S^2,A,\CC)$-biset is described by the
Thurston matrix $T_f$ of $f$, see~\eqref{eq:thurston matrix}, in the
sense that if $f m=m' f$ with $m\in\Mod[e](S^2,A,\CC)$, then
$m'\in\Mod[e](S^2,B,\DD)$ is the image of $m$ under the Thurston
matrix.

We arrive finally at the main results of this part: decision problems
for mapping class bisets. Consider a finitely generated group $P$, and
a $P$-$P$-biset $B$ that is left-free of finite degree. We study the
following decision problems:
\begin{description}
\item[The conjugacy problem] Given $b,c\in B$, are they conjugate? If
  so, give a witness $g\in P$ with $b g=g c$.
\item[The centralizer problem] Given $b\in B$, compute its centralizer
  $Z(b)\coloneqq\{g\in P\mid g b=b g\}$.
\end{description}
By a \emph{computable group} we mean a finitely generated group with
solvable word problem. A subgroup $H$ of a computable group $P$ is
\emph{computable} if $H$ is finitely generated and has solvable
membership problem (i.e.\ there is an algorithm that decides, given
$g\in G$, whether $g\in H$). We say that a subgroup $L\le P$ is
\emph{sub-computable} if there is a computable subgroup $H\le P$ and a
computable homomorphism $H\to A$ to an Abelian group such that
$L=\ker(H\to A)$. (It follows from the definition that $L$ also has
solvable membership problem because it is decidable if $h\in H$ is in
$\ker(H\to A)$.) We say that a centralizer problem is \emph{solvable},
respectively \emph{sub-solvable}, if there is an algorithm calculating
the centralizer group as a computable, respectively sub-computable,
subgroup. When elements of a left-free $H$-$G$-biset
$B\cong H\times S$ are supplied to an algorithm, they are given in the
form $h s$ with $h\in H$ and $s\in S$; a \emph{computable biset} is
one such that the groups $G,H$ and the map $S\times G\to H\times S$
are computable.

Let $f\colon(S^2,A,\CC)\selfmap$ be a Thurston map, and set for
brevity $M\coloneqq M(f,A,A,\CC,\CC)$. From the exact
sequence~\eqref{eq:Mp_Gr_seq} we derive an ``exact sequence'' of
bisets
\begin{equation}\label{eq:extension mcb}
  \subscript{\Mod[e](S^2,A,\CC)}M_{\Mod[e](S^2,A,\CC)}\hookrightarrow M\twoheadrightarrow
  {}_{\Mod[e](S^2,A,\CC)}\backslash M/_{\Mod[e](S^2,A,\CC)}
\end{equation}
in which the first term is $M$ with restricted left and right actions,
and the third term is the quotient of $M$ by the left and right
actions of $\Mod[e](S^2,A,\CC)$. There is a finite-to-one map from
this third term to the product of mapping class bisets of spheres in
$(S^2,A)\setminus\CC$. We prove a general result about conjugacy and
centralizer problems in extensions, which gives the
\begin{thm}\label{thm:extension}
  Let $M$ be a mapping class biset as in~\eqref{eq:extension
    mcb}. There is an algorithm that computes the following. It
  receives as input two elements $b,c\in M$, a conjugator
  $g\in \Mod[v](S^2,A,\CC)$ with $\overline b g =g\overline c$ in
  ${}_{\Mod[e](S^2,A,\CC)}\backslash M/_{\Mod[e](S^2,A,\CC)}$, and the
  centralizer $Z(\overline b)\le \Mod[v](S^2,A,\CC)$ as a computable
  group. It computes whether $b\sim c$ in $M$, if so finds a
  conjugator, and produces $Z(b)$ as a sub-computable group.  If
  furthermore $Z(\overline b)$ is finite, then $Z(b)$ is computable.
\end{thm}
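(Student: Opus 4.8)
The plan is to prove Theorem~\ref{thm:extension} as an instance of a general reduction of conjugacy and centralizer problems in a group-biset extension to the corresponding problems in the quotient biset, together with a computation in the (Abelian) kernel. Concretely, write $K=\Mod[e](S^2,A,\CC)\cong\Z^{\#\CC}$, $Q=\Mod[v](S^2,A,\CC)$, and recall from~\eqref{eq:Mp_Gr_seq} that $\Mod(S^2,A,\CC)\cong K\rtimes Q$ as a split extension. The biset $M$ carries commuting left and right actions of $K\rtimes Q$; the hypothesis supplies a solution of the conjugacy problem and the centralizer in the quotient $\overline M={}_K\backslash M/_K$, which is a left-free $Q$-$Q$-biset of finite degree. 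The first step is to fix lifts: for each element of $\overline M$ pick a representative in $M$, and similarly lift the given conjugator $g\in Q$ to $\tilde g\in K\rtimes Q$; then the obstruction to upgrading $\overline b g=g\overline c$ to a genuine conjugacy $b\tilde g k = \tilde g k c$ lies in $K$ and is governed, by the remark just before the theorem, by the Thurston matrix $T_f$ acting on $\Z^{\#\CC}$.

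The key algebraic point is the following. Having fixed a lift $\tilde g$ with $b\tilde g = \tilde g\, c'$ for some $c'$ in the same $K$-$K$-double coset as $c$, we have $c' = k_1 c k_2$ for some $k_1,k_2\in K$; since $K$ is central in its own right/left action up to the twist by $T_f$, the equation $c' = k\cdot c\cdot(-k)$-type correction we may solve for reduces to an affine equation over $\Z^{\#\CC}$: we must decide whether a given vector lies in the image of $(\one - T_f^{\,e})$ or of a related integer matrix (the exponent $e$ coming from the return length of the relevant cycle, as in the definition of return bisets), and if so produce a preimage. This is a routine linear-algebra computation over $\Z$ (Smith normal form), and it either yields the missing $k\in K$ completing the conjugator $\tilde g k$, or certifies that no conjugator exists over that choice of $Q$-conjugator; one then runs over the finitely many $Q$-conjugators modulo $Z(\overline b)$ — here finiteness of $Z(\overline b)$, when it holds, makes this set finite outright, which is why that hypothesis yields a \emph{computable} rather than merely sub-computable answer.

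For the centralizer, the strategy is the standard extension argument: $Z(b)$ sits in a short exact sequence whose image in $Q$ is a subgroup of $Z(\overline b)$ (the mapping classes whose reduction centralizes $\overline b$ and whose $K$-obstruction vanishes) and whose kernel is $Z(b)\cap K$, the solution set in $\Z^{\#\CC}$ of the homogeneous equation $(\one-T_f)k=0$, i.e.\ a computable subgroup of $\Z^{\#\CC}$. The image is cut out inside $Z(\overline b)$ by the condition that the $K$-valued obstruction cocycle (a homomorphism, after the usual straightening, from the preimage of $Z(\overline b)$ in $K\rtimes Q$ to $K/\mathrm{im}(\one-T_f)$) vanishes; this exhibits $Z(b)$ as the kernel of a computable homomorphism from a computable group to an Abelian group, i.e.\ as sub-computable, and as honestly computable when $Z(\overline b)$ is finite. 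I expect the main obstacle to be bookkeeping rather than conceptual: making the ``obstruction lives in $K$ and is controlled by $T_f$'' slogan into an actual well-defined cocycle requires pinning down how the splitting $K\rtimes Q$ interacts with the left and right $M$-actions and with the choice of sphere-tree-of-bisets representative, and checking that the resulting map is a genuine homomorphism independent of the lifts chosen — this is exactly the content of the ``general result about conjugacy and centralizer problems in extensions'' invoked before the theorem, so the proof will mostly consist of verifying that the present situation satisfies that result's hypotheses (left-freeness of $M$ and $\overline M$, finite generation of $K$ and $Q$, solvable word problems, and computability of the Thurston matrix).
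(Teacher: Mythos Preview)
Your overall architecture matches the paper's: the theorem is indeed an instance of a conjugacy/centralizer reduction through the central extension
\[
1\longrightarrow K=\Mod[e](S^2,A,\CC)\cong\Z^{\#\CC}\longrightarrow \Mod(S^2,A,\CC)\longrightarrow Q=\Mod[v](S^2,A,\CC)\longrightarrow 1,
\]
and the centralizer paragraph is essentially right: the obstruction map $\ell_b\colon\pi^{-1}(Z(\overline b))\to K$ defined by $b h=h b\cdot\ell_b(h)$ is a genuine homomorphism to an abelian group, and $Z(b)=\ker\ell_b$ is therefore sub-computable, and computable when $Z(\overline b)$ is finite. Two points are worth noting, however.

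\textbf{The conjugacy step has a gap.} Your plan for conjugacy is to lift the given $g$, solve a linear system for the $K$-correction, and ``then run over the finitely many $Q$-conjugators modulo $Z(\overline b)$''. But there is only \emph{one} coset of $Q$-conjugators, namely $Z(\overline b)\,g$, and when $Z(\overline b)$ is infinite you cannot enumerate it. The theorem nonetheless asserts decidability of conjugacy without any finiteness hypothesis on $Z(\overline b)$. The missing observation is the one you already made for the centralizer: for $h$ ranging over $\pi^{-1}(Z(\overline b))\,\tilde g$, the assignment $h\mapsto\ell(h)$ determined by $b h=h c\cdot\ell(h)$ satisfies $\ell(z h)=\ell(h)+\ell_b(z)$, so a conjugator exists iff $-\ell(\tilde g)\in\operatorname{im}(\ell_b)\le\Z^{\#\CC}$. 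Since $\operatorname{im}(\ell_b)$ is generated by $\ell_b$ evaluated on the (finitely many) generators of $K$ and of $Z(\overline b)$, this is a single membership test in a finitely generated subgroup of $\Z^{\#\CC}$ --- decidable by Smith normal form --- rather than a search over $Z(\overline b)$.

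\textbf{Two smaller corrections.} First, the exponent in ``$(\one-T_f^{\,e})$'' is spurious: the return length of sphere cycles plays no role here, and the relevant linear map on $K$ is applied once, not iterated. Second, because every element of $\Mod(S^2,A,\CC)$ fixes each curve of $\CC$ individually, $K$ is \emph{central} in $\Mod(S^2,A,\CC)$; together with the splitting this gives $\Mod(S^2,A,\CC)\cong K\times Q$, so the ``cocycle'' needs no straightening --- it is an honest homomorphism from the outset, which also explains why the paper's vague ``described by the Thurston matrix'' suffices: the integer matrix you actually use is the one sending a twist about $\delta\in\CC$ to $\deg(f\restrict\delta)$ twists about $f(\delta)$, dual to $T_f$.
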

It is therefore sufficient, to solve conjugacy problems in $M$, to
solve them for return bisets in $R(B,\CC)$. We note that the
centralizer problem in $\Mod(S^2,A)$ is solvable while the centralizer
problem in $M$ is only sub-solvable, see the example
in~\S\ref{ex:infinitely generated}.

\subsection{Distinguished conjugacy classes}\label{ss:distinguished cc}
Let $(S^2,A)$ be a marked sphere, and let $\CC$ be a multicurve. It is
often convenient to treat similarly the conjugacy classes describing
elements of $A$ and of $\CC$. Consider a sphere tree of groups
$\gf$. The set of \emph{distinguished conjugacy classes} $X$ of $\gf$
is the set of all peripheral conjugacy classes of all sphere vertex groups
with two conjugacy classes identified if they are related by an
edge. Equivalently, if $\gf$ is the sphere tree of groups
decomposition of $(S^2,A,\CC)$, then $X$ is in bijection with
$A\cup \CC$. Note that the set of geometric edges of $\gf$ is
naturally a subset of the distinguished conjugacy classes.

The following algorithm determines when a bijection between (possibly
peripheral) multicurves is induced by a homeomorphism between the
underlying spheres:
\begin{algo}\label{algo:MarkCl:Prom}
  \textsc{Given} $\gf$ and $\gfY$ two sphere trees of
  groups with distinguished conjugacy classes $X$ and $Y$, and given a
  bijection $h\colon X\to Y$,\\
  \textsc{Decide} whether $h\colon X\to Y$ promotes to a conjugator
  $\gfI$ from $\gf$ to $\gfY$, and if so
  \textsc{construct} $\gfI$ \textsc{as follows:}\\\upshape
  \begin{enumerate}
  \item Check whether $h$ restricts to an isomorphism between the geometric
    edge sets of $\gf$ and $\gfY$. If not return \texttt{fail}.
  \item Check whether the isomorphism between the edge sets promotes
    into a graph-isomorphism $h\colon\gf\to\gfY$. If
    not, return \texttt{fail}.
  \item For a sphere vertex $v\in \gf$ let $\Gamma_v\subset X$ be the set
    of peripheral conjugacy classes of $G_v$. Check whether
    $h(\Gamma_v)=\Gamma_{h(v)}$ for all vertices $v\in \gf$. If
    not, return \texttt{fail}.
  \item For every sphere vertex $v\in\gf$ choose an isomorphism
    $\phi(v)\colon G_v\to G_{h(v)}$ compatible with $h\colon \Gamma_v\to
    \Gamma_{h(v)}$. For every edge $e\in \gfY$ choose an isomorphism
    $\phi(e)\colon G_e\to G_{h(e)}$.
  \item\label{algo:MarkCl:Prom:step5} Set $\gfI\coloneqq \gf$,
    $\lambda\coloneqq\one$, $\rho\coloneqq h$ and
    $B_z\coloneqq G_{h(z)}$ for all $z\in \gfI$; the left action of
    $G_z$ on $B_z$ is via $\phi(z)$, the right action is natural, and
    the inclusion of $B_e$ into $B_{e^-}$ is via $1\mapsto g$, for any
    $g\in G_{h(z)}$ with $()^g\circ\phi(e)=\phi(e^-)$, if we identify
    $G_e$ with a subgroup of $G_{e^-}$.
  \item Return $\gfI$.
  \end{enumerate}
\end{algo}

Let $\subscript \gf\gfB_\gf$ and $\subscript \gfY\gfC_\gfY$ be two
sphere trees of bisets. If $\gfI$ is a conjugator between $\gf$
and $\gfY$, then
\[(\gfC)^\gfI\coloneqq \gfI \otimes \gfC
\otimes \gfI^{\vee}
\]
is an $\gf$-tree of bisets. Recall from~\eqref{eq:conjgob} the
notations $\Mod(\gf)$ and $M(\gfB)$. The following two algorithms
determine, given two trees of bisets that stabilize the same
multicurve, whether they are twists of one another by mapping classes
respecting the multicurve.

The first algorithm expresses, if possible, a sphere tree of bisets as
a left multiple of another one. It relies on the following
observation. Suppose that we want to construct a biprincipal sphere
tree of bisets $\mathfrak T$ over a sphere tree of groups $\gf$, and
that its vertex and edge bisets are already given, so that only the
intertwiners $T_e\to T_{e^-}$ need be specified at edges of
$\mathfrak T$. Consider an edge pair $\{e,\overline e\}$. The bisets
$T_e$ and $T_{e^\pm}$ may be identified with the groups $G_e$ and
$G_{e^\pm}$ respectively; then the intertwiners $T_e\to T_{e^\pm}$ are
defined by $1\mapsto g_\pm$ for some $g_\pm\in G_{e^\pm}$ which
commutes with the image of $G_e$. Since the $G_{e^\pm}$ are free while
$G_e$ is Abelian, the element $g_\pm$ may be chosen arbitrarily in
$(G_e)^\pm$. All resulting choices of maps $T_e\to T_{e^\pm}$ are
called \emph{legal intertwiners}. In fact, writing $g_\pm=(h_\pm)^\pm$
for some $h_\pm\in G_e$, the isomorphism class of $\mathfrak T$
depends only on $h_+(h_-)^{-1}$. See Example~\ref{ss:ex:DehnTwist} for
the interpretation of these intertwiners as ``Dehn twists''.
\begin{algo}\label{algo:chech:InMCB:pre}
  \textsc{Given} ${}_\gf\mathfrak B_\gf$ and ${}_\gf\mathfrak C_\gf$
  two sphere $\gf$-trees of bisets,\\
  \textsc{Decide} whether there is an  $\mathfrak M\in \Mod(\gf)$ such that 
  $\mathfrak C\cong \mathfrak M \otimes \mathfrak B $, and if so
  \textsc{construct} $\mathfrak M$ and the isomorphism \textsc{as follows:}\\\upshape
  \begin{enumerate} 
  \item Try to construct an isomorphism of trees
    $h\colon \mathfrak B\to\mathfrak C$ mapping essential vertices
    into essential vertices such that
    $\lambda_\gfB (z)=\lambda_\gfC \circ h(z)$ and
    $\rho_\gfB(z)=\rho_\gfC \circ h(z)$ for every $z\in\gfB$. If $h$
    does not exist, then return \texttt{fail}. Otherwise $h$ is
    unique.
  \item Choose an essential sphere vertex $v\in \gfB$. Try to find
    $M_{\lambda(v)} \in \Mod(G_{\lambda(v)})$ such that
    $M_{\lambda(v)} \otimes B_v \cong C_{h(v)}$. If such
    $M_{\lambda(v)}$ do not exist, return \texttt{fail}. Otherwise set
    $S\coloneqq \{v\}$ and run
    Steps~\ref{algo:chech:InMCB:St3}--\ref{algo:chech:InMCB:St6} over
    all pairs $\{e,\overline e\}\not\subset S$ but with $e^- \in S$.
  \item If $\lambda(e)\notin \lambda(S)$, then do the
    following. (Note that in this case $\lambda(e)$ is an edge in
    $\gf$.) Add $e$ and $\overline e$ to $S$, let $M_{\lambda(e)}$ be
    a principal $\Z$-biset, choose any legal intertwiner
    $()^-\colon M_{\lambda(e)}\to M_{\lambda(e)^-}$, and define
    $M_{\overline e}$ similarly.\label{algo:chech:InMCB:St3}
  \item Try to find an isomorphism between
    $M_{\lambda(e)}\otimes B_{e}$ and $C_{h(e)}$ compatible with the
    intertwiner maps. If it does not exist, return \texttt{fail}.
  \item If $e^+$ is not an essential sphere vertex, then do the
    following. If $\lambda(e^+)\notin S$, then (in this case
    $\lambda(e^+)$ is a curve vertex) choose a biprincipal $\Z$-biset
    $M_{\lambda(e^+)}$, choose a legal intertwiner from
    $M_{\lambda(e)}$ to $M_{{\lambda(e^+)}}$, and add $e^+$ to
    $S$. Try to find an isomorphism between
    $M_{\lambda(e^+)}\otimes B_{e^+}$ and $C_{h(e^+)}$ that is
    compatible with the isomorphism between
    $M_{\lambda(e)}\otimes B_{e}$ and $C_{h(e)}$ via the
    intertwiner maps. If it does not exist, return
    \texttt{fail}.
  \item If $e^+$ is an essential sphere vertex, then do the
    following. Try to find $M_{\lambda(z)} \in \Mod(G_{\lambda(z)})$
    such that $M_{\lambda(z)} \otimes B_z \cong C_{h(z)}$. If such
    $M_{\lambda(z)}$ do not exist, return \texttt{fail}. Try to find a
    legal intertwiner $()^-\colon M_{\lambda(e)}\to M_{\lambda(e^+)}$
    such that the isomorphism between $M_{\lambda(e)}\otimes B_{e}$
    and $C_{h(e)}$ is compatible with the isomorphism between
    $M_{\lambda(e)}\otimes B_{e}$ and $C_{h(e)}$ via the
    intertwiner maps. If no such
    intertwiner exists, return
    \texttt{fail}. Add $e^+$ to $S$. \label{algo:chech:InMCB:St6}
  \item Return the principal sphere tree of bisets $\mathfrak M$ and
    the isomorphism between $\gfC$ and $\mathfrak M \otimes \gfB$
    constructed via $h$.
  \end{enumerate}
\end{algo}

\begin{algo}\label{algo:chech:InMCB}
  \textsc{Given} $\subscript \gf\gfB_\gf$ and $\subscript \gf\gfC_\gf$
  two sphere $\gf$-trees of bisets,\\
  \textsc{Decide} whether $\gfC\in M(\gfB)$, and if so
  \textsc{construct} $\mathfrak M, \mathfrak N\in \Mod(\gf)$
  such that $\gfC\cong \mathfrak M \otimes \gfB \otimes
  \mathfrak N$ \textsc{as follows:}\\\upshape
  \begin{enumerate}
  \item Follow Algorithm~\ref{algo:mcb} to compute a basis of the
    mapping class biset $M(\gfB)$.
  \item For each $\mathfrak N$ in the basis, do the following. Run
    Algorithm~\ref{algo:chech:InMCB:pre} on $\gfC$ and
    $\gfB\otimes\mathfrak N$. If there exists
    $\mathfrak M\in\Mod(\gf)$ with
    $\gfC\cong\mathfrak M\otimes(\mathfrak B\otimes\mathfrak N)$,
    return $(\mathfrak M,\mathfrak N)$.
  \item Return \texttt{fail}.
  \end{enumerate}
\end{algo}

\subsection{Orbispheres}\label{ss:orbispheres}
In fact, a slightly more general situation than marked spheres, that
of \emph{orbispheres}, can be considered at almost no cost. Let there
also be given a function $\ord\colon A\to\{2,3,\dots,\infty\}$,
assigning a positive or infinite order to each marked point. This
describes an \emph{orbispace} structure: if $\ord(a)=\infty$, the
point $a\in A$ is punctured, while if $\ord(a)=n$ then the space has a
cone-type singularity of angle $2\pi/n$ at $a$. It is convenient to
extend $\ord$ to $S^2$ so that $\ord(p)=1\Leftrightarrow p\notin A$.
We call $(S^2,A,\ord)$ an \emph{orbisphere}, and write its fundamental
group, called an \emph{orbisphere group}, as
\begin{equation}\label{eq:orbispheregp}
  G=\pi_1(S^2,A,\ord,*)=\langle \gamma_1,\dots,\gamma_n\mid \gamma_1^{\ord(a_1)},\dots,\gamma_n^{\ord(a_n)}, \gamma_1\cdots\gamma_n\rangle.
\end{equation}

A \emph{map} $f\colon(S^2,B,\ord_B)\to(S^2,A,\ord_A)$ between
orbispheres is a branched covering between the underlying spheres,
with $f(B)\cup\{\text{critical values}(f)\}\subseteq A$. It is locally
modeled at $p\in S^2$ by $z\mapsto z^{\deg_p(f)}$, in charts
respectively centered at $p$ and $f(p)$, and the orbispace structures
satisfy $\ord_B(p)\deg_p(f)\mid\ord_A(f(p))$ for all $p\in S^2$.

Sphere groups and maps are subsumed in these definitions, by setting
$\ord(a)=\infty$ for all $a\in A$. On the other hand, let
$f\colon(S^2,A)\selfmap$ be a Thurston map. There is then a
\emph{minimal} orbisphere structure $(S^2,P_f,\ord_f)$, with
$P_f\subseteq A$ the post-critical set of $f$, given by
\[\ord_f(p)=\lcm\{\deg_q(f^n)\mid n\ge0,q\in f^{-n}(p)\}.\]

These notions have algebraic counterparts. Let $G$ be an orbisphere
group with peripheral conjugacy classes $\{\Gamma_a\}_{a\in A}$. For $a\in
A$ set
\[\ord_G(a)=\min\{d\in \N\mid\Gamma^d_a=\{1\}\}.\]
Orbisphere bisets are defined exactly as in
Definition~\ref{defn:sphere bisets}. Let $\subscript GB_G$ be an
orbisphere biset. It has a \emph{portrait} $B_*\colon A\selfmap$
induced by the map on peripheral conjugacy classes, and a \emph{local
  degree} $\deg_a(B)$. There is a \emph{minimal orbisphere quotient}
of $G$ associated with $B$, which is the quotient $\overline G$ of $G$
by the additional relations $\Gamma_a^{\ord_B(a)}=1$, for
\begin{equation}\label{eq:ordB}
  \ord_B(a)=\lcm\{d\mid n\ge0\text{ and $\Gamma_a$ has a lift of degree $d$ under }B^{\otimes n}\},
\end{equation}
and clearly $\ord_{B}(a) \mid \ord_{G}(a)$. We call the quotient biset
$\overline G\otimes_G B\otimes_G\overline G$ the \emph{minimal
  orbisphere biset} of $\subscript GB_G$.

The mapping class group $\Mod(G)$ of an orbisphere group is naturally
defined as the group of outer automorphisms of $G$ that preserve its
peripheral conjugacy classes classwise. It turns out that $\Mod(G)$ is
just a usual mapping class group. More precisely, consider a marked
sphere $(S^2,A)$ with $\widetilde G=\pi_1(S^2,A,*)$ and an orbisphere
structure $(S^2,A,\ord)$ with non-positive Euler characteristic and
with corresponding orbisphere group $G$. There is a natural map
$\widetilde G\to G$ mapping each generator $\gamma_i\in\widetilde G$
to $\gamma_i\in G$.

\begin{thm}[Dehn-Nielsen-Baer-Zieschang-Vogt-Coldewey~\cite{zieschang-vogt-coldewey:spdg}*{Theorem~5.8.3}]\label{thm:dehn-nielsen-baer}
  If $G$ has at least one peripheral class of order $\ge 3$, then the
  natural map $\widetilde G\to G$ induces an isomorphism
  $\Mod(\widetilde G)\to\Mod(G$).\qed
\end{thm}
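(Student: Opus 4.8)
The plan is to factor the natural map through topology and reduce it to a rigidity statement for planar discontinuous groups. Write $q\colon\widetilde G\twoheadrightarrow G$ for the natural surjection; its kernel $N$ is the normal closure of the finitely many elements $\gamma_i^{\ord(a_i)}$ with $\ord(a_i)<\infty$. \emph{Well-definedness} is the first step: if $\phi\in\operatorname{Aut}(\widetilde G)$ preserves each peripheral class $\gamma_i^{\widetilde G}$, then it fixes $N$ (a conjugate of $\gamma_i^{\ord(a_i)}$ is again one), hence descends to $\bar\phi\in\operatorname{Aut}(G)$, which still preserves peripheral classes; inner automorphisms descend to inner automorphisms, so we obtain a homomorphism $\Mod(\widetilde G)\to\Mod(G)$, and it remains to show it is bijective.

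The observation that reduces everything to a single known fact is that, since we work with the \emph{pure} mapping class group, a homeomorphism of $S^2$ fixing $A$ pointwise automatically respects the function $\ord$, so it is exactly an orbifold self-homeomorphism of $\mathcal O\coloneqq(S^2,A,\ord)$ fixing its cone locus; thus $\Mod(\widetilde G)=\Mod(S^2,A)$ \emph{is} the mapping class group $\Mod(\mathcal O)$ of the orbifold rel its cone points. It therefore suffices to prove that the homomorphism $\Mod(\mathcal O)\to\Mod(G)$ sending a class of orbifold homeomorphisms to the induced class-preserving outer automorphism of $G=\pi_1^{\mathrm{orb}}(\mathcal O)$ is an isomorphism. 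When $\chi^{\mathrm{orb}}(\mathcal O)<0$ the orbifold carries a hyperbolic structure, $G$ is a Fuchsian group (cocompact when every $\ord(a_i)$ is finite, with cusps at the punctures otherwise), and this is the orbifold Dehn--Nielsen--Baer theorem; I would deduce it, following \cite{zieschang-vogt-coldewey:spdg}, from the action of $\operatorname{Aut}(G)$ on the circle at infinity $\partial\mathbb{D}$: a class-preserving automorphism is a bounded perturbation of the identity on $\partial\mathbb{D}$, hence (Nielsen realization for Fuchsian groups) is induced by an orbifold homeomorphism, and two orbifold homeomorphisms inducing the same outer automorphism are orbifold-isotopic. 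The only Euclidean cases compatible with the hypotheses ($\chi^{\mathrm{orb}}=0$ together with a cone point of order $\ge3$) are the triangle-group cases $(2,3,6),(2,4,4),(3,3,3)$; there $\Mod(S^2,A)$ is trivial (pure mapping class group of the thrice-punctured sphere), and one checks directly that these Euclidean triangle groups have no nontrivial class-preserving outer automorphism. Alternatively, the surjectivity can be attacked purely algebraically: writing $\psi(\gamma_i)=w_i\gamma_i w_i^{-1}$ in $G$, lift the $w_i$ to $\widetilde w_i\in\widetilde G$ and try to define $\widetilde\psi$ by $\gamma_i\mapsto\widetilde w_i\gamma_i\widetilde w_i^{-1}$; the sole obstruction is whether $\widetilde\psi(\gamma_n)\coloneqq(\widetilde\psi(\gamma_1)\cdots\widetilde\psi(\gamma_{n-1}))^{-1}$, which automatically reduces mod $N$ to a conjugate of $\gamma_n^{-1}$, is already conjugate to $\gamma_n^{-1}$ in $\widetilde G$ and yields an automorphism.

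\emph{Injectivity} then costs little: if $[h]\in\Mod(\mathcal O)$ induces an inner automorphism of $G$, the uniqueness half of the rigidity statement above says $h$ is orbifold-isotopic to $\id_{\mathcal O}$, and forgetting the orbifold structure this isotopy runs through homeomorphisms of $(S^2,A)$ fixing $A$, so $[h]$ is trivial in $\Mod(\widetilde G)$; equivalently, isotopies of orbifolds restrict to isotopies of the underlying marked pair, so no hypothesis beyond $\chi^{\mathrm{orb}}\le0$ is needed here. The main obstacle is thus entirely in the \emph{surjectivity} of $\Mod(\mathcal O)\to\Mod(G)$, i.e.\ in showing every class-preserving outer automorphism of the Fuchsian group $G$ is geometric. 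This is exactly where the hypothesis ``some $\ord(a_i)\ge3$'' is indispensable: for an orbisphere all of whose cone points have order $2$ — the pillowcase $(2,2,2,2)$ and its analogues $(2,\dots,2)$ — the group $G$ carries class-preserving outer automorphisms that are \emph{not} induced by any homeomorphism of $\mathcal O$ (they descend from hyperelliptic-type symmetries of the closed surface double-covering $\mathcal O$), so the map is not onto; a cone point of order $\ge3$ is precisely what rules out these exceptional signatures and forces the canonical cell structure that makes the Nielsen-realization argument go through.
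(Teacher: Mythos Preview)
The paper does not give its own proof of this theorem: observe the \verb|\qed| immediately after the statement. The result is quoted verbatim as Theorem~5.8.3 of Zieschang--Vogt--Coldewey and used as a black box, so there is no in-paper argument to compare against. Your sketch follows the standard route of that reference (realize $G$ as a planar discontinuous group, use the action on the boundary circle to realize class-preserving outer automorphisms geometrically), and as an outline it is sound.

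One point does need correction: your diagnosis of why the hypothesis ``some peripheral order $\ge3$'' is needed. The paper explains this in the paragraph directly following the theorem, and the mechanism is \emph{orientation}, not hyperelliptic symmetries. If every $\gamma_i$ has order $2$ in $G$, then $\gamma_i^{-1}\in\gamma_i^G$, so an orientation-reversing automorphism (one sending each $\gamma_i$ to a conjugate of $\gamma_i^{-1}$) still preserves every peripheral conjugacy class of $G$; hence $\Mod(G)$ acquires an index-$2$ orientation-reversing part that is not in the image of $\Mod(\widetilde G)$, where preserving $\gamma_i^{\widetilde G}$ (as opposed to $(\gamma_i^{\pm1})^{\widetilde G}$) already forces orientation preservation. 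A single peripheral class of order $\ge3$ pins the orientation down, since then $\gamma_i$ and $\gamma_i^{-1}$ are not conjugate. Your phrase ``hyperelliptic-type symmetries of the closed surface double-covering $\mathcal O$'' misidentifies the source: the hyperelliptic involution on that double cover is the deck transformation itself, already quotiented out, and plays no role here.

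Finally, your ``alternative algebraic'' paragraph ends at the obstruction (``the sole obstruction is whether \dots'') without resolving it, so it is motivation rather than an independent proof; that is fine provided you do not lean on it.
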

There is a small difficulty if all peripheral classes in $G$ have
order $2$, because then the orientation of a mapping class is
difficult to read in $\Mod(G)$. In that case, we replace $\Mod(G)$ by
its orientation-preserving index-$2$ subgroup. In the
$(2,2,2,2)$-case, every mapping class is of the form $M^{(0,0)}$,
see~\eqref{eq:InjEndOfK}, with $\det(M)=1$ and $M$ fixing peripheral
conjugacy
classes. (See~\cite{bartholdi-dudko:bc2}*{\S\ref{bc2:ss:orbispheres}}
for details.)

An equivalent and useful formulation of orbispheres is via
\emph{planar discontinuous group actions}. To every orbisphere
$(S^2,A,\ord)$ there corresponds a universal cover which, if
$\sum_{a\in A}(1-\ord(a)^{-1})\ge2$, is a topological plane $\Pi$, and
an action of $G=\pi_1(S^2,A,\ord,*)$ on $\Pi$ by homeomorphisms such
that $(S^2,A,\ord)=\Pi/G$; for every $p\in\Pi$ that projects to a
marked point $a_i\in A$, its stabilizer $G_p$ is a cyclic group of
order $\ord(a_i)$, conjugate to $\langle\gamma_i\rangle$. Maps between
orbispace can be lifted to equivariant maps between their universal covers.
\begin{thm}[Baer~\cite{zieschang-vogt-coldewey:spdg}*{Theorem~5.14.1}]\label{thm:Baer}
  An orientation preserving homeomorphism of a plane commuting with
  planar discontinuous group is isotopic to the identity relative to
  the group action.\qed
\end{thm}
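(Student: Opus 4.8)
The statement to prove is the orbifold form of Baer's isotopy theorem: if $h\colon\Pi\selfmap$ is an orientation-preserving homeomorphism commuting with the planar discontinuous action of $G$ on $\Pi$, then $h$ is isotopic to $\one$ through $G$-equivariant homeomorphisms. My plan is to push $h$ down to the quotient orbisphere $\mathcal O\coloneqq\Pi/G=(S^2,A,\ord)$, apply the Dehn--Nielsen--Baer theorem (Theorem~\ref{thm:dehn-nielsen-baer}) to isotope the quotient map to the identity, and then lift that isotopy back to $\Pi$, checking that equivariance is preserved all the way.

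\textbf{Descent.}
Since $h$ centralizes $G$ it carries $G$-orbits to $G$-orbits, hence descends to a homeomorphism $\bar h\colon\mathcal O\selfmap$. For each $a_i\in A$ choose a preimage $p_i\in\Pi$; its stabilizer $G_{p_i}$ is finite cyclic, and $h\,G_{p_i}\,h^{-1}=G_{h(p_i)}$ equals $G_{p_i}$ because $h$ commutes with $G$. By the Brouwer--Kerékjártó theorem a finite-order orientation-preserving self-homeomorphism of $\Pi\cong\R^2$ is conjugate to a rotation, so it has a \emph{unique} fixed point; thus the elliptic point of a given finite cyclic subgroup is unique, forcing $h(p_i)=p_i$ and $\bar h(a_i)=a_i$. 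As $h$ preserves orientation, so does $\bar h$, and $\bar h$ therefore fixes each peripheral conjugacy class $\Gamma_i$ classwise (it cannot send $\gamma_i$ to a conjugate of $\gamma_i^{-1}$). Hence $\bar h$ represents an element of $\Mod(G)$. Moreover, fixing a basepoint $*\in\Pi$ over a nonsingular point and using that $\Pi$ is simply connected, covering-space theory shows that the outer automorphism of $G$ induced by $\bar h$ is trivial: $h$ is exactly the lift of $\bar h$ that commutes with the deck action, and a lift $k$ of $\bar h$ induces the automorphism $g\mapsto k g k^{-1}$ of $G$, here the identity.

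\textbf{Isotopy downstairs and lifting.}
By the injectivity of $\Mod(S^2,A)\hookrightarrow\mathrm{Out}(\widetilde G)$ together with Theorem~\ref{thm:dehn-nielsen-baer} (which identifies $\Mod(\widetilde G)$ with $\Mod(G)$; in the exceptional $(2,\dots,2)$ case one uses that $\bar h$ is already orientation preserving), the vanishing of the induced outer automorphism forces $\bar h$ to be isotopic to $\one_{\mathcal O}$ through homeomorphisms of the orbisphere. Pick such an isotopy $(\bar h_t)_{t\in[0,1]}$, $\bar h_0=\bar h$, $\bar h_1=\one$. Since maps of orbispaces lift to equivariant maps of their universal covers, the homotopy $\bar h_t$ lifts, starting from $h_0=h$, to a continuous family $(h_t)$ of homeomorphisms of $\Pi$ with each $h_t$ a lift of $\bar h_t$; in particular $h_1$ is a lift of $\one_{\mathcal O}$, i.e.\ an element of $G$. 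For each generator $g\in G$ the map $t\mapsto h_t\circ g\circ h_t^{-1}$ is a continuous path of deck transformations; as the $G$-action on $\Pi$ is discontinuous, hence discrete inside $\mathrm{Homeo}(\Pi)$, this path is constant and equals $g$ (its value at $t=0$). Thus every $h_t$ commutes with $G$, so $(h_t)$ is a $G$-equivariant isotopy. Finally $h_1\in G$ centralizes $G$, hence $h_1\in Z(G)$; since $\sum_{a\in A}(1-\ord(a)^{-1})\ge 2$ makes $G$ a non-elementary hyperbolic $2$-orbifold group with $Z(G)=1$, we get $h_1=\one_\Pi$, completing the proof.

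\textbf{Main obstacle.}
The delicate point is the last paragraph: arranging that the lifted isotopy stays $G$-equivariant and lands on $\one$. This rests on (i) homotopy lifting for universal orbifold covers (valid because $\chi(\mathcal O)<0$, so $\Pi\cong\R^2$), and (ii) the rigidity input that the $G$-action is discrete in $\mathrm{Homeo}(\Pi)$, which prevents the induced automorphisms $g\mapsto h_t g h_t^{-1}$ from drifting away from the identity; the same hyperbolicity hypothesis also supplies $Z(G)=1$, which pins down the endpoint. Everything else -- the descent and the application of Dehn--Nielsen--Baer -- is routine once these are in place.
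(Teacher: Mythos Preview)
The paper does not prove this theorem: it is stated with a citation to Zieschang--Vogt--Coldewey, Theorem~5.14.1, and closed with a \qed\ symbol, so there is no argument in the paper to compare against. Your task was therefore to supply a proof the paper omits.

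Your argument is essentially sound. The reduction---push $h$ down to $\bar h$ on the quotient orbisphere, show that $\bar h$ fixes $A$ pointwise and acts trivially on $\mathrm{Out}(G)$, invoke the injectivity of $\Mod(S^2,A)\hookrightarrow\mathrm{Out}(\widetilde G)$ together with Theorem~\ref{thm:dehn-nielsen-baer} to get an isotopy $\bar h_t$ rel~$A$, and then lift---is the standard route and works. The discreteness argument for preserving equivariance along the lift, and the use of $Z(G)=1$ to pin down the endpoint, are both correct. Two small remarks: first, the inequality $\sum_{a\in A}(1-\ord(a)^{-1})\ge2$ allows Euler characteristic equal to zero, so $G$ need not be hyperbolic (the $(2,2,2,2)$, $(3,3,3)$, $(2,4,4)$, $(2,3,6)$ cases are Euclidean); nonetheless $Z(G)=1$ holds in all these crystallographic cases as well, so your conclusion survives---just drop the word ``hyperbolic''. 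Second, you are using classical Baer for punctured spheres (the injectivity $\Mod(S^2,A)\hookrightarrow\mathrm{Out}(\widetilde G)$) as an input to prove the orbifold/equivariant version; this is a legitimate reduction since that classical statement is proved independently (e.g.\ via the Alexander method), but it is worth being explicit that you are importing it rather than deriving it here.
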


We consider now in more detail the operation on sphere groups and
sphere bisets consisting of changing the orbispace structure while
retaining the marked points. The less innocuous operation of erasing
punctures (i.e.\ setting their order to $1$) will be treated
in~\S\ref{ss:forgetful}.

Let $\widetilde G$, $G$ be two orbisphere groups with respective
peripheral conjugacy classes $(\widetilde\Gamma_a)_{a\in A}$ and
$(\Gamma_a)_{a\in A}$. An \emph{inessential forgetful morphism}
$\widetilde G\to G$ is a homomorphism
$\iota \colon \widetilde G\to G$ such that for every $a\in A$ we have
$\iota(\widetilde\Gamma_a)=\Gamma_a$.  Therefore the order of $a$ in
$G$ divides the order of $a$ in $\widetilde G$ for all
$a\in A$.

Let $G$ be an orbisphere group and let $B$ be a $G$-$G$-biset. We
naturally get a right action of $G$ on
\[T(B)\coloneqq\bigsqcup_{n\ge0}\{\cdot\}\otimes_G B^{\otimes n}.
\]
If $B$ is left-free of degree $d$ then $T(B)$ naturally has the
structure of a $d$-regular rooted tree: if $S$ is a basis of $B$, then
$T(B)$ is in bijection with the set of words $S^*$, which forms a
$\#S$-regular tree if one puts an edge between $s_1\dots s_n$ and
$s_1\dots s_{n+1}$ for all $s_i\in S$. The action of $G$ on $T(B)$
need not be free; following~\cite{nekrashevych:ssg}*{5.1.1} we denote
by $\IMG_B(G)$ the quotient of $G$ by the kernel of this action.

\begin{lem}
  Let $\subscript{\widetilde G}{\widetilde B}_{\widetilde G}$ be an
  orbisphere biset and let $\iota\colon \widetilde G\to G$ be an
  inessential forgetful morphism between orbisphere groups. Then
  \begin{equation}\label{eq:tildeB to B}
    B\coloneqq G\otimes_{\widetilde G}\widetilde B_{\widetilde G}\otimes G
  \end{equation}
  is an orbisphere $G$-$G$-biset if and only if the kernel of
  $\iota\colon\widetilde G\to G$ is contained in the kernel of
  $\widetilde G\to\IMG_{\widetilde B}(\widetilde G)$ if and only if
  the kernel of $\iota\colon\widetilde G\to G$ is contained in the
  kernel of $\widetilde G\to \overline G$, where $\overline G$ is the
  \emph{minimal orbisphere quotient} of $G$ associated with $B$.
\end{lem}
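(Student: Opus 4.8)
The plan is to establish the two equivalences $(\mathrm{i})\Leftrightarrow(\mathrm{ii})$ and $(\mathrm{ii})\Leftrightarrow(\mathrm{iii})$, working throughout with the biset $B=G\otimes_{\widetilde G}\widetilde B\otimes_{\widetilde G}G$ in which both copies of $G$ are regarded as bisets via $\iota$. First I would cut $(\mathrm{i})$ down to the sole requirement that $B$ be left-free of degree $d\coloneqq\deg\widetilde B$. Right-transitivity of $B$ comes for free, since $B/G=G\otimes_{\widetilde G}(\widetilde B/\widetilde G)$ collapses to a point ($\widetilde B$ is right-transitive and $\iota$ is onto). And if $B$ does happen to be left-free of degree $d$, then a basis $S$ of $\widetilde B$ projects to a basis of $B$ on which the right action of any peripheral representative of $G=\iota(\widetilde G)$ realises precisely the permutation it realised on $\widetilde B$, so the admissibility condition of~\eqref{eq:riemannhurwitz} is unchanged and the lift of each $\Gamma_a=\iota(\widetilde\Gamma_a)$ under $B$ is the $\iota$-image of the lift of $\widetilde\Gamma_a$ under $\widetilde B$; thus condition~(3) of Definition~\ref{defn:sphere bisets} passes from $\widetilde B$ to $B$ (and a drop in degree would, conversely, break that condition). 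Hence $(\mathrm{i})$ amounts exactly to left-freeness of $B$ in degree $d$.

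For $(\mathrm{i})\Leftrightarrow(\mathrm{ii})$ I would use the wreath-recursion description of left-free bisets from~\S\ref{ss:coverings}: a left-free $G$-$G$-biset of degree $d$ is the same datum as a homomorphism $G\to G^S\rtimes S\perm$, and $B=\iota_*\widetilde B$ carries such a recursion exactly when the composite $\widetilde G\xrightarrow{\psi_{\widetilde B}}\widetilde G^S\rtimes S\perm\xrightarrow{\iota_*}G^S\rtimes S\perm$ factors through $\iota$, i.e.\ when $\psi_{\widetilde B}(\ker\iota)\subseteq(\ker\iota)^S\rtimes\{1\}$. Unwinding this concretely, the left $G$-stabiliser of a base element $1\otimes s\otimes 1$ of $B$ is $\{\iota(\widetilde h):\widetilde g\in\ker\iota,\ s\widetilde g=\widetilde h s\text{ in }\widetilde B\}$, so $B$ is left-free of full degree precisely when every $\widetilde g\in\ker\iota$ fixes the first level of $T(\widetilde B)$ pointwise and has all of its first-level return elements again in $\ker\iota$. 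Since those return elements themselves lie in $\ker\iota$, the same statement applies to them, and an induction down the levels of $T(\widetilde B)$ turns this into: $\ker\iota$ acts trivially on all of $T(\widetilde B)$, i.e.\ $\ker\iota\subseteq\ker\bigl(\widetilde G\to\IMG_{\widetilde B}(\widetilde G)\bigr)$.

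For $(\mathrm{ii})\Leftrightarrow(\mathrm{iii})$, note that $\ker\iota$ is normally generated by the peripheral powers $\widetilde\Gamma_a^{\,\ord_G(a)}$ and that $\ker(\widetilde G\to\IMG_{\widetilde B}(\widetilde G))$ is normal, so $(\mathrm{ii})$ is equivalent to $\widetilde\gamma_a^{\,\ord_G(a)}$ acting trivially on $T(\widetilde B)$ for each $a$. The crucial point is the computation that the order of the image of $\widetilde\gamma_a$ in $\IMG_{\widetilde B}(\widetilde G)$ is exactly $\ord_{\widetilde B}(a)$: the lengths of the cycles of $\widetilde\gamma_a$ on the $n$-th level of $T(\widetilde B)$ are the degrees occurring in the lift of $\widetilde\Gamma_a$ under $\widetilde B^{\otimes n}$, so the order of $\widetilde\gamma_a$ on the tree is the least common multiple of all these degrees over all $n\ge0$, which by~\eqref{eq:ordB} equals $\ord_{\widetilde B}(a)$. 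Consequently $(\mathrm{ii})$ holds iff $\ord_{\widetilde B}(a)\mid\ord_G(a)$ for every $a$, which is precisely the assertion that $\ker\iota$ lies in the kernel of the quotient of $\widetilde G$ by the relations $\widetilde\Gamma_a^{\,\ord_{\widetilde B}(a)}=1$, namely $(\mathrm{iii})$; and once $(\mathrm{i})$ and $(\mathrm{ii})$ hold this quotient is the minimal orbisphere quotient associated with $B$, because then $\ord_B(a)=\ord_{\widetilde B}(a)$. I expect the main obstacle to be the two bookkeeping lemmas underlying this outline — that the one-level left-freeness condition propagates down $T(\widetilde B)$, and that the order of a peripheral element in the iterated-monodromy action is $\ord_{\widetilde B}(a)$ — both routine once the wreath-recursion formalism is in place, but carrying the real content of the statement.
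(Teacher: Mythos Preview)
The paper is a survey and states this lemma without proof, deferring to~\cite{bartholdi-dudko:bc3}. Your reduction of $(\mathrm i)$ to the one-level condition~$(*)$ --- that every $k\in\ker\iota$ fix level~$1$ and have all sections $k|_s$ again in $\ker\iota$ --- is correct, as is the implication $(*)\Rightarrow(\mathrm{ii})$ and the computation $(\mathrm{ii})\Leftrightarrow\bigl[\ord_{\widetilde B}(a)\mid\ord_G(a)\ \forall a\bigr]$. The gap is that your ``induction down the levels'' only runs in one direction: from~$(*)$ you correctly unfold to triviality on all of $T(\widetilde B)$, but the converse $(\mathrm{ii})\Rightarrow(*)$ is never addressed. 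Knowing that $k\in\ker\iota$ acts trivially on the tree only tells you that each section $k|_s$ acts trivially on the subtree rooted at~$s$, not that $k|_s\in\ker\iota$.

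This is not a cosmetic omission. Since you have shown $(\mathrm i)\Leftrightarrow(*)$, the equivalence $(\mathrm i)\Leftrightarrow(\mathrm{ii})$ would require $(\mathrm{ii})\Rightarrow(*)$, and that implication can fail. On $\widetilde G=\langle a,b,c,d\mid dcba\rangle$ take the degree-$2$ sphere biset with
\[a=\pair{1,b}(1,2),\qquad b=\pair{1,1},\qquad c=\pair{a,c},\qquad d=\pair{b^{-1}c^{-1},a^{-1}}(1,2).\]
One checks this is a valid sphere biset with $\ord_{\widetilde B}(a)=\ord_{\widetilde B}(c)=2$, $\ord_{\widetilde B}(b)=1$, $\ord_{\widetilde B}(d)=\infty$. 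Choosing $\ord_G=(2,3,2,\infty)$ gives $\ord_{\widetilde B}(x)\mid\ord_G(x)$ for all~$x$, so $(\mathrm{ii})$ holds; yet $a^2\in\ker\iota$ has section $(a^{2})|_{1}=b\notin\ker\iota$, so the left $G$-stabiliser of $[1\otimes\ell_1\otimes1]$ in~$B$ contains $\iota(b)\ne1$ and $B$ is not left-free. Thus the condition genuinely equivalent to~$(\mathrm i)$ is~$(*)$, equivalently $\ord_G(c)\cdot\deg_c(\widetilde B)\mid\ord_G(\widetilde B_*(c))$ for every peripheral~$c$; this implies $(\mathrm{ii})$ and $(\mathrm{iii})$ but is strictly stronger, so the lemma as stated in the survey needs adjustment.
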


For an orbisphere biset $\subscript GB_G$ its mapping class biset
$M(B)$ is defined in the same way as in the sphere
case~\eqref{eq:M(B)}; namely $M(B)$ is the set of isomorphism classes
of twists $B_\psi\otimes B\otimes B_\phi$ under all
$\psi\in\Mod(H),\phi\in\Mod(G)$.

\begin{thm}
  Suppose that $\widetilde G\to G$ is an inessential forgetful
  morphism such that
  $B\coloneqq G\otimes_{\widetilde G}\widetilde B_{\widetilde
    G}\otimes G$ as in~\eqref{eq:tildeB to B} is an orbisphere biset.
  Then the natural map
  $\widetilde b\mapsto 1\otimes\widetilde b\otimes 1$ induces an
  isomorphism between the mapping class bisets $M(\widetilde B)$ and
  $M(B)$.
\end{thm}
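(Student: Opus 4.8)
The plan is to show that the push‑forward $\Phi\colon\widetilde C\mapsto G\otimes_{\widetilde G}\widetilde C\otimes_{\widetilde G}G$ induces a bijection between $M(\widetilde B)$ and $M(B)$ intertwining the canonical identifications $\Mod(\widetilde G)\cong\Mod(S^2,A)\cong\Mod(G)$. These identifications come from Theorem~\ref{thm:dehn-nielsen-baer} applied to the two surjections $G^\sharp\twoheadrightarrow\widetilde G$ and $G^\sharp\twoheadrightarrow G$, where $G^\sharp=\pi_1(S^2\setminus A)$ is the free sphere group of the common underlying marked sphere $(S^2,A)$ (in the exceptional all‑order‑$2$ case one uses the index‑$2$ convention of~\S\ref{ss:orbispheres}, and the finitely many positive‑Euler‑characteristic orbispheres are dealt with by inspection). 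First I would check that $\Phi$ is functorial and commutes with mapping‑class twisting: for $\psi\in\Mod(\widetilde G)$ with image $\psi'\in\Mod(G)$ there is a natural isomorphism $G\otimes_{\widetilde G}B_\psi\cong B_{\psi'}\otimes_G G$ compatible with the evident map $B_\psi\to B_{\psi'}$, so that $\Phi(B_\psi\otimes\widetilde C\otimes B_\phi)\cong B_{\psi'}\otimes\Phi(\widetilde C)\otimes B_{\phi'}$ naturally; consequently $\widetilde b\mapsto 1\otimes\widetilde b\otimes 1$ induces a $\Mod(S^2,A)$‑biequivariant map $\Phi_*\colon M(\widetilde B)\to M(B)$, and since $\Phi_*[\widetilde B]=[B]$ it is onto. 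The content is therefore the injectivity of $\Phi_*$.

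To attack injectivity I would pass to the topological side. Writing $\widetilde B=\widetilde G\otimes_{G^\sharp}B(f)\otimes_{G^\sharp}\widetilde G$ and $B=G\otimes_{G^\sharp}B(f)\otimes_{G^\sharp}G$ for the underlying Thurston map $f\colon(S^2,A)\selfmap$, the hypothesis that $B$ is an orbisphere biset is, by the Lemma preceding this theorem, equivalent to $\ker\iota\subseteq\ker(\widetilde G\to\IMG_{\widetilde B}(\widetilde G))$; since the kernel of the action on a dynamical tree depends only on the portrait and the local degrees of the underlying covering (see~\eqref{eq:ordB}), which pure mapping classes fix, the same inclusion holds for every twist $\widetilde C$ of $\widetilde B$, hence — by the Lemma again, applied over $G^\sharp$ — for every mapping‑class twist $g=m''fm'$ of $f$ both $\widetilde G\otimes_{G^\sharp}B(g)\otimes_{G^\sharp}\widetilde G$ and $G\otimes_{G^\sharp}B(g)\otimes_{G^\sharp}G$ are orbisphere bisets, namely the bisets of the orbisphere maps $g\colon(S^2,A,\ord_{\widetilde G})\selfmap$ and $g\colon(S^2,A,\ord_{G})\selfmap$. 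By the orbisphere form of the biset--map correspondence~\cite{bartholdi-dudko:bc2}, injectivity of $\Phi_*$ now amounts to: two mapping‑class twists $g_1,g_2$ of $f$ are isotopic rel $A$ through $\ord_{\widetilde G}$‑orbisphere maps if and only if they are isotopic rel $A$ through $\ord_{G}$‑orbisphere maps.

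The ``only if'' direction is immediate (push forward, or restrict an $\ord_{\widetilde G}$‑orbisphere isotopy and use $\ord_G\mid\ord_{\widetilde G}$ to see it stays $\ord_G$‑admissible). The reverse direction is, I expect, the main obstacle: given an isotopy $g_t$ rel $A$ through $\ord_G$‑orbisphere maps with $g_0,g_1$ being $\ord_{\widetilde G}$‑orbisphere maps, one must promote it to an isotopy through $\ord_{\widetilde G}$‑orbisphere maps. I would argue that, after a transversality perturbation, $g_t$ meets only the codimension‑$1$ strata where two critical points collide or a critical point collides with a marked point; the images of critical points in $A$ being locally constant along $g_t$, the only changes in the multiset of local degrees occurring in the lifts of peripheral classes of $G^\sharp$ under $B(g_t)^{\otimes n}$ happen at these collision times and are of a single controlled type, and the hypothesis $\ker\iota\subseteq\ker(\widetilde G\to\IMG_{\widetilde B}(\widetilde G))$ — i.e.\ that $\ker(G^\sharp\to\widetilde G)$ and $\ker(G^\sharp\to G)$ act trivially on every $T(B(g_t))$ — forces each collision permitted by $\ord_G$ to be permitted already by $\ord_{\widetilde G}$. (Equivalently, at the level of wreath recursions: an isomorphism $\Phi(\widetilde C)\cong B$ over a common basis $S$ is a conjugacy by some $t\in G\wr S\perm$, which lifts to $\widetilde t\in\widetilde G\wr S\perm$ since $\iota\colon\widetilde G\twoheadrightarrow G$; the obstruction to $\widetilde t$ conjugating the $\widetilde G$‑recursions lies in $(\ker\iota)^{S}$ and is absorbed by iterating through sections, using that $\ker\iota$ acts trivially on the trees.) Granting this, $\Phi_*$ is injective, and together with the first paragraph the natural map $\widetilde b\mapsto 1\otimes\widetilde b\otimes 1$ induces the asserted isomorphism $M(\widetilde B)\cong M(B)$.
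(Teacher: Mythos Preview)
Your setup is correct: $\Phi_*$ is equivariant (via the Dehn--Nielsen--Baer identification $\Mod(\widetilde G)\cong\Mod(G)$) and surjective, so injectivity is indeed the content. Your reduction to the topological statement ``$g_1\approx g_2$ through $\ord_{\widetilde G}$-orbisphere maps iff through $\ord_G$-orbisphere maps'' is also right, granted the orbisphere biset--map correspondence. But your argument for this equivalence is based on a misconception: along \emph{any} path $g_t$ of sphere maps $(S^2,A)\selfmap$, the monodromy $\pi_1(S^2\setminus A,*)\to S_d$ is locally constant (being discrete-valued), hence constant; therefore the cycle type over every $a\in A$ is fixed, so each fibre $g_t^{-1}(a)$ consists of a constant number of points with constant local degrees, and preimages of $a$ --- in particular critical points and marked points --- never collide. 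Thus $\deg_p(g_t)$ and $g_t(p)$ are constant for every $p\in A$ and every critical point, and the divisibility condition $\ord(p)\deg_p(g_t)\mid\ord(g_t(p))$ holds throughout for \emph{any} orbisphere structure once it holds at $t=0$. Your transversality analysis is fighting phantoms, and the parenthetical wreath-recursion sketch (``absorbed by iterating through sections'') is not a proof.

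There is a more serious issue. The orbisphere biset--map correspondence you invoke --- that $B_G(g_1)\cong B_G(g_2)$ as $G$-bisets iff $g_1\approx g_2$ rel $A$ --- is essentially what the theorem asserts when one takes $\widetilde G=G^\sharp$ the free sphere group (the theorem for arbitrary $\widetilde G\to G$ then follows by factoring through $G^\sharp$). So citing it to prove the theorem is close to circular; at the very least you must explain why that correspondence holds independently. The survey does not give a proof here, but it quotes Baer's Theorem~\ref{thm:Baer} immediately before this passage because that is the genuine missing input: a $G$-biset isomorphism yields a $G$-equivariant self-homeomorphism of the orbifold universal cover of $(S^2,A,\ord_G)$, which by Baer is $G$-equivariantly isotopic to the identity, and pushing this isotopy down produces the required isotopy of sphere maps rel $A$. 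That lifting-plus-Baer step, rather than a stratified deformation argument, is what your injectivity proof needs.
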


\section{Forgetful morphisms and geometric maps~\cite{bartholdi-dudko:bc3}}\label{ss:forgetful}
We consider in this part the operation of erasing punctures from a
sphere. Consider $f\colon(S^2, A\sqcup E,\widetilde \ord)\selfmap$,
and assume that $f$ induces a branched covering $f\colon(S^2,A,\ord)\selfmap$
such that $\ord \colon A\to\{2,3,\dots,\infty\}$ satisfies
$\ord(a)\mid \widetilde \ord(a)$ for all $a\in A$. There is a natural
\emph{forgetful epimorphism}
\begin{equation}
\label{eq:defn:ForgGrMorph}\widetilde G\coloneqq\pi_1(S^2, A\sqcup E,  \widetilde  \ord,*)\to\pi_1(S^2,A,\ord, *)\eqqcolon G,
\end{equation}
as well as a natural \emph{forgetful biset epimorphism} 
\begin{equation}
\label{eq:defn:ForgBisMorph}
\subscript{\widetilde G}{\widetilde B}_{\widetilde G} \coloneqq B(f\colon(S^2, A\sqcup E,\widetilde  \ord)\selfmap )\to B(f\colon(S^2, A,  \ord)\selfmap)\eqqcolon \subscript {G}B_{G}
\end{equation}
given by 
\begin{equation}\label{eq:defn:ForgBisMorph2}
  \left\{\begin{array}{cl}
  \subscript{\widetilde G}{\widetilde B}_{\widetilde G} &\to G\otimes_G\widetilde B\otimes_{ G} G=\subscript G B_G,\\
  b &\mapsto 1\otimes b\otimes 1.
\end{array}\right.
\end{equation}  
We say that~\eqref{eq:defn:ForgGrMorph}
and~\eqref{eq:defn:ForgBisMorph} are \emph{essential} if $A\subsetneq
\widetilde A$; otherwise~\eqref{eq:defn:ForgGrMorph}
and~\eqref{eq:defn:ForgBisMorph} are \emph{inessential.}  A
forgetful morphism~\eqref{eq:defn:ForgBisMorph} is \emph{maximal} if
$(S^2,A,\ord)=(S^2,P_f,\ord_f)$.

We show that the geometric (see the next~\S) biset
$\subscript{\widetilde G}{\widetilde B}_{\widetilde G}$ of degree $>1$
can be described, and recovered, in terms of $\subscript GB_G$ and
extra data which we call a \emph{portrait of bisets}. This allows us,
in particular, to understand algorithmically maps doubly covered by
torus endomorphisms. In that case we note that
$(S^2, A,\ord)\coloneqq (S^2,P_f,\ord_f)$ is a $(2,2,2,2)$-orbisphere
and we show that $\subscript GB_G$ is a crossed product of an Abelian
biset with an order-$2$ group, see~\S\ref{ss:lattes}.

\subsection{Geometric maps}
Let us specify, by geometric conditions, a class of maps that will be
central to our study; see Figure~\ref{fig:geometric}. Ultimately, we will
show that it is equivalent to a topological condition, being ``Levy-free'',
see Definition~\ref{defn:levyfree}.

\begin{defn}
  A homeomorphism $f\colon (S^2,A)\selfmap$ is \emph{geometric} if $f$ is
  either
  \begin{itemize}
  \item[\FO] of finite order: there is an $n>0$ such that $f^n=\one$; or
  \item[\PA] pseudo-Anosov~\cite{thurston:surfaces}: there are two
    transverse measured foliations preserved by $f$ such that one
    foliation is expanded by $f$ while another is contracted.
  \end{itemize}

  Consider now a non-invertible map $f\colon (S^2,A)\selfmap$. Let
  $A^\infty\subseteq A$ denote the forward orbit of the periodic critical
  points of $f$. The map $f$ is \emph{B\"ottcher expanding} if there exists
  a metric on $S^2\setminus A^\infty$ that is expanded by $f$, and such
  that $f$ is locally conjugate to $z\mapsto z^{\deg_a(f)}$ at all $a\in
  A^\infty$. The map $f$ is \emph{geometric} if $f$ is either
  \begin{itemize}
  \item[\Exp] B\"ottcher expanding; or
  \item[\Tor] a quotient of a torus endomorphism
    $M z+v\colon\R^2\selfmap$ by the involution $z\mapsto-z$, such that
    the eigenvalues of $M$ are different from $\pm 1$.
  \end{itemize}
\end{defn}

The two cases are not mutually exclusive. A map $f\in\Tor$ is
expanding if and only if the absolute values of the eigenvalues of $M$
are greater than $1$.

A distinguished property of a geometric map is \emph{rigidity}: two
geometric maps are combinatorially equivalent if and only if they are topologically
conjugate.

An orbisphere biset $\subscript G B_{G}$ is \emph{geometric} if it is the biset
of a geometric map, and \Tor\ and \Exp\ bisets are defined
similarly. By rigidity there is a map $f_B\colon (S^2,A,\ord)\selfmap$, unique
up to conjugacy, such that the biset of $f_B$ is $B$.

If $\subscript GB_G$ is geometric and
$\subscript{\widetilde G}{\widetilde B}_{\widetilde G}\to \subscript
G B_G$ is a forgetful morphism as in~\eqref{eq:defn:ForgBisMorph}, then
elements of $E$ (which \emph{a priori} are defined up to homotopy) can
be interpreted dynamically as extra marked points on $S^2\setminus
A$. This is an instance of \emph{homotopy shadowing},
see~\cite{ishii-smillie:shadowing}.

More precisely, suppose that $\subscript{\widetilde G}{\widetilde
B}_{\widetilde G}\to \subscript G B_G$ is a forgetful morphism as
in~\eqref{eq:defn:ForgBisMorph}; suppose that $\subscript GB_G$ is the biset
of a geometric map $f_B\colon (S^2,A,\ord)\selfmap$; and suppose that
$\subscript{\widetilde G}{\widetilde B}_{\widetilde G}$ is a Levy-free biset,
see Definition~\ref{defn:levyfree}.  Then a finite set $E\subset
S^2\setminus A$ with $f_B(E)\subset A\sqcup E$ can be added to $A$ in
such a way that $\subscript{\widetilde G}{\widetilde B}_{\widetilde G}$ is
conjugate to the biset of $f_B\colon (S^2,A\sqcup E,\widetilde
\ord)\selfmap$. Moreover, if $G$ is not a cyclic group, then the set
$E$ is unique.
 
\subsection{Portraits of groups and bisets}\label{ss:portraits}
Let $\subscript GB_G$ be an orbisphere biset with peripheral conjugacy classes
$(\Gamma_a)_{a\in A}$ and portrait $B_*\colon A\selfmap$.
Suppose that $E$ is a finite set and suppose that
$B_*\colon A\sqcup E\selfmap$ is an extension of
$B_*\colon A\selfmap$.

\begin{defn}[Portraits of groups and bisets]\label{dfn:PrtrOfBst}
  Let $G$ be an orbisphere group with peripheral conjugacy classes
  $(\Gamma_a)_{a\in A}$ and let $E$ be a finite set. A \emph{portrait
    of groups} $(G_a)_{a\in A\sqcup E}$ in $G$ is a collection of
  cyclic subgroups $G_a\le G$ such that
  \[G_a=\begin{cases} \langle g \rangle &\mbox{for some } g\in \Gamma_a \mbox{ if } a\in A, \\
    \langle1 \rangle & \mbox{if } a \in E.  \end{cases}
  \]
  If $E=\emptyset$, then $(G_a)_{a\in A}$ is a \emph{minimal} portrait
  of groups.

  Let $\subscript GB_G$ be an orbisphere biset and let
  $B_*\colon A\sqcup E\selfmap$ be an extension of
  $B_*\colon A\selfmap$. A \emph{portrait of bisets in $\subscript GB_G$
    parameterized by $B_*\colon A\sqcup E\selfmap$} is a collection
  $(G_a,B_a)_{a\in A \sqcup E}$ such that
  \begin{itemize}
  \item[(A)] $(G_a)_{a\in A \sqcup E}$ is a portrait of groups in $G$;
    and
  \item[(B)] $B_a$ is a transitive $G_a$-$G_{B_*(a)}$-subbiset of
    $\subscript GB_G$ such that if $B_*(a)=B_*(c)$ and
    $G\otimes_{G_a} B_a= G\otimes_{G_c} B_c$ as subsets of $B$, then
    $a=c$.
  \end{itemize}
  If $E=\emptyset$, then $(G_a,B_a)_{a\in A \sqcup E}$ is a
  \emph{minimal} portrait of bisets.

  Two portraits of bisets $(G_a,B_a)_{a\in A\sqcup E}$ and
  $(G'_a,B'_a)_{a\in A\sqcup E}$ parameterized by
  $B_*\colon A\sqcup E\selfmap$ are \emph{conjugate} if there exist
  $(\ell_a)_{a\in A\sqcup E}$ such that
  \begin{equation}
    \label{eq:ConjOfPrtsOfBisets}G_a=\ell_a^{-1}G'_a\ell_a \text{ and } B_a=\ell_a^{-1}B'_a\ell_{B_*(a)}.\qedhere
  \end{equation}
\end{defn}
\noindent Every biset admits a minimal portrait, unique up to conjugacy.

Let us give a geometric interpretation of portraits of
bisets. Consider a branched covering $f\colon (S^2,A)\selfmap$. For
every $a\in A$ choose a small neighbourhood $D_a$ of it; up to isotopy
we may assume that
$f\colon D_a \setminus \{a\}\to D_{f(a)} \setminus \{f(a)\}$ is a
covering. Making appropriate choices we may embed
$\pi_1(D_a\setminus \{a\})$ and
$B(f\colon D_a\setminus \{a\} \to D_{f(a)}\setminus \{f(a)\})$ into
$\pi_1(S^2,A)$ and $B(f)$ respectively; calling the images of these
embeddings $G_a$ and $B_a$ we get a minimal portrait of bisets
$(G_a,B_a)_{a\in A}$ in $B(f)$.

Recall from~\S\ref{ss:bisets} that a self-conjugacy of a $G$-$G$-biset
$B$ is a pair of maps $(\phi\colon G\selfmap,\beta\colon B\selfmap)$
with $\beta(h b g)=\phi(h)\beta(b)\phi(g)$; and an automorphism of $B$
is a self-conjugacy with $\phi=\one$. We show that, for sphere bisets,
every self-conjugacy $(\phi,\beta)$ is determined by its map $\phi$,
or equivalently that $B$ admits no non-trivial automorphism:
\begin{thm}[No automorphisms]\label{thm:NoGhostAut}
  Suppose $\subscript GB_G$ is a non-cyclic orbisphere biset. Then the
  automorphism group of $\subscript GB_G$ is trivial.
\end{thm}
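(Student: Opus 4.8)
The plan is to show that an automorphism $\beta\colon B\to B$ (over $\phi=\one$) must be the identity, by exploiting both the right-transitivity and the left-freeness of a sphere (orbisphere) biset. First I would fix a basis $S=\{\ell_1,\dots,\ell_d\}$ of the left $G$-set, so that $B\cong G\times S$ as a left $G$-set, and record the wreath recursion $g\mapsto\pair{h_1,\dots,h_d}\pi$ describing the right action. An automorphism $\beta$, being left-$G$-equivariant, permutes the left orbits, hence induces a permutation of the basis; since it also commutes with $\phi=\one$ on the right, this basis permutation must commute with every wreath recursion permutation $\pi$, i.e. with the image of $G$ in $S\perm$. Because $B$ is right-transitive (condition (1)–(2) of Definition~\ref{defn:sphere bisets}), this image is a \emph{transitive} subgroup of $S\perm$, and its centralizer in $S\perm$ acts \emph{freely} on $S$; so the basis permutation induced by $\beta$ is determined by where it sends a single element $\ell_1$. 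After composing $\beta$ with a suitable left translation (which does \emph{not} change $\phi$), I may therefore assume $\beta$ fixes $\ell_1$, and the task becomes to show $\beta=\one$.

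The heart of the argument is then a connectivity/rigidity statement: the whole biset is ``generated'' from $\ell_1$ by the right $G$-action together with the left $G$-action, in a way that is rigid because $\phi=\one$. Concretely, $\beta(\ell_1)=\ell_1$ forces, via the wreath recursion, $\beta(\ell_i\cdot g)=h\,\ell_j$ whenever $\ell_1\cdot w=$ some word brings us to $\ell_i$; tracking the left multipliers shows $\beta$ acts on each $\ell_i$ by left multiplication by a fixed element $k_i\in G$, and the compatibility $\beta(b g)=\beta(b)g$ across all generators $g$ of $G$ pins down the $k_i$ as solutions of a system of the form $k_i\,h_{i}^{(g)}=h_{i}^{(g)}\,k_{\pi(g)(i)}$. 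Using the spanning-tree structure of $S$ under the transitive action, all $k_i$ are conjugate translates of $k_1=1$; the remaining equations (coming from the cycles, i.e. from the relator $\gamma_1\cdots\gamma_n$ of the sphere group and from the peripheral relations) say precisely that $k_1$ centralizes the right ``return'' elements $h$, which for a sphere biset generate enough of $G$ — in fact, the lifts of the peripheral classes together generate $G$ up to conjugacy, by condition (3) of Definition~\ref{defn:sphere bisets} — to conclude $k_1=1$, hence $\beta=\one$.

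Here I would invoke the group-theoretic input about sphere groups: $G$ is (virtually) free with the specific peripheral structure, and the only automorphisms fixing all peripheral classes \emph{pointwise} (not just as conjugacy classes) and acting as identity are inner by trivial elements; more precisely, a non-trivial element of a sphere group cannot centralize a generating set of peripheral representatives, since peripheral subgroups are self-normalizing malnormal cyclic subgroups and their union is not contained in any proper centralizer. This is where I would lean on Theorem~\ref{thm:dehn-nielsen-baer} / the planar discontinuous action picture of~\S\ref{ss:orbispheres} to rule out exotic centralizers, and on the non-cyclicity hypothesis to exclude the degenerate case $G\cong\Z$ (where $B$ could be an Abelian biset with genuine automorphisms).

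**Main obstacle.** The subtle point is the passage from ``$\beta$ permutes left orbits compatibly with a \emph{transitive} right action'' to ``$\beta$ is determined by one value'' — i.e. verifying that the centralizer of the right $G$-action in the full symmetry group of the left $G$-set is itself free (this uses transitivity crucially and fails without it) — and then closing the loop: showing the resulting system of conjugacy equations for the $k_i$ has only the trivial solution. The second half genuinely needs the defining properties of \emph{sphere} bisets (admissibility plus the ``each peripheral class lifts exactly once'' condition), not merely left-freeness and right-transitivity, and non-cyclicity of $G$; I expect that to be the real content of the proof, with everything before it being bookkeeping with wreath recursions.
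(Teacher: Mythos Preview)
The paper, being a survey, states this theorem without proof (deferring to~\cite{bartholdi-dudko:bc3}), so there is no in-paper argument to compare against; I comment on your proposal directly.

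Your overall shape is right, and the case $\sigma=\one$ is essentially correct: the ``first-co\"ordinate'' homomorphism from the $\pi$-stabilizer $G_1\le G$ of $1\in S$ to $G$ is \emph{surjective} (topologically, because $i_*\colon\pi_1(S^2\setminus f^{-1}(A))\twoheadrightarrow\pi_1(S^2\setminus A)$ is), so the relations $k_1\,h_1^{(g)}=h_1^{(g)}\,k_1$ for $g\in G_1$ force $k_1\in Z(G)=1$, and then all $k_i=1$ along a spanning tree as you say.

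The gap is the step ``after composing $\beta$ with a suitable left translation (which does not change $\phi$), I may therefore assume $\beta$ fixes $\ell_1$''. Left translation $L_g\colon b\mapsto g b$ \emph{does} change $\phi$: one has $(L_g\circ\beta)(hb)=ghg^{-1}\cdot(L_g\circ\beta)(b)$, so the left intertwiner becomes conjugation by $g$, which is nontrivial since $Z(G)=1$. Concretely, if $\sigma(1)\neq1$ then $\beta(\ell_1)=k_1\ell_{\sigma(1)}$ lies in a \emph{different} left $G$-orbit from $\ell_1$, and no left translation can bring it back. So you have not excluded the case $\sigma\neq\one$. You correctly flag condition~(3) of Definition~\ref{defn:sphere bisets} in your ``Main obstacle'' paragraph, but you never actually use it. One way in: iterating the compatibility equation around a $\pi_{\gamma_a}$-cycle $C\ni s$ shows the lift at the cycle through $\sigma(s)$ is $G$-conjugate to the lift at $C$; since each nontrivial peripheral class occurs \emph{exactly once} among all lifts, $\sigma$ must fix setwise every cycle carrying a nontrivial lift, and in particular any such cycle of length~$1$ (coming from a marked point $a\in A$ with $\deg_a(B)=1$) is a fixed point of $\sigma$, whence $\sigma=\one$ by semiregularity. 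Cases in which every marked point is critical still need separate treatment. Without this or an equivalent argument, your reduction to $\beta(\ell_1)=\ell_1$ is unjustified.
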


\begin{defn}\label{defn:forgetting}
  Let $\iota\colon\subscript{\widetilde G}{\widetilde B}_{ \widetilde G}\to
  \subscript G B_G$ be a forgetful morphism of orbisphere bisets as
  in~\eqref{eq:defn:ForgBisMorph}.  Let $(G_a,B_a)_{a\in A\sqcup E}$
  be a minimal portrait of bisets in $\widetilde B$. Then
  \[(G_a,B_a)_{a\in A\sqcup E} \coloneqq
  (\iota(G_{a}),\iota(B_{a}))_{a\in A\sqcup E} \]
  is the \emph{induced portrait of bisets}. It is parameterized by
  $B_*\coloneqq  \widetilde B_* \colon
  A\sqcup E \selfmap$.
\end{defn}

Consider a forgetful morphism $\widetilde G\to G$ as
in~\eqref{eq:defn:ForgGrMorph}.  For every $e\in E$ set \[\widetilde
G_e\coloneqq \pi_1(S^2, A\sqcup\{e\}, \widetilde
\ord\mid_{A\sqcup\{e\}},*)
\]
so that the forgetful morphism $\widetilde G\to G$ factors as
$\widetilde G\to\widetilde G_e\to G$.  We say that a biset
$\subscript{\widetilde G}K_{\widetilde G}\in \Mod(\widetilde G)$ is
\emph{knitting} if for every $e\in E$ we have
\[ \widetilde G_e \otimes_{\widetilde G} K \otimes_{\widetilde G}
  \widetilde G_e \cong \subscript{\widetilde G_e}{(\widetilde
    G_e)}_{\widetilde G_e}.
\]
If $m\in \Mod(S^2,A\sqcup E)$, then the biset of $m$ is knitting if
and only if $m$ is trivial in $\Mod(S^2,A\sqcup \{e\})$ for all
$e\in E$.

\begin{thm}
  Let $\widetilde G\to G$ be a forgetful morphism as
  in~\eqref{eq:defn:ForgGrMorph}.  Suppose that $\subscript GB_G$ is an
  orbisphere biset and that $(G_a,B_a)_{a\in A\sqcup E}$ is a portrait
  of bisets parameterized by $B_*\colon A\sqcup E\selfmap$.

  Then there is an orbisphere biset
  $\subscript{\widetilde G}{\widetilde B}_{\widetilde G}$ such that
  $(G_a,B_a)_{a\in A\sqcup E}$ is induced by the forgetful morphism 
  $\subscript{\widetilde G}{\widetilde B}_{\widetilde G}\to
  \subscript GB_G$ defined by~\eqref{eq:defn:ForgBisMorph2},
  and if $\subscript{\widetilde G}{\widetilde B'}_{\widetilde G}$ is another
  such biset then there is a knitting biset
  $\subscript{\widetilde G} K_{\widetilde G}$ with
  \[\widetilde B' \cong K \otimes_{\widetilde G} \widetilde B.\]
\end{thm}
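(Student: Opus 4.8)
The plan is to reconstruct $\widetilde B$ explicitly from the combinatorial data $(G_a,B_a)_{a\in A\sqcup E}$ and the forgetful morphism $\widetilde G\to G$, then to prove uniqueness up to knitting. First I would recall that a sphere biset is determined by its underlying left-free, right-transitive $H$-$G$-set together with the bookkeeping of peripheral conjugacy classes. Since $\widetilde G\to G$ is an isomorphism away from the new generators at $E$ and merely refines the sphere presentation by adjoining generators $\gamma_e$ (with $\widetilde\ord(e)$-torsion relations), one may take $\widetilde B$ to be $B$ as a right $\widetilde G$-set — that is, $\widetilde B\coloneqq B$ with the right $\widetilde G$-action pulled back along $\widetilde G\to G$ — and then one must specify a compatible \emph{left} $\widetilde G$-action refining the left $G$-action. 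Concretely, the left $G$-action is already given; lifting it to $\widetilde G$ requires only saying how each new generator $\gamma_e$ acts on the left, and this is exactly the datum encoded by $B_e$: the transitive $G_e$-$G_{B_*(e)}$-subbiset $B_e\subseteq B$ tells us which left-translate of which element realizes the ``return'' of a small loop around $e$. The Riemann–Hurwitz / admissibility condition~\eqref{eq:riemannhurwitz} for $\widetilde B$ follows because adjoining a puncture $e$ of local degree $\deg_e(B_e)$ contributes trivial ramification data (the lift of $\gamma_e$ consists of one cycle of length $\deg$ plus fixed points, matching condition~(A) that $G_e=\langle 1\rangle$ in the portrait of groups sense for $e\in E$). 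Thus $\subscript{\widetilde G}{\widetilde B}_{\widetilde G}$ is a genuine orbisphere biset, and by construction the induced portrait (Definition~\ref{defn:forgetting}) is $(G_a,B_a)_{a\in A\sqcup E}$ since $\iota$ sends each $B_e\subseteq\widetilde B$ to itself inside $B$.

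For uniqueness, suppose $\widetilde B$ and $\widetilde B'$ both induce the same portrait under their respective forgetful morphisms to $\subscript GB_G$. Both are right $\widetilde G$-sets that become isomorphic to $B$ after tensoring on both sides with $\widetilde G\to G$; so there is an isomorphism of the underlying right $\widetilde G$-sets compatible with the projections to $B$. Transporting the left $\widetilde G$-action of $\widetilde B'$ across this isomorphism produces a second left $\widetilde G$-structure on $\widetilde B$, agreeing with the first after pushing forward to $G$. The discrepancy between the two left actions is measured by a self-conjugacy $(\phi,\beta)$ of the $G$-$G$-biset $B$ that becomes the identity downstairs; by Theorem~\ref{thm:NoGhostAut} (no ghost automorphisms) $\beta$ is determined by $\phi$, and $\phi\in\Mod(\widetilde G)$ is an outer automorphism fixing all peripheral classes — i.e.\ represented by a mapping class $m\in\Mod(S^2,A\sqcup E)$. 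Because the two left actions agree after forgetting \emph{each single} puncture $e$ (both project to the same action on $B$, which is the $(S^2,A)$-biset, hence \emph{a fortiori} the same on each $\widetilde G_e$), the class of $m$ is trivial in $\Mod(S^2,A\sqcup\{e\})$ for every $e\in E$; by the criterion stated just before the theorem, the biset $K\coloneqq B_\phi$ of $m$ is therefore knitting. Conjugating $\widetilde B$ by $K$ on the left realizes $\widetilde B'\cong K\otimes_{\widetilde G}\widetilde B$, as claimed.

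The main obstacle, I expect, is the uniqueness half — specifically, showing that the left-action discrepancy really is \emph{knitting} rather than merely an arbitrary element of $\Mod(\widetilde G)$. One has to argue carefully that agreement of the two bisets after the full forgetful morphism $\widetilde G\to G$ forces agreement after each intermediate forgetful morphism $\widetilde G\to\widetilde G_e$; this is where the portrait hypothesis does real work, since the subbisets $B_e\subseteq\widetilde B$ (and their images in $B$) pin down the local dynamics near $e$, and it is precisely this local rigidity that cuts $\Mod(\widetilde G)$ down to the knitting subgroup. The existence half is comparatively routine: one checks the three conditions of Definition~\ref{defn:sphere bisets} for the constructed $\widetilde B$, the only slightly delicate point being that the portrait condition~(B) — distinct $a,c$ with $B_*(a)=B_*(c)$ give distinct $G\otimes_{G_a}B_a$ — guarantees the lifted left $\widetilde G$-action is well defined and left-free, so that no two added punctures collapse.
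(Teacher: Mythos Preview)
Your existence argument has a structural problem. You propose to take $\widetilde B$ equal to $B$ as a set, with the right $\widetilde G$-action pulled back along $\iota$ and a left $\widetilde G$-action in which each $\gamma_a$ ($a\in A$) acts via the existing left $G$-action while the new $\gamma_e$ act by maps $\phi_e$ read off from the portrait. But the single sphere relation in $\widetilde G$ (the cyclic product of \emph{all} peripheral generators equals $1$), combined with the sphere relation already holding in $G$ (product of the $\gamma_a$ alone equals $1$), forces the $\phi_e$ to collapse. With $\#E=1$ and any cyclic placement of $e$, one checks directly that $\phi_e=\one$; hence $\gamma_e$ acts trivially and the left $\widetilde G$-action is not free. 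The underlying issue is that $B\cong G\times S$ cannot carry a free left $\widetilde G$-action in which the old generators act through $G$: the correct $\widetilde B$ is (left-)isomorphic to $\widetilde G\times S$, and what must be constructed is a \emph{lift} of the wreath recursion $\psi\colon G\to G^S\rtimes S\perm$ to $\widetilde\psi\colon\widetilde G\to\widetilde G^S\rtimes S\perm$. The portrait pieces $B_e$ enter not as left actions of $\gamma_e$ on $B$, but as the data fixing which basis coordinate carries the peripheral entry $\gamma_{e'}$ (for $e'\in B_*^{-1}(e)\cap E$) in $\widetilde\psi(\gamma_e)$, and how the lifts of the entries for the old generators are chosen relative to the new punctures. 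Note too that your pulled-back right action makes every $\gamma_e$ act trivially on the right, whereas in the genuine $\widetilde B$ the right action of $\gamma_e$ has nontrivial $\widetilde G$-entries whenever some $e'\in E$ satisfies $B_*(e')=e$.

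Your uniqueness argument contains a reversed implication. You write that since $\widetilde B$ and $\widetilde B'$ project to the same $B$ over $G$, they agree ``\emph{a fortiori}'' over each intermediate $\widetilde G_e$. But the maps factor as $\widetilde G\to\widetilde G_e\to G$, so agreement over $G$ is the \emph{weakest} of these conditions; it does not force agreement over $\widetilde G_e$. The genuine content --- and the step that does all the work --- is that the single portrait piece $B_e\subset B$ already determines the intermediate biset $\widetilde G_e\otimes_{\widetilde G}\widetilde B\otimes_{\widetilde G}\widetilde G_e$ up to isomorphism; this is essentially the $\#E=1$ case of the theorem, where knitting is vacuous. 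Granting that for each $e$, the mapping class $K$ measuring the discrepancy becomes trivial in every $\Mod(\widetilde G_e)$ and is therefore knitting. Your appeal to Theorem~\ref{thm:NoGhostAut} to pin down a well-defined $K\in\Mod(\widetilde G)$ is reasonable, but the knitting conclusion needs this extra argument rather than the short-circuit you wrote.
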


\noindent Conjugacy and centralizers of bisets of the form
$\widetilde B$ may be studied as follows:
\begin{thm}\label{thm:RedConjCentrProb}
  Let $\widetilde G\to G$ be a forgetful morphism of groups as
  in~\eqref{eq:defn:ForgGrMorph} and let
  \[\subscript{\widetilde G}{\widetilde
      B}_{\widetilde G}\to \subscript G B_G\;\text{ and }\; \subscript{\widetilde
      G}{\widetilde C}_{\widetilde G}\to \subscript G C_G
  \]
  be two forgetful biset morphisms as in~\eqref{eq:defn:ForgBisMorph}.
  Suppose furthermore that~$\widetilde B$ is geometric of degree
  $>1$. Denote by $(G_a,B_a)_{a\in A\sqcup E}$ and
  $(G_a,C_a)_{a\in A\sqcup E}$ the portraits of bisets induced by
  $\widetilde B$ and $\widetilde C$ in $B$ and $C$ respectively.

  Then $\widetilde B,\widetilde C$ are conjugate by
  $\Mod(\widetilde G)$ if and only if there exists
  $\phi\in\Mod(G)$ such that $B^\phi\cong C$ and the portraits
  $(G_a^\phi,B_a^\phi)_{a\in A\sqcup E}$ and $(G_a,C_a)_{a\in A\sqcup
    E}$ are conjugate qua portraits in $C$.

  Furthermore, the centralizer $Z(\widetilde B)$ of $\widetilde B$ is
  isomorphic, via the forgetful map $\Mod(\widetilde
  G)\to\Mod(G)$, to
  \[\big\{\phi\in Z(B)
  \big|\,(G_a^{\phi},B_a^{\phi})_{a\in A\sqcup E}\sim(G_a,B_a)_{a\in
    A\sqcup E}\big\}
  \]
  and is a finite-index subgroup of $Z(B)$.
\end{thm}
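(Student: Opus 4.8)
The plan is to reduce the conjugacy and centralizer questions for $\widetilde B$ to the analogous questions for $B$ together with bookkeeping about the portrait of bisets, using the uniqueness of the dynamical interpretation of erased points that follows from rigidity of geometric maps. First I would invoke the previous theorem in~\S\ref{ss:portraits}: since $\widetilde B$ is an orbisphere biset mapping to $B$ by a forgetful morphism, the induced portrait $(G_a,B_a)_{a\in A\sqcup E}$ recovers $\widetilde B$ up to a knitting biset $K$, and a knitting biset acts trivially in each $\Mod(\widetilde G_e)$. The key geometric input is that $\widetilde B$ is geometric of degree $>1$, so by rigidity the map $f_{\widetilde B}$ is unique up to conjugacy, and (by the homotopy-shadowing statement in the previous~\S) the extra marked points $E$ are \emph{dynamically determined}: they are an honest finite forward-invariant set $E\subset S^2\setminus A$ for $f_B$, unique when $G$ is non-cyclic (if $G$ is cyclic the statement is either vacuous or checked directly). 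Consequently the ambiguity by a knitting biset collapses: a knitting biset that also preserves the portrait must be trivial, because it would be an automorphism fixing each $G_a$, and by Theorem~\ref{thm:NoGhostAut} the biset $\widetilde B$ has no nontrivial automorphisms.

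Next I would prove the conjugacy criterion. The forgetful map $\Mod(\widetilde G)\to\Mod(G)$ sends a conjugator $\widetilde\phi$ between $\widetilde B$ and $\widetilde C$ to a $\phi\in\Mod(G)$ with $B^\phi\cong C$; moreover $\widetilde\phi$ carries the portrait induced by $\widetilde B$ to the portrait induced by $\widetilde C$, which after pushing forward says exactly that $(G_a^\phi,B_a^\phi)_{a\in A\sqcup E}\sim(G_a,C_a)_{a\in A\sqcup E}$ as portraits in $C$, using~\eqref{eq:ConjOfPrtsOfBisets}. Conversely, given such a $\phi$ and a conjugation of portraits by elements $(\ell_a)$, I would lift $\phi$ to $\widetilde\phi\in\Mod(\widetilde G)$ (the forgetful map is surjective because a homeomorphism of $(S^2,A)$ realizing $\phi$ may be isotoped to fix $E$ pointwise, thanks to the ambient isotopy extension applied to the finite set $E$ — or, group-theoretically, using the portrait data to specify the action on the peripheral classes of the erased points). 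The lifted $\widetilde\phi$ then satisfies $\widetilde B^{\widetilde\phi}\cong K\otimes\widetilde C$ for some knitting $K$; precomposing $\widetilde\phi$ with the knitting mapping class coming from the $(\ell_a)$-conjugation of portraits kills $K$ — here I use that knitting bisets are precisely those trivial in every $\Mod(\widetilde G_e)$, and that adjusting the embedding of each $B_e$ by $\ell_e$ is exactly such a one-puncture-at-a-time modification. This yields a genuine conjugator and proves the ``if'' direction.

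Finally the centralizer statement: applying the conjugacy criterion with $\widetilde C=\widetilde B$ and $C=B$ gives that the image of $Z(\widetilde B)$ under the forgetful map is contained in $\{\phi\in Z(B)\mid(G_a^\phi,B_a^\phi)\sim(G_a,B_a)\}$, and the lifting argument above shows every such $\phi$ lifts into $Z(\widetilde B)$; injectivity of the map $Z(\widetilde B)\to Z(B)$ follows from Theorem~\ref{thm:NoGhostAut}, since a mapping class acting trivially on $B$ while centralizing $\widetilde B$ would be an automorphism of $\widetilde B$. For finite index, I would observe that the portrait $(G_a,B_a)_{a\in A\sqcup E}$, up to conjugacy, takes only finitely many values under the $Z(B)$-action: the subgroups $G_a$ for $a\in E$ are trivial, the $G_a$ for $a\in A$ are pinned down up to finitely many choices of peripheral representative compatible with being a $G_{B_*(a)}$-$G_a$-subbiset, and the $B_a$ are transitive subbisets of the fixed biset $B$, of which there are finitely many containing a given orbit; hence the stabilizer in $Z(B)$ of the portrait class has finite index, and that stabilizer is exactly the image of $Z(\widetilde B)$.

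The main obstacle I anticipate is the lifting/surjectivity step: showing that a mapping class $\phi\in\Mod(G)$ with $B^\phi\cong C$ and matching portraits genuinely lifts to a conjugator in $\Mod(\widetilde G)$, rather than merely to one up to a knitting biset that one then has to cancel. This requires carefully matching the portrait-conjugation data $(\ell_a)_{a\in E}$ with the freedom in choosing the knitting adjustment, and using Theorem~\ref{thm:NoGhostAut} to rigidify everything; the rest is essentially formal once the previous theorems in~\S\ref{ss:portraits} are in hand.
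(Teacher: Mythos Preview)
The paper is a survey and does not actually prove Theorem~\ref{thm:RedConjCentrProb}; the statement is imported from the companion article~\cite{bartholdi-dudko:bc3} and only the surrounding machinery (portraits, knitting, Theorem~\ref{thm:NoGhostAut}, the reconstruction-up-to-knitting theorem) is recorded here. Your overall architecture --- push a conjugator down via the forgetful map, push it back up using the portrait data and the reconstruction theorem, then specialize to $\widetilde C=\widetilde B$ for the centralizer --- is the natural one and is clearly what the paper intends. Two steps, however, do not go through as you have written them.

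\textbf{Injectivity of $Z(\widetilde B)\to Z(B)$.} You invoke Theorem~\ref{thm:NoGhostAut} by saying that an element $\widetilde\phi$ of the Birman kernel centralizing $\widetilde B$ ``would be an automorphism of $\widetilde B$''. It would not: a self-conjugacy of $\widetilde B$ is a pair $(\widetilde\phi,\beta)$, and Theorem~\ref{thm:NoGhostAut} only says that $\beta$ is determined by $\widetilde\phi$; it does \emph{not} say that $\widetilde\phi$ must be trivial. The injectivity genuinely needs the geometric hypothesis. The argument the paper sets up (see the paragraph on homotopy shadowing just before Definition~\ref{dfn:PrtrOfBst}) is that for a geometric $\widetilde B$ the erased points $E$ are realised as an honest $f_B$-invariant finite set in $S^2\setminus A$, \emph{unique} once $G$ is non-cyclic. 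An element of $\ker\big(\Mod(\widetilde G)\to\Mod(G)\big)$ is a point-push of the $E$-points along loops in $S^2\setminus A$; if it centralizes $\widetilde B$ then those loops are compatible with the dynamics of the expanding (or \Tor) map $f_B$, and expansion forces them to be trivial. Equivalently, such a $\widetilde\phi$ would move the portrait to a conjugate portrait encoding a different realisation of $E$, contradicting uniqueness. Theorem~\ref{thm:NoGhostAut} enters only to pin down $\beta$ once $\widetilde\phi$ is known, not to kill $\widetilde\phi$.

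\textbf{Finite index.} Your counting argument (``finitely many transitive subbisets containing a given orbit'') fails already for the erased points: for $a\in E$ one has $G_a=1$, so $B_a=\{b_a\}$ is an arbitrary singleton in the infinite set $B$, and portrait conjugacy $b_a\mapsto \ell_a^{-1}b_a\ell_{B_*(a)}$ does not reduce this to finitely many classes in general. This is exactly where the geometric hypothesis is used a second time: the paper records, immediately after the theorem, that for contracting $B$ (Proposition~\ref{prop:PortrBis:ContrCase}) and for \Tor\ bisets (Proposition~\ref{prop:bis:2222}) there are only finitely many conjugacy classes of portraits with a given $B_*$. Since a geometric biset of degree $>1$ is one or the other, the $Z(B)$-orbit of the portrait class is finite, and its stabilizer --- which you have correctly identified as the image of $Z(\widetilde B)$ --- has finite index. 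Replace your direct count by an appeal to these two propositions and the argument closes.
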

It follows that the conjugacy and centralizer problems for
$\widetilde B$ are decidable as soon as they are decidable for
$(B,\text{portrait})$.

\begin{prop}[Contracting case]\label{prop:PortrBis:ContrCase}
  Suppose that $B$ is an orbisphere contracting $G$-biset, see
  Definition~\ref{defn:contracting}. Then for every
  $B_*\colon A\sqcup E\selfmap$ the number of conjugacy classes of
  portraits of bisets parameterized by $B_*$ is finite.

  Moreover, there is an algorithm that, given
  $B_*\colon A\sqcup E\selfmap$, decides whether two portraits of
  bisets parameterized by $B_*$ are conjugate.
\end{prop}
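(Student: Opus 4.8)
The plan is to reduce the statement about portraits of bisets parameterized by $B_*\colon A\sqcup E\selfmap$ to the finiteness of a certain set built out of the nucleus of the contracting biset $B$. First I would recall that a contracting biset has a finite \emph{nucleus} $N\subset B^{\otimes\infty}$ in the sense of~\cite{nekrashevych:ssg}: there is a finite set $N$ of elements of the ``asymptotic'' biset such that every sufficiently long word eventually stays inside $N$. The key observation is that a portrait of bisets $(G_a,B_a)_{a\in A\sqcup E}$ is, up to conjugacy, determined by finitely much data: for each $a$, the cyclic group $G_a=\langle g_a\rangle$ (with $g_a$ in $\Gamma_a$ or trivial) together with the transitive subbiset $B_a\subseteq B$ joining $G_a$ to $G_{B_*(a)}$. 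Since $G_a$ is cyclic and $B_a$ is transitive, $B_a$ is principal as a right $G_{B_*(a)}$-set, hence $B_a=G_a\cdot b_a$ for a single element $b_a\in B$, and the data $(g_a,b_a)_{a\in A\sqcup E}$ determines the portrait; conjugacy~\eqref{eq:ConjOfPrtsOfBisets} is the obvious diagonal conjugation action on these tuples.

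Next I would exploit contraction to bound the ``size'' of the relevant elements. For a point $e\in E$ lying in a finite forward orbit under $B_*$, the element $g_e$ is forced to be trivial, so the constraint is entirely on the $b_a$'s: the relation $G_a^{\deg}\,b_a\subseteq b_a\,G_{B_*(a)}^{\deg'}$ (coming from $B_a$ being a subbiset linking cyclic groups compatibly with the right action of $B$) says precisely that, along the cycle through $a$ in $B_*$, the product $b_a\otimes b_{B_*(a)}\otimes\cdots$ conjugates $g_a$ into a power of itself. Because $B$ is contracting, iterating this relation drives the relevant partial products into the nucleus; concretely, one shows that each $b_a$ may be chosen, after conjugating the portrait, so that its image in $B^{\otimes\infty}$ lies in the nucleus $N$, or equivalently that $b_a$ lies in a bounded-length ``tile'' over the portrait of the periodic point. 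Since $N$ is finite and $A\sqcup E$ is finite, there are only finitely many conjugacy classes of tuples $(g_a,b_a)_{a\in A\sqcup E}$, proving the first assertion. This is essentially the same mechanism that makes the number of ``finite subtrees fixed up to the contraction relation'' finite for a contracting self-similar group, e.g.\ the finiteness of the set of post-critically finite parameters with a given portrait.

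For the algorithmic part I would make the above bound effective. Contraction is a decidable property with a computable nucleus $N$ (one computes $N$ by the standard fixed-point iteration on finite subsets of $B^{\otimes n}$, which terminates exactly because $B$ is contracting), and once $N$ is in hand one enumerates all candidate tuples $(g_a,b_a)$ with $b_a$ ranging over the finitely many nucleus representatives (pulled back to short words in a fixed basis $S$ of $B$) and $g_a$ over the finitely many short peripheral generators; for each candidate one checks, using solvability of the word problem in the free-ish group $G$, whether the subbiset axiom (B) of Definition~\ref{dfn:PrtrOfBst} holds, obtaining a finite explicit list of all portraits parameterized by $B_*$. To decide conjugacy of two given portraits one then searches, again over a bounded set of candidate conjugators $(\ell_a)_{a\in A\sqcup E}$ in $G$ normalized to lie in the nucleus-tiles, for a solution of~\eqref{eq:ConjOfPrtsOfBisets}; finiteness of the search space is exactly the finiteness statement just proved, applied to the ``doubled'' portrait, together with Theorem~\ref{thm:NoGhostAut}, which guarantees that a conjugating family, if it exists, is unique and hence cannot be spoiled by hidden automorphisms.

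The main obstacle I anticipate is the second bullet: translating ``$B$ contracting'' into an \emph{a priori} length bound on the $b_a$ (and on the conjugators $\ell_a$) after suitable conjugation. The delicate point is that the defining relation for a portrait of bisets only controls $B_a$ along $B_*$-orbits of \emph{points of $A\sqcup E$}, not along arbitrary words in a basis, so one must argue that the relevant partial products $b_a\otimes b_{B_*(a)}\otimes\cdots$ still enter the nucleus — which uses that the generator $g_a$ of $G_a$ is peripheral (hence of bounded length) and that conjugation by these partial products contracts its length. Making this quantitative, and handling uniformly the cyclic-$G$ edge case excluded by ``non-cyclic'' elsewhere (here $B$ may a priori be cyclic, but then contracting cyclic bisets are trivial to analyze directly), is where the real work lies; everything else is bookkeeping on finite sets.
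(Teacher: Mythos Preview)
Your proposal captures the right mechanism and matches the paper's approach: the paper reduces conjugacy of portraits to conjugacy problems of elements in $B^{\otimes n}$ (tensoring the $b_a$'s along $B_*$-cycles), and then invokes contraction to bound the length of a putative conjugator $\ell$ in terms of the lengths of the inputs; this is exactly the content of the illustrative example following the proposition. One small imprecision worth cleaning up: in the paper's conventions the nucleus $N$ is a finite subset of $G$, not of ``$B^{\otimes\infty}$'', and the contraction bound is phrased as a length bound on the $h\in G$ with $hS\cap Sg\neq\emptyset$; your argument goes through once you rephrase ``$b_a$ lands in the nucleus'' as ``the conjugating elements $\ell_a$ can be taken of bounded length because the twisted-conjugacy relation $b_a=\ell_a^{-1}b'_a\ell_{B_*(a)}$ along a $B_*$-cycle of length $n$ becomes ordinary conjugacy in $B^{\otimes n}$, where contraction bounds $\ell_a$''.
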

The algorithm of Proposition~\ref{prop:PortrBis:ContrCase} reduces the
conjugacy problem for portraits to conjugacy problems of elements in
$B^{\otimes n}$. Here is a simple example illustrating this main step
in the algorithm:
\begin{exple}
  Suppose $E=\{e\}$ with $B_*(e)=e$, and let $(G_a,B_a)_{a\in A\sqcup
    E}$ and $(G_a,C_a)_{a\in A\sqcup E}$ be two portraits of
  bisets. Then $G_e=1$ and $B_e=\{b\}$ and $C_e=\{c\}$.

  The portraits $(G_a,B_a)_{a\in A\sqcup E}$ and $(G_a,C_a)_{a\in
    A\sqcup E}$ are conjugate if and only if there exists $\ell \in G$
  such that $\ell b \ell^{-1}=c$.

  This is a conjugacy problem in the biset $B$ which can effectively
  be solved using contraction in $B$, since the length of $\ell$ can
  be bounded in terms of the lengths of $b$ and $c$.
\end{exple}

Let $B$ be the biset of a map $f$. In case $E=\{e_1,\dots,e_n\}$
consists of a single cycle for $B_*$, all the bisets $B_e$ are
singletons and the portrait of bisets contains precisely the same
information as a ``homotopy pseudo orbit'', namely a sequence of
points $z_1,\dots,z_n$ with homotopy classes of paths $\gamma_n$ from
$z_n$ to an $f$-preimage of $z_{n+1}$, indices read modulo $n$;
see~\cite{ishii-smillie:shadowing}. In case $f$ is expanding, these
authors prove that $(z_1,\dots,z_n)$ is homotopic to a unique
period-$n$ cycle.

\begin{algo}\label{algo:erase}
  \textsc{Given} bisets $\widetilde B$, $\widetilde C$ together with
  the forgetful morphisms onto $B$ and $C$ as in
  Theorem~\ref{thm:RedConjCentrProb} such that, in addition, $B$ is
  contracting; and \textsc{given} $\phi\in\Mod(G)$ such that $B^\phi\cong C$ and a finite generating set of $Z(B)$,\\
  \textsc{Decide} whether $\widetilde B$ and $\widetilde C$ are
  conjugate, and \textsc{compute} $Z(\widetilde B)$.
\end{algo}

\subsection{\myboldmath Bisets of minimal $(2,2,2,2)$-maps}\label{ss:lattes}
A $(2,2,2,2)$-orbisphere is $(S^2,A, \ord)$ with $\#A=4$ and
$\ord(a)=2$ for all $a\in A$.

\begin{lem}\label{lem:Gr:2222}
  If $(S^2,A, \ord)$ is a $(2,2,2,2)$-orbisphere, then $\pi_1(S^2,A,
  \ord) \cong \Z^2\rtimes\{\pm1\}$. There are exactly four
  order-$2$ conjugacy classes in $ \pi_1(S^2,A, \ord) \cong
  \Z^2\rtimes\{\pm1\}$; these classes are identified with $A$ and
  are of the form
\[(n,1)^{ \Z^2\rtimes\{\pm1\}}=\{(n+2m,1)\mid m\in \Z^2\} \text{ for all }n\in\{0,1\}^2.\]
\end{lem}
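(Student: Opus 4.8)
The plan is to realize $G:=\pi_1(S^2,A,\ord)$ explicitly. By \eqref{eq:orbispheregp} it has the presentation
$\langle\gamma_1,\gamma_2,\gamma_3,\gamma_4\mid\gamma_1^2,\gamma_2^2,\gamma_3^2,\gamma_4^2,\gamma_1\gamma_2\gamma_3\gamma_4\rangle$,
and geometrically it is the orbifold fundamental group of the flat pillowcase $\R^2/(\Z^2\rtimes\{\pm1\})$, with $\{\pm1\}$ acting by $\pm\mathrm{id}$ and the $\gamma_i$ realized by the order-$2$ rotations about the four half-lattice points. Accordingly I would write elements of $\Z^2\rtimes\{\pm1\}$ as pairs $(v,\varepsilon)$ with $(v,\varepsilon)(w,\delta)=(v+\varepsilon w,\varepsilon\delta)$, let $e_1,e_2$ be the standard basis of $\Z^2$, and define a candidate homomorphism $\phi\colon G\to\Z^2\rtimes\{\pm1\}$ by $\gamma_1\mapsto(0,-1)$, $\gamma_2\mapsto(e_1,-1)$, $\gamma_3\mapsto(e_1+e_2,-1)$, $\gamma_4\mapsto(e_2,-1)$. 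I would check the relations: $(v,-1)^2=(0,1)$ so each image has order $2$, and the telescoping product $(0,-1)(e_1,-1)(e_1+e_2,-1)(e_2,-1)=(-e_1+(e_1+e_2)-e_2,1)=(0,1)$, so $\phi$ is well defined. Surjectivity is immediate since $\phi(\gamma_1\gamma_2)=(-e_1,1)$ and $\phi(\gamma_1\gamma_3)=(-e_1-e_2,1)$ generate $\Z^2\times\{1\}$, and adjoining $\phi(\gamma_1)$ gives everything.

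The heart of the matter is injectivity of $\phi$. I would introduce $\pi\colon G\to\Z/2$ sending every $\gamma_i$ to the nontrivial element (well defined because each relator has even length), and set $G_0:=\ker\pi$, of index $2$. Composing $\phi$ with the projection $\Z^2\rtimes\{\pm1\}\to\{\pm1\}$ recovers $\pi$ (under $\Z/2\cong\{\pm1\}$), so $\phi(G_0)\subseteq\Z^2\times\{1\}$ and $\phi(G\setminus G_0)\subseteq\Z^2\times\{-1\}$; hence $\ker\phi\subseteq G_0$. By Reidemeister--Schreier with transversal $\{1,\gamma_1\}$, the group $G_0$ is generated by $x:=\gamma_1\gamma_2$ and $y:=\gamma_1\gamma_3$ (the remaining Schreier generator $\gamma_1\gamma_4$ equals $yx^{-1}$ once one substitutes $\gamma_4=\gamma_3\gamma_2\gamma_1$). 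The key computation is that $x$ and $y$ commute: $xy=yx$ reduces, after cancelling the leading $\gamma_1$, to $\gamma_2\gamma_1\gamma_3=\gamma_3\gamma_1\gamma_2$, which (multiplying by $\gamma_3$ on the left and the right) becomes $\gamma_3\gamma_2\gamma_1=\gamma_1\gamma_2\gamma_3$, and both sides equal $\gamma_4$ by the relation $\gamma_1\gamma_2\gamma_3\gamma_4=1$ together with $\gamma_4^2=1$. Thus $G_0$ is abelian on two generators, so there is a surjection $q\colon\Z^2\twoheadrightarrow G_0$ with $e_1\mapsto x$, $e_2\mapsto y$; and $\phi|_{G_0}\circ q\colon\Z^2\to\Z^2$ is the linear map $e_1\mapsto-e_1$, $e_2\mapsto-e_1-e_2$, of determinant $1$, hence an isomorphism. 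Therefore $q$ is injective, $G_0\cong\Z^2$, $\phi|_{G_0}$ is an isomorphism onto $\Z^2\times\{1\}$, and with $\ker\phi\subseteq G_0$ this gives $\ker\phi=1$. So $\phi$ is an isomorphism $G\cong\Z^2\rtimes\{\pm1\}$.

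For the conjugacy-class statement I would work inside $\Z^2\rtimes\{\pm1\}$. An element $(v,\varepsilon)$ squares to $(v+\varepsilon v,1)$, which is trivial exactly when $\varepsilon=-1$ or $(\varepsilon,v)=(1,0)$; hence the order-$2$ elements are precisely the $(v,-1)$, $v\in\Z^2$. A direct computation gives $(w,\delta)(v,-1)(w,\delta)^{-1}=(2w+\delta v,-1)$, and since $\delta v\equiv v\pmod{2\Z^2}$ the conjugacy class of $(v,-1)$ is contained in $\{(v+2m,-1)\mid m\in\Z^2\}$; taking $\delta=1$ yields the reverse inclusion, so the class is exactly that set and depends only on $v\bmod 2\Z^2$. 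Hence there are exactly $|(\Z/2)^2|=4$ order-$2$ classes, represented by $n\in\{0,1\}^2$ (this is the set $\{(n+2m,1)\}$ of the statement, in which ``$1$'' denotes the nontrivial element of the $\{\pm1\}$-factor). Finally, under $\phi$ the four peripheral classes $\gamma_i^G$ map to the classes of $(0,-1),(e_1,-1),(e_1+e_2,-1),(e_2,-1)$, whose $\Z^2$-parts are the four distinct elements of $\{0,1\}^2$ mod $2$; so the peripheral classes are precisely the four order-$2$ classes, which gives the claimed identification with $A$.

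The main obstacle is exactly the injectivity of $\phi$ — more precisely, pinning down that the index-$2$ subgroup $G_0$ is free abelian of rank $2$ via the relation $xy=yx$ and that the induced endomorphism of $\Z^2$ is unimodular; everything else is bookkeeping. (Alternatively one could invoke §\ref{ss:orbispheres}: the orbifold has Euler characteristic $2-4\cdot\tfrac12=0$, so its universal cover is $\R^2$ with $G$ acting properly discontinuously and cocompactly, whence $G$ is the wallpaper group $p2\cong\Z^2\rtimes\{\pm1\}$; but the explicit map above is shorter and keeps track of where the $\gamma_i$ go, which is what the rest of the statement requires.)
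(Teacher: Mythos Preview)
Your proof is correct. The paper, being a survey, states the lemma without proof in \S\ref{ss:lattes}; the intended argument is the geometric one you mention at the end --- the $(2,2,2,2)$-orbisphere has Euler characteristic $0$, so its universal cover is a plane with $G$ acting as a planar discontinuous group (see \S\ref{ss:orbispheres}), and the ``$(2,3,6)$-maps'' subsection in \S\ref{ss:examples} spells this out uniformly for all four parabolic signatures, writing the group as $\Lambda\rtimes\langle\zeta\rangle$ with $\zeta=-1$ in the $(2,2,2,2)$ case.

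Your route is genuinely different: rather than invoking orbifold covering theory, you build an explicit homomorphism $\phi$ from the presentation~\eqref{eq:orbispheregp}, pass to the index-$2$ kernel, use Reidemeister--Schreier to see it is two-generated, verify the single commutation relation $xy=yx$ by hand, and then certify injectivity by observing that $\phi|_{G_0}$ has determinant~$1$ on $\Z^2$. This is entirely elementary and self-contained, and it keeps explicit track of where each $\gamma_i$ lands --- exactly what the second half of the lemma needs. The geometric argument gives the isomorphism for free once one accepts the structure theory of Euclidean $2$-orbifolds, but one must still identify the four rotation centres to match the peripheral classes; your approach does both at once. You also correctly flag the notational quirk that the ``$1$'' in the paper's $(n,1)$ denotes the nontrivial element of the $\{\pm1\}$-factor.
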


Let us fix a $(2,2,2,2)$-orbisphere $(S^2,A, \ord)$ and let us set
$G\coloneqq \pi_1(S^2,A, \ord)$. Thanks to Lemma~\ref{lem:Gr:2222} we
identify $A$ with the set of all order-$2$ conjugacy classes of
$G$. By Euler characteristic, every branched covering $f\colon (S^2,A,
\ord) \selfmap$ is a self-covering. Therefore, the biset of $f$ is
right principal.
 
We denote by $\Mat_2^+(\Z)$ the set of $2\times2$ integer matrices $M$
with $\det(M)>0$. For a matrix $M\in \Mat^+_2(\Z)$ and a vector
$v\in \Z^2$ there is an injective endomorphism
$M^v\colon \Z^2\rtimes\{\pm1\} \selfmap$ given by the following
``crossed product'' structure:
\begin{equation}
\label{eq:InjEndOfK}
M^{v}(n,0)=(Mn,0) \text{ and }M^{v}(n,1)=(Mn+v,1).
\end{equation}  
Furthermore, $M^v$ induces a map $(M^v)_*\colon A \selfmap$ on
conjugacy classes. We write $B_{M^v}=B_M\rtimes\{\pm1\}$ the crossed
product decomposition of the biset of $M_v$.
 
\begin{prop}\label{prop:bis:2222}
  Every $(2,2,2,2)$-orbisphere biset $B$ is of the form $B=B_{M^v}$
  for some $M^v$ as in~\eqref{eq:InjEndOfK}, and $M^v$ is computable
  from the biset $B$.

  Conversely, $B_{M^v}$ is an orbisphere biset for every $M^v$ as
  in~\eqref{eq:InjEndOfK}. Two bisets $B_{M^v}$ and $B_{N^w}$ are
  isomorphic if and only if $M=\pm N$ and $(M^v)_*=(N^w)_*$ as maps on
  $A$.

  The biset $B_{M^{v}}$ is geometric if and only if both eigenvalues
  of $M$ are different from $\pm 1$. If $B_{M^{v}}$ is geometric, then
  for every $(B_{M^{v}})_*\colon A\sqcup E\selfmap$ the number of
  conjugacy classes of portraits of bisets in $B_{M^{v}}$
  parameterized by $(B_{M^{v}})_*$ is finite, and it is
  algorithmically decidable whether for a given
  $B_*\colon A\sqcup E\selfmap$ two portraits of bisets within
  $B_{M^v}$ are conjugate.
\end{prop}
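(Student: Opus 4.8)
The plan is to translate everything into injective endomorphisms of $G\coloneqq\pi_1(S^2,A,\ord)$, which by Lemma~\ref{lem:Gr:2222} equals $\Z^2\rtimes\{\pm1\}$, and to carry out all the bookkeeping inside that concrete group. The key starting observation is that the translation subgroup $\Z^2=\{(n,0)\}$ is exactly the set of elements of infinite order together with $1$, hence is characteristic in $G$.

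For the first two assertions: by Riemann--Hurwitz a branched self-covering of a $(2,2,2,2)$-orbisphere has no critical points, so $B$ is left-free of finite degree and right-principal; right-principality gives $B\cong G$ qua right $G$-set, and fixing $b_0\in B$ produces an injective endomorphism $\phi\colon G\selfmap$ by $g\cdot b_0=b_0\cdot\phi(g)$. Since $\Z^2$ is characteristic, $\phi$ restricts on $\Z^2$ to an injective $M\in\Mat_2^+(\Z)$ (with $\det M>0$ because $B$ is the biset of an orientation-preserving covering), and it sends the distinguished involution $(0,1)$ to an involution $(v,1)$; the crossed-product relations~\eqref{eq:InjEndOfK} then force $\phi=M^v$, and $M,v$ are extracted from the wreath recursion of $B$ by integer linear algebra, hence are computable. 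Conversely $M^v$ is injective, its image $M\Z^2\rtimes\{\pm1\}$ has finite index $|\det M|$, it carries the four order-$2$ conjugacy classes to themselves by the affine map $n\mapsto Mn+v\bmod2$, and the induced permutation tuple on $G/M^v(G)$ is admissible in the sense of~\eqref{eq:riemannhurwitz} --- a routine count, or: $M^v$ is realized by an orbifold self-covering --- so $B_{M^v}$ is an orbisphere biset.

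Next, isomorphism and geometricity. A $G$-$G$-biset morphism between the right-principal bisets $B_{M^v}$ and $B_{N^w}$ is left multiplication by some $g_0\in G$, and exists iff $N^w=\iota_{g_0}\circ M^v$; hence $B_{M^v}\cong B_{N^w}$ iff the two endomorphisms are conjugate by an inner automorphism of $G$. A direct crossed-product computation gives $\iota_{(u,0)}\circ M^v=M^{2u+v}$ and $\iota_{(u,1)}\circ M^v=(-M)^{2u-v}$, so inner conjugacy holds exactly when $N=\pm M$ and $v\equiv w\pmod2$, and, given $N=\pm M$, that congruence is equivalent to $(M^v)_*=(N^w)_*$. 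For geometricity, if no eigenvalue of $M$ is $\pm1$ then the lift of $M^v$ to the Euclidean universal cover $\R^2$ is an affine map with no eigenvalue $\pm1$, whose quotient by $z\mapsto-z$ lies in \Tor\ when $\deg B_{M^v}>1$ and is a finite-order or pseudo-Anosov orbifold homeomorphism when $\deg B_{M^v}=1$; in either case $B_{M^v}$ is geometric. Conversely, if an eigenvalue equals $\pm1$, the corresponding rational eigenline descends to an essential simple closed curve on $S^2\setminus A$ that $f_{B_{M^v}}$ maps to itself with degree $\pm1$ (possibly after one iterate when the eigenvalue is $-1$), i.e.\ a Levy obstruction; thus $f_{B_{M^v}}$ is not B\"ottcher-expanding, is reducible --- hence, in the degree-$1$ case, neither of finite order nor pseudo-Anosov --- and is excluded from \Tor\ by definition, so it is not geometric.

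Finally, the portrait statement. When $B_{M^v}$ is expanding --- equivalently all eigenvalues have norm $>1$, equivalently $B_{M^v}$ is a contracting biset --- the finiteness and decidability of portraits are exactly Proposition~\ref{prop:PortrBis:ContrCase}. The remaining geometric cases are $\deg B_{M^v}>1$ with one eigenvalue of norm $<1$ (there $M$ is hyperbolic, since an integer matrix with $|\det|\ge2$ has no eigenvalue on the unit circle) and $\deg B_{M^v}=1$ with $M$ elliptic or Anosov. For these I would work on the $\R^2$-cover: a portrait of bisets parameterized by $B_*\colon A\sqcup E\selfmap$ amounts to a bounded ``homotopy pseudo-orbit'' of $f_{B_{M^v}}$ through the points indexed by $E$ (the geometric interpretation of portraits, combined with homotopy shadowing~\cite{ishii-smillie:shadowing}), and lifting it turns the defining conditions into a finite affine system over $\Z$ whose matrix --- a product of copies of $M$ minus a shift --- is invertible over $\Q$ precisely because no relevant eigenvalue is a root of unity of the matching period; one then reads off a solution unique up to the deck group $\Z^2\rtimes\{\pm1\}$, so the number of conjugacy classes is finite, and comparing two portraits becomes integer linear algebra. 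The main obstacle is exactly this non-contracting case: Proposition~\ref{prop:PortrBis:ContrCase} is unavailable, so the self-similar contraction argument must be replaced by the rigidity of non-invertible hyperbolic --- and, separately, of finite-order --- affine maps, arranged so that the pseudo-orbit bound and the reduction to integer linear algebra are effective; one must in particular be careful with finite-order $M$, where $M^{p}-\one$ degenerates for periods $p$ divisible by the order of $M$.
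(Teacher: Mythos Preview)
The paper, being a survey, does not prove this proposition --- it defers to~\cite{bartholdi-dudko:bc3} --- so there is no proof to compare against beyond the setup in~\S\ref{ss:lattes}, which your argument follows faithfully. Your reduction to an injective endomorphism via right-principality, the identification of $\Z^2$ as the characteristic subgroup of infinite-order elements, the inner-conjugacy computation $\iota_{(u,0)}\circ M^v=M^{2u+v}$ and $\iota_{(u,1)}\circ M^v=(-M)^{2u-v}$, and the Levy cycle built from a rational $\pm1$-eigenvector are all correct.

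One point to tighten: the proposition is used only for $|\det M|\ge2$ (it feeds Algorithm~\ref{algo:torbisets}, which treats \Tor\ bisets), and you should restrict to that case from the outset. Your degree-$1$ detour runs into real trouble. For $M$ of finite order $k>1$ (eigenvalues primitive $k$-th roots of unity, hence $\neq\pm1$, hence ``geometric'' by the stated criterion), a period-$k$ cycle in $E$ reduces portrait conjugacy to $\phi^k$-twisted conjugacy in $G$; but $\phi^k=\one$, so this is ordinary conjugacy in $\Z^2\rtimes\{\pm1\}$, which has infinitely many classes. You flag this (``one must in particular be careful with finite-order $M$'') but do not resolve it --- and it cannot be resolved, because the finiteness claim fails there. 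Likewise $M=\pm I$ has eigenvalues $\pm1$ yet represents the identity, which is finite-order and hence geometric, contradicting the ``only if'' you argue. Once $|\det M|\ge2$, both problems vanish: a $2\times2$ integer matrix with $|\det M|\ge2$ has no eigenvalue on the unit circle except possibly $\pm1$, so after excluding those no eigenvalue is a root of unity, $M^p\pm I$ is invertible over $\Q$ for every $p\ge1$, and your affine-system argument goes through uniformly for the expanding and the non-expanding \Tor\ cases alike. In fact you can then bypass Proposition~\ref{prop:PortrBis:ContrCase} entirely: portrait conjugacy on a period-$p$ cycle in $E$ is $\phi^p$-twisted conjugacy in $G$, with finitely many classes bounded by $|\det(M^p-I)|$ and $|\det(M^p+I)|$, and comparing two portraits is integer linear algebra.
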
 

\noindent We need the following fact.
\begin{thm}[Corollary of~\cite{grunewald:conjugacyarithmetic}]\label{thm:MatrConj}
  There is an algorithm deciding whether two $M,N\in \Mat_2^+(\Z)$ are
  conjugate by an element $X\in \SL_2(\Z)$, and produces such an $X$
  if it exists.
 
  There is an algorithm computing, as a finitely generated subgroup of
  $\SL_2(\Z)$, the centralizer of $M\in \Mat_2^+(\Z)$.
\end{thm}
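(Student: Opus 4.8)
The plan is to reduce both assertions to the analogous facts for $\GL_2(\Z)$, which are provided by Grunewald's solution of the conjugacy problem in arithmetic groups~\cite{grunewald:conjugacyarithmetic}. First I would exploit that $\GL_2(\Q)$-conjugation preserves the characteristic polynomial $\chi_M(t)=t^2-\operatorname{tr}(M)\,t+\det(M)$: given $M,N\in\Mat_2^+(\Z)$, compare $(\operatorname{tr} M,\det M)$ with $(\operatorname{tr} N,\det N)$, return ``not conjugate'' if they differ, and otherwise work with the common polynomial $\chi\coloneqq\chi_M=\chi_N$, whose discriminant I call $\Delta$.

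Next I would invoke~\cite{grunewald:conjugacyarithmetic} in the following form: there is an algorithm that decides whether two integral matrices are conjugate by an element of $\GL_2(\Z)$ and, if so, returns a witness $X\in\GL_2(\Z)$; and an algorithm that computes $Z_{\GL_2(\Z)}(M)$ as a finitely generated group. One may also obtain this directly by viewing $\Z^2$ as a module over the ring $R=\Z[t]/(\chi(t))$ with $t$ acting as $M$: when $\chi$ is irreducible, $R$ is an order in the quadratic field $\Q[t]/(\chi(t))$, the $\GL_2(\Z)$-conjugacy classes with characteristic polynomial $\chi$ are in bijection with the finitely many and effectively recognizable isomorphism classes of fractional $R$-ideals (the Latimer--MacDuffee correspondence, made algorithmic by reduction theory of binary quadratic forms), and $Z_{\GL_2(\Z)}(M)$ is the unit group of the multiplier order of the corresponding ideal, finitely generated by Dirichlet's unit theorem. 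The remaining cases, where $\chi$ has a repeated root ($\Delta=0$) or splits over $\Q$ with distinct roots ($\Delta$ a positive perfect square) so that $R$ is not a domain, as well as the scalar case $M=\lambda\one$, are handled by Smith normal form and direct inspection.

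The final step, descent from $\GL_2(\Z)$ to $\SL_2(\Z)$, is elementary. If $M$ and $N$ are not $\GL_2(\Z)$-conjugate, they are not $\SL_2(\Z)$-conjugate. Otherwise fix a witness $X\in\GL_2(\Z)$ with $XMX^{-1}=N$; if $\det X=1$ we are finished. If $\det X=-1$, then $M$ and $N$ are $\SL_2(\Z)$-conjugate if and only if $Z_{\GL_2(\Z)}(M)$ contains an element of determinant $-1$: indeed, if $Z$ is such an element then $XZ^{-1}\in\SL_2(\Z)$ still conjugates $M$ to $N$, while conversely any $Y\in\SL_2(\Z)$ with $YMY^{-1}=N$ yields $Y^{-1}X\in Z_{\GL_2(\Z)}(M)$ of determinant $-1$. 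Since a finite generating set of $Z_{\GL_2(\Z)}(M)$ is available, applying $\det$ to the generators decides this and exhibits such a $Z$ when it exists, whence the $\SL_2(\Z)$-conjugator. Finally $Z_{\SL_2(\Z)}(M)=\ker\bigl(\det\colon Z_{\GL_2(\Z)}(M)\to\{\pm1\}\bigr)$ has index at most $2$, hence is again finitely generated, a generating set being produced by Reidemeister--Schreier rewriting.

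The hard part is entirely the first invocation: the $\GL_2(\Z)$-conjugacy and centralizer problems for integral matrices, which is Grunewald's theorem and rests chiefly on reduction theory of binary quadratic forms, equivalently on the computation of class groups and unit groups of quadratic orders. Everything else --- reduction to a common characteristic polynomial, the determinant trick, and passage to a finite-index subgroup --- is routine. The one point demanding a little care is that both Grunewald's result and the module-theoretic argument must be checked to cover non-invertible matrices (those with $\det>1$) and the degenerate characteristic polynomials; for the application to $(2,2,2,2)$-maps in~\S\ref{ss:lattes} this is moot, since there $M$ has real irrational eigenvalues, so $\chi$ is irreducible over $\Q$ and $R$ is an order in a real quadratic field.
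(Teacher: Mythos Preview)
Your proposal is correct, and in fact supplies considerably more detail than the paper does: the paper gives no proof at all, simply labelling the statement a ``Corollary of~\cite{grunewald:conjugacyarithmetic}'' and moving on to use it in Algorithm~\ref{algo:torbisets}. Your reduction --- match characteristic polynomials, invoke Grunewald (or Latimer--MacDuffee plus reduction of binary quadratic forms) for the $\GL_2(\Z)$ problem, then descend to $\SL_2(\Z)$ via the determinant homomorphism on the centralizer --- is exactly the kind of argument the citation is meant to gesture at, and is sound.
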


\begin{algo}\label{algo:torbisets}
  \textsc{Given} $\subscript{\widetilde G}{\widetilde B}_{\widetilde G}$, $\subscript{\widetilde G}{\widetilde C}_{\widetilde C}$ two \Tor\ bisets\\
  \textsc{Compute} the centralizer $Z(\widetilde B)$, and \textsc{decide} whether $\widetilde B$ and $\widetilde C$ are conjugate by an element of $\Mod(\widetilde G)$, and if so \textsc{construct} a conjugator \textsc{as follows:}\\\upshape
  \begin{enumerate}
  \item If $\widetilde B_*\neq \widetilde C_*$ as maps on peripheral
    conjugacy classes, then return \texttt{fail}.
  \item Let $\subscript{\widetilde G}{\widetilde B}_{\widetilde G}\to
    \subscript G B_G$ and $\subscript{\widetilde G}{\widetilde C}_{\widetilde G}\to
    \subscript{G'}C_{G'}$ be two maximal forgetful morphisms. If $G\neq G'$,
    then return \texttt{fail}. Otherwise by Lemma~\ref{lem:Gr:2222} identify
    $G=G'\cong \Z^2\rtimes\{\pm1\}$ and by
    Proposition~\ref{prop:bis:2222} present $B$ and $C$ as $B_{M^v}$
    and $B_{N^w}$ respectively.
  \item Using Theorem~\ref{thm:MatrConj} check whether $M$ and $N$ are
    conjugate. If not, return \texttt{fail}; otherwise find a conjugator $X$
    and compute the centralizer subgroup $K$ of $M$.
  \item Check whether there is a $Y\in K$ such that $(Y X)^0$ is a
    conjugator between $M^v$ and $N^w$. If there is none, return
    \texttt{fail}; otherwise set $X\coloneqq Y X$ and replace $K$ by
    $\{Y\in K\mid Y^0\text{ centralizes }M^v\}$, a subgroup of finite
    index in $K$.
  \item Let $(G_a,B_a)_{a\in A\sqcup E}$ and $(G_a,C_a)_{a\in A\sqcup
      E}$ be induced by $\widetilde B$ and $\widetilde C$ portrait of
    bisets in $B\cong B_{M^v}$ and in $C\cong B_{N^w}$. Using
    Proposition~\ref{prop:bis:2222} check whether the is an $Y\in K$
    such that
    \[\left(G^{(Y X)^0}_a,B^{(Y X)^0}_a\right)_{a\in A\sqcup E}\sim (G_a,C_a)_{a\in A\sqcup E}.\] 
    If not, return \texttt{fail}. Otherwise use
    Theorem~\ref{thm:RedConjCentrProb} to promote $(Y X)^0$ into a
    conjugacy between $\widetilde B$ and $\widetilde C$.
  \item The centralizer of $\widetilde B$ is computed using
    Theorem~\ref{thm:RedConjCentrProb}.
  \end{enumerate}
\end{algo}

\begin{cor}\label{cor:tordecidable}
  There is an algorithm that, given two \Tor\ bisets $\subscript GB_G$ and
  $\subscript H C_H$, decides whether $B$ and $C$ are conjugate, and computes
  the centralizer of $B$.
\end{cor}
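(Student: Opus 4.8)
The plan is to deduce this from Algorithm~\ref{algo:torbisets}, which already solves the conjugacy and centralizer problems for \Tor\ bisets that live over \emph{one and the same} orbisphere group; the only genuinely new point here is that $\subscript GB_G$ and $\subscript HC_H$ are a priori over different groups $G,H$. So the plan is to reduce the general case to finitely many instances of Algorithm~\ref{algo:torbisets}.

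Recall that a conjugacy between $B$ and $C$ is a biprincipal $H$-$G$-biset $X$ with $XB=CX$, and that such an $X$ is the biset of a homeomorphism between the two underlying marked spheres; it is therefore determined, up to multiplication by $\Mod(G)$, by the bijection it induces on marked points — a bijection that must respect the order function $\ord$ of the two orbispheres and intertwine the portraits $B_*$ and $C_*$. There are only finitely many such bijections, each testable effectively, and, whenever one exists, I would write down from the standard presentations~\eqref{eq:orbispheregp} a representative orbisphere-group isomorphism $\iota\colon H\to G$ and hence the associated biprincipal $H$-$G$-biset $B_\iota$; this is the effective form of the Dehn--Nielsen--Baer Theorem~\ref{thm:dehn-nielsen-baer}. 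If no order- and portrait-compatible bijection exists, the algorithm reports that $B$ and $C$ are not conjugate.

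For each such $\iota$ I would form the $G$-$G$-biset $C_\iota\coloneqq B_\iota^\vee\otimes C\otimes B_\iota$. Unwinding the dictionary between bisets and maps, $C_\iota$ is the biset of $h\circ f_C\circ h^{-1}$ for a homeomorphism $h$ realising $\iota$, so $C_\iota$ is topologically conjugate to the \Tor\ map $f_C$; since being a quotient by $z\mapsto-z$ of a torus endomorphism with eigenvalues $\neq\pm1$ is visibly a topological-conjugacy invariant, $C_\iota$ is again a \Tor\ biset, now over $G$. Thus $B$ and $C$ are conjugate if and only if, for at least one of the finitely many $\iota$, the \Tor\ bisets $B$ and $C_\iota$ are conjugate by an element of $\Mod(G)$. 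I would then run Algorithm~\ref{algo:torbisets} on each pair $(B,C_\iota)$: if it returns $M\in\Mod(G)$ with $MB=C_\iota M$, then $X\coloneqq B_\iota\otimes M$ is biprincipal and satisfies $XB=CX$, so it is output as a witness; if every $\iota$ yields \texttt{fail}, then $B$ and $C$ are not conjugate. For the centralizer one would simply run Algorithm~\ref{algo:torbisets} on the pair $(B,B)$: it outputs $Z(B)$ via Theorem~\ref{thm:RedConjCentrProb}, and $Z(B)$ is finitely generated, being a finite-index subgroup (cut out by the portrait-of-bisets condition) of the $\SL_2(\Z)$-centralizer of an integer matrix, the latter finitely generated by Theorem~\ref{thm:MatrConj}.

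The genuine substance sits inside Algorithm~\ref{algo:torbisets} and the inputs it cites — Grunewald's integer-matrix conjugacy algorithm (Theorem~\ref{thm:MatrConj}), the crossed-product normal form $B_{M^v}$ for $(2,2,2,2)$-bisets (Proposition~\ref{prop:bis:2222}, Lemma~\ref{lem:Gr:2222}), and the reduction of conjugacy and centralizers through forgetful morphisms and portraits of bisets (Theorem~\ref{thm:RedConjCentrProb}). So the only, rather mild, obstacle left for the corollary itself is the bookkeeping above: confirming that finitely many identifications $\iota$ of the two orbisphere groups suffice, that each transported biset $C_\iota$ really lies in the class \Tor\ so that Algorithm~\ref{algo:torbisets} applies, and keeping the left/right variances and the orientation convention straight — in the $(2,2,2,2)$ case $\Mod(G)$ is its orientation-preserving index-$2$ subgroup, with conjugators sought in $\SL_2(\Z)$, and admitting orientation-reversing equivalences as well merely adds a parallel search over integer matrices of determinant $-1$, at most doubling the work.
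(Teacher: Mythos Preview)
The paper gives no explicit proof of this corollary; it is stated immediately after Algorithm~\ref{algo:torbisets} as a direct consequence. Your proposal is correct and makes explicit the one point the paper leaves tacit: Algorithm~\ref{algo:torbisets} is phrased for two \Tor\ bisets over the \emph{same} group $\widetilde G$, whereas the corollary allows $G\neq H$. Your reduction---enumerate the finitely many order- and portrait-compatible bijections of marked points, realise each as an orbisphere-group isomorphism $\iota\colon H\to G$ via the presentations~\eqref{eq:orbispheregp}, transport $C$ to the $G$-$G$-biset $C_\iota=B_\iota^\vee\otimes C\otimes B_\iota$, observe that $C_\iota$ is again \Tor\ because this property is a combinatorial-equivalence invariant, and feed each pair $(B,C_\iota)$ to Algorithm~\ref{algo:torbisets}---is exactly the natural argument, and the centralizer statement is immediate from step~(6) of the algorithm applied to $(B,B)$.
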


\section{Expanding maps and the Levy decomposition~\cite{bartholdi-dudko:bc4}}\label{ss:expanding}
Consider a Thurston map $f\colon(S^2, A)\selfmap$. We give a criterion
for the existence of a Riemannian metric on $(S^2,A)$ such that $f$ is
isotopic to an expanding map. This criterion is in terms of
multicurves on $S^2\setminus A$. We then give an application to the
study of \emph{matings} of polynomials.

\subsection{Levy, anti-Levy, Cantor, and anti-Cantor multicurves}
Recall that a multicurve $\CC$ is invariant if $f^{-1}(\CC)=\CC$, up
to isotopy and removing peripheral and trivial components.  If $C$ is
a multicurve and $C\subseteq f^{-1}(C)$, then there is a unique
invariant multicurve $\CC$ \emph{generated} by $C$, namely
$\CC=\bigcup_{n\ge 0} f^{-n}(C)$.

Consider the following graph called the \emph{curve graph} of
$(S^2,A)$. Its vertex set is the set of isotopy classes of essential
curves on $S^2\setminus A$. For every simple closed curve $\gamma$ and
for every component $\delta$ of $f^{-1}(\gamma)$ there is an edge from
$\gamma$ to $\delta$ labeled $\deg(f\restrict\delta)$.

Let $\CC$ be an invariant multicurve, and consider the directed
subgraph of the curve graph that it spans. A \emph{strongly connected
  component} is a maximal subgraph spanned by a subset $C\subseteq\CC$
such that, for every $\gamma,\delta\in C$, there exists a non-trivial
path from $\gamma$ to $\delta$ in $C$. A strongly connected component
$C$ is \emph{primitive} if all its incoming edges come from $C$
itself. We call $C$ a \emph{bicycle} if for every $\gamma,\delta\in C$
there exists $n\in\N$ such that at least two paths of length $n$ join
$\gamma$ to $\delta$ in $C$, and a \emph{unicycle} otherwise; see
Figure~\ref{Fig:ExampleCantorMult} for an illustration.  Clearly,
every invariant multicurve is generated by its primitive strongly
connected components.

We remark that bicycles contain at least two cycles, so that the
number of paths of length $n$ grows exponentially in $n$. On the other
hand, every unicycle is an actual \emph{periodic cycle}: it can be
written as $C=(\gamma_0,\gamma_1,\dots,\gamma_n=\gamma_0)$ in such a
manner that $\gamma_{i+1}$ has an $f$-preimage $\gamma'_i$ isotopic to
$\gamma_i$. If furthermore the $\gamma'_i$ may be chosen so that $f$
maps each $\gamma'_i$ to $\gamma_{i+1}$ by degree $1$, then $C$ is
called a \emph{Levy cycle}.

\[\begin{tikzpicture}[->,auto,vx/.style={circle,minimum size=1ex,inner sep=0pt,outer sep=2pt,draw,fill=gray!50}]
  \node[vx] (1) at (-1,0) {};
  \node[vx] (2) at (1,0) {};
  \draw (1)  to [bend left=60] (2);
  \draw (1)  to [bend left=10] (2);
  \draw (2)  to [bend left] (1);
  \node at (0,-0.6) {bicycle};
\end{tikzpicture}
\qquad
\begin{tikzpicture}[->,auto,vx/.style={circle,minimum size=1ex,inner sep=0pt,outer sep=2pt,draw,fill=gray!50}]
  \node[vx] (1) at (-1,0) {};
  \node[vx] (2) at (0,1.5) {};
  \node[vx] (3) at (1,0) {};
  \draw (1)  to [bend left=10] node {$1:1$} (2);
  \draw (2)  to [bend left=10] node {$1:1$} (3);
  \draw (3)  to [bend left=10] node [above] {$1:1$} (1);
  \node at (0,-0.6) {Levy cycle};
\end{tikzpicture}
\qquad
\begin{tikzpicture}[->,auto,vx/.style={circle,minimum size=1ex,inner sep=0pt,outer sep=2pt,draw,fill=gray!50}]
  \node[vx] (1) at (-1,0) {};
  \node[vx] (2) at (-1,1) {};
  \node[vx] (3) at (1,0) {};
  \node[vx] (4) at (1,1) {};
  \draw[dashed,thick] (-1,0.5) ellipse (6mm and 9mm) node[above right=7mm] {$C$};
  \draw (1)  to [bend left] (2);
  \draw (2)  to [bend left] (1);
  \draw (1)  to (3);
  \draw (3)  to [bend left] (4);
  \draw (4)  to [bend left] (3);
  \node at (0,-0.6) {Primitive s.c.c.};
\end{tikzpicture}
\]

\begin{defn}[Types of invariant multicurves]
  Let $\CC$ be an invariant multicurve. Then $\CC$ is
  \renewcommand{\descriptionlabel}[1]{\hspace{\labelsep}\textbf{\emph{#1}}}
  \begin{description}
  \item[Cantor] if $\CC$ is generated by its bicycles;
  \item[anti-Cantor] if $\CC$ does not contain any bicycle;
  \item[Levy] if $\CC$ is generated by its Levy cycles;
  \item[anti-Levy] if $\CC$ does not contain any Levy cycle.\qedhere
  \end{description}
\end{defn}

\begin{figure}
  \begin{tikzpicture}[auto,vx/.style={circle,minimum size=1ex,inner sep=0pt,outer sep=2pt,draw,fill=gray!50}]
    \def\leftsphere#1#2{+(0,0.1) .. controls +(180:#1/2) and +(0:#1/2) .. +(-#1,#2)
      .. controls +(180:#2) and +(180:#2) .. +(-#1,-#2)
      .. controls +(0:#1/2) and +(180:#1/2) .. +(0,-0.1) ++(0,0)}
    \def\rightsphere#1#2{+(0,0.1) .. controls +(0:#1/2) and +(180:#1/2) .. +(#1,#2)
      .. controls +(0:#2) and +(0:#2) .. +(#1,-#2)
      .. controls +(180:#1/2) and +(0:#1/2) .. +(0,-0.1) ++(0,0)}
    \def\midsphere#1#2{+(0,0.1) .. controls +(0:#1/2) and +(180:#1/2) .. +(#1,#2)
      .. controls +(0:#1/2) and +(180:#1/2) .. +(#1*2,0.1)
      +(0,-0.1) .. controls +(0:#1/2) and +(180:#1/2) .. +(#1,-#2)
      .. controls +(0:#1/2) and +(180:#1/2) .. +(#1*2,-0.1) ++(#1*2,0)
    }
    
    \draw[very thick] (-2.4,0) node [xshift=-15mm] {$S_1$}
    \leftsphere{1.5}{0.7} node [below=1mm] {$v_1$} node[xshift=12mm] {$S_2$}
    \midsphere{1.2}{0.7} node[below=1mm] {$v_2$} node[xshift=12mm] {$S_3$}
    \midsphere{1.2}{0.7} node[below=1mm] {$v_3$} node[xshift=15mm] {$S_4$}
    \rightsphere{1.5}{0.7};

    \draw[very thick] (-3.4,2.5) node [xshift=-8mm] {$S_1$}
    \leftsphere{0.8}{0.7} node [xshift=3mm] {\small $S'_2$}
    \midsphere{0.3}{0.4} node [xshift=3mm] {\small $S'_3$}
    \midsphere{0.3}{0.4} node [xshift=8mm] {$S_2$} \midsphere{0.8}{0.7}
    node [xshift=3mm] {\small $S''_3$} \midsphere{0.3}{0.4} node
    [xshift=3mm] {\small $S'_4$} \midsphere{0.3}{0.4} node [xshift=8mm]
    {$S_3$} \midsphere{0.8}{0.7} node [xshift=3mm] {\small $S''_2$}
    \midsphere{0.3}{0.4} node [xshift=3mm] {\small $S'''_3$}
    \midsphere{0.3}{0.4} node [xshift=8mm] {$S_4$} \rightsphere{0.8}{0.7};

    \draw (-3.4,2.4) -- (-2.4,0.1) \fwdarrowonline{0.5};
    \draw (-2.8,2.4) -- (-0.25,0.15) \fwdarrowonline{0.5};
    \draw (-2.2,2.4) -- (-0.1,0.1) \fwdarrowonline{0.5};
    \draw (-0.6,2.4) -- (0.0,0.1) \fwdarrowonline{0.667};
    \draw (0.0,2.4) -- (2.3,0.1) \fwdarrowonline{0.667};
    \draw (0.6,2.4) -- (2.4,0.1) \fwdarrowonline{0.667};
    \draw (2.2,2.4) -- (0.1,0.1) \bckarrowonline{0.667};
    \draw (2.8,2.4) -- (0.25,0.15) \bckarrowonline{0.333};
    \draw (3.4,2.4) -- (2.5,0.1) \bckarrowonline{0.5};

    \node[vx,label={$v_1$}] (v1) at (-2.5,-1.5) {};
    \node[vx,label={$v_2$}] (v2) at (0,-1.5) {};
    \node[vx,label={$v_3$}] (v3) at (2.5,-1.5) {};
    \draw[<-] (v1)  to [loop left] (v1);
    \draw[<-] (v1)  to [bend left=20] (v2);
    \draw[<-] (v1)  to [bend right=20] (v2);
    \draw[->] (v2)  to [loop below] (v2);
    \draw[->] (v2)  to [bend right=15] (v3);
    \draw[->] (v2)  to [bend right=45] (v3);
    \draw[->] (v3)  to [loop right] (v3);
    \draw[->] (v3)  to [bend right=15] (v2);
    \draw[->] (v3)  to [bend right=45] (v2);
  \end{tikzpicture}
  \caption{A bicycle $\{v_2,v_3\}$ generates a Cantor multicurve
    $\{v_1,v_2,v_3\}$. The action of the map $f$ is indicated on the
    preimages of $\{v_1,v_2,v_3\}$. If annuli are mapped by degree
    $1$, then it is also a Levy cycle. The graph below is the
    corresponding portion of the curve
    graph.} \label{Fig:ExampleCantorMult}
\end{figure}

Note that $\CC$ being Cantor / anti-Cantor depends on the mappings of
curves of $\CC$ to themselves, and being Levy / anti-Levy depends on
the degrees under which the curves map.

Suppose $f\colon(S^2, A)\selfmap$ is a Thurston map with an invariant
multicurve $\CC$. Recall from~\S\ref{ss:multicurves} that by
$R(f,\CC)$ we denote the return maps induced by $f$ on
$S^2\setminus\CC$.

\begin{prop}\label{Prop:CantorMultCurv}
  Suppose $f\colon(S^2, A)\selfmap$ is a Thurston map with an
  invariant multicurve $\CC$. Then
  \begin{enumerate}
  \item there is a unique maximal invariant Cantor sub-multicurve
    $\CC_{\text{Cantor}}\subseteq\CC$ such that the multicurves
    induced by $\CC$ on pieces in $R(f,\CC_{\text{Cantor}})$ are
    anti-Cantor invariant multicurves;
  \item there is a unique maximal invariant Levy sub-multicurve
    $\CC_{\text{Levy}}\subseteq\CC$ such that multicurves induced by
    $\CC$ on pieces in $R(f,\CC_{\text{Levy}})$ are anti-Levy
    invariant multicurves.
\end{enumerate}
\end{prop}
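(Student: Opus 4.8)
The plan is to set $\CC_{\text{Cantor}}$ (resp.\ $\CC_{\text{Levy}}$) to be the invariant sub-multicurve of $\CC$ \emph{generated by all of its bicycles (resp.\ Levy cycles)}, that is, the union over all such cycles $C\subseteq\CC$ of the invariant multicurve $\langle C\rangle=\bigcup_{n\ge0}f^{-n}(C)$ that $C$ generates; since $\CC$ is invariant this union is a sub-multicurve of $\CC$ and is itself invariant. I would carry out case (1) and indicate at each step the trivial change needed for case (2); the two arguments are parallel, the only substantive difference being noted below.

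First I would record one elementary observation: the edge set of the curve graph of $(S^2,A)$ depends only on $f$, not on any multicurve, so for $\CC'\subseteq\CC$ the subgraph spanned by $\CC'$ is the \emph{induced} subgraph of the one spanned by $\CC$. Two things follow. (a) If $C$ is a strongly connected component of $\CC$ with $C\subseteq\CC'$, then $C$ is a strongly connected component of $\CC'$ with the same internal edges, hence a bicycle of $\CC'$ iff of $\CC$. (b) Every strongly connected component of $\CC'$ lies in a unique strongly connected component of $\CC$, and a component containing a subgraph of exponential path growth has exponential path growth, so every bicycle of $\CC'$ is contained in a bicycle of $\CC$. From (a), (b) and invariance, $\CC_{\text{Cantor}}$ is Cantor: it is generated by the bicycles of $\CC$ it contains, each of which is a bicycle of $\CC_{\text{Cantor}}$, and conversely every bicycle of $\CC_{\text{Cantor}}$ generates a sub-multicurve of $\CC_{\text{Cantor}}$. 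And from (b) it is maximal: any invariant Cantor sub-multicurve $\DD\subseteq\CC$ satisfies $\DD=\langle\text{bicycles of }\DD\rangle\subseteq\langle\text{bicycles of }\CC\rangle=\CC_{\text{Cantor}}$. (In case (2) one replaces (b) by the trivial remark that a degree-one periodic cycle of curves contained in $\DD\subseteq\CC$ is a degree-one periodic cycle of curves of $\CC$; this is the only substantive difference between the two cases.)

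The core of the proof is that the return-map pieces carry anti-Cantor multicurves. Decompose $(S^2,A)$ along $\CC_{\text{Cantor}}$ and fix a return map $g=f^e\colon S\selfmap$ in $R(f,\CC_{\text{Cantor}})$; the induced multicurve $\CC_S$ on $S$ is $g$-invariant and equals the set of curves of $\CC\setminus\CC_{\text{Cantor}}$ lying in $S$. Suppose, for contradiction, that $\CC_S$ contains a bicycle $C_S$ for $g$. The key point is a lifting of paths: an edge $\alpha\to\beta$ of the curve graph of $g$ on $S$ (so $\beta$ is a component of $f^{-e}(\alpha)$ inside $S$) lifts to the length-$e$ path $\alpha\to f^{e-1}(\beta)\to\dots\to f(\beta)\to\beta$ in the curve graph of $f$, whose intermediate curves $f^i(\beta)$ are essential and lie in $\CC$ because $\CC$ is $f$-invariant and $\beta\in\CC$, and along which the product of the edge-degrees equals the degree of $\alpha\to\beta$. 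Concatenating such lifts, the curves of $C_S$ become pairwise joined in both directions in the curve graph of $f$, so they lie in a single strongly connected component $\widehat C$ of $\CC$, and two distinct $g$-paths of length $n$ in $C_S$ lift to two distinct $f$-paths of length $ne$, so $\widehat C$ has exponential path growth and hence is a bicycle of $\CC$. Then $C_S\subseteq\widehat C\subseteq\CC_{\text{Cantor}}$, contradicting $C_S\subseteq\CC\setminus\CC_{\text{Cantor}}$. For case (2), the same lift carries a Levy cycle for $g$ to a degree-one periodic cycle for $f$ inside $\CC$ — all lifted edge-degrees are $1$ since their product is — contradicting the maximality of $\CC_{\text{Levy}}$.

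Combining, $\CC_{\text{Cantor}}$ is an invariant Cantor sub-multicurve whose pieces in $R(f,\CC_{\text{Cantor}})$ carry anti-Cantor invariant multicurves, and it contains every invariant Cantor sub-multicurve of $\CC$; so it is the unique maximal sub-multicurve with the stated properties, and symmetrically for $\CC_{\text{Levy}}$. The step I expect to be the main obstacle, and the one I would write out in full detail, is the path-lifting in the previous paragraph: checking that a cycle of a return map produces, inside the ambient curve graph, a cycle \emph{of the same type} (exponential growth, resp.\ all degrees $1$), and in particular that the intermediate curves $f^i(\beta)$ remain essential and that distinct return-map paths lift to distinct ambient paths — this is where invariance of $\CC$ and the elementary geometry of preimages of curves really enter.
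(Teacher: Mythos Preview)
Your approach is correct and is the natural one: define $\CC_{\text{Cantor}}$ (resp.\ $\CC_{\text{Levy}}$) as the invariant sub-multicurve generated by all bicycles (resp.\ Levy cycles) of $\CC$, verify maximality directly, and rule out bicycles in the return pieces by lifting paths in the curve graph of $g=f^e$ to paths in the curve graph of $f$. Note that this survey paper does not itself supply a proof of the proposition---it is deferred to~\cite{bartholdi-dudko:bc4}---so there is no in-text argument to compare against; your construction is exactly the one the definitions are set up for, and is presumably what the referenced article does.

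Two small points worth tightening when you write it out. First, in observation~(a) you assert that a strongly connected component $C$ of $\CC$ contained in $\CC'$ is a strongly connected component of $\CC'$; what you actually use (and what is true) is only that $C$ remains strongly connected in the induced subgraph on $\CC'$ and retains its internal edge multiplicities, hence remains a bicycle---it could sit inside a larger SCC of $\CC'$, but that is harmless. Second, in the path-lifting step you should make explicit why $\CC_S$ is $g$-invariant: this follows because both $\CC$ and $\CC_{\text{Cantor}}$ satisfy $f^{-1}(\cdot)=(\cdot)$, so essential $f$-preimages of curves in $\CC\setminus\CC_{\text{Cantor}}$ stay in $\CC\setminus\CC_{\text{Cantor}}$, and the piece structure is permuted by $f$. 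Your injectivity claim for lifted paths is fine once you recall that edges of the curve graph are indexed by \emph{components} of $f^{-1}(\gamma)$, not merely by their isotopy classes, so distinct $g$-edges (distinct components of $f^{-e}(\alpha)$) lift to $f$-paths that differ in at least their last edge.
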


\begin{defn}[Levy-free]\label{defn:levyfree}
  Let $f\colon(S^2,A)\selfmap$ be a Thurston map. It is
  \textit{Levy-free} if $\deg(f)>1$ and $f$ does not admit a Levy
  cycle.
\end{defn}

We say that an invariant Levy multicurve $\CC$ is \textit{complete} if
every piece in $R(f,\CC)$ is either Levy-free or has degree $1$. We
show that for a Thurston map $f\colon(S^2,A)\selfmap$ there is a
unique minimal invariant complete Levy multicurve, which we denote by
$\CC_{f,\text{Levy}}$ and call \emph{the canonical Levy obstruction}.
Any other invariant complete Levy multicurve $\CC$ contains
$\CC_{f,\text{Levy}}$ as a sub-multicurve. For $B$ a biset, we define
in a similar manner $\CC_{B,\text{Levy}}$.

\begin{defn}[Levy decomposition]\label{def:HypDecomp}
  The Levy decomposition of $f\colon(S^2,A)\selfmap$ is the
  decomposition along the canonical Levy obstruction
  $\CC_{f,\text{Levy}}$.

  The Levy decomposition of a biset $B$ is the sphere tree of bisets
  decomposition of $B$ along $\CC_{B,\text{Levy}}$.
\end{defn}

\subsection{Expanding maps}
A \emph{length metric with singularities} on $S^2$ is a length
orbifold metric that is allowed to have finitely many points, called
\emph{singularities}, at infinite distance such that points
topologically close to singularities are far away from usual points. A
basic example is the hyperbolic metric on the Riemann sphere with
finitely many removed points.  We will also refer to singularities as
\emph{points at infinity}.

A Thurston map $f\colon(S^2,A)\selfmap$ is \emph{metrically expanding}
if there is a length metric $\mu$ on $S^2$, with singularities, such
that $f$ is expanding with respect to $\mu$ and all points
sufficiently close topologically to infinity escape to infinity. A
combinatorial equivalence class of Thurston maps is called
\emph{expanding} if it contains an expanding map.

It follows from the definition that the set of singularities of
$\mu$ is forward invariant, and a periodic point is singular if
and only if it is topologically attracting.

We say that an expanding map is \emph{B\"ottcher} if the first
return map near every critical periodic point is conjugate to
$z\to z^d$, where $d>1$ is the degree of the first return map.  Two
B\"ottcher expanding maps are conjugate if and only if they are
combinatorially equivalent. (This is an application of the ``pullback
argument'': if we have $ f\circ \phi =g\colon (S^2,A)\selfmap$ with
$\phi$ isotopic rel $A$ to the identity, then $\phi$ can be normalized
to be the identity near every periodic critical cycle; by expansion
the lifts $\phi_n$ of $\phi$ through $f^n$ converge exponentially fast
to the identity, so their product
$\psi\coloneqq \cdots\circ\phi_1\circ\phi_0$ conjugates $f$ into $g$.)

\begin{defn}[\cite{nekrashevych:ssg}*{Definition~2.11.8}]\label{defn:contracting}
  Let $B$ be a $G$-$G$-biset. It is called \emph{contracting} if for
  every finite $S\subseteq B$ there
  exists a finite subset $N\subseteq G$ with the following property:
  for every $g\in G$ and every $n>\!\!>0$ we have
  $\{h\in G\mid h S^n\cap S^n g\neq\emptyset\}\subseteq N$.
\end{defn}
For a finitely generated group $G$, the biset $B$ is contracting if
there is a proper metric $|\cdot|$ on $G$ and constants $\lambda<1,C$
such that $|h|\le\lambda|g|+C$ whenever $h S\cap S g\neq\emptyset$.

If $B$ is left-free and we chose a basis $S\subseteq B$ for the left
action, we obtain a wreath recursion
$\psi\colon G\to G^S\rtimes S\perm$, see~\S\ref{ss:coverings}; and $B$
is contracting if for every $g\in G$ one obtains only elements of $N$
as co\"ordinates when one iterates $\psi$ long enough.

The set $N$ in Definition~\ref{defn:contracting} is not unique; but
for every $S\subseteq B$ there exists a minimal such $N$, written
$N(S)$ and called the \emph{nucleus} of $(B,S)$. It gives rise to a
labeled graph, called the \emph{nucleus machine} of $(B,S)$: its
vertex set is $N(S)$, and there is an edge from $g$ to $h$ with
\emph{input} and \emph{output} labels $s\in S$ and $t\in S$
respectively whenever $s g=ht$ holds in $B$.

We slightly modify the definition of ``contracting'' for sphere bisets,
because of the orbisphere structures. Let $\subscript GB_G$ be a sphere biset with
$G=\pi_1(S^2,A)$. Recall from~\eqref{eq:ordB} that there is a minimal
orbisphere structure $\ord_B$ given by $B$. We call an orbisphere structure
$\ord\colon A\to\{2,3,\dots,\infty\}$ \emph{bounded} if
$\ord(a)=\infty\Leftrightarrow\ord_B(a)=\infty$ and
$\ord(a)\deg_a(B)\mid\ord(B_*(a))$ for all $a\in A$. Let $\overline G$
denote the quotient orbisphere group $G/\langle\gamma_a^{\ord(a)}:a\in
A\rangle$. Then we call $B$ an \emph{orbisphere contracting} biset if
$\overline G\otimes_G B\otimes_G\overline G$ is contracting for some
bounded orbisphere structure on $(S^2,A)$.

\noindent The main result of this part is the following criterion:
\begin{mainthm}\label{thm:ExpCr} 
  Let $f\colon(S^2,A)\selfmap$ be a Thurston map, not doubly covered
  by a torus endomorphism. The following are equivalent:
  \begin{enumerate}
  \item $f$ is isotopic to a B\"ottcher expanding map;
  \item $f$ is Levy-free;
  \item $B(f)$ is an orbisphere contracting biset.
  \end{enumerate}

  Furthermore, if these properties hold, the metric $\mu$ on $S^2$
  that is expanded by $f$ may be assumed to be Riemannian of pinched
  negative curvature.
\end{mainthm}

Ha\"\i ssinsky and Pilgrim ask in~\cite{haissinsky-pilgrim:algebraic}
whether every everywhere-expanding map is isotopic to a smooth map. By
Theorem~\ref{thm:ExpCr}, a combinatorial equivalence class contains a
B\"ottcher smooth expanding map if and only if it is Levy free.

It was already proven in~\cite{selinger-yampolsky:geometrization} that
every Levy-free map that is doubly covered by a torus endomorphism is
in \Tor. Combining this with the
results~\cites{nielsen:surfaces,thurston:surfaces} on geometrization of
surface self-homeomorphisms we obtain the
\begin{cor}
  Let $f\colon(S^2,A)\selfmap$ be a Thurston map. Then every map in
  $R(f,\CC_{\text{Levy}})$ has a geometric structure: it is either
  expanding (i.e.\ in \Exp), of degree $1$, or a non-expanding
  irrational map doubly covered by a torus endomorphism (i.e.\ in
  $\Tor\setminus\Exp$).
\end{cor}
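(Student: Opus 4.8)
The plan is to deduce the statement from the completeness of the canonical Levy obstruction together with Main Theorem~\ref{thm:ExpCr}. By construction, $\CC_{f,\text{Levy}}$ is a \emph{complete} invariant Levy multicurve, which (see the discussion preceding Definition~\ref{def:HypDecomp}) means precisely that every return map $g\in R(f,\CC_{f,\text{Levy}})$ is either of degree $1$ or Levy-free in the sense of Definition~\ref{defn:levyfree}. Each such $g$ is itself a Thurston map $(S^2,A')\selfmap$ on a punctured sphere, so the earlier results apply to it unchanged. If $g$ has degree $1$ it is a self-homeomorphism of a punctured sphere, to which classical Nielsen--Thurston theory~\cites{nielsen:surfaces,thurston:surfaces} associates a geometric structure (a canonical decomposition into finite-order and pseudo-Anosov pieces); this is the second alternative of the statement. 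So from now on I would assume $g$ is Levy-free, hence $\deg(g)>1$.

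Next I would split according to whether $g$ is doubly covered by a torus endomorphism. If it is \emph{not}, then Theorem~\ref{thm:ExpCr} applies verbatim and gives that $g$ is isotopic to a B\"ottcher expanding map, i.e.\ $g\in\Exp$, and moreover the expanded metric can be chosen Riemannian of pinched negative curvature. If $g$ \emph{is} doubly covered by a torus endomorphism, I would instead invoke the theorem of Selinger and Yampolsky~\cite{selinger-yampolsky:geometrization} recalled just above: a Levy-free map doubly covered by a torus endomorphism lies in $\Tor$. Hence $g$ is the quotient by $z\mapsto -z$ of an affine map $z\mapsto Mz+v$ on $\R^2/\Z^2$ with $M\in\Mat_2^+(\Z)$ having eigenvalues different from $\pm1$, and $\deg(g)=\det(M)\ge2$.

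The last step is to check that a non-expanding map in $\Tor$ is \emph{irrational}, i.e.\ that $M$ has real irrational eigenvalues. Write $\lambda,\mu$ for the eigenvalues, so $|\lambda\mu|=\det(M)\ge2$. If $\lambda,\mu$ formed a complex-conjugate pair then $|\lambda|=|\mu|=\sqrt{\det(M)}>1$ and $g$ would be expanding (a map in $\Tor$ is expanding exactly when both eigenvalues have modulus $>1$), contrary to assumption; hence $\lambda,\mu\in\R$. Non-expansion forces $|\lambda|\le1$ for one of them, say $\lambda$; since $\lambda$ is real and $\lambda\neq\pm1$ this gives $0<|\lambda|<1$ (note $\lambda\neq0$ as $\det(M)\neq0$), and then $|\mu|=\det(M)/|\lambda|>1$. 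Finally the characteristic polynomial of $M$ is monic with integer coefficients, so a rational root would be an integer, which is impossible for $\lambda$ with $0<|\lambda|<1$; thus $\lambda$ is irrational, and hence so is its conjugate root $\mu$. Therefore $g\in\Tor\setminus\Exp$ is irrational, completing the trichotomy. The only genuinely nontrivial inputs are Theorem~\ref{thm:ExpCr} and the Selinger--Yampolsky rigidity statement, so I do not expect a real obstacle inside this corollary; the single point requiring care is the elementary eigenvalue analysis above, together with the observation -- built into the definition of $\CC_{f,\text{Levy}}$ -- that completeness of the Levy multicurve is exactly what makes the pieces Levy-free or of degree~$1$.
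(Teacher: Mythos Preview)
Your argument is correct and follows the same route as the paper, which simply states that the corollary is obtained by combining Theorem~\ref{thm:ExpCr} with the Selinger--Yampolsky result and Nielsen--Thurston theory; you have merely spelled out the details, including the elementary eigenvalue computation for the irrationality of non-expanding \Tor\ pieces that the paper leaves implicit. One minor remark: the corollary as stated only asserts the trichotomy, so you do not strictly need Nielsen--Thurston for the degree-$1$ case---listing ``degree $1$'' already suffices---though invoking it is harmless and matches the paper's framing.
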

Furthermore, the property of being geometric is algorithmically
recognizable: bisets double covered by a torus endomorphism are of a
very particular nature~\eqref{eq:InjEndOfK}; contracting bisets are recognized by their
nucleus, and non-contracting bisets are recognized by their Levy
obstruction.

\begin{algo}\label{algo:levy}
  \textsc{Given} a Thurston map $f\colon(S^2,A)\selfmap$ by its biset,\\
  \textsc{Compute} the Levy decomposition of $f$ \textsc{as follows:}\\\upshape
  \begin{enumerate}
  \item For an enumeration of all multicurves $\CC$ on $(S^2,A)$, that
    never reaches a multicurve before reaching its proper
    submulticurves, do the following steps:
  \item If the multicurve $\CC$ is not invariant, or is not Levy,
    continue in~(1) with the next multicurve.
  \item Compute the decomposition of $f$ using
    Algorithm~\ref{algo:decompose}.
  \item If not all pieces are expanding or \Tor\ or degree-$1$ maps,
    continue in~(1) with the next multicurve;
  \item Return $\CC$.
  \end{enumerate}
\end{algo}

\subsection{Matings of higher degree polynomials}\label{ss:matings}
We turn to an application to matings of polynomials. Let
$p,q\colon\C\selfmap$ be two monic polynomials of same degree
$d\ge2$. Identify $\C$ with the disk $\mathbb D$ by the map
$\nu_+\colon z\mapsto z/\sqrt{1+|z|^2}$ and with its complement
$\hC\setminus\overline{\mathbb D}$ by the map
$\nu_-(z)=1/\nu_+(z)$. Consider the continuous map
\[p\FM q\colon\begin{cases}\hC & \to\hC,\\
  z\text{ with }|z|<1 & \mapsto \nu_+(p(\nu_+^{-1}(z))),\\
  z\text{ with }|z|=1 & \mapsto z^d,\\
  z\text{ with }|z|>1 & \mapsto \nu_-(q(\nu_-^{-1}(z))).
\end{cases}\]
It is called the \emph{formal mating} of $p$ and $q$, and is a
degree-$d$ branched covering of $S^2$.

Recall also that there are \emph{external rays} associated to the
polynomials $p,q$. First, the \emph{filled-in Julia set} $K_p$ of $p$ is
\[K_p=\{z\in\C\mid f^n(z)\not\to\infty\text{ as }n\to\infty\}.\]
Assume that $K_p$ is connected. There exists then a unique holomorphic
isomorphism
$\phi_p\colon\hC\setminus K_p\to\hC\setminus\overline{\mathbb D}$
satisfying $\phi_p(p(z))=\phi_p(z)^d$ and $\phi_p(\infty)=\infty$ and
$\phi_p'(\infty)=1$. It is called a \emph{B\"ottcher co\"ordinate}, and
conjugates $p$ to $z^d$ in a neighbourhood of $\infty$. For
$\theta\in\R/\Z$, the associated \emph{external ray} is
\[R_p(\theta)=\{\phi_p^{-1}(re^{2i\pi\theta})\mid r\ge1\}.\] Let $X_{p,q}$
be the quotient of $\hC$ in which each $\overline{\nu_+(R_p(\theta))}$ has
been identified to one point for each $\theta\in\R/\Z$, and similarly each
$\overline{\nu_-(R_q(\theta))}$ has been identified to one point. Note that
$X_{p,q}$ is a quotient of $K_p\sqcup K_q$, and need not be a Hausdorff
space. A classical criterion (due to Moore) determines when $X_{p,q}$ is
homeomorphic to $S^2$. If this occurs, $p$ and $q$ are said to be
\emph{topologically mateable}, and the map induced by $p\FM q$ on $X_{p,q}$
is called the \emph{topological mating} of $p$ and $q$ and denoted $p\GM
q\colon X_{p,q}\selfmap$.

Furthermore, if there exists a homeomorphism
$\phi\colon X_{p,q}\to\hC$ that is conformal on
$\nu_+(K_p^o)\cup\nu_-(K_q^o)$ and such that
$f\coloneqq\phi\circ(p\GM q)\circ\phi^{-1}$ is a rational map, then
$p,q$ are said to be \emph{geometrically mateable}, and any such $f$
is called a \emph{geometric mating} of $p$ and $q$.

Mary Rees and Tan Lei~\cite{tan:matings} proved that two
post-critically finite quadratic polynomials $p,q$ are geometrically
mateable if and only if $p$ and $q$ do not belong to conjugate limbs
of the Mandelbrot set;
see~\cite{buff+:questionsaboutmatings}*{Theorem~2.1}. To be more
precise, let us assume that $p$ and $q$ are hyperbolic. (The
sub-hyperbolic case requires a slight clarification because points in
the post-critical set may get glued during the geometric mating.) Then
the following are equivalent for $d=2$:
\begin{enumerate}
\item $p \FM q$ is isotopic to a rational map $p \square q$;
\item $p \GM q$ is a sphere map (necessarily conjugate to $p \square q$);
\item $p,q$ are not in conjugate primary limbs of the Mandelbrot
  set.
\end{enumerate}
This theorem relies on the fact that any annular obstruction in
degree $2$ is Levy; indeed in degree $\ge3$ there are topological
matings that are not conjugate to rational maps,
see~\cite{shishikura-t:matings} and~\S\ref{ex:tls}. Since the
obstruction to be an expanding map is also Levy, we can generalize the
degree $2$ criterion in the class of expanding maps as follows.

\begin{defn}
  Let $p,q$ be two hyperbolic post-critically finite polynomials of
  same degree $d$. We say that $p,q$ have a \emph{pinching cycle of
    periodic angles} if there are angles
  $\phi_0,\phi_1, \dots, \phi_{2n-1}\in \R/\Z$, indices treated modulo
  $2n$, such that for all $i$,
  \begin{itemize}
  \item the angle $\phi_i$ is periodic under multiplication by $d$;
  \item the rays $R_p(\phi_{2i})$ and $R_p(\phi_{2i+1})$ land
    together;
  \item the rays $R_q(-\phi_{2i-1})$ and $R_q(-\phi_{2i})$ land
    together.\qedhere
  \end{itemize}
\end{defn}
In degree $2$, the pair $p,q$ has a pinching cycle of periodic angles if
and only if $p$ and $q$ are in the conjugate primary limbs of the
Mandelbrot set.  In degree $\ge3$, there is still no well-accepted
notion of limbs, and they should be rather defined as sets of
parameters in which certain periodic rays land together.

If $p,q$ admit a pinching cycle, then a fibre of the
map $\hC\to X_{p,q}$ is a cycle, so $X_{p,q}$ cannot be homeomorphic
to $S^2$. We obtain the following criterion:
\begin{thm}
  Let $p,q$ be two monic hyperbolic post-critically finite polynomials. Then
  the following are equivalent:
  \begin{enumerate}
  \item $p \FM q$ is isotopic to an expanding map $p \square q$;
  \item $p \GM q$ is a sphere map (necessarily conjugate to $p \square q$);
  \item\label{thm:mating:3} $p, q$ do not have a pinching cycle of
    periodic angles.
  \end{enumerate}
\end{thm}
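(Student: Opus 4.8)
The plan is to reduce all three equivalences to a single combinatorial dictionary: \emph{$p\FM q$ admits a Levy cycle if and only if $p,q$ admit a pinching cycle of periodic angles}. Granting this, the equivalence (1)$\Leftrightarrow$(3) is immediate from Main Theorem~\ref{thm:ExpCr}, applied to the degree-$d$ Thurston map $p\FM q$; the degenerate case in which $p\FM q$ is doubly covered by a torus endomorphism is disposed of separately, by noting that such a mating is automatically expanding --- its restriction to the equator is $z\mapsto z^d$ with $d\ge2$, forcing both eigenvalues of the linear part to have modulus $>1$ --- and carries no pinching cycle, so the equivalence holds vacuously. For (2)$\Leftrightarrow$(3) I would invoke Moore's theorem together with the standard theory of ray equivalence: the ray-equivalence classes of the formal mating are compact connected continua, so $X_{p,q}\cong S^2$ precisely when none of them contains a Jordan curve; since $p$ and $q$ preserve landing relations, $p\FM q$ sends ray-equivalence chains to ray-equivalence chains, so any ray-equivalence cycle is eventually periodic and, after replacing it by a forward iterate, produces a pinching cycle of periodic angles (the converse being recorded before the statement: such an angle-cycle gives a Jordan-curve fibre of $\hC\to X_{p,q}$). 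When $X_{p,q}\cong S^2$ the collapsing map $\hC\to X_{p,q}$ is moreover cellular, hence homotopic rel post-critical set to a homeomorphism realising an isotopy between $p\FM q$ and $p\GM q$, which gives (2)$\Rightarrow$(1) directly, consistently with the dictionary.

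The easy half of the dictionary, $\neg$(3)$\Rightarrow\neg$(1), is carried out by exhibiting a Levy cycle. Given a pinching cycle $\phi_0,\dots,\phi_{2n-1}$, form the Jordan curve $\Gamma$ on $S^2$ through the $2n$ equator points of these angles that runs, between consecutive points, alongside the appropriate pair of external rays --- $p$-rays through the lower hemisphere for the bonds $\{\phi_{2i},\phi_{2i+1}\}$, $q$-rays through the upper hemisphere for the bonds $\{\phi_{2i-1},\phi_{2i}\}$ --- together with a Julia-set arc joining their common landing points. By hyperbolicity the post-critical set lies in the superattracting basins, so $\Gamma$ is disjoint from it, and it is essential once the pinching cycle is taken minimal. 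Because each $\phi_j$ is periodic under multiplication by $d$, the angle set is permuted by $z\mapsto z^d$; hence $\Gamma$ and its forward iterates under $p\FM q$ form a periodic cyclic family of curves, each having a preimage component isotopic to the previous one and mapped onto it by degree $1$ --- the restriction being a bijection because $p$ and $q$ carry external rays homeomorphically to external rays while $z\mapsto z^d$ is injective on each relevant equatorial arc (after pushing off the branch points $0$ and $\infty$). This is a Levy cycle, so by Main Theorem~\ref{thm:ExpCr} the map $p\FM q$ is not isotopic to an expanding map.

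The substantial direction, (3)$\Rightarrow$(1), requires that every Levy cycle of $p\FM q$ arise from a pinching cycle of periodic angles. Assuming $p\FM q$ is not isotopic to an expanding map, Main Theorem~\ref{thm:ExpCr} yields a Levy cycle $\gamma_0,\dots,\gamma_n=\gamma_0$ of essential curves. I would first show that no $\gamma_i$ can be isotoped off the equator: otherwise, since $p\FM q$ maps each open hemisphere into itself and isotopies push forward, the whole cycle would lie, up to isotopy, in one hemisphere, and would project to a Levy cycle of the polynomial $p$ (or $q$), contradicting the fact that a hyperbolic post-critically finite polynomial, regarded as a Thurston map on $\hC$, is expanding and hence Levy-free. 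Putting each $\gamma_i$ in minimal position with the essential equatorial circle, the degree-$1$ condition forces $p\FM q$ to restrict to a homeomorphism from the relevant preimage component onto $\gamma_{i+1}$, so the equatorial-crossing data of the $\gamma_i$ is permuted by $z\mapsto z^d$; the hemispheric sub-arcs of the $\gamma_i$ --- disjoint simple arcs with endpoints on the equator and avoiding the contracted post-critical set --- are then forced to be homotopic rel endpoints to concatenations of external rays and Julia-set arcs, which produces the required $p$- and $q$-ray-landing relations in the alternating pattern of the definition. Reading off the crossing angles in cyclic order assembles a pinching cycle $\phi_0,\dots,\phi_{2n-1}$ with every $\phi_j$ periodic under multiplication by $d$, so (3) fails.

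The hard part will be precisely this last combinatorial step --- extracting the alternating $p$/$q$ ray-landing pattern and the periodicity of the angles from the way a Levy cycle sits relative to the equatorial circle. It calls for a careful interplay between the simple-closed-curve constraint, the $d$-to-$1$ equatorial dynamics of $z\mapsto z^d$, and the expansion of $p$ and $q$ on their bounded Fatou components, and is where hyperbolicity is genuinely used: it keeps the post-critical set inside the Fatou set, so that the collapsing map $\hC\to X_{p,q}$ contracts only trees of external rays. The sub-hyperbolic case, which is excluded from the statement, would additionally require tracking the post-critical points that get identified by the geometric mating.
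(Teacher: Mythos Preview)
The paper is a survey and does not actually prove this theorem here; it records only the one--line observation preceding the statement (a pinching cycle gives a circle in a fibre of $\hC\to X_{p,q}$, so $\neg(3)\Rightarrow\neg(2)$), and otherwise defers to the companion article. Your outline is consistent with the hints the paper does give: reduce $(1)\Leftrightarrow(3)$ to Theorem~\ref{thm:ExpCr} via the dictionary ``$p\FM q$ has a Levy cycle $\Leftrightarrow$ $p,q$ have a pinching cycle of periodic angles'', and handle $(2)$ through Moore's theorem and ray equivalence.

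Two points deserve correction. First, your disposal of the torus--endomorphism case is not right: the equator restriction being $z\mapsto z^d$ does not by itself force both eigenvalues of $M$ to have modulus $>1$. The correct (and simpler) reason the case does not arise is that a hyperbolic polynomial has a superattracting periodic critical point, so the minimal orbisphere of $p\FM q$ has at least one peripheral class of order $\infty$; in particular it is never the $(2,2,2,2)$--orbisphere, and $p\FM q$ cannot be doubly covered by a torus endomorphism. Second, in your ``easy half'' the sentence ``the angle set is permuted by $z\mapsto z^d$'' is false as written: each $\phi_j$ being periodic does not make $\{\phi_j\}$ invariant. What you need (and what your next clause actually uses) is that the forward images $\{d^k\phi_j\}_j$ again form a pinching cycle, so the curves $\Gamma,(p\FM q)(\Gamma),\dots$ are eventually periodic; periodicity of the individual $\phi_j$ then bounds the period.

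There is a genuine gap in your route to $(3)\Rightarrow(2)$. You assert that ``any ray--equivalence cycle is eventually periodic and, after replacing it by a forward iterate, produces a pinching cycle of periodic angles''. This is exactly the nontrivial content hidden in the word \emph{periodic} in the definition: a priori a cycle in the combined $\sim_p\cup\sim_q$ relation can involve irrational angles, and iterating by $\times d$ does not obviously terminate. One must use that, for hyperbolic $p$, the landing equivalence has uniformly bounded classes and is of finite type, so that the set of length--$2n$ cycles is a closed $\times d$--invariant subset of $(\R/\Z)^{2n}$ on which periodic orbits are dense; alternatively, bypass this entirely by proving $(1)\Rightarrow(2)$ directly (the Rees--Shishikura type statement that an expanding representative of $p\FM q$ is topologically conjugate to $p\GM q$, forcing $X_{p,q}\cong S^2$), which together with your $(2)\Rightarrow(1)$ and $(1)\Leftrightarrow(3)$ closes the loop without ever needing to promote an arbitrary ray--equivalence cycle to a periodic one. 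You correctly identify the extraction of a pinching cycle from a Levy cycle as the substantive step; note that the same finite--type control on the laminations is what ultimately forces the equatorial crossing angles of a minimal Levy curve to be (pre)periodic and the hemispheric arcs to be homotopic to ray pairs.
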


Furthermore, pinching cycles of periodic angles are effectively
enumerable using the nuclei of the bisets $B(p)$ and $B(q)$.

\section{Symbolic and floating-point algorithms~\cite{bartholdi-dudko:bc5}}
In this part we describe in more details the computational techniques
available to manipulate Thurston maps. In particular, we give a range
of symbolic algorithms, converters from one formalism into another,
and describe hybrid algorithms that bridge between the group theory
side and the complex analytic sides of Thurston maps. Many of these
algorithms are already implemented in a GAP package
\textsc{Img}~\cite{img:manual} available from the GAP website.

The main objects that the algorithms manipulate are bisets. They may
be constructed using classical data such as external angles,
floating-point approximations of maps, Hubbard trees, subdivision
rules, etc. Many operations such as matings, tunings and composition
with twists may be naturally implemented on them.

On the one hand, bisets are often constructed by choosing a basis and
expressing its wreath recursion in that basis. Bisets, however, are
fundamentally basis-free objects, and isomorphism, congruence etc.\ of
bisets is discovered by finding bases in which their recursions
coincide. Classical notions such as supporting rays, Hubbard trees
etc.\ are viewed as invariants of the biset. Twisting of bisets, or
more generally compositions with homeomorphisms, are implemented by
changing the bases.

\subsection{Converters}
In order to compute with Thurston maps, it is important to represent
them, and the objects they are related to, in an efficient manner.

Consider a Thurston map $f\colon(S^2,A)\selfmap$. In case $f$ is a
topological polynomial, namely there exists a point $\infty\in A$ with
$f^{-1}(\infty)=\{\infty\}$, then, on the one hand, the biset of $f$
can be expressed in an adapted basis, in which its structure is
essentially equivalent to a description by external angles. In case
$\#A=4$ and all critical points of $f$ are simple, it is possible to
express $f$ as a ``near-Euclidean map'' (NET),
see~\cite{cannon-floyd-parry-pilgrim:net}. We consider here the
general case.

Sphere groups themselves are represented by their number of generators
and cyclic ordering of peripheral generators, as
in~\eqref{eq:spheregroup}. For each sphere group, an isomorphism with
an underlying free group is chosen so as to allow fast comparison
between elements. Elements in sphere groups are represented as words
in the peripheral generators and their inverses.

Mapping class groups are described as outer automorphisms of sphere
groups. They are represented by keeping track of the images of
generators. This list of images is reduced to a lexicographically
minimal one by conjugating it diagonally by the sphere
group. Stallings' graphs are used to manipulate them, and in
particular compute their inverse.

All our bisets are left-free, so are represented by their associated
wreath recursions: if the right-acting sphere group has $n$ generators
and the biset has degree $d$, the bisets are represented as a list of
$n$ permutations in $d\perm$ and an $n\times d$ table of group
elements.

\begin{algo}\label{algo:angle2biset}
  \textsc{Given} a degree $d$ and a list of tuples of rational numbers representing supporting rays \`a la Poirier~\cite{poirier:portraits},\\
  \textsc{Compute} the biset of the polynomial that it describes.
\end{algo}

\begin{algo}\label{algo:biset2angle}
  \textsc{Given} a sphere biset $B$,\\
  \textsc{Compute} the supporting rays \`a la Poirier~\cite{poirier:portraits}, if $B$ is combinatorially equivalent to a polynomial, and \texttt{fail} else.
\end{algo}

\begin{algo}\label{algo:hubbardtree2biset}
  \textsc{Given} an angled Hubbard tree endowed with a self-map,\\
  \textsc{Compute} the biset of the polynomial that it describes.
\end{algo}

Let $f\colon (S^2,A)\selfmap$ be a Thurston map. It may be
conveniently encoded combinatorially as follows:
\begin{enumerate}
\item A triangulation $\mathcal T_0$ of $S^2$ with vertex set $A$ is chosen;
\item The lift $\mathcal T_1$ of $\mathcal T_0$ under $f$ is computed;
  it is a triangulation of $S^2$ with vertex set $f^{-1}(A)$, and $f$
  is expressed as a simplicial map $\mathcal T_1\to\mathcal T_0$;
\item A refinement $\mathcal T'_1$ of $\mathcal T_0$, also with vertex
  set $f^{-1}(A)$, is chosen, by subdividing in $\mathcal T_0$ all
  triangles containing elements of $f^{-1}(A)\setminus A$;
\item The relation between $\mathcal T_1$ and $\mathcal T'_1$ is
  encoded by computing for each edge of $\mathcal T_1$ the sequence of
  edges of $\mathcal T'_1$ that it crosses.\label{enum:retriangulate}
\end{enumerate}

The triangulation $\mathcal T_0$ expresses in a combinatorial manner
the fundamental group of $S^2\setminus A$: there is a retraction from
$S^2\setminus A$ to the dual graph $\mathcal T_0^\perp$ of
$\mathcal T_0$, so
$\pi_1(S^2\setminus A)\cong\pi_1(\mathcal
T_0^\perp)$. In~\eqref{enum:retriangulate} we give a homotopy
equivalence between $\mathcal T_1^\perp$ and ${\mathcal T'_1}^\perp$,
called a \emph{retriangulation}.

Conversely, a Thurston map is specified uniquely by the above data: a
triangulation $\mathcal T_0$, a simplicial map
$f\colon\mathcal T_1\to\mathcal T_0$, a refinement
$\mathcal T'_1\supseteq\mathcal T_0$, and a retriangulation of
$\mathcal T'_1$ into $\mathcal T_1$ as specified
in~\eqref{enum:retriangulate}.

\emph{De facto}, we are viewing $f$ as a covering pair
$f,i\colon(S^2,f^{-1}(A))\rightrightarrows(S^2,A)$; we consider a
triangulation $\mathcal T_0$ of the range, compute its preimages
$\mathcal T_1$ and $\mathcal T'_1$ under $f$ and $i$ respectively, and
encode combinatorially the relation between these triangulations at
the source.

In fact, in some situations we already start with a combinatorial
description of a map, rather than an actual Thurston map on a
topological sphere. Most classical combinatorial descriptions are very
close to the one given above. We argue in this text that a description
in terms of bisets is algorithmically the most useful; it can be
easily obtained from the combinatorial description sketched above:
\begin{algo}\label{algo:tri2biset}
  \textsc{Given} triangulations $\mathcal T_1,\mathcal T'_1,\mathcal T_0$ of $S^2$ such that $\mathcal T'_1$ is a refinement of $\mathcal T_0$, and given a simplicial map $f\colon\mathcal T_1\rightarrow\mathcal T_0$ as well as a retriangulation $\mathcal T_1\leftrightarrow \mathcal T'_1$,\\
  \textsc{Compute} the biset of $f$.
\end{algo}

\subsection{Floating-point algorithms}
When we write that something is computable over $\C$, we mean, in the
following strong sense, that all elements of $\C$ that we refer to are
algebraic numbers $\xi$, that a minimal polynomial of $\xi$ over $\Z$
is computable, and that a rectangle with rational corners is
computable, that separates $\xi$ from its Galois conjugates.

On the other hand, a \emph{floating-point approximation} of $z\in\C$
is an algorithm producing an infinite data stream
$(x_n+y_n i,\epsilon_n)$, with $x_n,y_n,\epsilon_n\in\Q$ and
$|x_n+y_n i-z|<\epsilon_n$ and $\epsilon_n\to 0$.

Every computable number over $\C$ admits a floating-point
approximation. Conversely, given a floating-point approximation
$(x_n+y_n i,\epsilon_n)$ and the knowledge that its limit $\xi$ is
algebraic with given degree and bound on the coefficients of its
minimal polynomial, the number $\xi$ is computable over $\C$. This
only uses standard algorithms for root finding, root isolation and
lattice reduction (LLL).

A floating-point approximation of a rational map
$f\colon (S^2,A)\selfmap$ is an algorithm producing an infinite data
stream $(f_n,A_n,\epsilon_n)$ with $f_n$ a rational map with rational
coefficients, $A_n$ a set in fixed bijection $a_n\leftrightarrow a$
with $A$, and $\epsilon_n\to0$ rational numbers, such that the
coefficients of $f_n$ are $\epsilon_n$-close to those of $f$, and the
spherical distances between pairs $a,a_n$ and between pairs
$f(x),f_n(x)$ are less than $\epsilon_n$ for all $a\in A$ and all
$x\in \widehat\C$.

Consider a portrait $(f\colon A\cup C\to A,\deg\colon A\cup C\to\N)$
of rational maps with hyperbolic orbifold; $C$ is the set of critical
points, and $A$ is a forward-invariant set. The set of rational
Thurston maps agreeing with $f$ on $A\cup C$ and having required
degree at $A\cup C$ is finite, and explicitly describable by a set of
equations with integer coefficients --- for example, with variables
the coefficients of rational map and the position of $A\cup C$ on
$\widehat\C$. It follows that, for every rational Thurston map with
hyperbolic orbifold, its coefficients are computable over $\C$ in the
sense above.

\begin{algo}\label{algo:rat2biset}
  \textsc{Given} a floating-point approximation of a rational Thurston map $f$, as well as its portrait,
  \textsc{Compute} the biset $B(f)$ \textsc{as follows:}\\\upshape
  \begin{enumerate}
  \item Find the post-critical set of $f$ on a sufficiently close approximation of $f$, using the given portrait.
  \item Compute the Delaunay triangulation on the post-critical set.
  \item Add sufficiently many points to the triangulation, keeping the
    Delaunay condition, so that $f$-lifts of triangles are small
    enough; e.g.\ don't touch more than one post-critical point.
  \item Compute the $f$-preimage of this triangulation.
  \item Apply Algorithm~\ref{algo:tri2biset}.
  \end{enumerate}
\end{algo}

\begin{algo}[\cite{bartholdi-buff-bothmer-kroeker:hurwitz}]\label{algo:hurwitz}
  \textsc{Given} an admissible $n$-tuple of permutations $\pi_1,\dots,\pi_n\in d\perm$ and an $n$-tuple of distinct points $z_1,\dots,z_n\in\hC$,\\
  \textsc{Compute} a floating-point approximation of a rational map
  whose critical values $z_i$ have monodromy $\pi_i$.
\end{algo}

\noindent We recall Thurston's fundamental ``annular obstruction''
theorem:
\begin{thm}[Thurston~\cite{douady-h:thurston}]\label{thm:thurston}
  Let $f\colon(S^2,P_f)\selfmap$ be a Thurston map with hyperbolic
  orbifold. Then $f$ is combinatorially equivalent to a rational map
  if and only if $f$ admits no annular obstruction, namely no
  invariant multicurve whose Thurston matrix has spectral radius
  $\ge1$, see~\eqref{eq:thurston matrix}. Furthermore, in that case
  the rational map is unique up to conjugation by M\"obius
  transformations.
\end{thm}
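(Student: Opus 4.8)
The plan is to realize the obstruction criterion through the dynamics of Thurston's pullback map $\sigma_f$ on the Teichm\"uller space $\mathscr T_{P_f}$ of the marked sphere $(S^2,P_f)$, as recalled in~\S\ref{ss:canonical decomposition}. Since $f$ has hyperbolic orbifold, $\mathscr T_{P_f}$ is a finite-dimensional complex manifold, complete for the Teichm\"uller metric, and $\sigma_f$ is well defined because $P_f$ contains the critical values and satisfies $f(P_f)\subseteq P_f$. The first observation is that $f$ is combinatorially equivalent to a rational map if and only if $\sigma_f$ has a fixed point: a fixed point is exactly an $f$-invariant complex structure, in which $f$ becomes holomorphic hence rational, and conversely a combinatorial equivalence to a rational map transports that map's complex structure to a fixed point. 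So the theorem amounts to showing that $\sigma_f$ has a fixed point iff $f$ admits no annular obstruction, together with uniqueness.

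I would first dispatch the easy implication: an annular obstruction precludes a rational representative. Suppose $g$ is rational and combinatorially equivalent to $f$, and transport an $f$-invariant multicurve $\Gamma$ to an invariant multicurve on $\hC\setminus P_g$. Equip $\hC\setminus P_g$ with its hyperbolic orbifold metric; then $g\colon\hC\setminus g^{-1}(P_g)\to\hC\setminus P_g$ strictly contracts this metric since $g^{-1}(P_g)\supseteq P_g$ and $\deg g>1$. Tracking the moduli of the widest embedded annuli around the curves of $\Gamma$, and using that a degree-$d$ annular cover exactly divides the modulus by $d$ together with the Gr\"otzsch inequality for sub-annuli, one obtains a vector inequality $m_{n+1}\ge T_f\,m_n$ under iterated pullback, plus a uniform contraction. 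This forces the spectral radius of $T_f$ restricted to the relevant strongly connected component of $\Gamma$ to be $<1$, contradicting the hypothesis.

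For the converse I would analyze the case where $\sigma_f$ has no fixed point. By the weak contraction of $\sigma_f$ for the Teichm\"uller metric (which equals the Kobayashi metric by Royden's theorem), together with the fact that $\sigma_f$ factors through the inclusion of the higher-dimensional space $\mathscr T_{f^{-1}(P_f)}$ (strictly more marked points since $\deg f>1$), the orbit $\sigma_f^n(\tau)$ either converges to a fixed point or eventually leaves every compact subset of $\mathscr T_{P_f}$. In the latter case, by Mumford compactness some simple closed curves acquire hyperbolic length below the Margulis constant in the metric of $\sigma_f^n(\tau)$; call their collection $\Gamma_n$. The key estimate is that if $\delta$ is a component of $f^{-1}(\gamma)$ mapped by degree $d$, then the modulus of the collar around $\delta$ in $\sigma_f^{n+1}(\tau)$ is at least $\tfrac1d$ times that of the collar around $\gamma$ in $\sigma_f^n(\tau)$, up to a universal additive constant (this is where the Schwarz--Pick/Gr\"otzsch estimate enters). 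It follows that short curves stay short and $f^{-1}(\Gamma_n)\supseteq\Gamma_n$, so $\Gamma_n$ stabilizes to an $f$-invariant multicurve $\Gamma$; and, writing $\mu_n$ for the vector of collar moduli of the curves in $\Gamma$, one has $\mu_{n+1}\ge T_f\,\mu_n-C\mathbf 1$ with $\mu_n\to\infty$ componentwise, which is incompatible with $T_f$ having spectral radius $<1$. Hence $\Gamma$ is an annular obstruction.

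Finally, uniqueness. When $\sigma_f$ has a fixed point $\tau_0$ I would show it is strictly contracting there: the cotangent action of $\sigma_f$ is the pushforward $f_*$ on integrable quadratic differentials, which does not increase $L^1$-norm and strictly decreases it when $\deg f>1$ and the orbifold is hyperbolic, by the standard norm computation; so $\tau_0$ is the unique fixed point, whence any two rational maps combinatorially equivalent to $f$ yield the same point of $\mathscr T_{P_f}$ and differ by the M\"obius change of marking between the two complex structures on $\hC$. Alternatively one runs the ``pullback argument'' used in the excerpt for B\"ottcher expanding maps: given rational $g_1\sim g_2$ via a homeomorphism $\phi$ normalized on $P_{g_1}$, lift $\phi$ successively through $g_1^n,g_2^n$; by expansion of the hyperbolic orbifold metric the lifts converge uniformly to a conjugacy between $g_1$ and $g_2$ that is conformal off the post-critical set, hence M\"obius. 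The main obstacle throughout is the quantitative modulus estimate in the degenerating case and its bookkeeping through $T_f$: isolating the correct strongly connected component, controlling the additive errors in $\mu_{n+1}\ge T_f\mu_n-C\mathbf 1$, and extracting the spectral radius bound via Perron--Frobenius.
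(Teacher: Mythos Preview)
The paper does not give its own proof of this theorem: it is stated as a citation (``Thurston's fundamental `annular obstruction' theorem'', attributed to Douady--Hubbard) and used as a black box, e.g.\ in Algorithm~\ref{algo:thurston} and in the discussion of Pilgrim's canonical obstruction in~\S\ref{ss:canonical decomposition}. So there is nothing in the paper to compare your argument against beyond the brief summary in~\S\ref{ss:canonical decomposition}, which already says: iterate $\sigma_f$; either it converges to a fixed point (rational map) or degenerates to the boundary, in which case short curves appear.

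Your proposal is a faithful outline of the Douady--Hubbard proof and matches that summary. A few places where the sketch is thinner than a full proof would require: the dichotomy ``converges to a fixed point or leaves every compact set'' does not follow from weak contraction alone---Douady--Hubbard obtain it from an explicit bound on the second derivative of $\sigma_f$ together with the hyperbolic-orbifold hypothesis, which makes $\sigma_f$ uniformly contracting on compacta; the stabilization of the set of short curves to an \emph{invariant} multicurve needs the two-sided length estimate (preimages of long curves are long, preimages of short curves contain short curves), not just the one direction you wrote; and the extraction of spectral radius $\ge1$ from $\mu_{n+1}\ge T_f\mu_n-C\mathbf1$ needs the Perron--Frobenius eigenvector on the correct strongly connected component, as you note at the end. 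None of these is a wrong idea, just places where the actual work lives.
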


Given a sphere biset $\subscript G B_G$ with hyperbolic orbifold,
either an annular obstruction for $B$ or a rational map $f$ with
algebraic coefficients and $B(f)\sim B$ may be computed: as we noted
above, there are finitely many rational maps $f_1,\dots,f_N$ with
given degree and portrait, and they can be computed, e.g.\ by solving
algebraic equations over $\Z$. Their bisets may be computed using
Algorithm~\ref{algo:rat2biset}. We may then in parallel search through
the countably many multicurves in $G$, seeking an annular obstruction
for $B$, and the countably many bijections between $B$ and $B(f_i)$
for all $i\in\{1,\dots,N\}$, seeking a conjugacy. By
Theorem~\ref{thm:thurston}, one of these searches will eventually
succeed.

If one knows beforehand that at least one critical point of $f$ is
periodic, and one is only interested in knowing whether $f$ admits an
annular obstruction, then a more straightforward algorithm is
available: one may run in parallel a search for the annular
obstruction and for a realization of $f$ as a self-map on an ``elastic
graph'' with appropriate looseness; such a self-map is a certificate
for $f$ being isotopic to a rational map, by a recent result of Dylan
Thurston~\cite{thurston:rubber}.

The degree of the equations over $\Z$ to consider is so high that this
approach fails except in the most trivial cases. The following
algorithm does work in practice:
\begin{algo}\label{algo:thurston}
  \textsc{Given} a sphere biset $\subscript GB_G$ with hyperbolic orbifold,\\
  \textsc{Compute} either a rational map $f\in\C(z)$, with computable
  algebraic coefficients, such that $B(f)\sim B$, or a $B$-invariant
  annular obstruction \textsc{as follows:}\\\upshape
  \begin{enumerate}
  \item We consider \emph{Teichm\"uller space modeled on $G$} as the
    space of triangulations of $\hC$ with a distinguished $n$-tuple of
    vertices, as well as markings of its dual edges by elements of
    $G$. Two marked triangulations represent the same point in
    Teichm\"uller space if their distinguished vertices are images of
    each other under a M\"obius transformation, and after their
    distinguished vertices are identified they admit a common
    refinement, up to isotopy, compatible with the group markings.
  \item Start by an arbitrary point $S_0$ in Teichm\"uller space. We
    iterate the following steps, starting with $i=0$ and increasing it
    each time by $1$.
  \item Apply Algorithm~\ref{algo:hurwitz} to the permutations in $B$
    and the marked points in $S_i$, find a rational map $f_i$.\label{algo:thurston:3}
  \item Compute the biset $B(f_i)$ by
    Algorithm~\ref{algo:rat2biset}. Its right-acting group is
    $G$ and its left-acting group is $H_i$.\label{algo:thurston:4}
  \item Match the bisets $B$ and $B(f_i)$ by finding a group
    homomorphism $\phi_i\colon H_i\to G$ such that
    $B_{\phi_i}^\vee\otimes B(f_i)\cong B$. Note that $\phi_i$ is
    unique up to conjugation by inner automorphisms. The group $H_i$
    marks a triangulation of $\hC$ with distinguished vertices
    $\widetilde V_i$ in bijection with peripheral classes in
    $H_i$. Those peripheral classes in $H_i$ that map under $\phi_i$
    to non-trivial peripheral classes in $G$ determine a subset
    $V_i\subseteq \widetilde V_i$, and $\phi_i$ may be applied to the
    markings of the triangulation to produce a new triangulation,
    marked by $G$ and with distinguished vertices $V_i$. This defines
    a new element $S_{i+1}$ in Teichm\"uller space.\label{algo:thurston:5}
  \item Normalize the marking of $S_{i+1}$, viewed as
    $\{(x,y,z)\in \R^3\colon x^2+y^2+z^2=1\}$, so that the barycentre
    of its marked points is $(0,0,0)\in\R^3$, the first of them (in
    some predetermined order) is at $(0,0,1)$, and the second of them
    lies in $\{0\}\times\R_+\times\R$.
  \item By Thurston's Theorem~\ref{thm:thurston}, either the maps
    $f_i$ converge, and give approximations of the desired $f$, or
    points in $S_i$ cluster.

    To detect this, compute all cross-ratios of points in
    $S_i$. Partition the points in $S_i$ by clustering together points
    with degenerating cross-ratios. Produce conjugacy classes in $G$
    by following labels on dual edges in the triangulation around
    clusters. Complete these conjugacy classes by adding their
    $B_*$-preimages till the collection becomes either $B$-invariant
    or non-(non-crossing). If it became invariant, compute its
    transition matrix to check whether it is an annular
    obstruction. In that case, return the annular obstruction.

  \item Simultaneously, seek algebraic numbers of low degree, small
    height (= maximal absolute value of coefficients over $\Z$), and
    close to the estimated position of the set $S_i$ (now normalizing
    it in $\widehat\C$ by putting three of its points at $0,1,\infty$
    respectively), and obtain in this manner a probable position
    $\tilde S$ of $S_i$ consisting of algebraic numbers. Find the
    corresponding rational map $\tilde f$ with algebraic
    coefficients. Run
    Steps~\eqref{algo:thurston:4}--\eqref{algo:thurston:5} with
    $\tilde f,\tilde S$ \emph{en lieu} of $f_i,S_i$, checking at the
    same time that the permutations of $B(\tilde f)$ match those of
    $B$; this produces a lift $S_{i+1}$ of $\tilde S$. If $S_{i+1}$,
    suitably normalized as above, coincides with $\tilde S$, then
    return $\tilde f$, while if not increase $i$ and return to
    step~\eqref{algo:thurston:3}.\label{algo:thurston:AlgNumb}
  \end{enumerate}
\end{algo}
We note that the resort to algebraic numbers in
Step~\eqref{algo:thurston:AlgNumb} of Algorithm~\ref{algo:thurston} is
only necessary to prove its validity; in practice it may be replaced
by careful interval arithmetic; details are postponed
to~\cite{bartholdi-dudko:bc5}. If we are interested in good
floating-point approximations of the position of $\tilde S$, then we
may iterate Steps~\eqref{algo:thurston:4}--\eqref{algo:thurston:5},
without changing the combinatorics of the marked spheres and only
improving the position of the marked set $S_i$.

\begin{cor}\label{cor:conj:RatBis}
  There is an algorithm that, given two sphere bisets
  $\subscript G B_G$ and $\subscript H C_H$ of rational non-\Tor\
  maps, decides whether $B$ and $C$ are conjugate, and if so produces
  a conjugator.

  The centralizers of $\subscript G B_G$ and $\subscript H C_H$
  are trivial.
\end{cor}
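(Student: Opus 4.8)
The plan is to realize each input biset by an honest rational map, to reduce conjugacy of the bisets to M\"obius conjugacy of those maps via Thurston's rigidity (Theorem~\ref{thm:thurston}), and to decide the latter over $\overline\Q$. First I would dispose of the realization. Given $\subscript GB_G$, the biset of a rational non-\Tor\ map: if its orbifold is hyperbolic, run Algorithm~\ref{algo:thurston}; as $B$ is realizable by a rational map it carries no annular obstruction (Theorem~\ref{thm:thurston}), so the algorithm must return a rational Thurston map $f\colon(\hC,A)\selfmap$ with computable algebraic coefficients, together with the biprincipal biset it produces witnessing $B(f)\sim B$. The non-\Tor\ rational maps of non-hyperbolic orbifold are, up to conjugacy, the power maps $z\mapsto z^{\pm d}$ and the $\pm$ Chebyshev maps; these finitely many combinatorial types, and their (trivial) centralizers, can be read off directly, so I will assume throughout that the orbifold is hyperbolic. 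Applying the same to $\subscript HC_H$ produces $g\colon(\hC,A')\selfmap$ with $B(g)\sim C$. If $\#A\neq\#A'$ then $B\not\sim C$; otherwise, by the dictionary between sphere bisets and Thurston maps (Theorem~\ref{thm:kameyama} with empty multicurves), $B\sim C$ if and only if $f$ and $g$ are combinatorially equivalent.

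Next comes the key reduction. By the uniqueness part of Theorem~\ref{thm:thurston}, two rational Thurston maps of hyperbolic orbifold are combinatorially equivalent exactly when there is a M\"obius transformation $m$ with $m(A)=A'$ compatibly with the portraits $B_*,C_*$ and $m\circ f=g\circ m$. I would decide this directly: a M\"obius transformation is pinned down by three point-images, so for each of the finitely many injections of a three-element subset of $A$ into $A'$ one computes over $\overline\Q$ the unique candidate $m$ (all points involved being algebraic), and then tests the polynomial identity $m\circ f=g\circ m$ together with $m(A)=A'$ and portrait compatibility --- each a decidable condition on algebraic numbers. If some $m$ survives, then $B\sim C$, and composing $B(m)$ (which is computable) with the conjugacies returned in the realization step gives the required conjugator; if every triple over every portrait-compatible bijection fails, then $B\not\sim C$.

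For the centralizer claim, write $B=B(f)$ with $f$ rational of hyperbolic orbifold (the power and Chebyshev cases again by inspection). If $[\phi]\in\Mod(S^2,A)$ satisfies $B^{[\phi]}\cong B$, i.e.\ $\phi^{-1}f\phi\approx f$ rel $A$, then the action of $\phi$ on $\mathscr T_A$ commutes with Thurston's pull-back map $\sigma_f$ (isotopic-rel-$A$ maps induce the same pull-back); since $\sigma_f$ has a unique fixed point $\tau_f$ in the hyperbolic-orbifold case, $\phi$ fixes $\tau_f$ and is therefore realized by a conformal automorphism of $(\hC,A)$, namely a M\"obius transformation permuting $A$. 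Because $[\phi]$ is a \emph{pure} mapping class this transformation fixes $A$ pointwise, and a hyperbolic orbifold forces $\#A\ge\#P_f\ge 3$ while a non-identity M\"obius transformation fixes at most two points, so $\phi=\one$; hence $Z(\subscript GB_G)$ is trivial, and likewise for $\subscript HC_H$.

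The hard part will be the realization step: everything rests on Algorithm~\ref{algo:thurston} genuinely terminating with a rational model whose coefficients are provably computable algebraic numbers, which itself leans on Thurston's theorem and on the effective algebraic geometry used to certify the limiting marked configuration. Once such a model is in hand, the remaining conjugacy question is a routine --- if slightly delicate --- decidability exercise over $\overline\Q$, and the trivially-centralizer statement is essentially the classical rigidity of post-critically finite rational maps repackaged.
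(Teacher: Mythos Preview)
Your approach matches the paper's one-sentence proof: construct the rational maps via Algorithm~\ref{algo:thurston} and compare coefficients after normalizing three post-critical points; your added detail on the M\"obius-conjugacy search over~$\overline\Q$ and the centralizer argument via the unique fixed point of $\sigma_f$ in $\mathscr T_A$ is the standard rigidity explanation the paper leaves implicit.

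One correction is needed. Your claim that the non-\Tor\ rational maps of non-hyperbolic orbifold are exhausted by power maps and $\pm$Chebyshev maps is false: you are missing the Latt\`es maps with orbifold signature $(3,3,3)$, $(2,4,4)$, $(2,3,6)$, which are quotients of torus endomorphisms by rotations of order $3$, $4$, $6$ rather than by the involution $z\mapsto-z$ (see the paper's discussion of $(2,3,6)$-maps). This does not actually damage your argument, since every such map has $\#P_f=3$ and hence trivial pure mapping class group, so both the conjugacy and centralizer questions are vacuous there; but the list as written must be amended. A related imprecision: ``hyperbolic orbifold forces $\#A\ge\#P_f\ge3$'' conflates the orbifold of $(S^2,A)$ with the minimal one on $P_f$ --- what you need, and what holds, is $\#A\ge3$, which suffices for the M\"obius-fixing-three-points step.
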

\begin{proof}
  The rational maps may be constructed using
  Algorithm~\ref{algo:thurston}, and their coefficients compared once
  three points in each post-critical set are fixed.
\end{proof}

\subsection{The canonical decomposition of Levy free maps}
Recall the canonical decomposition of a Thurston map, introduced
in~\S\ref{ss:canonical decomposition}. We show that Levy-free maps
have quite restricted canonical decompositions:

\begin{lem}
  Let $f$ be a Levy free map. Either small maps in the canonical
  decomposition of $f$ are rational or $f$ is a \Tor\ map with trivial
  canonical obstruction.
\end{lem}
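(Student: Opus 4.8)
The plan is to argue by cases, according to whether $f$ is doubly covered by a torus endomorphism. \emph{If $f$ is doubly covered by a torus endomorphism}, then, being also Levy-free, it lies in \Tor\ by the result of Selinger and Yampolsky quoted after Theorem~\ref{thm:ExpCr}. Maps in \Tor\ are exactly those excluded from Pilgrim's Teichm\"uller construction of the canonical obstruction in~\S\ref{ss:canonical decomposition}; by convention their canonical obstruction is empty and their canonical decomposition trivial, so the second alternative of the lemma holds. This case is mere bookkeeping.

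\emph{If $f$ is not doubly covered by a torus endomorphism}, so that $\Gamma_f$ is defined, then by Theorem~\ref{thm:ExpCr} the Levy-freeness of $f$ means $f$ is isotopic to a B\"ottcher expanding map, and I work with such a representative. Since $f$ admits no Levy cycle, its canonical Levy obstruction $\CC_{f,\text{Levy}}$ is empty; hence the Levy decomposition of $f$ is trivial and $f$ itself is the unique expanding piece occurring in it. By Selinger's characterization recalled in~\S\ref{ss:canonical decomposition}, $\Gamma_f$ is the union of the Levy obstruction and the rational obstruction; the former being empty, $\Gamma_f$ \emph{equals} the rational obstruction, namely the multicurve along which an expanding map of the Levy decomposition must be pinched to yield rational pieces. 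Since the canonical decomposition of $f$ is by definition the decomposition along $\Gamma_f$, its small maps are rational --- the first alternative.

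To make this step self-contained I would check it directly. Let $g$ be a first-return map of the decomposition of $f$ along $\Gamma_f$. As a renormalization of the B\"ottcher expanding map $f$ on a periodic sub-sphere, $g$ is itself B\"ottcher expanding; in particular $\deg g>1$, an invertible Thurston map being non-expanding (equivalently, the biset of a homeomorphism is never contracting, while $B(g)$ is a renormalization of the contracting biset $B(f)$). By the defining maximality of the \emph{canonical} obstruction (Pilgrim), the pieces cut out by $\Gamma_f$ carry no canonical obstruction. If the orbifold of $g$ is hyperbolic, Thurston's Theorem~\ref{thm:thurston} makes $g$ combinatorially equivalent to a rational map; if it is not hyperbolic, then a degree-$>1$ Thurston map with non-hyperbolic orbifold is a power map, a Chebyshev map, or a quotient of an affine endomorphism of a torus --- the last case forcing $g\in\Tor$ because $g$ is expanding --- so again $g$ is rational. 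Hence every small map is rational.

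The genuine content I am borrowing is the pair of facts about Pilgrim's canonical obstruction: that for $f\notin\Tor$ it splits as the union of the Levy and rational obstructions (via Selinger's topological characterization), and that cutting along it leaves pieces with no canonical obstruction. Granting these, the lemma follows by assembling Theorem~\ref{thm:ExpCr} (used only to realize a Levy-free, non-\Tor\ map as a B\"ottcher expanding one, legitimizing the phrase ``expanding maps of the Levy decomposition''), the vanishing of $\CC_{f,\text{Levy}}$ for Levy-free maps, the classification of Thurston maps with non-hyperbolic orbifold, and Thurston's rigidity theorem. I expect the delicate point to be precisely that last import --- that the canonical decomposition of a Levy-free map is itself unobstructed --- since that is exactly what makes Selinger's characterization applicable here.
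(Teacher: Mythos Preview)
The paper states this lemma without proof (it is a survey, with details deferred to the companion article~\cite{bartholdi-dudko:bc5}), so there is no argument to compare against. On its own terms, your proof has a circularity and a genuine gap.

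The circularity: your primary argument invokes ``$\Gamma_f=\text{Levy obstruction}\cup\text{rational obstruction}$'' as though it were established in~\S\ref{ss:canonical decomposition}, but look at the sentence immediately following the lemma in the paper --- that identity is presented there \emph{as a consequence of this lemma}. Unwinding the definition, the rational obstruction of a Levy-free $f$ is by construction the multicurve whose pinching yields rational return maps; asserting that $\Gamma_f$ equals it is precisely the statement you are trying to prove. The gap: your direct check rests on the claim that each return map $g$ is B\"ottcher expanding ``as a renormalization'' of the expanding $f$, and this is neither obvious nor supplied by the paper --- an expanding metric on $(S^2,A)$ does not descend in any evident way to a pinched small sphere, and it is nowhere asserted that return bisets of contracting bisets remain contracting. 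You then lean on this unproven expansion both to get $\deg g>1$ and, in the parabolic-orbifold case, to force a torus-covered piece to be Latt\`es rather than a non-expanding irrational \Tor\ map (which would \emph{not} be rational). One can argue topologically that $g$ is Levy-free and of degree~$>1$ --- a Levy cycle for $g$, or a degree-$1$ periodic small sphere, lifts through the cycle of pieces to a Levy cycle for $f$ --- and together with Pilgrim's property that return maps of the canonical decomposition are unobstructed this handles the hyperbolic-orbifold pieces via Thurston; but excluding non-expanding \Tor\ return maps requires engaging with Selinger's actual characterisation of $\Gamma_f$, which is where the content lies and which you do not touch.
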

As a consequence, the canonical obstruction is the union of the
\emph{canonical Levy obstruction} (the minimal multicurve whose return maps
are Levy-free) and the \emph{rational obstruction}: for an expanding map
not doubly covered by a torus endomorphism, the minimal multicurve whose
return maps are rational (for \Tor\ maps, that multicurve is characterized
in~\cite{selinger:canonical}).

\begin{algo}\label{algo:canonical}
  \textsc{Given} a Levy-free non-\Tor\ sphere biset $\subscript GB_G$,\\
  \textsc{Compute} the canonical obstruction of $B$ \textsc{ as follows:}\\
  \begin{enumerate}
  \item Enumerate in increasing order all multicurves $\CC$ on the
    sphere marked by $G$;
  \item If $\CC$ is not fully $B$-invariant, discard it;
  \item Run Algorithm~\ref{algo:thurston} on the small maps
    $R(B,\CC)$. If all small maps are rational, then return $\CC$,
    otherwise discard it.
  \end{enumerate}
\end{algo}
\noindent Selinger and Yampolsky already showed
in~\cite{selinger-yampolsky:geometrization} that the canonical
obstruction of a Thurston map is computable.
\begin{cor}\label{cor:CanDecExp}
  There is an algorithm that, given a sphere biset $\subscript GB_G$ of type
  $\Exp\setminus\Tor$, computes the canonical decomposition
  $\subscript \gf\gfB_\gf$. All bisets in $R(\gfB)$ are the
  bisets of rational maps.
\end{cor}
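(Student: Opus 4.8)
The plan is to assemble the corollary from results already in place: the equivalence between being of type $\Exp$ and being Levy-free (Main Theorem~\ref{thm:ExpCr}), the computability of the canonical obstruction for Levy-free non-\Tor\ bisets (Algorithm~\ref{algo:canonical}), the computability of the sphere-tree-of-bisets decomposition along an invariant multicurve (Algorithm~\ref{algo:decompose}), and the structural Lemma just proved, that a Levy-free non-\Tor\ biset has rational small maps (return bisets).

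First I would observe that $B$, being of type $\Exp\setminus\Tor$, is the biset of an expanding map which is not doubly covered by a torus endomorphism, hence is Levy-free by Main Theorem~\ref{thm:ExpCr}; together with $B\notin\Tor$ this puts $B$ in the scope of Algorithm~\ref{algo:canonical}, which I run to obtain the canonical obstruction $\CC=\Gamma_B$. (Since the canonical Levy obstruction of a Levy-free biset is empty, $\Gamma_B$ is here just the rational obstruction, cf.~\S\ref{ss:canonical decomposition}.) I then feed $B$ and $\CC$ to Algorithm~\ref{algo:decompose}, which outputs the decomposition $\subscript\gf\gfB_\gf$; this is by construction the canonical decomposition of $B$. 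Finally, since $B\notin\Tor$, the preceding Lemma applies to each return biset and shows that every map in $R(\gfB)=R(B,\Gamma_B)$ is rational, which is the assertion.

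The part I expect to be the real obstacle is the internal termination of Algorithm~\ref{algo:canonical}, which iterates a rationality test on the return bisets $R(B,\CC)$ as $\CC$ ranges over all multicurves in an order refining inclusion. Two things must be checked. The search must not halt prematurely: the canonical obstruction of an expanding non-\Tor\ biset is the \emph{minimal} fully invariant multicurve whose return maps are rational, so the first multicurve met with this property is exactly $\Gamma_B$, and it is reached after finitely many multicurves of bounded complexity. And the rationality test must itself terminate on each return biset: when the return biset has hyperbolic orbifold, Algorithm~\ref{algo:thurston} halts and produces a rational map, because by Thurston's Theorem~\ref{thm:thurston} a return biset that is realized by a rational map admits no annular obstruction, so the Teichm\"uller iteration cannot degenerate; and when the return biset has parabolic orbifold it is necessarily a Latt\`es map, hence still rational, and is recognized separately --- in the $(2,2,2,2)$ case by identifying it as a \Tor\ biset via Proposition~\ref{prop:bis:2222} and Corollary~\ref{cor:tordecidable}. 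Granting these points, Algorithm~\ref{algo:canonical} terminates, its output is $\Gamma_B$, and the corollary follows.
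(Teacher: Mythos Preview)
Your proposal is correct and follows essentially the same route the paper intends: the corollary is meant to be an immediate consequence of the preceding Lemma together with Algorithms~\ref{algo:canonical} and~\ref{algo:decompose}, and you assemble these pieces exactly as expected. Your additional discussion of termination---in particular the minimality of the canonical obstruction and the handling of return bisets with parabolic orbifold, which Algorithm~\ref{algo:thurston} does not cover directly---goes beyond what the paper spells out and is a welcome clarification rather than a deviation.
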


\begin{algo}\label{algo:DecidExp}
  \textsc{Given} two contracting sphere bisets $\subscript {G}B_{G}$ and  $\subscript {H}C_{H}$ that are not \Tor,\\
  \textsc{Decide} whether $B$ and $C$ are conjugate, and \textsc{compute} the centralizer $Z(B)$ \textsc{as
    follows:}\\\upshape
 \begin{enumerate}
 \item If there are more peripheral conjugacy classes in $B$,
   respectively $C$, than in their post-critical set, then these
   conjugacy classes can be expressed into a portrait of bisets, as
   in~\S\ref{ss:portraits}. Conjugacy of portraits of bisets is
   decidable, since $B$ and $C$ are contracting, see
   Proposition~\ref{prop:PortrBis:ContrCase}. We therefore assume that
   $B,C$ are marked by their post-critical conjugacy classes.
 \item Using Corollary~\ref{cor:CanDecExp}, compute the canonical 
   decompositions $\subscript \gf\gfB_\gf$ and
   $\subscript \gfY\gfC_\gfY$ of $B$ and $C$
   respectively.
 \item Let $X$ and $Y$ be the set of distinguished conjugacy classes
   of $\gf$ and $\gfY$ respectively, see~\S\ref{ss:distinguished
     cc}. Enumerate all possible bijections $h\colon X\to Y$.
 \item For every $h\colon X\to Y$ try to do the following
   steps. Return \texttt{fail} if there is no success.
 \item Using Algorithm~\ref{algo:MarkCl:Prom}, try to promote
   $h\colon X\to Y$ into a biprincipal sphere $\gf$-$\gfY$-tree of
   bisets $\gfI$. Discard $h$ if there is no promotion.
 \item Check by Algorithm~\ref{algo:chech:InMCB} whether
   $\gfC^\gfI \in M(\gfB)$; if not, discard
   $h$.
 \item For every $\gfI$ check, using
   Corollary~\ref{cor:conj:RatBis}, whether bisets in
   $R(\gfB)$ and $R(\gfC^\gfI)$ are conjugate;
   if not discard $h$.
 \item For every $\gfI$ check, using Theorem~\ref{thm:extension},
   whether the conjugacies between $R(\gfB)$ and $R(\gfC^\gfI)$
   promote into a conjugacy between $\gfB$ and $\gfC$.
 \item Using Theorem~\ref{thm:extension}, compute the centralizer
   $Z(B)$ of $B$, which is computable because by
   Corollary~\ref{cor:conj:RatBis} all bisets in $R(\gfB)$
   have trivial centralizers.
 \end{enumerate}
\end{algo}

\begin{cor}\label{cor:expdecidable}
  Let $B,C$ be contracting sphere bisets that are not \Tor. Then it is
  decidable whether $B,C$ are conjugate, and the centralizer of $B$ is
  computable.\qed
\end{cor}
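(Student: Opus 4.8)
The statement merely repackages the correctness of Algorithm~\ref{algo:DecidExp}, so the plan is to run through that algorithm and check that each step is effective and that, taken together, they settle conjugacy and produce the centralizer as a genuine computable group.

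First I would dispose of possible extra marked points. If $B$ (resp.\ $C$) carries more peripheral conjugacy classes than its post-critical set, these are recorded, as in~\S\ref{ss:portraits}, by a portrait of bisets over the minimal-orbisphere biset attached to $B$. Since $B$ is contracting, Proposition~\ref{prop:PortrBis:ContrCase} guarantees that there are only finitely many conjugacy classes of such portraits and that, for a fixed parameterizing map, conjugacy of portraits is decidable. Theorem~\ref{thm:RedConjCentrProb} then transfers both the conjugacy question for the pair $(B,C)$ and the computation of $Z(B)$ to the minimal-orbisphere bisets equipped with their portraits; here we use that $B$ is geometric of degree $>1$, which holds because, being contracting and not \Tor, it is the biset of a B\"ottcher expanding map by Theorem~\ref{thm:ExpCr} and contracting bisets are Levy-free, hence of degree $>1$. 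From now on I may assume $B$ and $C$ are marked by their post-critical classes.

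Next I would compute the canonical decompositions $\subscript \gf\gfB_\gf$ and $\subscript \gfY\gfC_\gfY$ using Corollary~\ref{cor:CanDecExp}: since $B,C$ are contracting and not \Tor, every return biset in $R(\gfB)$ and in $R(\gfC)$ is the biset of a rational map with hyperbolic orbifold. Any conjugacy $B\sim C$ must carry the canonical obstruction of $B$ to that of $C$, hence induces a bijection $h$ between the sets $X,Y$ of distinguished conjugacy classes of $\gf$ and $\gfY$; conversely every conjugacy of the decorated trees arises this way. So I would enumerate the finitely many $h\colon X\to Y$, use Algorithm~\ref{algo:MarkCl:Prom} to decide whether $h$ promotes to a biprincipal tree of bisets $\gfI$ conjugating $\gf$ to $\gfY$, and then Algorithm~\ref{algo:chech:InMCB} to decide whether $\gfC^\gfI$ lies in the mapping class biset $M(\gfB)$. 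A surviving candidate reduces the question to comparing $\gfB$ with a concrete $\gf$-tree of bisets whose return bisets are rational.

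Finally I would promote the comparison from the rational pieces to the whole tree. For each candidate, Corollary~\ref{cor:conj:RatBis} decides whether corresponding return bisets in $R(\gfB)$ and $R(\gfC^\gfI)$ are conjugate and, crucially, tells us that every return biset has \emph{trivial} centralizer. Feeding these conjugacies and centralizers into Theorem~\ref{thm:extension}, applied to the exact sequence~\eqref{eq:extension mcb} of the mapping class biset (whose edge part is the Dehn-twist group $\Z^{\#\CC}$ and whose structure is governed by the Thurston matrix), decides whether the piecewise data assemble into a conjugacy of $\gfB$ and $\gfC$ and computes $Z(B)$; since the return bisets have finite centralizers, the last clause of Theorem~\ref{thm:extension} delivers $Z(B)$ as a genuinely computable subgroup and not merely a sub-computable one. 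The main obstacle is precisely this last step --- understanding how a conjugacy chosen independently on the small spheres interacts with twisting along the multicurve, i.e.\ solving the conjugacy and centralizer problems in the extension~\eqref{eq:extension mcb}. That, however, is exactly the content of Theorem~\ref{thm:extension}, which we may invoke, so no genuinely new difficulty arises; everything else is bookkeeping over finitely many combinatorial choices.
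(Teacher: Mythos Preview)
Your proposal is correct and follows exactly the paper's approach: the corollary is marked \qed\ in the paper because it is an immediate consequence of Algorithm~\ref{algo:DecidExp}, and you have simply unpacked that algorithm step by step, correctly invoking Corollary~\ref{cor:CanDecExp}, Corollary~\ref{cor:conj:RatBis}, and Theorem~\ref{thm:extension} (with the key observation that the trivial centralizers of the rational return bisets make $Z(B)$ computable rather than merely sub-computable).
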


\renewcommand\thesection{\arabic{section}}
\section{Decidability of combinatorial equivalence}\label{ss:decidability}
In this brief section we put together results from~\S\S I--V to prove
Theorem~\ref{thm:decidable}: ``it is decidable whether two Thurston
maps are combinatorially equivalent''.

\begin{thm}\label{thm:conj:LevyFree}
  Let $B,C$ be two sphere bisets. Assume that $B$ and $C$ are either
  degree-$1$ or Levy-free.

  Then it is decidable whether $B$ and $C$ are conjugate, and the
  centralizer of $B$ is computable as a finitely generated subgroup of
  a product of pure mapping class groups.
\end{thm}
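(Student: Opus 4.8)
The plan is to classify the two inputs and hand each to machinery already in place. Throughout one uses that conjugate bisets have the same degree, so if $\deg B\neq\deg C$ the answer is ``not conjugate'', and one may assume $\deg B=\deg C=:d$. The proof then splits according to whether $d=1$ (the homeomorphism case) or $d>1$ (the honest branched-covering case), with the Levy-free hypothesis in the latter, by Main Theorem~\ref{thm:ExpCr}, forcing a very restricted shape.

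\medskip
\noindent\textbf{The case $d>1$.} Since $B$ and $C$ are Levy-free, Main Theorem~\ref{thm:ExpCr} together with the Selinger--Yampolsky fact (recalled above) that a Levy-free biset doubly covered by a torus endomorphism lies in $\Tor$ shows that each of $B,C$ is \emph{either} a $\Tor$ biset \emph{or} an orbisphere contracting biset that is not $\Tor$. These two alternatives are effectively recognized: compute the minimal orbisphere quotient and test, via Lemma~\ref{lem:Gr:2222}, whether it is a $(2,2,2,2)$-group $\Z^2\rtimes\{\pm1\}$. If it is, Proposition~\ref{prop:bis:2222} presents the biset as some $B_{M^v}$, which is $\Tor$ because the Levy-free hypothesis forbids eigenvalues $\pm1$ of $M$ (an eigenvalue $\pm1$ produces an invariant curve mapped by degree $1$, i.e.\ a Levy cycle); if it is not, the nucleus of $B$ is finite, certifying that $B$ is contracting. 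If $B$ and $C$ fall in different alternatives they are not conjugate. Otherwise one applies Corollary~\ref{cor:tordecidable} in the $\Tor$ case and Corollary~\ref{cor:expdecidable} in the contracting non-$\Tor$ case; each decides conjugacy and returns the centralizer. In the $\Tor$ case the centralizer is, by Theorems~\ref{thm:MatrConj} and~\ref{thm:RedConjCentrProb} (it is, up to finite index, the centralizer of an integer matrix in $\SL_2(\Z)$), a finitely generated subgroup of a pure mapping class group; in the contracting non-$\Tor$ case Corollary~\ref{cor:CanDecExp} decomposes $B$ along its canonical obstruction into \emph{rational} return bisets, which have trivial centralizers by Corollary~\ref{cor:conj:RatBis}, so that by Theorem~\ref{thm:extension} the centralizer is an extension of a computable group of Dehn twists by a finite group, hence finitely generated and sitting inside the product of pure mapping class groups of the pieces.

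\medskip
\noindent\textbf{The case $d=1$.} Then $B=B(f)$ and $C=B(g)$ for homeomorphisms $f,g$ of $(S^2,A)$, and by Theorem~\ref{thm:kameyama} conjugacy of the bisets is exactly conjugacy of $f$ and $g$ in $\Mod(S^2,A)$ (up to the orientation caveat of Theorem~\ref{thm:dehn-nielsen-baer} when all orders equal $2$). Here I would appeal to the effective Nielsen--Thurston theory: compute the canonical reduction system $\CC_f$ of $f$; decompose $B(f)$ along it as a sphere tree of bisets whose return bisets are those of finite-order or pseudo-Anosov homeomorphisms of the pieces; solve the conjugacy and centralizer problems for these pieces --- finite order by a finite search, pseudo-Anosov via train tracks after Bestvina--Handel~\cite{bestvina-h:tt}, with virtually cyclic centralizer; and reassemble via Theorem~\ref{thm:extension}. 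Since every building-block centralizer is finitely generated (trivial, finite, virtually cyclic, or a smaller mapping class group) and the Dehn-twist part is a computable free abelian group, $Z(f)$ comes out finitely generated, sitting inside $\Mod(S^2,A)$ and, after the decomposition, inside the product $\Mod[v](S^2,A,\CC_f)=\prod_j\Mod(S_j)$ of pure mapping class groups of the pieces (together with the Dehn-twist lattice $\Z^{\CC_f}$).

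\medskip
The delicate point I expect is the bookkeeping of the $d=1$ case: reconciling the classical canonical reduction system with the sphere-tree-of-bisets formalism used above, and verifying that the interaction between the Dehn-twist (``edge'') part --- governed by $T_f-\one$ --- and the piecewise centralizers produces the centralizer in the uniform finitely generated form asserted, rather than merely as the sub-computable group of the general Theorem~\ref{thm:extension}. The cases $d>1$ are, by contrast, essentially black-boxed by Corollaries~\ref{cor:tordecidable} and~\ref{cor:expdecidable}, the only genuine task there being the effective recognition of whether a Levy-free biset is $\Tor$ or contracting.
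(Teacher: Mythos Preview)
Your proposal is correct and follows the same three-way split as the paper: degree~$1$, $\Tor$, and contracting non-$\Tor$, with the last two handled by Corollaries~\ref{cor:tordecidable} and~\ref{cor:expdecidable} exactly as you do.

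The one difference is in the $d=1$ case. The paper does not route through the decomposition machinery and Theorem~\ref{thm:extension} at all; it simply observes that $B\cong B_\phi$, $C\cong B_\psi$ for mapping classes $\phi,\psi$, and then cites Hemion~\cite{hemion:homeos} for decidability of conjugacy in $\Mod(S^2,A)$ and Bestvina--Handel train tracks for computability of centralizers. This sidesteps precisely the bookkeeping worry you flag: Theorem~\ref{thm:extension} as stated only promises a sub-computable centralizer unless the vertex centralizers are finite, and pseudo-Anosov centralizers are virtually cyclic, not finite. Your route can be made to work (centralizers in mapping class groups are classically known to be finitely generated and computable), but the paper's shortcut is cleaner and avoids re-proving a known result through machinery designed for the non-invertible case.
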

\begin{proof}
  It is decidable whether $B$ and $C$ have the same degree, and have
  isomorphic acting groups; if not, they are not conjugate.

  If $B,C$ have degree $1$, then they may be written as
  $B_\phi,B_\psi$ respectively; then $B\sim C$ if and only if $\phi$
  and $\psi$ are conjugate in a pure mapping class group; this is
  decidable by~\cite{hemion:homeos}. The centralizer of a mapping
  class is also computable; for example, train
  tracks~\cite{bestvina-h:tt} can be used to compute the centralizer
  of a pseudo-Anosov mapping class.

  If $B,C$ have degree $>1$ and their orbisphere quotient groups are
  both $(2,2,2,2)$-orbisphere groups, then their conjugacy and
  centralizer problems are solvable by Corollary~\ref{cor:tordecidable}.

  In the remaining case, $B,C$ are contracting by
  Theorem~\ref{thm:ExpCr}, so their conjugacy and centralizer problems
  are solvable by Corollary~\ref{cor:expdecidable}.
\end{proof}
We note that the condition that the bisets be Levy-free is essential:
in general, the centralizer need not be finitely generated, see
Example~\ref{ex:infinitely generated}.

\noindent We are ready to prove Theorem~\ref{thm:decidable}. Its proof
consists of the following
\begin{algo}\label{algo:Decid}
  \textsc{Given} two sphere bisets $\subscript G B_G$ and  $\subscript H C_H$,\\
  \textsc{Decide} whether $B$ and $C$ are conjugate \textsc{as follows:}\\\upshape
  \begin{enumerate}
  \item Using Algorithm~\ref{algo:levy}, compute the Levy
    decompositions $\subscript \gf\gfB_\gf$ and
    $\subscript \gfY\gfC_\gfY$ of $B$ and $C$
    respectively.
  \item Let $X$ and $Y$ be the set of distinguished conjugacy classes
    of $\gf$ and $\gfY$ respectively, see~\S\ref{ss:distinguished
      cc}. Enumerate all possible bijections $h\colon X\to Y$.
  \item For every $h\colon X\to Y$ try to do the following
    steps. Return \texttt{fail} if there is no success.
  \item Using Algorithm~\ref{algo:MarkCl:Prom}, try to promote
    $h\colon X\to Y$ into a biprincipal sphere $\gf$-$\gfY$-tree of
    bisets $\gfI$. Discard $h$ if there is no promotion.
  \item Using Algorithm~\ref{algo:chech:InMCB}, check whether
    $\gfC^\gfI \in M(\gfB)$; if not, discard
    $\gfI$.
  \item For every $\gfI$ check, using
    Theorem~\ref{thm:conj:LevyFree}, whether bisets in $R(\gfB)$
    and $R(\gfC^\gfI)$ are conjugate; if not discard
    $\gfI$.
  \item Using Theorem~\ref{thm:conj:LevyFree}, compute the centralizer
    of bisets in $R(\gfB)$.
  \item For every $\gfI$ check, using Theorem~\ref{thm:extension},
    whether the conjugacies between $R(\gfB)$ and $R(\gfC^\gfI)$
    promote into a conjugacy between $\gfB$ and $\gfC$.
  \item Using Theorem~\ref{thm:extension}, compute the centralizer
    $Z(B)$ of $B$.
  \end{enumerate}
\end{algo}

\section{Algebraic realizations}\label{ss:examples}
The previous sections explain how a Thurston map can canonically, and
computably, be decomposed into pseudo-Anosov and finite-order
homeomorphisms, rational maps of degree $\ge2$ and maps doubly covered
by torus endomorphisms.

Finite-order homeomorphisms are isotopic to M\"obius transformations,
which are rational maps. If $f\colon(S^2,A)\selfmap$ has only
rational maps as pieces, then each of its small spheres may be given a
complex structure so as to make all pieces rational. The structure
encoding $f$ is then a \emph{map of noded spheres}.

A \emph{complex stable curve} is an algebraic variety with the
topology of a noded sphere. It may be given as
$X=(\hC_1,P_1)\sqcup\cdots\sqcup(\hC_n,P_n)/{\sim}$ with $\#P_i\ge3$ for
all $i$ and an equivalence relation $\sim$ on $P_1\sqcup P_2\sqcup \dots \sqcup P_k$ with
classes of size $\le2$, such that the space obtained by replacing each
$\hC\cong S^2$ by a closed ball is contractible. It is thus a
collection of $\hC=\mathbb P^1(\C)$'s glued to each other at single
points, in a tree-like manner. The points in non-trivial
$\sim$-classes are called \emph{nodes}. Listing the non-trivial
equivalence classes as $p_k\sim q_k$ for $i=1,\dots,\ell$, with
$p_k\in P_{i(k)}$ and $q_k\in P_{j(k)}$, one may describe $X$ as an
algebraic singular curve
\[\mathbb X=\{(x_1,\dots,x_n)\in\hC^n\mid (x_{i(k)}-p_k)(x_{j(k)}-q_k)=0\text{
  for }k=1,\dots,\ell\}.
\]

If $\CC$ is a multicurve on a topological sphere $(S^2,A)$, then
shrinking all components of $\CC$ to points gives a noded topological
sphere. Conversely, given a complex stable curve, each node may be
``opened'', namely replaced by a small cylinder, so as to give a
topological sphere and a multicurve.

Let $f\colon(S^2,A)\selfmap$ be a Thurston map, and let $\CC$
be an $f$-invariant multicurve. Let $X$ be the corresponding noded
topological sphere, and let $Y$ be the noded topological sphere
corresponding to $(S^2,f^{-1}(A),f^{-1}(\CC))$. Then $f$ induces a
branched covering still written $f\colon Y\to X$, while the inclusion
$f^{-1}(A)\supseteq A$ induces a continuous map $i\colon Y\to X$. We
thus have a topological self-correspondence
$f,i\colon Y\rightrightarrows X$. Topologically, $i$ is a
\emph{blow-down}: it shrinks some spheres to points, and erases some
marked points.

Note that the correspondence $f,i\colon Y\rightrightarrows X$ does not
quite determine $f\colon(S^2,A)\selfmap$: once nodes of $X$
are opened to a multicurve $\CC$ on $(S^2,A)$, the map $f$ is defined
on $(S^2,A)$ only up to an integer number of full twists along $\CC$.

If $X,Y$ may be given structures of complex stable curves
$\mathbb X,\mathbb Y$ so that $f,i$ become rational maps
$\mathbb Y\to\mathbb X$, we say $f$ is \emph{realized} by
$f,i\colon\mathbb Y\to\mathbb X$.

A periodic cycle of curves is called an \emph{annular obstruction} if
the spectral radius of its Thurston matrix~\eqref{eq:thurston matrix}
is $\ge1$; and a Thurston map is called \emph{obstructed} if it admits
an annular obstruction.  We show that many examples of Thurston maps,
even if they are obstructed and therefore not combinatorially
equivalent to a rational map (see Theorem~\ref{thm:thurston}), may be
described algebraically, using the following result:
\begin{mainthm}\label{thm:G}
  Let $f\colon(S^2,A)\selfmap$ be a Thurston map. Then $f$ may be
  realized on a complex stable curve by pinching along a multicurve
  generated by annular obstructions if and only if all pieces of
  $f$'s canonical decomposition are rational, non-irrational \Tor\
  maps and homeomorphisms not containing pseudo-Anosov maps.

  Furthermore, it is computable whether all pieces of $f$ are this
  form, and in that case what the correspondence of complex stable
  curves is.
\end{mainthm}
In case all return maps are rational with hyperbolic orbifold, it
follows from~\cite{selinger:augts}*{Theorem~10.4} that such a
realization exists, given by a fixed point in augmented Teichm\"uller
space.

\begin{proof}[Proof of Theorem~\ref{thm:G}]
  If there are irrational maps doubly covered by torus endomorphisms,
  or homeomorphisms containing pseudo-Anosov maps, in the canonical
  decomposition, then these maps appear in every decomposition along
  an annular obstruction. They preserve transverse measured foliations
  with different stretch factors; this prevents them from being
  rational. Furthermore, since the eigenvalues are irrational, these
  pieces cannot be further decomposed.

  On the other hand, given a decomposition in which the return maps
  are rational, non-irrational \Tor\ maps and homeomorphisms not
  containing pseudo-Anosov maps, it is straightforward to assemble
  them together in a complex stable curve. Maps doubly covered by
  torus endomorphisms with integer eigenvalues decompose into
  Chebyshev polynomials, see~\S\ref{ex:endomn}. Homeomorphisms with no
  pseudo-Anosov piece decompose along an annular obstruction into
  finite-order homeomorphisms, which are isotopic to M\"obius
  transformations.
\end{proof}

We conclude this section, and this article, with some fundamental
examples illustrating decompositions, and of possible realizations on
complex stable curves:
\begin{description}
\item[\S\ref{ex:z2pi}] various maps obtained from composing the
  polynomial $z^2+i$ with Dehn twists;
\item[\S\ref{ex:z2m1}] the mating of $z^2-1$ with itself: it is the
  simplest example of an obstructed map;
\item[\S\ref{ex:dh}] Douady-Hubbard's mating of the quadratic
  lamination $5/12$ with itself. This map has intersecting annular
  obstructions, and we show which maps are obtained by cutting along
  the different multicurves;
\item[\S\ref{ex:endomn}] maps doubly covered by diagonalizable torus
  endomorphisms;
\item[\S\ref{ex:pilgrim}] A degree-$5$ rational map considered by
  Pilgrim in~\cite{pilgrim:combinations}*{\S1.3.4}. It is obtained by
  blowing up an arc on a torus endomorphism. (We solve the hitherto-open
  problem of determining whether this map is obstructed; it is so);
\item[\S\ref{ex:tls}] a degree-$3$ mating studied by Tan Lei and
  Shishikura~\cite{shishikura-t:matings}, which is obstructed but does
  not admit Levy cycles. (We describe the mating both as a biset and
  as an algebraic self-map of a doubly-noded sphere, and
  answer~\cite{cheritat:shishikura}*{Question~2} by Ch\'eritat);
\item[\S\ref{ex:infinitely generated}] a degree-$4$ Thurston map whose
  centralizer is infinitely generated.
\end{description}

\subsection{\myboldmath $(2,3,6)$-maps}
We give a brief description of rational maps multiply covered by a torus
endomorphism in terms of their orbispace. Orbispaces of Euler
characteristic $0$ have marked points of respective orders
$(2,2,2,2)$, $(3,3,3)$, $(2,4,4)$ or $(2,3,6)$. They may be treated
uniformly as follows: let $\zeta$ be a $k$th root of unity, for
$k\in\{2,3,4,6\}$. If $k\ge3$, let $\Lambda\coloneqq\Z[\zeta]$ be the
lattice spanned by $\zeta$ in $\C$; while if $k=2$ choose at will
$\Lambda=\Z[i]$ or $\Lambda=\Z[(-1+\sqrt{-3})/2]$. Set then
\[G\coloneqq\Lambda\rtimes\langle\zeta\rangle.\] The orbispace is
$\C/G$, with marked points of orders $(2,2,2,2)$, $(3,3,3)$,
$(2,4,4)$, $(2,3,6)$ for $k=2,3,4,6$ respectively. Graphically,
\begin{center}
  \begin{tikzpicture}
    \def\drawedge#1{\draw[decoration={markings, mark=at position 0.5 with {\arrow{#1}}},postaction={decorate}]}
    \begin{scope}[xshift=0cm,yshift=0cm,scale=1.5]
      \node at (-0.5,0) {$k=2$};
      \drawedge{>} (-1,-0.5) node [below left] {\small $2$} -- +(1,0) node [below] {\small $2$};;
      \drawedge{>} (1,-0.5) -- +(-1,0);
      \drawedge{>>} (-1,0.5) node [above left] {\small $2$} -- +(1,0) node [above] {\small $2$};
      \drawedge{>>} (1,0.5) -- +(-1,0);
      \drawedge{>>>} (-1,-0.5) -- +(0,1);
      \drawedge{>>>} (1,-0.5) -- +(0,1);
      \draw[dotted] (0,-0.5) -- +(0,1);
    \end{scope}
    \begin{scope}[xshift=3.5cm,yshift=-1.8cm,scale=1.5]
      \node at (0,0.666) {$k=2$};
      \drawedge{>} (0,0) node [below] {\small $2$} -- +(-1,0) node [below] {\small $2$};
      \drawedge{>} (0,0) -- +(1,0);
      \drawedge{>>} (-1,0) ++(60:1) node [left] {\small $2$} -- +(240:1);
      \drawedge{>>} (-1,0) ++(60:1) -- +(60:1);
      \drawedge{>>>} (1,0) ++(120:1) node [right] {\small $2$} -- +(300:1);
      \drawedge{>>>} (1,0) ++(120:1) -- +(120:1);
      \draw[dotted] (-1,0) ++(60:1) -- ++(1,0) -- ++(240:1) -- cycle;
    \end{scope}
    \begin{scope}[xshift=-1cm,yshift=-3cm,scale=1.5]
      \node at (0.35,0.8) {$k=4$};
      \drawedge{>} (0,0) node [below left] {\small $4$} -- +(1,0) node [below right] {\small $2$};;
      \drawedge{>} (0,0) -- +(0,1);
      \drawedge{>>} (1,0) -- +(0,1) node [above] {\small $4$};
      \drawedge{>>} (0,1) -- +(1,0);
      \draw[dotted] (0,0) -- +(1,1);
    \end{scope}
    \begin{scope}[xshift=6cm,yshift=-3cm,scale=1.5]
      \node at (-0.1,0.2) {$k=6$};
      \drawedge{>} (0,0) node [below] {\small $2$} -- +(-1,0) node [below] {\small $6$};
      \drawedge{>} (0,0) -- +(1,0);
      \drawedge{>>} (0,0.666) node [above] {\small $3$} -- (-1,0);
      \drawedge{>>} (0,0.666) -- (1,0);
      \draw[dotted] (0,0) -- (0,0.666);
    \end{scope}
    \begin{scope}[xshift=7cm,yshift=-0.4cm,scale=1.5]
      \node at (-0.1,0.2) {$k=3$};
      \drawedge{>} (0,-0.666) node [below] {\small $3$} -- (-1,0) node [left] {\small $3$};
      \drawedge{>} (0,-0.666) -- (1,0);
      \drawedge{>>} (0,0.666) node [above] {\small $3$} -- (-1,0);
      \drawedge{>>} (0,0.666) -- (1,0);
      \draw[dotted] (0,-0.666) -- (0,0.666);
    \end{scope}
  \end{tikzpicture}
\end{center}

Endomorphisms of the orbisphere $\C/G$ are quotients $\overline f$ of
maps $f\colon\C\selfmap$ with $f\circ G\subseteq G\circ f$. It
immediately follows that $f$ is isotopic to an affine map, say
$f(z)=a z+b$ with $a,b\in\C$. Then
$f\circ(z\mapsto z+1)=(z\mapsto z+a)\circ f$, so $a\in\Lambda$; and
$f\circ(z\mapsto\zeta z)=(z\mapsto\zeta z+(1-\zeta)b)\circ f$, so
$b\in(1-\zeta)^{-1}\Lambda$. Since $z\mapsto z+1$ and
$z\mapsto\zeta z$ generate $G$, these conditions are also
sufficient. Therefore, every rational map on $\C/G$ is determined by
parameters $a\in\Lambda$ and $b\in\frac1{1-\zeta}\Lambda/\Lambda$.

The biset of $\overline f$ is then easy to compute: since $f$ is
invertible, there exists, for all $g\in G$, a unique element, written
$g^f\in G$, such that $g f=f g^f$. We obtain:
\begin{prop}
  The biset $B(\overline f)$ of $f(z)=a z+b$ is $G$ as a right $G$-set,
  with left action given by $g\cdot b=g^f b$.
\end{prop}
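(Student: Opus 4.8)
The plan is to unwind the definition of the biset of a continuous map recalled in~\S\ref{ss:bisets} and reduce everything to path lifting in the universal cover $\C\to\C/G$. By that definition, $B(\overline f)$ is the set of homotopy classes of paths in $\C/G$ from $\overline f(*)$ to the basepoint $*$, with right $G$-action by post-catenation and left $G$-action by pre-catenation with the $\overline f$-image of a loop at $*$. Since $\C/G$ is path-connected, fixing one such path $\ell$ from $\overline f(*)$ to $*$ identifies $B(\overline f)$, qua right $G$-set, with $G$ acting on itself by right multiplication; it then remains only to read off the left action, which is governed by the homomorphism $\overline f_*$ on fundamental groups.

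First I would record the group theory. Lift $*$ to $\tilde *\in\C$, identifying $\pi_1(\C/G,*)$ with the deck group $G$. Because $f(z)=a z+b$ is a homeomorphism of $\C$ and $fG\subseteq Gf$ (the condition that makes $\overline f$ an orbispace endomorphism), conjugation by $f$ carries $G$ into $G$, so the prescription $gf=fg^f$ defines an injective endomorphism $\psi\colon G\selfmap$, $g\mapsto g^f$; on the translation sublattice $\Lambda\le G$ it is multiplication by $a$. Equivalently, $gf=fg^f$ rewrites as the $\psi$-equivariance $f(g\cdot z)=g^f\cdot f(z)$ for every deck transformation $g\in G$ and every $z\in\C$.

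Next I would compute $\overline f_*$. Given $h\in G$, pick a loop $\delta_h$ at $*$ whose lift from $\tilde *$ ends at $h\cdot\tilde *$; then $\overline f\circ\delta_h$ lifts through $f$ to $f\circ\tilde\delta_h$, which runs from $f(\tilde *)$ to $f(h\cdot\tilde *)=h^f\cdot f(\tilde *)$ by equivariance. Transporting endpoints along the chosen $\ell$ (whose lift from $\tilde *$ ends at $f(\tilde *)$), one finds that the loop $\ell^{-1}\#(\overline f\circ\delta_h)\#\ell$ lifts to a path from $\tilde *$ to $h^f\cdot\tilde *$, i.e. represents $h^f$. Feeding this into the left action, $h\cdot(\ell\# g)=(\overline f\circ\delta_h)\#\ell\# g=\ell\#(h^f g)$, so under the identification $B(\overline f)\cong G$ the left action of $h$ is $g\mapsto h^f g$ while the right action is right multiplication. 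This is exactly $G$ as a right $G$-set with left action $g\cdot b=g^f b$; equivalently $B(\overline f)=B_\psi$ in the notation of~\S\ref{ss:bisets}, which proves the proposition.

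I do not anticipate a genuine difficulty: the whole content is the equivariance identity $f(g\cdot z)=g^f\cdot f(z)$ together with the standard fact that a map of aspherical orbispaces realises its $\pi_1$-homomorphism on bisets. The only points demanding care are the bookkeeping forced by the paper's left-to-right composition convention — it is precisely this convention that makes $g^f$ defined by $gf=fg^f$ land in $G$ rather than in the larger group $f^{-1}Gf$ — and the observation that contractibility of $\C$ makes the auxiliary path $\ell$ unique up to homotopy, so that the identification $B(\overline f)\cong G$ is canonical. As a consistency check, in the $(2,2,2,2)$-case this recovers the crossed-product description $B=B_{M^v}$ of Proposition~\ref{prop:bis:2222}, with $M$ the integer matrix of multiplication by $a$ on $\Lambda=\Z^2$ and $v$ the translation part supplied by $b$.
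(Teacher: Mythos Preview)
Your proof is correct and supplies exactly the details the paper omits: the paper states the proposition without proof, preceded only by the one-line observation that invertibility of $f$ makes $g^f$ well-defined. Your path-lifting argument in the universal cover is the natural way to unpack that sentence.

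Two small wording points worth tightening. First, your parenthetical ``(whose lift from $\tilde *$ ends at $f(\tilde *)$)'' has the direction reversed: $\ell$ runs from $\overline f(*)$ to $*$, so what you want is to \emph{choose} $\ell$ as the projection of a path in $\C$ from $f(\tilde *)$ to $\tilde *$; equivalently, the lift of $\ell^{-1}$ starting at $\tilde *$ ends at $f(\tilde *)$. Second, the remark that ``contractibility of $\C$ makes the auxiliary path $\ell$ unique up to homotopy'' is not literally true, since $\ell$ lives in $\C/G$; what contractibility of $\C$ buys you is a canonical \emph{choice} of $\ell$ once $\tilde *$ is fixed, and in any case different choices only change the identification $B(\overline f)\cong G$ by an inner automorphism. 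Neither point affects the validity of the argument.
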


Note also $g^f\in\Lambda$ for all $g\in\Lambda$, so
$\Lambda\subseteq B(\overline f)$ is a $\Lambda$-subbiset. Its
structure depends only on the linear part $a$ of $f$. We may then
write $B(\overline f)$ as a crossed product biset
$\Lambda\rtimes\langle\zeta\rangle$ of this $\Lambda$-subbiset with
the group $\langle\zeta\rangle$ acting by automorphisms on
$\Lambda$. This extends the discussion in~\S\ref{ss:lattes}, and in
particular~\eqref{eq:InjEndOfK}.

\subsection{A Dehn twist}\label{ss:ex:DehnTwist}
Before studying the twisted cousins of the polynomial $z^2+i$
in~\S\ref{ex:z2pi}, we consider the simple example of a Dehn twist on
a sphere with four punctures. Consider a basepoint $*$, and arrange
the punctures in counterclockwise order around $*$. Let $a,b,c,d$
denote the ``lollipop'' generators about the punctures, giving rise to
the sphere group
\[G=\langle a,b,c,d\mid d c b a\rangle.
\]
Set $r=c b$, and let $T$ denote the Dehn twist about the curve
$r^G$. Note that $T$ fixes $*$.  The action of $T$ on $G$ is given by
\[a\mapsto a,\quad b\mapsto b^r,\quad c\mapsto c^r,\quad d\mapsto d,\]
so the biset of $T$ has in basis $\{*\}$ the wreath recursion
\[a=\pair{a},\quad b=\pair{b^r},\quad c=\pair{c^r},\quad d=\pair{d}.\]

We have $G=G_1*_\Z G_2$, with $G_1=\langle a,r,d\mid d r a\rangle$ and
$G_2=\langle b,c,r^{-1}\mid b c r^{-1}\rangle$, glued along
$\langle r\rangle\cong\Z$. The tree of groups decomposition of $G$
therefore consists of a single segment. The decomposition $\mathfrak T$ of $B(T)$ as
a tree of bisets is also a single segment, with $\rho=\lambda=\one$
and vertex and edge bisets $B_1,B_2,E_1$ above $G_1,G_2,\langle r\rangle$
respectively:
\[\begin{tikzpicture}[baseline=0.75cm]
    \node[inner sep=0pt] (B1) at (0,1.5) [label={left:$B_1$}] {$\bullet$};
    \node[inner sep=0pt] (B2) at (4,1.5) [label={right:$B_2$}] {$\bullet$};
    \draw[thick]  (B1.center) -- node[below] {$E_1$} (B2.center);

    \node[inner sep=0pt] (G1) at (0,0) [label={left:$G_1$}] {$\bullet$};
    \node[inner sep=0pt] (G2) at (4,0) [label={right:$G_2$.}] {$\bullet$};
    \draw[thick] (G1.center) -- node[above] {$\langle r\rangle$} (G2.center);

    \draw (B1) edge[rho] node[right] {$\rho$} (G1);
    \draw (B1) edge[lambda,bend right] node[left] {$\lambda$} (G1);
    \draw (B2) edge[rho] node[right,pos=0.75] {$\rho$} (G2);
    \draw (B2) edge[lambda,bend right] node[left] {$\lambda$} (G2);
  \end{tikzpicture}
\]

These bisets are obtained as subbisets of $B(T)$ by restricting the
wreath recursion of $B(T)$ to the subgroups $G_1$, $G_2$ and
$\langle r\rangle$ respectively. If we use for them the same basis
$\{*\}$, they are given as follows:
\begin{itemize}
\item the $G_1$-$G_1$-biset $B_1$ is the biset of the identity
  \[a=\pair{a},\quad r=\pair{r},\quad d=\pair{d},
  \]
  because the basepoint $*$ belongs to the sphere marked by $a,d,r$;
\item the $G_2$-$G_2$-biset $B_2$ has wreath recursion
  \[b=\pair{b^r},\quad c=\pair{c^r},\quad r=\pair r;\]
\item the biset $E_1$ is the identity $\Z$-$\Z$-biset. It embeds
  naturally in its respective source and target vertex bisets under
  $g\cdot*\mapsto g\cdot *$ for all $g\in\langle r\rangle$.
\end{itemize}
Note that, if one changes $B_2$'s basis to $\{*'\coloneqq r\cdot *\}$,
one obtains for $B_2$ the wreath recursion of the identity map; but
then the embeddings of the edge $\Z$-$\Z$-biset $E_1$ in $B_1$ and
$B_2$ change: the basis element $*$ of $E_1$ maps to the basis element
$*$ of $B_1$, but to $r^{-1}\times$ the basis element $*'$ of $B_2$.

Note also that the biset of $T$ is biprincipal and that $\mathfrak T$
is a biprincipal tree of bisets. The biset of $T^n$ is
$B(T)^{\otimes n}$, and its decomposition $\mathfrak T^{\otimes n}$
also consists of a single edge, with identity bisets at vertices and
the edge, and embeddings $b\mapsto b$ and $b\mapsto r^{-n}\cdot b$ of
the edge biset into the vertex bisets $B_1$ and $B_2$ respectively.

This example illustrates the importance of the edge biset
embeddings. For all $\mathfrak T^{\otimes n}$ the trees of groups are
the same, the trees of bisets have the same underlying tree and the
same vertex and edge bisets; the algebraic realizations are the same
(a noded sphere with identity self-map), but the embeddings of the edge
biset in the vertex bisets depend on the twist parameter $n$.

\subsection{\myboldmath All the twisted cousins of $z^2+i$}\label{ex:z2pi}
The polynomial $z^2+i$ has the following critical graph:
$0\Rightarrow i\to i-1\leftrightarrow -i$. There exist maps with the
same post-critical graph as $z^2+i$ that are obstructed, and one such
map, $f$ may be constructed as
follows~\cite{bartholdi-n:thurston}*{\S6.1}:
\begin{center}
  \begin{tikzpicture}
  \begin{scope}[xshift=-5cm]
    \node (t0) at (-10:2) {$\circ$};
	\node (t1) at (170:2) {$\circ$};
	\node[black!30] (b0) at (-0.9,0.6) {$\bullet$};
	\node (c0) at (0.9,0.6) {$\bullet$};
	\node (b1) at (0.9,-0.6) {$\bullet$};
	\node (c1) at (-0.9,-0.6) {$\bullet$};
	\node (z) at (0,0) {$\bullet$};
	\node[anchor=west] at ($(0.1,-0.05)$) {\scriptsize $f^{-1}(x)$};
	\node[black!30,anchor=west] at ($(b0)+(0.0,-0.1)$) {\scriptsize $f^{-1}(z)$};
	\node[anchor=east] at ($(c0)+(-0.05,0.1)$) {\scriptsize $f^{-1}(y)$};	
	\node[anchor=east] at ($(b1)+(-0.05,0.1)$) {\scriptsize $f^{-1}(z)$};
	\node[anchor=west] at ($(c1)+(0.0,-0.1)$) {\scriptsize $f^{-1}(y)$};
	\draw (0:2) .. controls +(90:1) and ($(180:2)+(270:1)$) .. (180:2) node [pos=0.7,above] {$a$};
	\draw[black!30] (t0) arc [start angle=-10,end angle=110,radius=2] .. controls +(200:1) and (b0) .. ($(b0)+(130:0.15)$) arc [start angle=130,end angle=490,radius=0.15] \fwdarrowonline{0.5} node [above=3pt] {$c$};
	\draw (t0) arc [start angle=-10,end angle=20,radius=2] (20:2) .. controls +(110:1) and (c0) .. ($(c0)+(60:0.15)$) arc [start angle=60,end angle=420,radius=0.15] \fwdarrowonline{0.5} node [above=1pt] {$b$};
	\draw (200:2) .. controls +(290:1) and (c1) .. ($(c1)+(240:0.15)$) arc [start angle=-120,end angle=240,radius=0.15] \bckarrowonline{0.5} node [below=1pt] {$b$};
    \draw (t1) arc [start angle=170,end angle=290,radius=2] .. controls +(20:1) and (b1) .. ($(b1)+(-50:0.15)$) arc [start angle=-50,end angle=310,radius=0.15] \bckarrowonline{0.5} node [below=3pt] {$c$};
	\draw[dotted,thick] (0,-0.6) ellipse (15mm and 3.2mm);
	\draw[dotted,thick] (0,0.6) ellipse (15mm and 3.2mm);
	\node[inner sep=-1pt,fill=white] at (t0) {$\circ$};
	\node[inner sep=-1pt,fill=white] at (t1) {$\circ$};
	\draw[fill=white] (z) circle (1.5mm) node[black!30] {$\bullet$};
	\end{scope}
	
	\begin{scope}
    \node (t) at (-10:2) {$\circ$};
	\node (a) at (0.9,0.6) {$\bullet$};
	\node (b) at (0.9,-0.6) {$\bullet$};
	\node (c) at (-0.9,-0.6) {$\bullet$};
	\node at ($(a)+(-0.35,0)$) {$x$};
	\node at ($(b)+(0,0.4)$) {$y$};
	\node at ($(c)+(-0.35,0)$) {$z$};	
	\draw (40:2) .. controls +(130:1) and (a) .. (a.north) arc [start angle=90,end angle=450,radius=0.2] \fwdarrowonline{0.5} node [above right=-1pt] {$a$};
	\draw (190:2) .. controls +(280:1) and (-1.5,0.6) .. ($(b)+(160:0.2)$) arc [start angle=-200,end angle=160,radius=0.2] \fwdarrowonline{0.5} node [below left=-1pt] {$b$};
    \draw (t) arc [start angle=-10,end angle=290,radius=2] .. controls +(20:1) and (c) .. ($(c)+(-20:0.2)$) arc [start angle=-20,end angle=340,radius=0.2] \bckarrowonline{0.5} node [below=1pt] {$c$};
	\draw[dotted,thick] (0,-0.6) ellipse (15mm and 8mm);
    \node at (1.4,-1.3) {$\Gamma$};
	\node[inner sep=-1pt,fill=white] at (t) {$\circ$};
	\end{scope}
	\draw[->,thick] (-4,2) .. controls (-3,2.5) and (-2,2.5) .. (-1,2) node [pos=0.5,below] {$f$};
\end{tikzpicture}
\end{center}
The post-critical set of $f$ is $P_f\coloneqq\{\infty,x,y,z\}$ with
post-critical graph $\cdot\Rightarrow x\to y\leftrightarrow z$. Let
$\Gamma$ denote the simple closed curve $\Gamma$ encircling $y$ and
$z$.

We first describe the biset $B(f)$. For this, we choose a basepoint
close to $\infty$ on the positive real axis (indicated by a white dot on
the figure above), and we consider the simple (``lollipop'') loops
$a,b,c,d$ around $x,y,z,\infty$ respectively. This gives the sphere
group (see~\S\ref{ss:spheregroups})
\[G=\langle a,b,c,d\mid d c b a\rangle.\] The wreath recursion
(see~\S\ref{ss:coverings}) of $B(f)$ may be computed as follows. By
our choice of basepoint $*$, one preimage $*_1$ is close to $+\infty$
and the other one $*_2$ is close to $-\infty$. We choose as basis of
$B(f)$ the set $Q=\{\ell_1,\ell_2\}$ with $\ell_1$ a very short path
from $*_1$ to $*$ and $\ell_2$ a half-turn in the upper half-plane
from $*_2$ to $*$. Then, tracing $f$-lifts of the lollipop generators,
we obtain a presentation of $B(f)$ as
\[a=\pair{a^{-1},a}(1,2),\quad b=\pair{a,c},\quad c=\pair{1,c b c^{-1}},\quad d=\pair{d,1}(1,2).
\]

One checks easily that $r=c b$, representing the conjugacy class
$\Gamma$, is an annular obstruction, and furthermore is a Levy cycle:
indeed $r=\pair{a,r}$, so its transition matrix is $(1)$.

Let $T$ denote the Dehn twist about $\Gamma$ as
in~\ref{ss:ex:DehnTwist}. We also consider all the maps
$f_n=T^n\circ f$; they are also obstructed, and are all
combinatorially inequivalent
(see~\cite{pilgrim:combinations}*{Theorem~8.2}
or~\cite{bartholdi-n:thurston}*{Proposition~6.10}). The action of $T$
on $G$ is given by
\[a\mapsto a,\quad b\mapsto b^r,\quad c\mapsto c^r,\quad d\mapsto d\]
so the biset of $f_n$ has wreath recursion
\begin{equation}\label{eq:ftwistn}
  a=\pair{a^{-1},a}(1,2),\quad b=\pair{a,c^{r^n}},\quad c=\pair{1,b^{r^{n-1}}},\quad d=\pair{d,1}(1,2).
\end{equation}

Another construction of $f_n$ may be given as follows. Start with the
biset $B(z^2+i)$; it can be computed by drawing paths in
$\C\setminus\{i,i-1,-i\}$ and lifting them by $\sqrt{z-i}$, but can
also be obtained by Algorithm~\ref{algo:angle2biset} starting from the
external angle $1/6$ of the map $z^2+i$. The wreath recursion of
$B(z^2+i)$ is
\[a=\pair{dc,b a}(1,2),\quad b=\pair{a,c},\quad c=\pair{b,1},\quad
d=\pair{d,1}(1,2).\]
Consider next the Dehn twist $U$ about the simple closed curve
encircling $i$ and $i-1$; its action on $G$ is
\[a\mapsto a^{b a},\quad b\mapsto b^a,\quad c\mapsto c,\quad d\mapsto d.
\]
Then $f_1$ is combinatorially equivalent (see~\S\ref{ss:spheregroups}) to
$(z^2+i)\circ U^{-1}$, so $B(f_1)\cong B(U)^\vee\otimes B(z^2+i)$, and
indeed the wreath recursion of $B(U)^\vee\otimes B(z^2+i)$ is
\[a=\pair{c b,ad}(1,2),\quad b=\pair{a^d,c^b},\quad c=\pair{1,b},\quad
d=\pair{1,d}(1,2),\]
which in basis $\{\ell_2,d^{-1}\ell_1\}$ coincides
with~\eqref{eq:ftwistn} for $n=1$.

We have $G=G_1*_\Z G_2$, with $G_1=\langle a,r,d\mid d r a\rangle$ and
$G_2=\langle b,c,r^{-1}\mid b c r^{-1}\rangle$, glued along $\langle
r\rangle\cong\Z$. The tree of groups therefore consists of a single
segment.

Since no small sphere in $(S^2,f_n^{-1}(P_f),f_n^{-1}(\Gamma))$ maps
to an annulus, we do not need to subdivide the tree of groups
barycentrically. Thus the decomposition $\gfB(f_n)$ of $B(f_n)$ as a
tree of bisets has three vertex bisets, corresponding to the three
components $B_1,B_2,B_3$ of $S^2\setminus f^{-1}(\Gamma)$. They are
arranged as follows:
\begin{equation}\label{eq:z2pi:gog}
  \begin{tikzpicture}[baseline=0.75cm]
    \node[inner sep=0pt] (B3) at (4,1.9) [label={right:$B_3$}] {$\bullet$};
    \node[inner sep=0pt] (B1) at (0,1.5) [label={left:$B_1$}] {$\bullet$};
    \node[inner sep=0pt] (B2) at (4,1.1) [label={right:$B_2$}] {$\bullet$};
    \draw[thick] (B3.center) -- node[above] {$E_3$} (B1.center) -- node[below] {$E_2$} (B2.center);

    \node[inner sep=0pt] (G1) at (0,0) [label={left:$G_1$}] {$\bullet$};
    \node[inner sep=0pt] (G2) at (4,0) [label={right:$G_2$.}] {$\bullet$};
    \draw[thick] (G1.center) -- node[above] {$\langle r\rangle$} (G2.center);

    \draw (B1) edge[rho] node[right] {$\rho$} (G1);
    \draw (B1) edge[lambda,bend right] node[left] {$\lambda$} (G1);
    \draw (B3) edge[rho] node[right,pos=0.75] {$\rho$} (G2);
    \draw (B2) edge[lambda,bend right] node[left] {$\lambda$} (G2);
    \draw (B3) edge[lambda,bend right=13] node[above,pos=0.75] {$\lambda$} (G1);
  \end{tikzpicture}
\end{equation}
By convention, the covering map $\rho$ is given by vertical projection
and drawn in plain lines, while the map $\lambda$ is drawn in squiggly
lines; it sends $B_1$, $E_3$ and $B_3$ to $G_1$ while sending $E_2$ to
$\langle r\rangle$ and $B_2$ to $G_2$.

The vertex bisets are obtained as subbisets of $B(f_n)$ by
restricting the wreath recursion of $B(f_n)$ to the subgroups $G_1$
and $G_2$, using subsets of the basis $Q$, and are given as follows:
\begin{itemize}
\item the $G_1$-$G_1$-biset $B_1$ has in the basis $Q$ the wreath
  recursion
  \[a=\pair{a^{-1},a}(1,2),\quad r=\pair{a,r},\quad d=\pair{d,1}(1,2),\]
  and is isomorphic to the biset of the rational map $z^2-2$;
\item the $G_2$-$G_2$-biset $B_2$ has in the basis $\{\ell_2\}$ the
  wreath recursion
  \[b=\pair{c^{r^n}},\quad c=\pair{b^{r^{n-1}}},\quad r=\pair r,\] and
  is isomorphic to the biset of the rational map $z^{-1}$ marked at
  $\{0,1,\infty\}$;
\item the $G_1$-$G_2$-biset $B_3$ has in the basis $\{\ell_1\}$ the
  wreath recursion
  \[b=\pair a,\quad c=\pair1,\quad r=\pair a;\]
\item the bisets $E_2$ is the identity $\Z$-$\Z$-biset in the basis
  $\{\ell_1\}$, and the biset $E_3$ is the $G_1$-$\Z$-biset given in the
  basis $\{\ell_2\}$ by the wreath recursion $r=\pair a$; these edge
  bisets embed naturally in their respective source and target vertex
  bisets.
\end{itemize}
Recalling from~\S\ref{ss:ex:DehnTwist} the notation $\mathfrak T$ for
the biprincipal biset of $T$, we have
$\gfB(f_n)\cong\mathfrak T^{\otimes n}\otimes\gfB(f_0)$. Note that, if
one changes $B_2$'s basis to $\{\ell'_2\coloneqq r^n\cdot \ell_2\}$,
one obtains a simpler wreath recursion
\[b=\pair c ,\quad c=\pair{c b c^{-1}},\quad r=\pair r
\]
that does not depend on $n$, but then one has to specify the
embeddings of the edge $\Z$-$\Z$-biset $E_2$ in $B_1$ and $B_2$
respectively: the basis element $ \ell_2$ of $E_2$ maps to the basis
element $\ell_2$ in $B_1$, but to $r^{-n}\times$ the basis element
$\ell'_2$ in $B_2$.

The maps $f_n$ all admit the same algebraic realization on a singly
noded complex stable curve, as
$f,i\colon\mathbb Y\rightrightarrows \mathbb X$. By convention, we
identify the post-critical points with the elementary loops in $G$
representing them, so that the post-critical set is now $\{a,b,c,d\}$.
We only give the map $f$, in red; we chose the co\"ordinates on the
spheres in $\mathbb Y$ so that $i$ is either the identity map or the
constant map on each component, so that it suffices to label, on
$\mathbb Y$, the $i$-preimages of the post-critical set. The mapping
$i$ on spheres is, anyways, the same as the map $\lambda$
in~\eqref{eq:z2pi:gog}. We indicate inside the spheres the
co\"ordinates that we chose so as to make the maps rational; they are
usually unimportant in those spheres of $\mathbb Y$ which get blown
down, and which are drawn shaded. We also attempt to give the
correct geometry to the spheres by indicating the angles at the
post-critical points and their $f$-preimages:
\begin{center}
\begin{tikzpicture}
  \coordinate (p) at (0,4);
  \coordinate (q) at (0,2.5);
  \draw (p) node[above left=0mm and -1mm] {$a$} node[below left] {\scriptsize $-2$}
  .. controls +(-1,0) and +(0.2,-0.1) .. +(-3,0.5)
  node[left] {$d$} node[above right=-1mm and 0mm] {\scriptsize $\infty$}
  .. controls +(2.0,-1.0) and +(0,1.0) .. +(-3,-1.5)
  node[right] {\scriptsize $0$}
  .. controls +(0,-0.5) and +(-1,0) .. (q) node[above left] {\scriptsize $2$}
  .. controls +(-0.1,0.2) and +(-0.1,-0.2) .. cycle;
  \draw[shrunksphere] (p) node[right] {\scriptsize $1$} .. controls +(0.1,0.5) and +(-0.5,0.5) .. +(3,0.5) node [below left] {\scriptsize $0$}
  .. controls +(0.1,-0.4) and +(0.1,0.4) .. +(3,-0.5) node[above left] {\scriptsize $\infty$}
  .. controls +(-0.5,-0.5) and +(0.1,-0.5) .. cycle;
  \draw (q) node[right] {\scriptsize $1$} .. controls +(0.1,0.5) and +(-0.5,0.5) .. +(3,0.5) node[right] {$b$} node [below left] {\scriptsize $0$}
  .. controls +(0.1,-0.4) and +(0.1,0.4) .. +(3,-0.5) node[right] {$c$} node[above left] {\scriptsize $\infty$}
  .. controls +(-0.5,-0.5) and +(0.1,-0.5) .. cycle;
	
  \coordinate (r) at (0,0);
  \draw (r) node[left=1pt] {\scriptsize $2$}
    .. controls +(-1,1) and +(0.8,-0.4) .. +(-3,0.7)
	node[left] {$d$} node[above right=-1mm and 0mm] {\scriptsize $\infty$}
	.. controls +(1.0,-0.5) and +(1.0,1.0) .. +(-3,-0.7) node[left] {$a$}
 	node[right=1pt] {\scriptsize $-2$}
	.. controls +(0.5,-0.5) and +(-1,-1) .. cycle;
	\draw (r) node[right] {\scriptsize $1$} .. controls +(0.2,1) and +(-0.5,0.5) .. +(3,0.7) node[right] {$b$} node [below left] {\scriptsize $0$}
	.. controls +(0.1,-0.4) and +(0.1,0.4) .. +(3,-0.7) node[right] {$c$} node[above left] {\scriptsize $\infty$}
	.. controls +(-0.5,-0.5) and +(0.2,-1) .. cycle;

        \node at (-4,3.5) {$\mathbb Y$};
        \node at (-4,0) {$\mathbb X$};
	\draw[->,red,thick] (-2,3) -- node[left] {$z^2-2$} (-2,0);
	\draw[->,red,thick] (2.2,2.5) -- node[right,pos=0.45] {$z^{-1}$} (2.2,0);
	\draw[->,red,thick] (1.2,4) -- node[left,pos=0.65] {$z$} (1.2,0);
\end{tikzpicture}
\end{center}

\subsection{\myboldmath The mating of $z^2-1$ with itself}\label{ex:z2m1}
This example appears
in~\cite{pilgrim:combinations}*{\S1.3.2}. Consider first the
polynomial $g(z)=z^2-1$, with post-critical set
$P_g=\{0,-1,\infty\}$. Choose as in~\S\ref{ex:z2pi} a basepoint
$*\in\R_+$ close to $\infty$, and a basis $\{\ell_1,\ell_2\}$
consisting of a short path $\ell_1$ from $\sqrt{*+1}$ to $*$ and an
upper half-circle from $-\sqrt{*+1}$ to $*$. Write
$H=\langle a,b,t\mid t b a\rangle$ for the fundamental group of
$\hC\setminus P_g$, with $a,b,t$ elementary loops around $-1,0,\infty$
respectively (i.e.\ $a,b,t$ follow the straight lines from $*$ to the
respective point). The presentation of the biset
$\subscript H{B(g)}_H$ is
\[a=\pair{a^{-1},b a}(1,2),\quad b=\pair{a,1},\quad t=\pair{t,1}(1,2).
\]
It may also be obtained by Algorithm~\ref{algo:angle2biset} starting
from the external angle $1/3$ of the map $z^2-1$.

Let now the branched covering $f$ be the \emph{mating} of $g$ with
itself. Topologically, it is the following map. Take two copies of
$\C$, and compactify each with a circle
$\{\infty e^{i\theta}\mid\theta\in\R/2\pi\Z\}$. Glue these two closed
disks by identifying $\infty e^{2i\pi\theta}$ on the first with
$\infty e^{-i\theta}$ on the second. Let $g$ act on each copy, and
note that they agree with the map
$\infty e^{i\theta}\mapsto\infty e^{2i\theta}$ on the circle at
infinity. In effect, we have decomposed $S^2$ in its upper and lower
hemispheres and let $g$ act independently on both.

A presentation of $B(f)$ may easily be computed. Consider a copy
$\overline H=\langle \bar a,\bar b,\bar t\mid\bar t\bar b\bar
a\rangle$, and the group
\[G=H*_{\langle t\rangle=\langle\bar t^{-1}\rangle}\bar H=\langle
a,b,\bar a,\bar b\mid b a\bar b\bar a\rangle.\]
This is a sphere tree of groups with two vertices and a single edge
corresponding to the fundamental group of the circle at infinity. The
biset $B(f)$ is the fundamental biset of the following tree of
bisets: it has two vertices each carrying the biset $B(g)$; we write
$\bar B(g)$ for the biset of the second vertex to distinguish it from
the first. An edge connects these vertices, carrying the biset
$B(z^2)$ which is $\langle t\rangle$ as a set with actions
$t^i\cdot t^j\cdot t^k=t^{2i+j+k}$. The inclusions of $B(z^2)$ in
$B(g)$ are as follows: choosing for each of the bisets
$B(z^2),B(g),\bar B(g)$ the same basis $\{\ell_1,\ell_2\}$, the maps are
\[\begin{cases} B(z^2)&\to B(g)\\
  \ell_1&\mapsto\ell_1\\
  \ell_2&\mapsto\ell_2
\end{cases}\qquad\text{and}\qquad
\begin{cases} B(z^2)&\to B(g)\\
  \ell_1&\mapsto\ell_1\\
  \ell_2&\mapsto t\cdot\ell_2.
\end{cases}\]

\noindent We obtain in this manner the following presentation for $\subscript G{B(f)}_G$:
\[a=\pair{a^{-1},b a}(1,2),\quad b=\pair{a,1},\quad
\bar a=\pair{\bar b\bar a,\bar a^{-1}}(1,2),\quad \bar b=\pair{1,\bar a}.\]

We naturally have an invariant multicurve $\{(b a)^G\}$, since
$b a=\pair{1,b a}(1,2)$; and its Thurston matrix is $(1/2)$. There is,
however, another invariant multicurve
\[\Gamma=\{x^G\}\text{ with }x=\bar a a,\]
since $x=\pair{1,x^{-\bar a}}$; and its Thurston matrix is $(1)$, so it is
a Levy obstruction.

We have $G=G_1*_\Z G_2$, with
$G_1=\langle a,\bar a,x^{-1}\mid \bar a a x^{-1}\rangle$ and
$G_2=\langle b^a,\bar b,x\mid b^a\bar bx\rangle$, glued along
$\langle x\rangle=\Z$. The sphere tree of groups therefore consists of a
single segment.

We change the basis of $B(f)$ to $Q=\{\ell_1,\bar a\ell_2\}$ so as to
make more visible the decomposition of $B(f)$ as a tree of bisets;
indeed in that basis $x=\pair{1,x^{-1}}$. The presentation of $B(f)$
becomes, on the generating set $\{a,\bar a,b^a,\bar b\}$,
\[a=\pair{x^{-1},x b^a}(1,2),\quad\bar a=\pair{\bar b,1}(1,2),\quad
b^a=\pair{1,a^{x^{-1}}},\quad\bar b=\pair{1,\bar a}.\]

Again we do not need to subdivide the tree of groups
barycentrically. We note that, in the new basis of $B(f)$, the wreath
recursion restricts to maps $G_1\to G_2^2\rtimes Q\perm$ and
$G_2\to 1\times G_1$. Therefore, the decomposition of $B(f)$ as a
sphere tree of bisets has two vertices $B_1,B_2$ joined by an edge
$E_2$ and such that $\rho$ sends $B_i$ to $G_i$ while $\lambda$ sends
$B_i$ to $G_{3-i}$, and an additional trivial vertex $B_3$ above $G_2$
joined by an edge $E_3$ to $B_1$:
\begin{equation}\label{eq:z2m1:gog}
  \begin{tikzpicture}[baseline=0.75cm]
    \node[inner sep=0pt] (B3) at (4,1.9) [label={right:$B_3$}] {$\bullet$};
    \node[inner sep=0pt] (B1) at (0,1.5) [label={left:$B_1$}] {$\bullet$};
    \node[inner sep=0pt] (B2) at (4,1.1) [label={right:$B_2$}] {$\bullet$};
    \draw[thick] (B3.center) -- node[above] {$E_3$} (B1.center) -- node[below] {$E_2$} (B2.center);

    \node[inner sep=0pt] (G1) at (0,0) [label={left:$G_1$}] {$\bullet$};
    \node[inner sep=0pt] (G2) at (4,0) [label={right:$G_2$.}] {$\bullet$};
    \draw[thick] (G1.center) -- node[below] {$\langle x\rangle$} (G2.center);

    \draw (B1) edge[rho] node[right] {$\rho$} (G1);
    \draw (B3) edge[rho] node[right,pos=0.75] {$\rho$} (G2);
    \draw (B1) edge[lambda] node[below,pos=0.7] {$\lambda$} (G2);
    \draw (B2) edge[lambda] node[above,pos=0.75] {$\lambda$} (G1);
    \draw (B3) edge[lambda,bend right=20] node[left,pos=0.75] {$\lambda$} (G2);
  \end{tikzpicture}
\end{equation}

The vertex bisets are again obtained by restricting $B(f)$
while using subsets of the basis $Q$, and are given as follows:
\begin{itemize}
\item the $G_2$-$G_1$-biset $B_1$ has in the basis $Q$ the wreath
  recursion
  \[a=\pair{x^{-1},x b^a}(1,2),\quad \bar a=\pair{\bar b,1}(1,2),\quad x^{-1}=\pair{1,x};\]
\item the $G_1$-$G_2$-biset $B_2$ has in the basis $\{\bar a\ell_2\}$ the
  wreath recursion
  \[b^a=\pair{a^{x^{-1}}},\quad \bar b=\pair{\bar a},\quad x=\pair{x^{-1}};\]
\item the $G_2$-$G_2$-biset $B_3$ is trivial in the basis
  $\{\ell_1\}$: it has the wreath recursion $b^a=\bar
  b=x=\pair{1}$. It corresponds to a sphere that gets shrunk under $i$.
\item the bisets $E_2$ and $E_3$ are the identity $\Z$-$\Z$-bisets in
  the bases $\{\bar a\ell_2\}$ and $\{\ell_1\}$ respectively, and
  embed naturally in their respective source and target vertex bisets.
\end{itemize}

To obtain the pieces of the decomposition, we consider the first
return map of $\lambda^{-1}\rho$ to $G_1$ via its biset
$B_2\otimes_{G_2}B_1$. Its wreath recursion is
\[a=\pair{x,ax^{-1}}(1,2),\quad \bar a=\pair{\bar a,1}(1,2),\quad
x^{-1}=\pair{1,x^{-1}}\]
which is isomorphic to $B(z^2)$, with $x^{-1}$ representing a loop
around the fixed point $1$.

The algebraic realization takes place on a singly noded complex stable
curve, as $f,i\colon\mathbb Y\rightrightarrows \mathbb X$. We keep the
convention of identifying the post-critical points with the elementary
loops in $G$ representing them, giving the map $f$ in red and forcing
$i$ to be either the identity or the constant map on each component:
\begin{center}
\begin{tikzpicture}
  \coordinate (p) at (0,4);
  \coordinate (q) at (0,2.5);
  \draw (p) node[left] {\scriptsize $-1$}
  .. controls +(100:1) and +(0.2,-0.1) .. +(-3,0.7) node[left] {$b^a$} node[below right=-0.5mm and 3.5mm] {\scriptsize $\infty$}
  .. controls +(2.0,-1.0) and +(2.0,1.0) .. +(-3,-2.2) node[left] {$\bar b$} node[above right=-0.5mm and 3.5mm] {\scriptsize $0$}
  .. controls +(0.2,0.1) and +(-100:1) .. (q) node[left] {\scriptsize $1$}
  .. controls +(100:1) and +(-100:1) .. cycle;
  \draw[shrunksphere] (p) node[right] {\scriptsize $1$}
  .. controls +(80:1) and +(-0.8,-0.4) .. +(3,0.6) node [above left=-1mm and 0mm] {\scriptsize $0$}
  .. controls +(-1.0,-0.5) and +(-1.0,0.5) .. +(3,-0.6) node[above left=1mm and 0mm] {\scriptsize $\infty$}
  .. controls +(-0.8,0.4) and +(-80:1) .. cycle;
  \draw (q) node[right] {\scriptsize $1$}
  .. controls +(80:1) and +(-0.8,-0.4) .. +(3,0.6) node[right] {$\bar a$} node [above left=-1mm and 0mm] {\scriptsize $0$}
  .. controls +(-1.0,-0.5) and +(-1.0,0.5) .. +(3,-0.6) node[right] {$a$} node[below left=-1mm and 0mm] {\scriptsize $\infty$}
  .. controls +(-0.8,0.4) and +(-80:1) .. cycle;
	
  \coordinate (r) at (0,0);
  \draw (r) node[left=1pt] {\scriptsize $1$}
  .. controls +(100:1) and +(0.8,-0.4) .. +(-3,0.8) node[left] {$a$} node[above right=-1mm and 0mm] {\scriptsize $\infty$}
  .. controls +(1.0,-0.5) and +(1.0,0.5) .. +(-3,-0.8) node[left] {$\bar a$} node[below right=-1mm and 0mm] {\scriptsize $0$}
  .. controls +(0.8,0.4) and +(-100:1) .. cycle;
  \draw (r) node[right] {\scriptsize $1$}
  .. controls +(80:1) and +(-0.8,-0.4) .. +(3,0.8) node[right] {$\bar b$} node [above left=-1mm and 0mm] {\scriptsize $0$}
  .. controls +(-1.0,-0.5) and +(-1.0,0.5) .. +(3,-0.8) node[right] {$b^a$} node[below left=-1mm and 0mm] {\scriptsize $\infty$}
  .. controls +(-0.8,0.4) and +(-80:1) .. cycle;

  \node at (-4,3.5) {$\mathbb Y$};
  \node at (-4,0) {$\mathbb X$};
  \draw[<-,red,thick] (-1,0) -- node[left,pos=0.4] {$z^2$} +(0,3.25);
  \draw[<-,red,thick] (0.8,0) -- node[left,pos=0.52] {$z$} +(0,2.5);
  \draw[<-,red,thick] (1.6,0) -- node[right,pos=0.325] {$z$} +(0,4);
\end{tikzpicture}
\end{center}

\subsection{Maps doubly covered by diagonal torus endomorphisms}\label{ex:endomn}
We consider next the endomorphism
$(\begin{smallmatrix}m&0\\0&n\end{smallmatrix})$ of the torus
$\R^2/\Z^2$, of degree $m n$, and its projection to a map
$f_{m,n}\colon S^2\selfmap$ via the Weierstra\ss\ function
$\wp$ of the square lattice.  The example $m=3,n=2$ is treated
in~\cite{pilgrim:combinations}*{\S1.3.3}. Without loss of generality,
we restrict ourselves to $m\ge n$.

The critical points of $\wp$ are at $\frac12\Z^2$, so $f_{m,n}$ has
$\{\wp(0),\wp(\frac12),\wp(\frac i2),\wp(\frac{1+i}2)\}$ as
post-critical set. If $m=n$, then $f$ is a rational map --- a
\emph{flexible Latt\`es map}, see~\cite{milnor:lattes}
and~\S\ref{ss:lattes} --- and its pullback map $\sigma_{f_{m,n}}$ on
Teichm\"uller space is the identity. In the general case, the pullback
map associated to the map
$f=\wp\circ(\begin{smallmatrix}p&q\\r&s\end{smallmatrix})\circ\wp^{-1}$
is $\sigma_f(z)=(p z+r)/(q z+s)$, if one identifies Teichm\"uller space
$\mathscr T_{\{a,b,c,d\}}$ with the upper half plane.

Let us write
$a=\wp(0),b=\wp(\frac12),c=\wp(\frac i2),d=\wp(\frac{1+i}2)$; the
post-critical graph depends on the parity of $m$ and $n$, but in all
cases $a$ is a fixed point, $b$ maps to $a$ or $b$, etc. The map
$f_{m,n}$ may be given by considering the rectangle
$X=[0,2]\times[0,1]\subset\R^2$, with sides identified under
$(0,y)\sim(2,y)$ and $(1-x,0)\sim(1+x,0)$ and
$(1-x,2)\sim(1+x,2)$. This is topologically a sphere, and metrically a
``pillowcase''. Consider next the rectangle $Y=[0,2m]\times[0,n]$ and
the maps $f,i\rightrightarrows Y\to X$ given by $i(x,y)=(x/m,y/n)$ and
$f(x,y)=(x,y)$ on $[0,2]\times[0,1]$, extended by reflections in the
lines $y\in\Z$ and $x\in2\Z$. The picture for $m=3,n=2$ is given in
Figure~\ref{fig:m=3,n=2}.
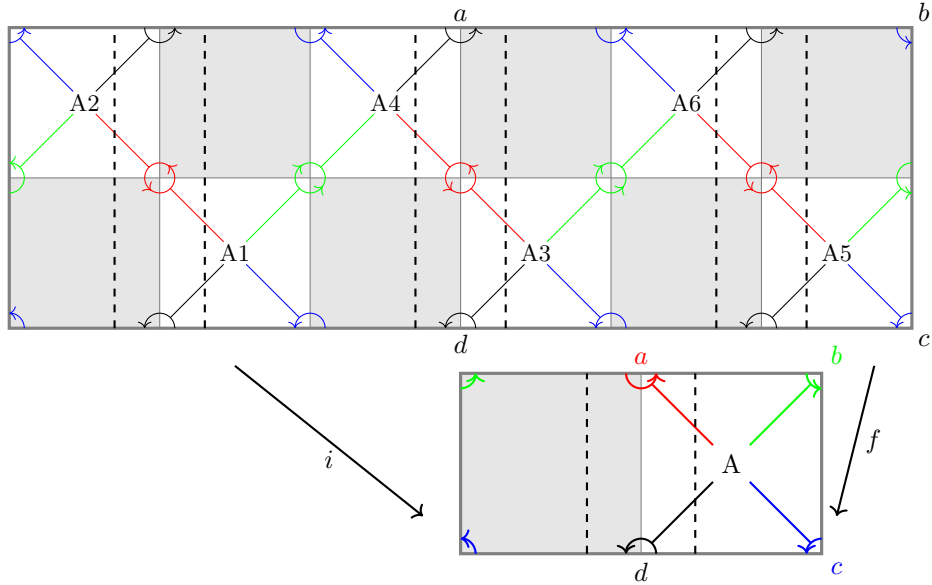
\begin{figure}
\begin{center}
\begin{tikzpicture}
  \begin{scope}[xshift=6cm,yshift=-3cm,scale=0.4]
    \fill[gray!20] (0,0) rectangle +(6,6);
    \draw[gray,very thick] (0,0) rectangle +(12,6);
    \draw[gray,thin] (6,0) -- (6,6);
    \node (A) at (9,3) {A};
    \node[red] at (6,6) [above] {$a$};
    \draw[red,thick] (6,6) ++(-45:0.5) -- (A);
    \draw[red,thick,->] (5.5,6) arc (180:360:0.5);
    \node[green] at (12,6) [above right] {$b$};
    \draw[green,thick] (12,6) ++(225:0.5) -- (A);
    \draw[green,thick,->] (11.5,6) arc (180:270:0.5);
    \draw[green,thick,->] (0,5.5) arc (-90:0:0.5);
    \node[blue] at (12,0) [below right] {$c$};
    \draw[blue,thick] (12,0) ++ (135:0.5) -- (A);
    \draw[blue,thick,->] (12,0.5) arc (90:180:0.5);
    \draw[blue,thick,->] (0.5,0) arc (0:90:0.5);
    \node[black] at (6,0) [below] {$d$};
    \draw[black,thick] (6,0) ++ (45:0.5) -- (A);
    \draw[black,thick,->] (6.5,0) arc (0:180:0.5);

    \draw[dashed,thick] (4.2,0) -- +(0,6);
    \draw[dashed,thick] (7.8,0) -- +(0,6);
  \end{scope}

  \begin{scope}[inner sep=1pt]
    \foreach\x/\y in {0/0,2/2,4/0,6/2,8/0,10/2} \fill[gray!20] (\x,\y) rectangle +(2,2);
    \draw[gray,very thick] (0,0) rectangle +(12,4);
    \node at (6,4) [label={above:$a$}] {};
    \node at (12,4) [label=above right:$b$] {};
    \node at (12,0) [label=below right:$c$] {};
    \node at (6,0) [label={below:$d$}] {};
    \foreach\x in {2,4,...,10} \draw[gray,thin] (\x,0) -- +(0,4);
    \draw[gray,thin] (0,2) -- (12,2);
    \foreach\x/\y/\A in {0/2/2,2/0/1,4/2/4,6/0/3,8/2/6,10/0/5} \node (A\A) at (\x+1,\y+1) {\small A\A};
    \foreach\x/\A/\B in {2/1/2,6/3/4,10/5/6} {\draw[red] (A\A) -- ($(\x,2)+(-45:0.2)$) arc (-45:135:0.2) \bckarrowonline{0.5} arc (135:315:0.2) \fwdarrowonline{0.5} +(135:0.4) -- (A\B);}
    
    \foreach\x/\A/\B in {4/1/4,8/3/6} {\draw[green] (A\A) -- ($(\x,2)+(-135:0.2)$) arc (-135:45:0.2) \fwdarrowonline{0.5} arc (45:225:0.2) \fwdarrowonline{0.5} +(45:0.4) -- (A\B);}
    \draw[green] (0,2) ++ (45:0.2) -- (A2);
    \draw[green,->] (0,1.8) arc (-90:90:0.2);
    \draw[green] (12,2) ++ (-135:0.2) -- (A5);
    \draw[green,->] (12,2.2) arc (90:270:0.2);

    \foreach\x/\theta/\A/\c in {2/45/1/black,4/135/1/blue,6/45/3/black,8/135/3/blue,10/45/5/black} {\draw[\c,->] (\x,0) ++ (0.2,0) arc (0:180:0.2); \draw[\c] (\x,0) ++ (\theta:0.2) -- (A\A);}

    \foreach\x/\theta/\A/\c in {2/-135/2/black,4/-45/4/blue,6/-135/4/black,8/-45/6/blue,10/-135/6/black} {\draw[\c,->] (\x,4) ++ (-0.2,0) arc (180:360:0.2); \draw[\c] (\x,4) ++ (\theta:0.2) -- (A\A);}

    \draw[blue] (12,0) ++ (135:0.2) -- (A5);
    \draw[blue,->] (12,0.2) arc (90:180:0.2);
    \draw[blue,->] (0.2,0) arc (0:90:0.2);
    \draw[blue] (0,4) ++ (-45:0.2) -- (A2);
    \draw[blue,->] (0,3.8) arc (-90:0:0.2);
    \draw[blue,->] (11.8,4) arc (180:270:0.2);

    \foreach\x in {1.4,2.6,5.4,6.6,9.4,10.6} \draw[dashed,thick] (\x,0) -- +(0,4);
    \end{scope}
    \draw[thick,->] (3,-0.5) -- node[below] {$i$} (5.5,-2.5);
    \draw[thick,->] (11.5,-0.5) -- node[right] {$f$} (11,-2.5);
  \end{tikzpicture}
\end{center}
\caption{The map doubly covered by the torus endomorphism $m=3,n=2$}\label{fig:m=3,n=2}
\end{figure}

Since all post-critical points are orbispace points of order $2$, the
sphere group of $f_{m,n}$ is
\[G=\langle a,b,c,d\mid d c b a,a^2,b^2,c^2,d^2\rangle;\] its subgroup
$H=\langle b a,ad\rangle$ has index $2$ and $H\cong\Z^2$.  The biset
$B(f_{m,n})$ can easily be computed from this picture, but the answer
is not very illuminating. Let $\tilde f_{m,n}$ be the self-map of
$\R^2/\Z^2$ defined by $(x,y)\mapsto(m x,n y)$; then the biset
$B(\tilde f_{m,n})$ admits the following simple description as an
$H$-$H$-biset: identify $H$ with $\Z^2$. As a set,
$B(\tilde f_{m,n})=\Z^2$.  The left and right actions of $\Z^2$ are
given by
$v\cdot \beta\cdot
w=(\begin{smallmatrix}m&0\\0&n\end{smallmatrix})v+\beta+w$.  From the
degree-$2$ branched cover $\R^2/\Z^2\to S^2$ we deduce that
$B(\tilde f_{m,n})$ is a subbiset of $B(f_{m,n})$ of index $2$, namely
$B(f_{m,n})$ is, qua left $H$-set, the disjoint union of two copies of
$B(\tilde f_{m,n})$.

We turn to the decomposition of $f_{m,n}$. Set $x=ad$; then the
multicurve $\{x^G\}$ is invariant. It has $m$ preimages mapping by
degree $n$ to itself, so its Thurston matrix is $(m/n)$ and it is an
obstruction.

The sphere tree of groups decomposition of $G$ associated with $\{x^G\}$ is
$G=G_1*_{\langle x\rangle}G_2$ with
$G_1=\langle a,d,x^{-1}\mid a d x^{-1}\rangle$ and
$G_2=\langle b,c,x\mid b c x\rangle$, and the tree of bisets
decomposition of $B(f_{m,n})$ has $m+1$ vertices arranged in a
chain. Note that we needed, in this case, to consider the barycentric
subdivision of the tree of groups with two vertices and one edge,
because the map $\lambda$ sends some vertices to annuli. Here is the
graph for $m=5$; for even $m$, both endpoints of the tree of bisets
would map to $G_1$ by $\rho$:
\begin{equation}\label{eq:torusmn:gog}
  \begin{tikzpicture}[baseline=1.2cm]
    \node[inner sep=0pt] (B1) at (0,2.7) [label={left:$B_1$}] {$\bullet$};
    \node[inner sep=0pt] (B2) at (4,2.6) [label={right:$B_2$}] {$\bullet$};
    \node[inner sep=0pt] (B3) at (0,1.9) [label={left:$B_3$}] {$\bullet$};
    \node[inner sep=0pt] (B4) at (4,1.8) [label={right:$B_4$}] {$\bullet$};
    \node[inner sep=0pt] (B5) at (0,1.1) [label={left:$B_5$}] {$\bullet$};
    \node[inner sep=0pt] (B6) at (4,1.0) [label={right:$B_6$}] {$\bullet$};
    \draw[thick] (B1.center) -- node[fill=white,inner sep=0] {$\circ$} (B2.center) -- node[fill=white,inner sep=0]
    {$\circ$} (B3.center) -- node[fill=white,inner sep=0] {$\circ$} (B4.center) -- node[fill=white,inner sep=0]
    {$\circ$} (B5.center) -- node[fill=white,inner sep=0] {$\circ$} (B6.center);

    \node[inner sep=0pt] (G1) at (0,0) [label={left:$G_1$}] {$\bullet$};
    \node[inner sep=0pt] (G2) at (4,0) [label={right:$G_2$.}] {$\bullet$};
    \draw[thick] (G1.center) -- node[fill=white,inner sep=0] (Ge) {$\circ$} node[below] {$\langle x\rangle$} (G2.center);

    \draw (B1) edge[rho] node[right,pos=0.8] {$\rho$} (G1);
    \draw (B1) edge[lambda,bend right=58] node[left] {$\lambda$} (G1);
    \draw (B2) edge[rho] node[right,pos=0.85] {$\rho$} (G2);
    \draw (B2) edge[lambda,bend right=10] node[left=1pt,pos=0.55] {$\lambda$} (Ge);
    \draw (B3) edge[lambda,bend left=10] node[right=2pt,pos=0.7] {$\lambda$} (Ge);
    \draw (B4) edge[lambda] node[below] {$\lambda$} (Ge);
    \draw (B5) edge[lambda] node[below,pos=0.4] {$\lambda$} (Ge);
    \draw (B6) edge[lambda,bend right] node[left] {$\lambda$} (G2);
  \end{tikzpicture}
\end{equation}

We give directly the complex stable curve, which is singly
noded. Denote by $T_n$ the Chebyshev polynomial
$T_n(z)=\cos(n\arccos z)$. We keep the convention of identifying the
post-critical points with the elementary loops in $G$ representing
them, giving the map $f$ in red and forcing $i$ to be either the
identity or the constant map on each component --- and, in the latter
case, indicating the blown-down spheres in shade. We consider
$m$ odd; if $m$ is even, then the first and last spheres both map to
the left sphere by $i$:
\begin{center}
\begin{tikzpicture}
  \coordinate (p0) at (0,4);
  \coordinate (p1) at (0,3);
  \coordinate (p2) at (0,2);
  
  \draw (p0) node[left=2mm] {\scriptsize $\infty$}
  .. controls +(170:1) and +(30:1) .. +(-3,0.6) node[left] {$a$} node[below right=-1mm and 0mm] {\scriptsize $-1$}
  .. controls +(-60:0.5) and +(60:0.5) .. +(-3,-0.6) node[left] {$d$} node[above right=-1mm and 0mm] {\scriptsize $1$}
  .. controls +(-30:1) and +(-170:1) .. cycle;
  \draw[shrunksphere] (p0) node[below right=-1.5mm and 1.5mm] {\scriptsize $\infty$}
  .. controls +(10:1) and +(160:1) .. +(3,0.5) node[below left=-1mm and 0mm] {\scriptsize $-1$}
  .. controls +(-110:0.5) and +(110:0.5) .. +(3,-0.8) node[above left=-1mm and -0mm] {\scriptsize $1$}
  .. controls +(-160:1) and +(-10:1) .. (p1) node[above right=-1.5mm and 1.5mm] {\scriptsize $0$}
  .. controls +(10:0.6) and +(-10:0.6) .. cycle;
  \draw[shrunksphere] (p1) node[below left=-1.5mm and 1.5mm] {\scriptsize $0$}
  .. controls +(170:1) and +(20:1) .. +(-3,-0.2) node[below right=-1mm and 0mm] {\scriptsize $-1$}
  .. controls +(-70:0.5) and +(70:0.5) .. +(-3,-1.5) node[above right=-1mm and -0mm] {\scriptsize $1$}
  .. controls +(-20:1) and +(-170:1) .. (p2) node[above left=-1.5mm and 1.5mm] {\scriptsize $\infty$}
  .. controls +(170:0.6) and +(-170:0.6) .. cycle;
  \draw[loosely dotted,thick] (p0) -- (p2);
  
  \draw (p2) node[right=2mm] {\scriptsize $\infty$}
  .. controls +(10:1) and +(150:1) .. +(3,0.6) node[right] {$b$} node[below left=-1mm and 0mm] {\scriptsize $1$}
  .. controls +(-120:0.5) and +(120:0.5) .. +(3,-0.6) node[right] {$c$} node[above left=-1mm and 0mm] {\scriptsize $-1$}
  .. controls +(-150:1) and +(-10:1) .. cycle;

  \coordinate (r) at (0,0);
  \draw (r) node[left=2mm] {\scriptsize $\infty$}
  .. controls +(170:1) and +(30:1) .. +(-3,0.7) node[left] {$a$} node[below right=-1mm and 0mm] {\scriptsize $-1$}
  .. controls +(-60:0.5) and +(60:0.5) .. +(-3,-0.7) node[left] {$d$} node[above right=-1mm and 0mm] {\scriptsize $1$}
  .. controls +(-30:1) and +(-170:1) .. cycle;
  \draw (r) node[right=2mm] {\scriptsize $\infty$}
  .. controls +(10:1) and +(150:1) .. +(3,0.7) node[right] {$b$} node[below left=-1mm and 0mm] {\scriptsize $1$}
  .. controls +(-120:0.5) and +(120:0.5) .. +(3,-0.7) node[right] {$c$} node[above left=-1mm and 0mm] {\scriptsize $-1$}
  .. controls +(-150:1) and +(-10:1) .. cycle;

  \node at (-4,3.5) {$\mathbb Y$};
  \node at (-4,0) {$\mathbb X$};
  \draw[<-,red,thick] (-2.2,0) -- node[left,pos=0.27] {$T_n$} +(0,4);
  \draw[<-,red,thick] (-1.5,0) -- +(0,2.35);
  \node[red] at (0,1) {$\cdots\frac12(z^n+z^{-n})\cdots$};
  \draw[<-,red,thick] (1.5,0) -- +(0,3.65);
  \draw[<-,red,thick] (2.2,0) -- node[right,pos=0.53] {$T_n$} +(0,2); 
\end{tikzpicture}
\end{center}

\subsection{The formal mating \myboldmath $5/12\FM 5/12$}\label{ex:dh}
Douady and Hubbard, in their article~\cite{douady-h:thurston},
consider the formal mating with itself of the (obstructed) topological
polynomial with lamination angle $5/12$. This example is important
because it is a Thurston map with six curves, four of which forming a
chain, such that various subsets of these six curves define an annular
obstruction. Because curves in an obstruction cannot intersect, this
means that there is, in general, no such thing as a ``maximal
annular obstruction''.

We start by describing the presentation of the map $f_{5/12}$, using
this time the language of laminations;
see~\cite{bartholdi-n:mandelbrot2}. An identical recursion arises if
one starts with the angle $7/12$. In that picture, the generators lie
at angles $5/12$, $5/6$, $2/3$ and $1/3$. With the generators
$t,g_1,g_2,g_3,g_4$ corresponding to the angles $0,1/3,5/12,2/3,5/6$
in increasing order, the biset $B(f_{5/12})$ has presentation
\begin{xalignat*}{3}
  t&=\pair{t,1}(1,2), & g_1&=\pair{1,g_3}, &
  g_2&=\pair{g_1^{-1}g_2^{-1}g_3^{-1},g_3g_2g_1}(1,2),\\
  && g_3&=\pair{g_1,g_4}, & g_4&=\pair{g_2,1}.
\end{xalignat*}
The fundamental group is
$H=\langle t,g_1,g_2,g_3,g_4\mid g_4g_3g_2g_1t\rangle$.  The biset
$B(f_{5/12})$ admits a Levy obstruction $\{g_0^H\}$ with
$g_0=g_3^{g_2}g_1$, since $g_0=\pair{g_4^{t^{-1}},g_0^{g_2^{-1}}}$.
It is canonical. This Levy cycle comes from the external rays with
angles $1/3$ and $2/3$ landing together. If one considers the subgroup
$H_0=\langle t,g_0,g_2,g_4\mid g_4g_2g_0t\rangle$ of $H$, consisting
of paths that do not cross a fixed arc between the punctures $g_1$ and
$g_3$, one obtains a realizable biset with presentation
\[t=\pair{t,1}(1,2),\quad g_0=\pair{g_4^{t^{-1}},g_0^{g_2^{-1}}},\quad
  g_2=\pair{g_0^{-1}g_2^{-1},g_2g_0}(1,2),\quad g_4=\pair{g_2,1},
\]
which is the biset of a polynomial $\cong z^2-1.54369$. In effect,
passing to the subgroup $H_0$ amounts to considering only curves in
$(S^2,P_{f_{5/12}})$ that do not cross the curve $g_0^H$.

To compute the mating $f$ of $f_{5/12}$ with itself, we consider as
in~\S\ref{ex:z2m1} the group
\[G=G_{5/12}*_{\langle t\rangle}G_{5/12}=\langle g_1,\dots,g_4,h_1,\dots,h_4\mid g_4g_3g_2g_1h_4h_3h_2h_1\rangle.
\]
Its generators are lollipops around two copies
$\cdot\Rightarrow x_1\to x_2\to x_3\leftrightarrow x_4$ and
$\cdot\Rightarrow y_1\to y_2\to y_3\leftrightarrow y_4$ of the
post-critical set of $f_{5/12}$. Let us write
$t=h_4h_3h_2h_1=(g_4g_3g_2g_1)^{-1}$. The presentation of
$B(f)$ is then
\begin{xalignat*}{4}
  g_1&=\pair{1,g_3}, & g_2&=\pair{t g_4,g_4^{-1}t^{-1}}(1,2), & g_3&=\pair{g_1,g_4}, & g_4&=\pair{g_2,1},\\
  h_1&=\pair{h_3,1}, & h_2&=\pair{h_4^{-1}t^{-1},th_4}(1,2), & h_3&=\pair{h_4,h_1}, & h_4&=\pair{1,h_2}.
\end{xalignat*}

There are now many annular obstructions: setting as before $g_0=g_3^{g_2}g_1$
and $h_0=h_3^{h_2}h_1$, the multicurves $\{g_0^G\}$ and $\{h_0^G\}$
are both invariant with matrix $(1)$; however, setting $r=g_3h_1$ and
$s=g_1h_3$, we also have $r=\pair{s,g_4}$ and $s=\pair{h_4,r}$ so
$\{r^G,s^G\}$ is a Levy multicurve. The four curves $g_0,r,h_0,s$
intersect each other cyclically, so none of these are part of the
canonical obstruction, see~\cite{douady-h:thurston}*{Page~34}. In
fact, the canonical obstruction is
\[\{u^G,v^G\}\text{ with }u=h_2g_2,v=g_4^{t^{-1}}h_4,\]
since $u=\pair{v^{-1},v^{t g_4}}$ and
$v=\pair{1,u^{h_2}}$.  Its Thurston matrix is
$(\begin{smallmatrix}0&1\\2&0\end{smallmatrix})$.  We get a sphere
decomposition
\[G=G_1*_{\langle u\rangle}G_2*_{\langle v\rangle}G_3\text{ with }
\begin{cases}G_1&=\langle g_2,h_2,u^{-1}\mid h_2g_2u^{-1}\rangle,\\
  G_2&=\langle g_1,g_3^{g_2},h_1^{g_2},h_3,u,v\mid v h_3uh_1^{g_2}g_3^{g_2}g_1\rangle,\\
  G_3&=\langle g_4^{t^{-1}},h_4,v^{-1}\mid g_4^{t^{-1}}h_4v^{-1}\rangle.
\end{cases}
\]

\noindent The corresponding sphere tree of bisets decomposition is
\begin{equation}\label{eq:5/12:gog}
  \begin{tikzpicture}[baseline=0.75cm]
    \node[inner sep=0pt] (B1) at (0,1.5) [label={left:$B_1$}] {$\bullet$};
    \node[inner sep=0pt] (B2) at (3,1.9) [label={above:$B_2$}] {$\bullet$};
    \node[inner sep=0pt] (B3) at (6,2.0) [label={right:$B_3$}] {$\bullet$};
    \node[inner sep=0pt] (B4) at (3,1.1) [label={below:$B_4$}] {$\bullet$};
    \node[inner sep=0pt] (B5) at (6,1.0) [label={right:$B_5$}] {$\bullet$};
    \draw[thick] (B3.center) -- node[fill=white,inner sep=0] {$\circ$} (B2.center)
    -- node[fill=white,inner sep=0] {$\circ$} (B1.center)
    -- node[fill=white,inner sep=0] {$\circ$} (B4.center)
    -- node[fill=white,inner sep=0] {$\circ$} (B5.center);

    \node[inner sep=0pt] (G1) at (0,-0.2) [label={left:$G_1$}] {$\bullet$};
    \node[inner sep=0pt] (G2) at (3,-0.2) [label={below:$G_2$}] {$\bullet$};
    \node[inner sep=0pt] (G3) at (6,-0.2) [label={right:$G_3$,}] {$\bullet$};
    \draw[thick] (G1.center) -- node[fill=white,inner sep=0] (Gu) {$\circ$} node[below] {$\langle u\rangle$} (G2.center) -- node[fill=white,inner sep=0] (Gv) {$\circ$} node[below] {$\langle v\rangle$} (G3.center);

    \draw (B1) edge[rho] node[right] {$\rho$} (G1);
    \draw (B2) edge[rho] node[right,pos=0.92] {$\rho$} (G2);
    \draw (B3) edge[rho] node[right,pos=0.8] {$\rho$} (G3);
    \draw (B1) edge[lambda] node[below] {$\lambda$} (Gv);
    \draw (B2) edge[lambda,bend right=30] node[left=1pt,pos=0.54] {$\lambda$} (G2);
    \draw (B3) edge[lambda,bend right=10] node[right=2pt,pos=0.85] {$\lambda$} (G1);
    \draw (B4) edge[lambda] node[above] {$\lambda$} (G3);
    \draw (B5) edge[lambda,bend right=30] node[left] {$\lambda$} (G3);
  \end{tikzpicture}
\end{equation}
with the following vertex bisets:
\begin{itemize}
\item the $\langle v\rangle$-$G_1$-biset $B_1$ has, in basis
  $\{\ell_1,g_1^{-1}g_2^{-1}g_3^{-1}\ell_2\}$, the wreath recursion
  \[g_2=\pair{1,1}(1,2),\qquad h_2=\pair{v,v^{-1}}(1,2),\qquad u^{-1}=\pair{v,v^{-1}};\]
\item the $G_2$-$G_2$-biset $B_2$ has, in basis
  $\{g_1^{-1}g_2^{-1}g_3^{-1}\ell_2\}$, the wreath recursion
  \begin{xalignat*}{3}
    g_1&=\pair{(g_3^{g_2})^{g_1}}, & g_3^{g_2}&=\pair{g_1}, & u&=\pair{v},\\
    h_1^{g_2}&=\pair{h_3}, & h_3&=\pair{(h_1^{g_2})^{g_3^{g_2}g_1}}, & v&=\pair{u^{g_3^{g_2}g_1}};
  \end{xalignat*}
\item the $G_1$-$G_3$-biset $B_3$ has, in basis $\{\ell_2\}$, the
  wreath recursion
  \[g_4^{t^{-1}}=\pair{g_2},\qquad h_4=\pair{h_2},\qquad v=\pair{u^{g_2}};\]
\item the $G_3$-$G_2$-biset $B_4$ has, in basis $\{\ell_1\}$, the
  wreath recursion
  \begin{xalignat*}{3}
    g_1&=\pair{1}, & g_3^{g_2}&=\pair{g_4^{t^{-1}}}, & u&=\pair{v^{-1}},\\
    h_1^{g_2}&=\pair{1}, & h_3&=\pair{h_4}, & v&=\pair{1};
  \end{xalignat*}
\item the $G_3$-$G_3$-biset $B_5$ has, in basis $\{\ell_1\}$, the wreath
  recursion $g_4^{t^{-1}}=h_4=v=\pair{1}$. It corresponds to a sphere
  being contracted to a point.
\end{itemize}

The only return maps of $B(f)$ are the trivial biset $B_5$ and the
degree-$1$ biset $B_2$. This last biset is the biset of an outer
automorphism $\varphi$ of order two: the wreath recursion of
$B_2\otimes B_2$ is conjugation by $g_0=g_3^{g_2}g_1$. Indeed
$\varphi(g_0)=g_0$ and the action of $\varphi$ is
\[g_3^{g_2}\mapsto g_1\mapsto (g_3^{g_2})^{g_0},\quad u\mapsto
v\mapsto u^{g_0},\quad h_1^{g_2}\mapsto h_3\mapsto(h_1^{g_2})^{g_0}.\]
On top of $\varphi(g_0)=g_0$, we also have
$\varphi(h_0^{G_2})=h_0^{G_2}$ and $\varphi(r^{g_2G_2})=s^{G_2}$ and
$\varphi(s^{G_2})=r^{g_2G_2}$; so the simple closed curves
$g_0,r,h_0,s$ can be homotoped into periodic curves in the sphere
$G_2$.

\noindent We are ready to give the complex stable curve on which $f$
may be realized:
\begin{center}
\begin{tikzpicture}
  \coordinate (p0) at (-0.8,4.5);
  \coordinate (p1) at (0.8,4.5);
  \coordinate (q0) at (-0.8,2.4);
  \coordinate (q1) at (0.8,2.4);

  \draw[shrunksphere] (p0) node [below left=-4.5pt and 14pt] {\scriptsize $0$} node[right] {\scriptsize $-1$}
  .. controls +(170:3.5) and +(-170:3.5) .. (q0)
  node [above left=-3.5pt and 14pt] {\scriptsize $\infty$} node[right] {\scriptsize $1$}
  .. controls +(170:2.2) and +(-170:2.2) .. cycle;

  \begin{scope}[yshift=4.5cm]
  \draw (-0.8,-0.8) rectangle (0.8,0.8);
  \node[anchor=east] at (-0.8,-0.8) {$y_4$};
  \node[anchor=south west] at (-0.8,-0.8) {\scriptsize $-i$};
  \node[anchor=west] at (0.8,-0.8) {$y_3$};
  \node[anchor=south east] at (0.8,-0.8) {\scriptsize $\infty$};
  \node[anchor=east] at (-0.8,0.8) {$x_4$};
  \node[anchor=north west] at (-0.8,0.8) {\scriptsize $i$};
  \node[anchor=west] at (0.8,0.8) {$x_3$};
  \node[anchor=north east] at (0.8,0.8) {\scriptsize $0$};
  \end{scope}
  \draw (p1) node[right=3mm] {\scriptsize $1$} node[left] {\scriptsize $1$}
  .. controls +(10:1) and +(150:1) .. +(3,0.6) node[right] {$x_1$} node[below left=-1mm and 0mm] {\scriptsize $0$}
  .. controls +(-120:0.5) and +(120:0.5) .. +(3,-0.6) node[right] {$y_1$} node[above left=-1mm and 0mm] {\scriptsize $\infty$}
  .. controls +(-150:1) and +(-10:1) .. cycle;

  \begin{scope}[yshift=2.4cm]
  \draw (-0.8,-0.8) rectangle (0.8,0.8);
  \node[anchor=east] at (-0.8,-0.8) {$y_2$};
  \node[anchor=south west] at (-0.8,-0.8) {\scriptsize $\infty$};
  \node[anchor=south east] at (0.8,-0.8) {\scriptsize $-i$};
  \node[anchor=east] at (-0.8,0.8) {$x_2$};
  \node[anchor=north west] at (-0.8,0.8) {\scriptsize $0$};
  \node[anchor=north east] at (0.8,0.8) {\scriptsize $i$};
  \end{scope}
  \draw[shrunksphere] (q1) node[right=3mm] {\scriptsize $1$} node[left] {\scriptsize $-1$}
  .. controls +(10:1) and +(150:1) .. +(3,0.6) node[below left=-1mm and 0mm] {\scriptsize $0$}
  .. controls +(-120:0.5) and +(120:0.5) .. +(3,-0.6) node[above left=-1mm and 0mm] {\scriptsize $\infty$}
  .. controls +(-150:1) and +(-10:1) .. cycle;

  \coordinate (r0) at (-0.8,0);
  \coordinate (r1) at (0.8,0);
  \draw (r0) node[left=3mm] {\scriptsize $1$} node[right] {\scriptsize $1$}
  .. controls +(170:1) and +(30:1) .. +(-3,0.7) node[left] {$x_1$} node[below right=-1mm and 0mm] {\scriptsize $0$}
  .. controls +(-60:0.5) and +(60:0.5) .. +(-3,-0.7) node[left] {$y_1$} node[above right=-1mm and 0mm] {\scriptsize $\infty$}
  .. controls +(-30:1) and +(-170:1) .. cycle;
  \draw (-0.8,-0.8) rectangle (0.8,0.8);
  \node[anchor=east] at (-0.8,-0.8) {$y_3$};
  \node[anchor=south west] at (-0.8,-0.8) {\scriptsize $\infty$};
  \node[anchor=west] at (0.8,-0.8) {$y_4$};
  \node[anchor=south east] at (0.8,-0.8) {\scriptsize $-i$};
  \node[anchor=east] at (-0.8,0.8) {$x_3$};
  \node[anchor=north west] at (-0.8,0.8) {\scriptsize $0$};
  \node[anchor=west] at (0.8,0.8) {$x_4$};
  \node[anchor=north east] at (0.8,0.8) {\scriptsize $i$};
  \draw (r1) node[right=3mm] {\scriptsize $1$} node[left] {\scriptsize $-1$}
  .. controls +(10:1) and +(150:1) .. +(3,0.7) node[right] {$x_2$} node[below left=-1mm and 0mm] {\scriptsize $0$}
  .. controls +(-120:0.5) and +(120:0.5) .. +(3,-0.7) node[right] {$y_2$} node[above left=-1mm and 0mm] {\scriptsize $\infty$}
  .. controls +(-150:1) and +(-10:1) .. cycle;

  \node at (-4,3.5) {$\mathbb Y$};
  \node at (-4,0) {$\mathbb X$};
  \draw[<-,red,thick] (-2.8,0) -- node[left,pos=0.383] {$z^2$} +(0,3.45);
  \draw[<-,red,thick] (-0.3,0) -- node[left,pos=0.5] {$z$} +(0,2.4);
  \draw[<-,red,thick] (0.3,0) --  node[right,pos=0.266] {$\frac{z-i}{iz-1}$} +(0,4.5);
  \draw[<-,red,thick] (2.2,0) -- node[left,pos=0.5] {$z$} +(0,2.4); 
  \draw[<-,red,thick] (3,0) -- node[right,pos=0.266] {$z$} +(0,4.5); 
\end{tikzpicture}
\end{center}
Note that the co\"ordinates on the central small sphere in $\mathbb X$
are not uniquely determined; rather, choose any M\"obius
transformation $\mu$ that is an involution and that does not fix $0$,
$1$ or $\infty$. Then, once the left and right small spheres are
labelled by $0,1,\infty$ as above, the central small sphere is
labelled by $0,1,\infty,\mu(0),\mu(1),\mu(\infty)$ and the covering
$f$ is given, on the top central small sphere of $\mathbb Y$, by
$\mu$.

\subsection{Blowing up an arc}\label{ex:pilgrim}
In~\cite{pilgrim:combinations}*{\S1.3.4}, Kevin Pilgrim describes a
self-covering of the sphere, obtained from the $z\mapsto 2z$ map on
the torus by rotating and blowing up an edge. It is a degree-$5$ map
$f$, and Pilgrim asks whether it can be realized as a complex map
(from the construction, it is clear that $f$ is expanding, so it
cannot have any Levy obstruction, see Theorem~\ref{thm:ExpCr}). We
start by describing the map similarly to the examples
in~\S\ref{ex:endomn}:

\begin{figure}
\begin{center}\begin{tikzpicture}
    \begin{scope}
      \fill[gray!20] (0,3) rectangle +(3,3);
      \fill[gray!20] (3,0) rectangle +(3,3);
      \def\bendota{(6,3) .. controls (8,4) and (10,5) .. (12,3)}
      \def\bendotb{(9,0) .. controls (10,2) and (11,4) .. (9,6)}
      \def\bendtta{(6,3) .. controls (9,3) and (9,3) .. (9,0)}
      \def\bendttb{(9,0) .. controls (7.5,1) and (7,1.5) .. (6,3)}
      \fill[gray!20] \bendtta \bendttb;
      \begin{scope}
        \clip \bendotb -- (6,6) -- (6,0) -- cycle;
        \fill[gray!20] \bendota -- (12,6) -- (6,6) -- cycle;
      \end{scope}
      \begin{scope}
        \clip \bendotb -- (12,6) -- (12,0) -- cycle;
        \fill[gray!20] \bendota -- (12,0) -- (6,0) -- cycle;
      \end{scope}

      \draw[gray,very thick] (0,0) -- (0,6) -- (12,6) -- (12,0) -- cycle;
      \draw[gray,thin] (6,0) -- (6,6);
      \draw[gray,thin] (3,0) -- (3,6);
      \draw[gray,thin] (0,3) -- (6,3);
      \draw[gray,thin,name path=p12a] \bendota;
      \draw[gray,thin,name path=p12b] \bendotb;
      \draw[gray,thin,name path=p23a] \bendtta;
      \draw[gray,thin,name path=p23b] \bendttb;
      \node (A1) at (11,5) {A1};
      \node (A2) at (9.3,3.3) {A2};
      \node (A3) at (6.7,0.7) {A3};
      \node (A4) at (4.5,4.5) {A4};
      \node (A5) at (1.5,1.5) {A5};
      \path[name path=diag] (A2) -- (A3);
      \draw[red] (12,3) ++(110:0.3) -- (A1);
      \draw[red,->] (12,3.3) arc (90:270:0.3);
      \draw[red] (6,3) ++(15:0.3) .. controls (8,3.5) .. (A2);
      \draw[red] (6,3) ++ (-80:0.3) .. controls (6.3,1.5) .. (A3);
      \draw[red,->] (6.3,3) arc (0:360:0.3);
      \draw[red] (6,3) ++ (135:0.3) -- (A4);
      \draw[red] (0,3) ++ (-45:0.3) -- (A5);
      \draw[red,->] (0,2.7) arc (-90:90:0.3);
      \draw[green,name intersections={of=p12a and p12b,by=x12}] (x12) ++(45:0.3) -- (A1);
      \draw[green] (x12) ++(225:0.3) -- (A2);
      \draw[green,->] (x12) ++(0.3,0) arc (0:360:0.3);
      \draw[green,name intersections={of=p23b and diag,by=y23}] (y23) ++(225:0.3) -- (A3);
      \draw[green,->] (y23) ++(0.3,0) arc (0:360:0.3);
      \draw[green] (3,3) ++(45:0.3) -- (A4);
      \draw[green] (3,3) ++(225:0.3) -- (A5);
      \draw[green,->] (3.3,3) arc (0:360:0.3);
      \draw[blue] (9,6) ++(-30:0.3) -- (A1);
      \draw[blue,->] (8.7,6) arc (180:360:0.3);
      \draw[blue] (3,6) ++(-45:0.3) -- (A4);
      \draw[blue] (9,0) ++(75:0.3) .. controls (9.5,2) .. (A2);
      \draw[blue] (9,0) ++ (170:0.3) .. controls (7.5,0.3) .. (A3);
      \draw[blue,>-] (9.3,0) arc (0:180:0.3);
      \draw[blue] (3,0) ++(135:0.3) -- (A5);
      \draw[blue,->] (3.3,0) arc (0:180:0.3);
      \draw[blue,->] (2.7,6) arc (180:360:0.3);
      \draw[black] (12,6) ++(225:0.3) -- (A1);
      \draw[black,->] (11.7,6) arc (180:270:0.3);
      \draw[black,->] (0,5.7) arc (-90:0:0.3);
      \draw[black,->] (12,0.3) arc (90:180:0.3);
      \draw[black,name intersections={of=p23a and diag,by=x23}] (x23) ++(45:0.3) -- (A2);
      \draw[black,->] (x23) ++(0.3,0) arc (0:360:0.3);
      \draw[black] (6,0) ++(45:0.3) -- (A3);
      \draw[black,->] (6.3,0) arc (0:180:0.3);
      \draw[black] (6,6) ++(225:0.3) -- (A4);
      \draw[black,->] (5.7,6) arc (180:360:0.3);
      \draw[black] (0,0) ++(45:0.3) -- (A5);
      \draw[black,->] (0.3,0) arc (0:90:0.3);
    \end{scope}
    \begin{scope}[xshift=6cm,yshift=-3cm,scale=0.4]
      \fill[gray!20] (0,0) rectangle +(6,6);
      \draw[gray,very thick] (0,0) rectangle +(12,6);
      \draw[gray,thin] (6,0) -- (6,6);
      \node (A) at (9,3) {\large A};
      \node[red] at (6,6) [above] {\large $a$};
      \draw[red,thick] (6,6) ++(-45:0.5) -- (A);
      \draw[red,thick,->] (5.5,6) arc (180:360:0.5);
      \node[green] at (12,6) [above right] {\large $b$};
      \draw[green,thick] (12,6) ++(225:0.5) -- (A);
      \draw[green,thick,->] (11.5,6) arc (180:270:0.5);
      \draw[green,thick,->] (0,5.5) arc (-90:0:0.5);
      \node[blue] at (12,0) [below right] {\large $c$};
      \draw[blue,thick] (12,0) ++ (135:0.5) -- (A);
      \draw[blue,thick,->] (12,0.5) arc (90:180:0.5);
      \draw[blue,thick,->] (0.5,0) arc (0:90:0.5);
      \node[black] at (6,0) [below] {\large $d$};
      \draw[black,thick] (6,0) ++ (45:0.5) -- (A);
    \draw[black,thick,->] (6.5,0) arc (0:180:0.5);
    \end{scope}
    \draw[thick,->] (3,-0.5) -- node[below] {$i$} (5.5,-2.5);
    \draw[thick,->] (11.5,-0.5) -- node[right] {$f$} (11,-2.5);
\end{tikzpicture}\end{center}
\caption{Pilgrim's ``blowing up an arc'' subdivision rule}\label{fig:pilgrimmap}
\end{figure}
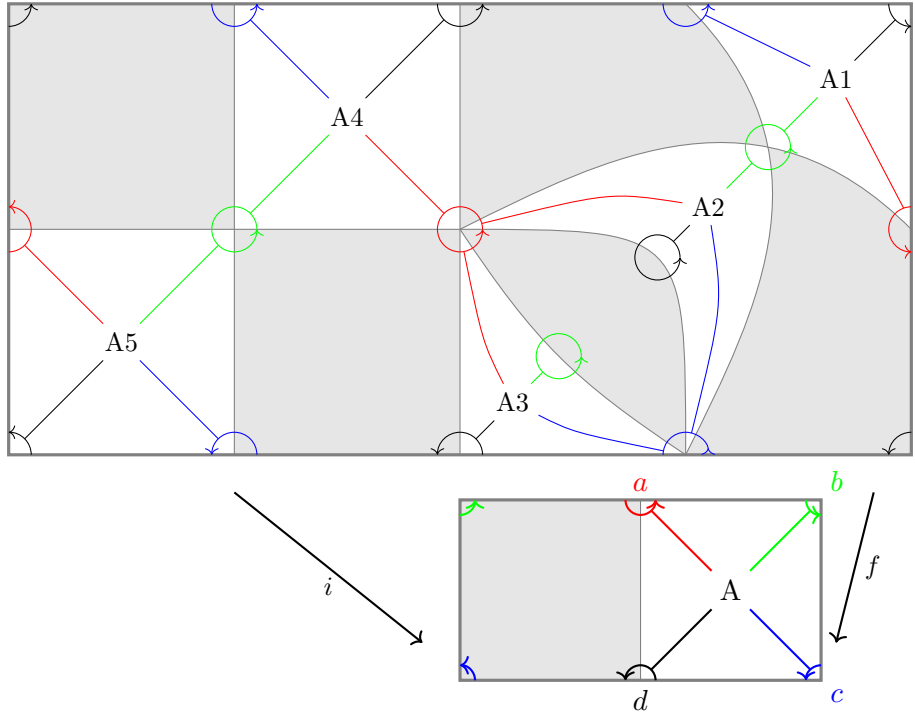

The map may be group-theoretically presented as follows --- see
Figure~\ref{fig:pilgrimmap}. Consider
$G=\langle a,b,c,d\mid d c b a\rangle$ generated by lollipops around the
punctures. In the basis
$\{\ell_1,\dots,\ell_5\}$ consisting of straight paths from
$i(\mathrm{A1}),\dots,i(\mathrm{A5})$ to the basepoint $\mathrm{A}$,
the biset $B(f)$ is presented as
\begin{align*}
  a &= \pair{c^{-1},1,1,1,c}(1,5)(2,4,3),\\
  b &= \pair{1,1,1,d,d^{-1}}(1,2)(4,5),\\
  c &= \pair{a,1,1,a^{-1},1}(1,4)(2,3,5),\\
  d &= \pair{b,1,d,a,c}.
\end{align*}

\noindent Consider $x=ac$. One then computes directly
\[x=\pair{c^{-1},c x^{-1},1,1,x^{c^{-1}}}(1,2)(4,5),\]
so $\{x^G\}$ is an annular obstruction, with Thurston matrix
$(\frac12+\frac12)$. This already answers Pilgrim's question in the
negative. However, let us study this example further, and decompose
$B(f)$ as a sphere tree of bisets.

The tree of groups decomposition is $G=G_1*_\Z G_2$, with
$G_1=\langle a,c,x^{-1}\mid a c x^{-1}\rangle$ and
$G_2=\langle x,b,d^c\mid x d^c b\rangle$.  To compute the sphere tree of
bisets, we first change the basis of $B(f)$ to
$Q\coloneqq\{\ell_1,c\ell_2,c\ell_3,c\ell_4,c\ell_5\}$ so as to simplify the
presentation of $x$, and obtain
\begin{align*}
  a &= \pair{1,1,1,1,1}(1,5)(2,4,3),\\
  b &= \pair{c,c^{-1},1,d^c,d^{-c}}(1,2)(4,5),\\
  c &= \pair{x,1,1,x^{-1},1}(1,4)(2,3,5),\\
  d^c &= \pair{a,c,1,b^x,d^c},\\
  x &= \pair{1,x^{-1},1,1,x}(1,2)(4,5).
\end{align*}
The multicurve $\{x^G\}$ has $3$ preimages, corresponding to the
cycles $(1,2)(3)(4,5)$ of the permutation associated with $x$; so the sphere
tree of bisets into which $B(f)$ decomposes has $4$ vertex
bisets. They are arranged as follows:
\begin{equation}\label{eq:pilgrim:gog}
  \begin{tikzpicture}[baseline=0.9cm]
    \node[inner sep=0pt] (B1) at (0,1.8) [label={left:$B_1$}] {$\bullet$};
    \node[inner sep=0pt] (B2) at (4,2.6) [label={right:$B_2$}] {$\bullet$};
    \node[inner sep=0pt] (B3) at (4,1.8) [label={right:$B_3$}] {$\bullet$};
    \node[inner sep=0pt] (B4) at (4,1) [label={right:$B_4$}] {$\bullet$};
    \draw[thick] (B1.center) -- node[fill=white,inner sep=0] {$\circ$} (B2.center)
    (B1.center) -- node[fill=white,inner sep=0] {$\circ$} (B3.center)
    (B1.center) -- node[fill=white,inner sep=0] {$\circ$} (B4.center);

    \node[inner sep=0pt] (G1) at (0,-0.2) [label={left:$G_1$}] {$\bullet$};
    \node[inner sep=0pt] (G2) at (4,-0.2) [label={right:$G_2$.}] {$\bullet$};
    \draw[thick] (G1.center) -- node[fill=white,inner sep=0] (Ge) {$\circ$} node[below] {$\langle x\rangle$} (G2.center);

    \draw (B1) edge[rho] node[left] {$\rho$} (G1);
    \draw (B2) edge[rho] node[right,pos=0.8] {$\rho$} (G2);
    \draw (B1) edge[lambda] node[below,pos=0.4] {$\lambda$} (Ge);
    \draw (B2) edge[lambda,bend left=5] node[below,pos=0.54] {$\lambda$} (G1);
    \draw (B3) edge[lambda] node[right=2pt,pos=0.75] {$\lambda$} (Ge);
    \draw (B4) edge[lambda,bend right] node[left] {$\lambda$} (G2);
  \end{tikzpicture}
\end{equation}
The vertex bisets are again obtained by restricting $B(f)$ while using
subsets of the basis $Q$, and are given as follows:
\begin{itemize}
\item the $\langle x\rangle$-$G_1$-biset $B_1$ has in the basis $Q$
  the wreath recursion
  \begin{align*}
    a&=\pair{1,1,1,1,1}(1,5)(2,4,3),\\
    c&=\pair{x,1,1,x^{-1},1}(1,4)(2,3,5),\\
    x^{-1}&=\pair{x,1,1,x^{-1},1}(1,2)(4,5);
  \end{align*}
\item the $G_1$-$G_2$-biset $B_2$ has in the subbasis
  $\{\ell_1,c\ell_2\}$ of $Q$ the wreath recursion
  \[x=\pair{1,x^{-1}}(1,2),\quad b=\pair{c,c^{-1}}(1,2),\quad d^c=\pair{a,c};\]
\item the $\langle x\rangle$-$G_2$-biset $B_3$ has in the subbasis
  $\{c\ell_3\}$ of $Q$ the trivial wreath recursion $x=b=d^c=\pair1$;
  it corresponds to a sphere that gets blown down to a point on the
  annulus;
\item the $G_2$-$G_2$-biset $B_4$ has in the subbasis
  $\{c\ell_4,c\ell_5\}$ of $Q$ the wreath recursion
  \[x=\pair{1,x}(1,2),\quad b=\pair{d^c,d^{-c}}(1,2),\quad d^c=\pair{b^x,d^c};\]
  it is isomorphic to the biset of the map $z^2-2$.
\end{itemize}

The biset $B_1$ is the biset of the rational map
$z^3(4z+5)^2/(5z+4)^2$. The algebraic realization of $f$ takes place
on a singly noded complex stable curve. Keeping the same conventions,
it is
\begin{center}
\begin{tikzpicture}
  \coordinate (p0) at (0,4.4);
  \coordinate (p1) at (0,3.2);
  \coordinate (p2) at (0,2);
  
  \draw[shrunksphere] (p0) node[below left=-5pt and 4pt] {\scriptsize $\frac{-7+i\sqrt{15}}8$}
  .. controls +(170:1) and +(90:3) .. node[pos=0.5,below] {\scriptsize $0$} node[pos=0.9,right] {\scriptsize $-\frac54$} +(-3,-1.2)
  .. controls +(-90:3) and +(-170:0.5) .. node[pos=0.5,above] {\scriptsize $\infty$} node[pos=0.1,right] {\scriptsize $-\frac45$} (p2)
  node[below left=-8pt and 4pt] {\scriptsize $\frac{-7-i\sqrt{15}}8$}
  .. controls +(170:0.6) and +(-170:0.6) .. (p1) node[left=1.5mm] {\scriptsize $1$}
  .. controls +(170:0.6) and +(-170:0.6) .. cycle;
  \draw (p0) node[right=3mm] {\scriptsize $1$}
  .. controls +(10:1) and +(170:1) .. node[below=-2pt] {\scriptsize $-1$} +(3,0.5) node[right] {$a$} node[below left=-3pt and 1pt] {\scriptsize $0$}
  .. controls +(-130:0.5) and +(130:0.5) .. +(3,-0.5) node[right] {$c$} node[above left=-2pt and 1pt] {\scriptsize $\infty$}
  .. controls +(-170:1) and +(-10:1) .. cycle;
  \draw[shrunksphere] (p1) node[right=3mm] {\scriptsize $1$}
  .. controls +(10:1) and +(150:1) .. +(3,0.4) node[below left=-4pt and 0pt] {\scriptsize $0$}
  .. controls +(-120:0.5) and +(150:0.5) .. +(3,-0.4) node[above left=-1pt and 4pt] {\scriptsize $\infty$}
  .. controls +(-180:1) and +(-10:1) .. cycle;
  \draw (p2) node[right=3mm] {\scriptsize $1$}
  .. controls +(10:1) and +(170:1) .. node[below=-2pt] {\scriptsize $-1$} +(3,0.5) node[right] {$b$} node[below left=-3pt and 1pt] {\scriptsize $0$}
  .. controls +(-130:0.5) and +(130:0.5) .. +(3,-0.5) node[right] {$d^c$} node[above left=-2pt and 1pt] {\scriptsize $\infty$}
  .. controls +(-170:1) and +(-10:1) .. cycle;

  \coordinate (r) at (0,0);
  \draw (r) node[left=3mm] {\scriptsize $1$}
  .. controls +(170:1) and +(10:1) .. +(-3,0.7) node[left] {$a$} node[below right=-1mm and 0mm] {\scriptsize $0$}
  .. controls +(-50:0.5) and +(50:0.5) .. +(-3,-0.7) node[left] {$c$} node[above right=-1mm and 0mm] {\scriptsize $\infty$}
  .. controls +(-10:1) and +(-170:1) .. cycle;
  \draw (r) node[right=3mm] {\scriptsize $1$}
  .. controls +(10:1) and +(150:1) .. +(3,0.7) node[right] {$b$} node[below left=-4pt and 0mm] {\scriptsize $0$}
  .. controls +(-120:0.5) and +(150:0.5) .. +(3,-0.7) node[right] {$d^c$} node[above left=-1pt and 1mm] {\scriptsize $\infty$}
  .. controls +(-180:1) and +(-10:1) .. cycle;

  \node at (-4,3.2) {$\mathbb Y$};
  \node at (-4,0) {$\mathbb X$};
  \draw[<-,red,thick] (-2.2,0) -- node[right=-2pt,pos=0.3125] {$z^3\big(\frac{4z+5}{5z+4}\big)^2$} +(0,3.2);
  \draw[<-,red,thick] (0.3,0) -- node[left,pos=0.3125] {$z$} +(0,3.2);
  \draw[<-,red,thick] (1.0,0) -- +(0,2);
  \node[red] at (1.6,1) {$\frac{(z+1)^2}{4z}$};
  \draw[<-,red,thick] (2.2,0) -- +(0,4.4);   
\end{tikzpicture}
\end{center}

\subsection{Shishikura and Tan Lei's example}\label{ex:tls}
Tan Lei and Shishikura consider in~\cite{shishikura-t:matings} a
mating of two polynomials of degree $3$, and show that it is
obstructed, but does not admit a Levy cycle (it is known that rational
maps of degree $2$ are obstructed if and only if they admit a Levy
cycle). This example was further studied
in~\cite{cheritat:shishikura}.

Their example may be described as follows. Consider the polynomial
\[f(z)=z^3+c,\text{ with $(c^3+c)^3+c=0$ and }c\approx-0.264425+1.26049i.
\]
Its post-critical orbit is $w\Rrightarrow u\to v\to w$, and the angles
landing at the critical point $u$ are $\{11,24,37\}/39$. Consider then the polynomial
\[g(z)=(a-1)(3z^2-2z^3)+1,\text{ with $g(a)=0$ and }a\approx
-0.42654.\]
Its post-critical orbit is $z\Rightarrow x\Rightarrow y\to z$, the
angles landing at $z$ are $\{21,47\}/78$ and those landing at $x$ are
$\{11,63\}/78$. Consider finally the mating $h$ of $f$ and $g$.

We choose as usual a basepoint $*$ close to the equator of $h$ and write
\[G=\pi_1(S^2\setminus P_h,*)=\langle u,v,w,x,y,z\mid u w v x y z\rangle\]
for standard generators of $G$ consisting of lollipops along external
rays of $f$ and $g$ respectively. For convenience, we write
$t=u w v=(x y z)^{-1}$ for the loop along the equator. In the basis
$\{\ell_1,\ell_2,\ell_3\}$ of $B(h)$ consisting of positively
oriented paths along the equator, we obtain
$t=u w v=(x y z)^{-1}=\pair{1,1,t}(1,2,3)$ and
\begin{xalignat*}{2}
  u&=\pair{v^{-1},u^{-1},t}(1,2,3), & x&=\pair{1,y z,y^{-1}}(2,3),\\
  v&=\pair{1,1,u}, & y&=\pair{t^{-1},1,t x}(1,3),\\
  w&=\pair{1,v,1}, & z&=\pair{1,y,1}.
\end{xalignat*}

Consider now the multicurve $\{r^G,s^G\}$ with $r=v y$ and
$s=u^w x^{v^{-1}}$. We have
\[r=\pair{t^{-1},1,t s^v}(1,3),\qquad s=\pair{s^{-v}t^{-1},r^{-v},t}(1,3),\]
so
$\{r^G,s^G\}$ is an annular obstruction with transition matrix
$(\begin{smallmatrix}0&1\\\frac12&\frac12\end{smallmatrix})$. The
corresponding decomposition of $G$ has three vertices, and is
\[G=G_1*_{\langle s\rangle}G_2*_{\langle r\rangle} G_3\text{ with }\begin{cases}
  G_1 &= \langle u^w,x^{v^{-1}},s^{-1}\mid u^w x^{v^{-1}}s^{-1}\rangle,\\
  G_2 &= \langle w,z,r,s\mid z w s r\rangle,\\
  G_3 &= \langle v,y,r^{-1}\mid v y r^{-1}\rangle.
\end{cases}\]

In basis $Q\coloneqq\{v\ell_1,v\ell_2,(u w)^{-1}\ell_3\}$ the
presentation of $B(h)$ becomes
\begin{xalignat*}{2}
  u^w&=\pair{1,w,1}(1,2,3), & x^{v^{-1}}&=\pair{1,s^{-1},w^{-1}r^{-1}}(2,3),\\
  r&=\pair{1,1,s}(1,3), & s&=\pair{s^{-1},r^{-1},1}(1,3),\\
  w&=\pair{1,v,1}, & z&=\pair{1,y^{v^{-1}},1},\\
  v&=\pair{1,1,u^w}, & y&=\pair{1,1,x^{v^{-1}}}(1,3).
\end{xalignat*}
From this presentation, just by looking at which of $G_1,G_2,G_3$ the
entries belong to, we get the sphere tree of bisets decomposition
\begin{equation}\label{eq:tls:gog}
  \begin{tikzpicture}[baseline=0.9cm]
    \node[inner sep=0pt] (B1) at (0,2) [label={left:$B_1$}] {$\bullet$};
    \node[inner sep=0pt] (B2) at (3,2.5) [label={above:$B_2$}] {$\bullet$};
    \node[inner sep=0pt] (B3) at (6,1.5) [label={right:$B_3$}] {$\bullet$};
    \node[inner sep=0pt] (B4) at (3,1.5) [label={[label distance=-3pt]below right:$B_4$}] {$\bullet$};
    \node[inner sep=0pt] (B5) at (6,2.5) [label={right:$B_5$}] {$\bullet$};
    \draw[thick] (B1.center) -- node[fill=white,inner sep=0] {$\circ$}
    (B2.center) -- node[fill=white,inner sep=0] {$\circ$} (B5.center)
    (B1.center) -- node[fill=white,inner sep=0] {$\circ$} (B4.center)
    -- node[fill=white,inner sep=0] {$\circ$} (B3.center);

    \node[inner sep=0pt] (G1) at (0,-0.0) [label={left:$G_1$}] {$\bullet$};
    \node[inner sep=0pt] (G2) at (3,-0.0) [label={below:$G_2$}] {$\bullet$};
    \node[inner sep=0pt] (G3) at (6,-0.0) [label={right:$G_3$.}] {$\bullet$};
    \draw[thick] (G1.center) -- node[fill=white,inner sep=0] (Gr) {$\circ$} node[below] {$\langle s\rangle$} (G2.center) -- node[fill=white,inner sep=0] (Gs) {$\circ$} node[below] {$\langle r\rangle$} (G3.center);

    \draw (B1) edge[rho] node[left] {$\rho$} (G1);
    \draw (B2) edge[rho] node[right,pos=0.83] {$\rho$} (G2);
    \draw (B5) edge[rho] node[right,pos=0.75] {$\rho$} (G3);
    \draw (B1) edge[lambda] node[below,pos=0.33] {$\lambda$} (G2);
    \draw (B2) edge[lambda,bend right=5] node[below,pos=0.66] {$\lambda$} (G3);
    \draw (B3) edge[lambda] node[below,pos=0.36] {$\lambda$} (G1);
    \draw (B4) edge[lambda,bend right=10] node[left=3pt,pos=0.25] {$\lambda$} (Gr);
  \end{tikzpicture}
\end{equation}
The vertex bisets are as usual obtained by restricting $B(h)$ while using
subsets of the basis $Q$, and are given as follows:
\begin{itemize}
\item the $G_2$-$G_1$-biset $B_1$ has in the basis $Q$
  the wreath recursion
  \[u^w=\pair{1,w,1}(1,2,3), \qquad x^{v^{-1}}=\pair{1,s^{-1},w^{-1}r^{-1}}(2,3),\qquad
  s^{-1}=\pair{1,r,s}(1,3);\]
  in appropriate co\"ordinates, it is the biset of the map
  $3z^2-2z^3$;
\item the $G_3$-$G_2$-biset $B_2$ has in the subbasis
  $\{\ell_2\}$ of $Q$ the wreath recursion
  \[w=\pair{v},\qquad z=\pair{y^{v^{-1}}},\qquad r=\pair{1},\qquad s=\pair{r^{-1}};\]
\item the $G_1$-$G_3$-biset $B_3$ has in the subbasis
  $\{\ell_1,\ell_3\}$ of $Q$ the wreath recursion
  \[v=\pair{1,u^w},\qquad y=\pair{1,x^{v^{-1}}}(1,2),\qquad r^{-1}=\pair{s^{-1},1}(1,2);\]
\item the $\langle s\rangle$-$G_2$-biset $B_4$ has in the subbasis
  $\{\ell_1,\ell_3\}$ of $Q$ the wreath recursion
  \[w=\pair{1,1},\qquad z=\pair{1,1},\qquad r=\pair{1,s}(1,2),\qquad s=\pair{s^{-1},1}(1,2);\]
\item the $1$-$G_3$ biset $B_5$ is trivial on the subbasis
  $\{\ell_2\}$, and corresponds to a sphere that gets blown down to a
  point.
\end{itemize}

The only small cycle in the sphere tree of bisets is the
$G_1$-$G_1$-biset $C\coloneqq B_3\otimes_{G_3}B_2\otimes_{G_2}B_1$ and
its two cyclic permutations. A presentation for $C$, in the basis
$\{\ell_1\ell_2\ell_1,\ell_3\ell_2\ell_1,\ell_1\ell_2\ell_2,\ell_3\ell_2\ell_2,\ell_1\ell_2\ell_3,\ell_3\ell_2\ell_3\}$,
is
\begin{align*}
  u^w&=\pair{1,1,1,u^w,1,1}(1,3,5)(2,4,6),\\
  x^{v^{-1}}&=\pair{1,1,1,u^w x^{v^{-1}},1,u^{-w}}(3,6,4,5),\\
  s^{-1}&=\pair{1,1,1,1,s^{-1},1}(1,5,2,6).
\end{align*}
A direct calculation shows that it is isomorphic to
$B(z^4(-2z^2+6z-3)/(2z-1)^3)$; note that this map factors under
composition as $(3z^2-2z^3)\circ(z^2/(2z-1))$. This will also follow
from the algebraic realization of $h$, which is as follows:
\begin{center}
\begin{tikzpicture}
  \coordinate (p0) at (0,4.9);
  \coordinate (p1) at (0,2.7);
  \coordinate (q0) at (3,4.9);
  \coordinate (q1) at (3,2.7);
  
  \draw (p0) node[below left=-5pt and 7pt] {\scriptsize $-1/2$}
  .. controls +(170:1) and +(-30:1) .. +(-2.5,1.0) node[left] {$w$} node[below right=5pt and 4pt] {\scriptsize $\infty$}
  .. controls +(-50:1) and +(20:0.2) .. +(-2.7,-0.75) node [right] {\scriptsize $3/2$} .. controls +(-20:0.2) and +(50:1) .. +(-2.5,-2.5) node[left] {$z$} node[above right=3pt and 4pt] {\scriptsize $0$}
  .. controls +(30:1) and +(-170:1) .. (p1) node[above left=-4pt and 7pt] {\scriptsize $1$}
  .. controls +(170:1) and +(-170:1) .. cycle;
  \draw[shrunksphere] (p1) node[right=4pt]{\scriptsize $\infty$}
  .. controls +(10:0.5) and +(-100:0.5) .. +(0.8,0.8) node[below right=3pt and -4pt] {\scriptsize $1$}
  .. controls +(-80:0.5) and +(-100:0.5) .. +(2.2,0.8) node[below left=4pt and -6pt] {\scriptsize $-1$}
  .. controls +(-80:0.5) and +(170:0.5) .. (q1) node[left=6pt] {\scriptsize $0$}
  .. controls +(-170:0.5) and +(80:0.5) .. +(-0.8,-0.8) node[above left=6pt and -7pt] {\tiny $\sqrt{-2}$}
  .. controls +(100:0.5) and +(80:0.5) .. +(-2.2,-0.8) node[above right=6pt and -5pt] {\tiny $-\sqrt{-2}$}
  .. controls +(100:0.5) and +(-10:0.5) .. cycle;
 \draw (p0) node[right=8pt]{\scriptsize $1$}
  .. controls +(10:0.5) and +(-100:0.5) .. +(1.5,0.8) node[below=4pt] {\scriptsize $\infty$} node[right] {$v$}
  .. controls +(-80:0.5) and +(170:0.5) .. (q0) node[left=13pt] {\scriptsize $-1/2$}
  .. controls +(-170:0.5) and +(80:0.5) .. +(-1.5,-0.8) node[above=3pt] {\scriptsize $0$} node[right] {$y$}
  .. controls +(100:0.5) and +(-10:0.5) .. cycle;
 \draw[shrunksphere] (q0) node[right=15pt] {\scriptsize $1$}
  .. controls +(10:1) and +(-150:1) .. +(2.5,0.6) node[below left=4pt and 3pt] {\scriptsize $\infty$}
  .. controls +(-130:0.5) and +(130:0.5) .. +(2.5,-0.6) node[above left=3pt and 3pt] {\scriptsize $0$}
  .. controls +(150:1) and +(-10:1) .. cycle;
 \draw (q1) node[above right=-5pt and 9pt] {\scriptsize $1$}
  .. controls +(10:0.7) and +(-110:0.5) .. +(1.5,1.0) node[below=5pt] {\scriptsize $1/2$} .. controls +(-70:0.5) and +(-150:0.7) .. +(2.5,0.8) node[right] {$u^w$} node[below left=4pt and 3pt] {\scriptsize $\infty$}
  .. controls +(-130:0.5) and +(130:0.5) .. +(2.5,-0.8) node[right] {$x^{v^{-1}}$} node[above left=3pt and 3pt] {\scriptsize $0$}
  .. controls +(150:1) and +(-10:1) .. cycle;
 
  \coordinate (r) at (0,0);
  \coordinate (s) at (3,0);
  
  \draw (r) node[left=4mm] {\scriptsize $1$}
  .. controls +(170:1) and +(-30:1) .. +(-2.5,1.0) node[left] {$u^w$} node[below right=3pt and 1pt] {\scriptsize $\infty$}
  .. controls +(-50:0.5) and +(50:0.5) .. +(-2.5,-1.0) node[left] {$x^{v^{-1}}$} node[above right=2pt and 1pt] {\scriptsize $0$}
  .. controls +(30:1) and +(-170:1) .. cycle;
  \draw (r) node[right=6pt]{\scriptsize $1$}
  .. controls +(10:0.5) and +(-100:0.5) .. +(1.5,1) node[below=4pt] {\scriptsize $\infty$} node[right] {$w$}
  .. controls +(-80:0.5) and +(170:0.5) .. (s) node[left=7pt] {\scriptsize $-1/2$}
  .. controls +(-170:0.5) and +(80:0.5) .. +(-1.5,-1) node[above=3pt] {\scriptsize $0$} node[right] {$z$}
  .. controls +(100:0.5) and +(-10:0.5) .. cycle;
  \draw (s) node[right=3mm] {\scriptsize $1$}
  .. controls +(10:1) and +(-150:1) .. +(2.5,1.0) node[right] {$v$} node[below left=4pt and 2pt] {\scriptsize $\infty$}
  .. controls +(-130:0.5) and +(130:0.5) .. +(2.5,-1.0) node[right] {$y$} node[above left=3pt and 2pt] {\scriptsize $0$}
  .. controls +(150:1) and +(-10:1) .. cycle;
  
  \node at (-3.5,3.5) {$\mathbb Y$};
  \node at (-3.5,0) {$\mathbb X$};
  \draw[<-,red,thick] (-1.5,0) -- node[left=-2pt,pos=0.4] {$3z^2-2z^3$} +(0,3.5);
  \draw[<-,red,thick] (1.2,0) -- node[left=-2pt,pos=0.4] {$\frac{z^2-1}{z^2+2}$} +(0,2.8);
  \draw[<-,red,thick] (1.92,0) -- node[right=-2pt,pos=0.286] {$z$} +(0,4.8);
  \draw[<-,red,thick] (3.8,0) -- node[left=-2pt,pos=0.292] {$z$} +(0,4.9);
  \draw[<-,red,thick] (4.9,0) -- node[left=-2pt,pos=0.519] {$\frac{z^2}{2z-1}$} +(0,2.7);
\end{tikzpicture}
\end{center}

We remark that, since $h$ does not admit any Levy obstruction, it is
isotopic to an expanding map for a path metric on $(S^2,P_h)$. It
admits, therefore, a \emph{Julia set}, defined for example as the
accumulation set of iterated preimages of a generic point. On the
other hand, $(3z^2-2z^3)\circ z\circ\frac{z^2}{2z-1}$ is
a rational map, so also admits a Julia set. Ch\'eritat investigates
this example in~\cite{cheritat:shishikura} by comparing these Julia
sets.

On the above noded sphere model, the Julia set of $h$ is, on the first
sphere, the Julia set of $(3z^2-2z^3)\circ z\circ\frac{z^2}{2z-1}$; on
the other two spheres, it is the Julia set of its cyclic permutations.

Ch\'eritat also asks, in~\cite{cheritat:shishikura}*{Question~2}, to
prove that the map of noded spheres obtained from pinching the
canonical obstruction in $h$ is indeed (in a different normalization)
the map above. This follows from~\cite{selinger:augts}, and also
follows immediately from computing the bisets in the decomposition, as
we have done.

\subsection{A Thurston map with infinitely generated centralizer}\label{ex:infinitely generated}
We conclude this survey with an example that shows that centralizers
of Thurston maps can be sometimes quite complicated, and in particular
not finitely generated (whence our notion of ``sub-computable''). The
example has degree $6$ and $7$ marked points. Many generalizations are
possible, but we content ourselves with its direct description, see
Figure~\ref{fig:infinitely generated}.

\begin{figure}
  \begin{center}
    \begin{tikzpicture}
      \def\smallsphere{.. controls +(180:0.5) and +(0:0.5) .. +(-0.7,0.2)
        .. controls +(180:0.3) and +(180:0.3) .. +(-0.7,-0.4)
        .. controls +(0:0.5) and +(180:0.5) .. +(0,-0.2) -- +(0,0.0) -- +(0,-0.2)}
      \def\smallrightsphere{-- +(0,0.2) .. controls +(0:0.5) and +(180:0.5) .. +(0.7,0.4)
        .. controls +(0:0.3) and +(0:0.3) .. +(0.7,-0.2)
        .. controls +(180:0.5) and +(0:0.5) .. +(0,0)}
      \draw (-3,7.5) .. controls +(180:1) and +(0:1) .. +(-2,0.7) node[above left] {$x_3$}
      .. controls +(180:0.7) and +(180:0.7) .. +(-2,-0.9) node[below left] {$x_4$}
      .. controls +(0:1) and +(180:1) .. ++(0,-0.2);
      \draw[fill=gray!20] (0,7.5) .. controls +(180:0.5) and +(180:0.5) .. ++(0,0.6) \smallrightsphere --
      ++(0,0.2) .. controls +(180:0.5) and +(-80:0.5) .. +(-1.5,0.5) node[below=4pt] {\scriptsize $\infty$}
      .. controls +(-100:0.5) and +(0:0.5) .. ++(-3,0) \smallsphere -- ++(0,-0.2)
      .. controls +(0:0.5) and +(0:0.5) .. ++(0,-0.6) -- ++(0,-0.2)
      .. controls +(0:0.5) and +(100:0.5) .. +(1.5,-0.5) node[above=3pt] {\scriptsize $0$}
      .. controls +(80:0.5) and +(180:0.5) .. ++(3,0) -- +(0,0.2);
      \draw[fill=gray!20] (0,7.5) .. controls +(0:1) and +(90:0.8) .. +(3,-0.8)
      .. controls +(-90:0.8) and +(0:1) .. ++(0,-1.6) +(0,0.2) .. controls +(0:1) and +(-90:0.3) .. +(1.5,0.8)
      .. controls +(90:0.3) and +(0:1) .. ++(0,1.4) -- ++(0,0.2);	   
      \draw (0,5.3) .. controls +(180:0.5) and +(180:0.5) .. ++(0,0.6) --
      ++(0,0.2) .. controls +(180:0.5) and +(-80:0.5) .. +(-1.5,0.5) node[below=4pt] {\scriptsize $\infty$} node[right] {$x_2$}
      .. controls +(-100:0.5) and +(0:0.5) .. ++(-3,0) -- ++(0,-0.2)
      .. controls +(0:0.5) and +(0:0.5) .. ++(0,-0.6) -- ++(0,-0.2)
      .. controls +(0:0.5) and +(100:0.5) .. +(1.5,-0.5) node[above=3pt] {\scriptsize $0$} node[right] {$x_5$}
      .. controls +(80:0.5) and +(180:0.5) .. ++(3,0) -- +(0,0.2);
      \draw[fill=gray!20] (-3,5.3) \smallsphere ++(0,0.8) \smallsphere
      (0,5.3) .. controls +(0:1) and +(90:0.8) .. +(3,-0.8)
      .. controls +(-90:0.8) and +(0:1) .. ++(0,-1.6) +(0,0.2) .. controls +(0:1) and +(-90:0.3) .. +(1.5,0.8)
      .. controls +(90:0.3) and +(0:1) .. ++(0,1.4) -- ++(0,0.2);
      \draw[fill=gray!20] (0,3.1) .. controls +(180:0.5) and +(180:0.5) .. ++(0,0.6) --
      ++(0,0.2) .. controls +(180:0.5) and +(-80:0.5) .. +(-1.5,0.5) node[below=4pt] {\scriptsize $\infty$}
      .. controls +(-100:0.5) and +(0:0.5) .. ++(-3,0) \smallsphere -- ++(0,-0.2)
      .. controls +(0:0.5) and +(0:0.5) .. ++(0,-0.6) \smallsphere -- ++(0,-0.2)
      .. controls +(0:0.5) and +(100:0.5) .. +(1.5,-0.7) node[above=3pt] {\scriptsize $0$}
      .. controls +(80:0.5) and +(180:0.5) .. ++(3,0) -- ++(0,0.2);
      \draw (0,2.9)
      .. controls +(0:1) and +(-150:1) .. +(3,-1.2) node[right] {$x_7$}
      .. controls +(120:0.4) and +(-120:0.4) .. +(3,-0.4) node[right] {$x_6$}
      .. controls +(120:0.4) and +(-120:0.4) .. +(3,0.4) node[right] {$x_1$} node[below left=-1mm and 0mm] {\scriptsize $\infty$}
      .. controls +(150:1) and +(0:1) .. ++(0,0.2);
      \draw (0,0.1) .. controls +(180:0.5) and +(-80:0.5) .. +(-1.5,1) node[below=4pt] {\scriptsize $\infty$} node[right] {$x_2$}
      .. controls +(-100:0.5) and +(0:0.5) .. ++(-3,0)
      .. controls +(180:1) and +(0:1) .. +(-2,0.7) node[above left] {$x_3$}
      .. controls +(180:0.7) and +(180:0.7) .. +(-2,-0.9) node[below left] {$x_4$}
      .. controls +(0:1) and +(180:1) .. ++(0,-0.2) node[below] {$s$} -- +(0,0.2) ++(0,0)
      .. controls +(0:0.5) and +(100:0.5) .. +(1.5,-1) node[above=3pt] {\scriptsize $0$} node[right] {$x_5$}
      .. controls +(80:0.5) and +(180:0.5) .. ++(3,0) node[below] {$t$} -- +(0,0.2) ++(0,0)
      .. controls +(0:1) and +(-150:1) .. +(3,-0.8) node[right] {$x_7$}
      .. controls +(120:0.5) and +(-120:0.5) .. +(3,0.1) node[right] {$x_6$}
      .. controls +(120:0.5) and +(-120:0.5) .. +(3,1.0) node[right] {$x_1$} node[below left=-1mm and 0mm] {\scriptsize $\infty$}
      .. controls +(150:1) and +(0:1) .. ++(0,0.2);
      \draw[<-,red,thick] (-4.8,0) -- node[left,pos=0.4] {$z$} +(0,7.4);
      \draw[<-,red,thick] (-2.5,0) -- node[left,pos=0.2] {$z^2$} +(0,7.8);
      \draw[<-,red,thick] (-2.1,0) -- +(0,5.6);
      \draw[<-,red,thick] (-1.7,0) -- +(0,3.4);
      \draw[<-,red,thick] (1.0,0) .. controls +(90:2) and +(-110:1) .. node[left,pos=0.2] {$\frac{z^2+x_6}{1+x_6}$} +(1.3,6.7);
      \draw[<-,red,thick] (2.1,0) .. controls +(90:1) and +(-100:1) .. node[left,pos=0.3] {$\frac{z^2+x_7}{1+x_7}$} +(0.2,4.5);
      \draw[<-,red,thick] (2.5,0) -- node[right,pos=0.5] {$z$} +(0,2.5);
    \end{tikzpicture}
  \end{center}
  \caption{A Thurston map with infinitely generated centralizer}\label{fig:infinitely generated}
\end{figure}
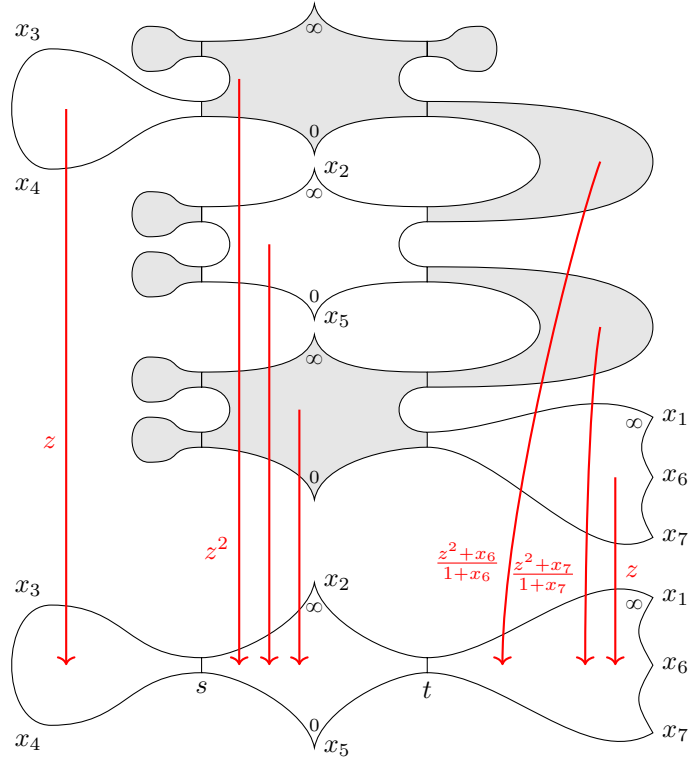

The seven marked points $A=\{x_1,\dots,x_7\}$ of the Thurston map $f$
are separated by two curves $s,t$ as follows: $x_3,x_4$ lie on an
$f$-fixed sphere $S_0$ separated by $s$ from a sphere $S_1$ containing
$x_2,x_5$ and on which $f$ acts as $z^2$; and $S_1$ is separated by
$t$ from an $f$-fixed sphere $S_2$ containing $x_1,x_6,x_7$. The
canonical obstruction of $f$ is $\{s,t\}$. There are two other
preimages of $S_1$, mapping by $z^2$ to $S_1$ and embedded in annuli
about $s$ and $t$ respectively, and two other preimages of $S_2$
mapping by degree $2$ to $S_2$, respectively with critical values
$\{x_1,x_6\}$ and embedded in an annulus about $s$, and with critical
values $\{x_1,x_7\}$ and embedded in an annulus about $t$.

The Thurston matrix of the multicurve $\CC\coloneqq\{s,t\}$ is
$(\begin{smallmatrix}1 & 2\\0 & 3\end{smallmatrix})$. Note that every
curve in $S_2$ is a Levy cycle, however the return map on $S_2$ is the
identity and the canonical obstruction does not contain the Levy
cycles in degree-$1$ pieces.

The centralizer of $f$ preserves the canonical obstruction, so is a
subgroup of
$\Mod(S^2,A,\CC)\cong \Mod(S_0)\times \Mod(S_1)\times \Mod(S_2)\times
\Z^{\CC}$.  Furthermore, $\Mod(S_0)=1$ because it contains only three
marked points, and the projection of $Z(f)$ into $\Mod(S_1)$ is
trivial because the restriction of $f$ to $S_1$ is a rational
self-map. Therefore, $Z(f)$ is a subgroup of $\Z^\CC\times\Mod(S_2)$.

To compute it, we write down a presentation of $B(f)$, and compute
some relations in its mapping class biset. We set
\[G=\langle x_1,x_2,x_3,x_4,x_5,x_6,x_7\mid x_1x_2x_3x_4x_5x_6x_7\rangle,\]
write $s=x_3x_4$ and $t=x_2x_3x_4x_5$, and in a basis
$\{\ell_1,\dots,\ell_7\}$ we compute the presentation
\begin{equation}\label{eq:infinite centralizer}
  \begin{aligned}
  x_1 &=\pair{1,s,s^{-1},t,t^{-1},x_1}(2,3)(4,5),\\
  x_2 &=\pair{1,1,s^{-1},x_2s,t^{-1},t}(1,2)(3,4)(5,6),\\
  x_3 &=\pair{x_3,1,1,1,1,1},\\
  x_4 &=\pair{x_4,1,1,1,1,1},\\
  x_5 &=\pair{1,1,x_5,1,1,1}(1,2)(3,4)(5,6),\\
  x_6 &=\pair{1,1,1,1,1,x_6}(2,3),\\
  x_7 &=\pair{1,1,1,1,1,x_7}(4,5),
\end{aligned}
\end{equation}
giving $s=\pair{s,1,1,1,1,1}$ and $t=\pair{1,s,s^{-1},t,t^{-1},t}$.
We write $\sigma,\tau,\alpha,\beta$ for Dehn twists about $s,t,x_1x_6$
and $x_6x_7$ respectively; their actions on $G$ are given respectively
by
\begin{align*}
  \sigma:\qquad & x_3\mapsto x_3^s,\;x_4\mapsto x_4^s,\\
  \tau:\qquad & x_2\mapsto x_2^t,\;x_3\mapsto x_3^t,\;x_4\mapsto x_4^t,\;x_5\mapsto x_5^t,\\
  \alpha:\qquad & x_1\mapsto x_1^{t x_6t^{-1}},\;x_6\mapsto x_6^{t^{-1}x_1t x_6},\\
  \beta:\qquad & x_6\mapsto x_6^{x_6x_7},\;x_7\mapsto x_7^{x_6x_7},
\end{align*}
all other generators being fixed. Naturally
$[\sigma,\alpha]=[\tau,\alpha]=[\sigma,\beta]=[\tau,\beta]=1$ while
$\langle\alpha,\beta\rangle$ is a free group of rank $2$. We then compute
\begin{xalignat*}{2}
  B(f)\cdot\sigma&\cong\sigma\cdot B(f),& B(f)\cdot\tau&\cong\sigma^2\tau^3\cdot B(f),\\
  B(f)\cdot\alpha&\cong\alpha\sigma^2\cdot B(f),& B(f)\cdot\beta&\cong\beta\cdot B(f).
\end{xalignat*}
For the second equality, the recursion of
$\sigma^{-2}\tau^{-3}\cdot B(f)\cdot\tau$ in basis
$\{s^2t^3\ell_1,st^3\ell_2,st^3\ell_3,t^2\ell_4,t^2\ell_5,\ell_6\}$
coincides with~\eqref{eq:infinite centralizer}, while for the third
equality, the recursion of
$\sigma^{-2}\alpha^{-1}\cdot B(f)\cdot\alpha$ in basis
$\{s^2\ell_1,s^2\ell_2,\ell_3,\dots,\ell_6\}$ coincides
with~\eqref{eq:infinite centralizer}.

Consider the homomorphism $\phi\colon\langle\alpha,\beta\rangle\to\Z$
which counts the total exponent in $\alpha$ of a word; it is the
quotient by the normal closure of $\beta$. Then, for
$w\in\langle\alpha,\beta\rangle$, the element $w\sigma^m\tau^n$
belongs to the centralizer of $f$ if and only if
$(m,n)=(m+2n+\phi(w),3n)$, if and only if $n=0$ and
$w\in\ker(\phi)$. Therefore,
\[Z(f)=\langle\sigma\rangle\times\ker(\phi)=\langle\sigma\rangle\times\langle\beta,\beta^\alpha,\beta^{\alpha\beta},\dots\rangle\cong\Z\times F_\infty.\]

\begin{bibdiv}
\begin{biblist}
\bib{bartholdi-n:thurston}{article}{
  author={Bartholdi, Laurent},
  author={Nekrashevych, Volodymyr V.},
  title={Thurston equivalence of topological polynomials},
  journal={Acta Math.},
  date={2006},
  volume={197},
  number={1},
  pages={1\ndash 51},
  review={\MR {2285317 (2008c:37072)}},
  doi={10.1007/s11511-006-0007-3},
  eprint={arXiv:math.DS/0510082},
}

\bib{bartholdi-buff-bothmer-kroeker:hurwitz}{article}{
  author={Bartholdi, Laurent},
  author={Buff, Xavier},
  author={Bothmer, Hans-Christian Graf von},
  author={Kr{\"o}ker, Jakob},
  title={Algorithmic construction of Hurwitz maps},
  journal={Exp. Math.},
  volume={24},
  date={2015},
  number={1},
  pages={76\ndash 92},
  issn={1058-6458},
  review={\MR {3305041}},
  doi={10.1080/10586458.2013.860569},
  eprint={arXiv:math/1303.1579},
}

\bib{bartholdi-dudko:bc1}{article}{
  author={Bartholdi, Laurent},
  author={Dudko, Dzmitry},
  title={Algorithmic aspects of branched coverings I/V. Van Kampen's theorem for bisets},
  date={2015},
  eprint={arXiv:cs/1512.08539},
  status={submitted},
}

\bib{bartholdi-dudko:bc2}{article}{
  author={Bartholdi, Laurent},
  author={Dudko, Dzmitry},
  title={Algorithmic aspects of branched coverings II/V. Sphere bisets and their decompositions},
  date={2016},
  eprint={arXiv:math/1603.04059},
  status={submitted},
}

\bib{bartholdi-dudko:bc3}{unpublished}{
  author={Bartholdi, Laurent},
  author={Dudko, Dzmitry},
  title={Algorithmic aspects of branched coverings III/V. Erasing maps and orbispaces},
  date={2016},
  status={in preparation},
}

\bib{bartholdi-dudko:bc4}{article}{
  author={Bartholdi, Laurent},
  author={Dudko, Dzmitry},
  title={Algorithmic aspects of branched coverings IV/V. Expanding maps},
  date={2016},
  eprint={arXiv:math/1610.02434},
  status={submitted},
}

\bib{bartholdi-dudko:bc5}{unpublished}{
  author={Bartholdi, Laurent},
  author={Dudko, Dzmitry},
  title={Algorithmic aspects of branched coverings V/V. Symbolic and floating-point algorithms},
  date={2016},
  status={in preparation},
}

\bib{img:manual}{manual}{
  title={IMG --- Computations with iterated monodromy groups, Version 0.1.1},
  label={IMG},
  author={Laurent Bartholdi},
  date={2016},
  url={\texttt {http://laurentbartholdi.github.io/img/}},
}

\bib{bartholdi-n:mandelbrot2}{unpublished}{
  author={Bartholdi, Laurent},
  author={Dudko, Dzmitry},
  author={Nekrashevych, Volodymyr V.},
  title={Iterated Monodromy Groups of Quadratic Polynomials, II},
  date={2016},
  status={preprint},
}

\bib{bestvina-h:tt}{article}{
  author={Bestvina, Mladen},
  author={Handel, Michael},
  title={Train-tracks for surface homeomorphisms},
  journal={Topology},
  volume={34},
  date={1995},
  number={1},
  pages={109\ndash 140},
  issn={0040-9383},
  review={\MR {1308491 (96d:57014)}},
  doi={10.1016/0040-9383(94)E0009-9},
}

\bib{bonnot-braverman-yampolsky:thurstondecidable}{article}{
  author={Bonnot, Sylvain},
  author={Braverman, Mark},
  author={Yampolsky, Michael},
  title={Thurston equivalence to a rational map is decidable},
  language={English, with English and Russian summaries},
  journal={Mosc. Math. J.},
  volume={12},
  date={2012},
  number={4},
  pages={747--763, 884},
  issn={1609-3321},
  review={\MR {3076853}},
}

\bib{buff+:questionsaboutmatings}{article}{
  author={Buff, Xavier},
  author={Epstein, Adam L.},
  author={Koch, Sarah},
  author={Meyer, Daniel},
  author={Pilgrim, Kevin},
  author={Rees, Mary},
  author={Lei, Tan},
  title={Questions about polynomial matings},
  language={English, with English and French summaries},
  journal={Ann. Fac. Sci. Toulouse Math. (6)},
  volume={21},
  date={2012},
  number={5},
  pages={1149--1176},
  issn={0240-2963},
  review={\MR {3088270}},
  doi={10.5802/afst.1365},
}

\bib{cannon-floyd-parry-pilgrim:net}{article}{
  author={Cannon, James~W.},
  author={Floyd, William~J.},
  author={Parry, Walter~R.},
  author={Pilgrim, Kevin~M.},
  title={Nearly Euclidean Thurston maps},
  journal={Conform. Geom. Dyn.},
  volume={16},
  date={2012},
  pages={209--255},
  issn={1088-4173},
  review={\MR {2958932}},
  doi={10.1090/S1088-4173-2012-00248-2},
}

\bib{cheritat:shishikura}{article}{
  author={Ch{\'e}ritat, Arnaud},
  title={Tan Lei and Shishikura's example of non-mateable degree 3 polynomials without a Levy cycle},
  language={English, with English and French summaries},
  journal={Ann. Fac. Sci. Toulouse Math. (6)},
  volume={21},
  date={2012},
  number={5},
  pages={935--980},
  issn={0240-2963},
  review={\MR {3088263}},
  doi={10.5802/afst.1358},
}

\bib{douady-h:edpc1}{book}{
  author={Douady, Adrien},
  author={Hubbard, John~H.},
  title={\'Etude dynamique des polyn\^omes complexes. Partie I},
  language={French},
  series={Publications Math\'ematiques d'Orsay [Mathematical Publications of Orsay]},
  volume={84},
  publisher={Universit\'e de Paris-Sud, D\'epartement de Math\'ematiques, Orsay},
  date={1984},
  pages={75},
  review={\MR {762431 (87f:58072a)}},
}

\bib{douady-h:edpc2}{book}{
  author={Douady, Adrien},
  author={Hubbard, John~H.},
  title={\'Etude dynamique des polyn\^omes complexes. Partie II},
  language={French},
  series={Publications Math\'ematiques d'Orsay [Mathematical Publications of Orsay]},
  volume={85},
  note={With the collaboration of P. Lavaurs, Tan Lei and P. Sentenac},
  publisher={Universit\'e de Paris-Sud, D\'epartement de Math\'ematiques, Orsay},
  date={1985},
  pages={v+154},
  review={\MR {812271 (87f:58072b)}},
}

\bib{douady-h:thurston}{article}{
  author={Douady, Adrien},
  author={Hubbard, John~H.},
  title={A proof of Thurston's topological characterization of rational functions},
  journal={Acta Math.},
  volume={171},
  date={1993},
  number={2},
  pages={263\ndash 297},
  issn={0001-5962},
  review={\MR {1251582 (94j:58143)}},
}

\bib{farb-margalit:mcg}{book}{
  author={Farb, Benson},
  author={Margalit, Dan},
  title={A primer on mapping class groups},
  series={Princeton Mathematical Series},
  volume={49},
  publisher={Princeton University Press},
  place={Princeton, NJ},
  date={2012},
  pages={xiv+472},
  isbn={978-0-691-14794-9},
  review={\MR {2850125}},
}

\bib{gap4.5:manual}{manual}{
  title={GAP --- Groups, Algorithms, and Programming, Version 4.5},
  label={GAP16},
  author={The GAP~Group},
  date={2016},
  url={\texttt {http://www.gap-system.org}},
}

\bib{grunewald:conjugacyarithmetic}{article}{
  author={Grunewald, Fritz J.},
  title={Solution of the conjugacy problem in certain arithmetic groups},
  conference={ title={Word problems, II (Conf. on Decision Problems in Algebra, Oxford, 1976)}, },
  book={ series={Stud. Logic Foundations Math.}, volume={95}, publisher={North-Holland, Amsterdam-New York}, },
  date={1980},
  pages={101--139},
  review={\MR {579942 (81h:20054)}},
}

\bib{haissinsky-pilgrim:algebraic}{article}{
  author={Ha{\"{\i }}ssinsky, Peter},
  author={Pilgrim, Kevin M.},
  title={An algebraic characterization of expanding Thurston maps},
  journal={J. Mod. Dyn.},
  volume={6},
  date={2012},
  number={4},
  pages={451--476},
  issn={1930-5311},
  review={\MR {3008406}},
}

\bib{hemion:homeos}{article}{
  author={Hemion, Geoffrey},
  title={On the classification of homeomorphisms of $2$-manifolds and the classification of $3$-manifolds},
  journal={Acta Math.},
  volume={142},
  date={1979},
  number={1-2},
  pages={123\ndash 155},
  issn={0001-5962},
  review={\MR {512214 (80f:57003)}},
  doi={10.1007/BF02395059},
}

\bib{hurwitz:ramifiedsurfaces}{article}{
  author={Hurwitz, Adolf},
  title={Ueber Riemann'sche Fl\"achen mit gegebenen Verzweigungspunkten},
  language={German},
  journal={Math. Ann.},
  volume={39},
  date={1891},
  number={1},
  pages={1\ndash 60},
  issn={0025-5831},
  review={\MR {1510692}},
  doi={10.1007/BF01199469},
}

\bib{ishii-smillie:shadowing}{article}{
  author={Ishii, Yutaka},
  author={Smillie, John},
  title={Homotopy shadowing},
  journal={Amer. J. Math.},
  volume={132},
  date={2010},
  number={4},
  pages={987--1029},
  issn={0002-9327},
  review={\MR {2663646 (2012d:37055)}},
  doi={10.1353/ajm.0.0126},
}

\bib{kameyama:thurston}{article}{
  author={Kameyama, Atsushi},
  title={The Thurston equivalence for postcritically finite branched coverings},
  journal={Osaka J. Math.},
  volume={38},
  date={2001},
  number={3},
  pages={565\ndash 610},
  issn={0030-6126},
  review={\MR {1860841 (2002h:57004)}},
}

\bib{milnor:lattes}{article}{
  author={Milnor, John~W.},
  title={On Latt\`es maps},
  conference={ title={Dynamics on the Riemann sphere}, },
  book={ publisher={Eur. Math. Soc., Z\"urich}, },
  date={2006},
  pages={9\ndash 43},
  review={\MR {2348953 (2009h:37090)}},
  doi={10.4171/011-1/1},
}

\bib{nekrashevych:ssg}{book}{
  author={Nekrashevych, Volodymyr~V.},
  title={Self-similar groups},
  series={Mathematical Surveys and Monographs},
  volume={117},
  publisher={American Mathematical Society, Providence, RI},
  date={2005},
  pages={xii+231},
  isbn={0-8218-3831-8},
  review={\MR {2162164 (2006e:20047)}},
  doi={10.1090/surv/117},
}

\bib{nielsen:surfaces}{article}{
  author={Nielsen, Jakob},
  title={Surface transformation classes of algebraically finite type},
  journal={Danske Vid. Selsk. Math.-Phys. Medd.},
  volume={21},
  date={1944},
  number={2},
  pages={89},
  review={\MR {0015791 (7,469c)}},
}

\bib{pilgrim:combinations}{book}{
  author={Pilgrim, Kevin~M.},
  title={Combinations of complex dynamical systems},
  series={Lecture Notes in Mathematics},
  volume={1827},
  publisher={Springer-Verlag},
  place={Berlin},
  date={2003},
  pages={x+118},
  isbn={3-540-20173-4},
  review={\MR {2020454 (2004m:37087)}},
}

\bib{poirier:portraits}{article}{
  author={Poirier, Alfredo},
  title={Critical portraits for postcritically finite polynomials},
  journal={Fund. Math.},
  volume={203},
  date={2009},
  number={2},
  pages={107\ndash 163},
  issn={0016-2736},
  review={\MR {2496235 (2010c:37095)}},
  doi={10.4064/fm203-2-2},
}

\bib{richter:hubbard}{thesis}{
  author={Richter, Robin},
  title={Hubbard trees of complex polynomials},
  note={Bachelor's thesis},
  year={2013},
  type={Bachelorarbeit},
  place={G\"ottingen},
}

\bib{selinger:augts}{article}{
  author={Selinger, Nikita},
  title={Thurston's pullback map on the augmented Teichm\"uller space and applications},
  journal={Invent. Math.},
  volume={189},
  date={2012},
  number={1},
  pages={111\ndash 142},
  issn={0020-9910},
  review={\MR {2929084}},
  doi={10.1007/s00222-011-0362-3},
}

\bib{selinger:canonical}{article}{
  author={Selinger, Nikita},
  title={Topological characterization of canonical Thurston obstructions},
  journal={J. Mod. Dyn.},
  volume={7},
  date={2013},
  number={1},
  pages={99--117},
  issn={1930-5311},
  review={\MR {3071467}},
  doi={10.3934/jmd.2013.7.99},
}

\bib{selinger-yampolsky:geometrization}{article}{
  author={Selinger, Nikita},
  author={Yampolsky, Michael},
  title={Constructive geometrization of Thurston maps and decidability of Thurston equivalence},
  journal={Arnold Math. J.},
  volume={1},
  date={2015},
  number={4},
  pages={361--402},
  issn={2199-6792},
  review={\MR {3434502}},
  doi={10.1007/s40598-015-0024-4},
  eprint={arXiv:1310.1492},
}

\bib{serre:trees}{book}{
  author={Serre, Jean-Pierre},
  title={Trees},
  publisher={Springer-Verlag},
  address={Berlin},
  date={1980},
  isbn={3-540-10103-9},
  note={Translated from the French by John Stillwell},
  review={\MR {82c:20083}},
}

\bib{shishikura-t:matings}{article}{
  author={Shishikura, Mitsuhiro},
  author={Lei, Tan},
  title={A family of cubic rational maps and matings of cubic polynomials},
  journal={Experiment. Math.},
  volume={9},
  date={2000},
  number={1},
  pages={29\ndash 53},
  issn={1058-6458},
  review={\MR {1758798 (2001c:37042)}},
}

\bib{tan:matings}{article}{
  author={Tan, Lei},
  title={Matings of quadratic polynomials},
  journal={Ergodic Theory Dynam. Systems},
  volume={12},
  year={1992},
  number={3},
  pages={589\ndash 620},
}

\bib{thurston:surfaces}{article}{
  author={Thurston, William~P.},
  title={On the geometry and dynamics of diffeomorphisms of surfaces},
  journal={Bull. Amer. Math. Soc. (N.S.)},
  volume={19},
  date={1988},
  number={2},
  pages={417\ndash 431},
  issn={0273-0979},
  review={\MR {956596 (89k:57023)}},
  doi={10.1090/S0273-0979-1988-15685-6},
}

\bib{thurston:rubber}{article}{
  author={Thurston, Dylan P.},
  title={From rubber bands to rational maps: a research report},
  journal={Research in the Mathematical Sciences},
  year={2016},
  volume={3},
  number={1},
  pages={1--49},
  issn={2197-9847},
  doi={10.1186/s40687-015-0039-4},
  url={http://dx.doi.org/10.1186/s40687-015-0039-4},
}

\bib{zieschang-vogt-coldewey:spdg}{book}{
  author={Zieschang, Heiner},
  author={Vogt, Elmar},
  author={Coldewey, Hans-Dieter},
  title={Surfaces and planar discontinuous groups},
  series={Lecture Notes in Mathematics},
  volume={835},
  note={Translated from the German by John Stillwell},
  publisher={Springer, Berlin},
  date={1980},
  pages={x+334},
  isbn={3-540-10024-5},
  review={\MR {606743 (82h:57002)}},
}

\end{biblist}
\end{bibdiv}
\end{document}